\documentclass[11pt]{article}
\usepackage[margin=1in]{geometry}
\usepackage[utf8]{inputenc}
\usepackage[T1]{fontenc}
\usepackage{lmodern}
\usepackage{amssymb,amsmath,amsthm,amsfonts}
\usepackage{textgreek}
\usepackage{mathtools}
\usepackage{enumitem}
\usepackage[numbers,comma,sort&compress]{natbib}
\usepackage{authblk}
\usepackage{graphicx, caption, subcaption}
\usepackage{svg}
\usepackage{float}
\usepackage[ruled,vlined,linesnumbered]{algorithm2e}
\usepackage{algpseudocode}
\usepackage{qcircuit}
\usepackage{physics}
\usepackage{footnote}
\usepackage{xcolor}
\usepackage{mathrsfs}
\usepackage{bbm}
\usepackage{bm}
\usepackage{hhline}
\usepackage{cases}
\usepackage{tikz}
\usepackage{booktabs}
\usepackage{longtable}
\usepackage{tabularx}
\usepackage{aliascnt}
\usepackage[colorlinks=true,allcolors=blue]{hyperref}
\usepackage[capitalize]{cleveref}
\usepackage{etoc}
\usetikzlibrary{arrows.meta,positioning}

\renewcommand{\Re}{\operatorname{Re}}

\newcommand{\poly}{\operatorname{poly}}
\renewcommand{\Tr}{\operatorname{Tr}}

\newcommand{\I}{\mathrm{i}}

\newcommand{\mc}[1]{\mathcal{#1}}

\newcommand{\wt}[1]{\widetilde{#1}}

\renewcommand{\abs}[1]{\left\lvert#1\right\rvert}
\renewcommand{\norm}[1]{\left\lVert#1\right\rVert}

\newcommand{\ud}{\,\mathrm{d}}

\newcommand{\RR}{\mathbb{R}}
\newcommand{\CC}{\mathbb{C}}

\newcommand{\PP}{\mathbb{P}}

\newcommand{\cA}{\mathcal{A}}

\newcommand{\cS}{\mathcal{S}}

\newcommand{\cM}{\mathcal{M}}

\theoremstyle{plain}
\newtheorem{theorem}{Theorem}
\newaliascnt{proposition}{theorem}
\newtheorem{proposition}[proposition]{Proposition}
\aliascntresetthe{proposition}
\newaliascnt{lemma}{theorem}
\newtheorem{lemma}[lemma]{Lemma}
\aliascntresetthe{lemma}
\newaliascnt{corollary}{theorem}
\newtheorem{corollary}[corollary]{Corollary}
\aliascntresetthe{corollary}
\theoremstyle{definition}
\newtheorem{definition}{Definition}

\newtheorem{remark}{Remark}

\theoremstyle{plain}

\theoremstyle{plain}
\newaliascnt{lem}{thm}

\aliascntresetthe{lem}
\theoremstyle{plain}

\theoremstyle{plain}
\newtheorem*{lem*}{\protect\lemmaname}
\theoremstyle{plain}
\newaliascnt{prop}{thm}

\aliascntresetthe{prop}
\theoremstyle{plain}

\theoremstyle{definition}

\providecommand{\definitionname}{Definition}
\providecommand{\assumptionname}{Assumption}
\providecommand{\corollaryname}{Corollary}
\providecommand{\lemmaname}{Lemma}
\providecommand{\propositionname}{Proposition}
\providecommand{\remarkname}{Remark}
\providecommand{\examplename}{Example}
\providecommand{\theoremname}{Theorem}
\providecommand{\conjecturename}{Conjecture}
\providecommand{\assumptionname}{Assumption}

\Crefname{lem}{Lemma}{Lemmas}
\Crefname{prop}{Proposition}{Propositions}
\crefname{lemma}{lemma}{lemmas}
\Crefname{lemma}{Lemma}{Lemmas}
\crefname{proposition}{proposition}{propositions}
\Crefname{proposition}{Proposition}{Propositions}
\crefname{corollary}{corollary}{corollaries}
\Crefname{corollary}{Corollary}{Corollaries}

\def\>{\rangle}
\def\<{\langle}

\newcommand{\R}{\mathbb{R}}

\renewcommand{\d}{\mathrm{d}}

\SetKwInput{KwData}{Input}
\SetKwInput{KwResult}{Output}

\usepackage{braket}

\begin{document}

\title{Quantum score matching with applications to learning thermal states}
\author[1,*]{Yulong Dong}
\author[2,3,\textdagger]{Jiaqi Leng}
\affil[1]{Department of Electrical Engineering and Computer Science, University of Michigan, Ann Arbor, Michigan 48109, USA}
\affil[2]{Institute for Advanced Computing, Virginia Tech, Alexandria, Virginia 22305, USA}
\affil[3]{Department of Computer Science, Virginia Tech, Alexandria, Virginia 22305, USA}
\renewcommand{\Affilfont}{\small}
\date{\today}

\maketitle

\begingroup
\renewcommand{\thefootnote}{\fnsymbol{footnote}}
\footnotetext[1]{The authors are listed in alphabetical order. Email: \href{mailto:dongyl@umich.edu}{dongyl@umich.edu}.}
\footnotetext[2]{The authors are listed in alphabetical order. Email: \href{mailto:jiaqil@vt.edu}{jiaqil@vt.edu}.}
\endgroup

\begin{abstract}
Score matching has driven major advances in classical generative learning by enabling models to learn from data without evaluating intractable normalization constants, or partition functions. Yet, extending this principle to quantum learning requires rethinking its foundations, as quantum states are described by noncommuting density operators rather than scalar probabilities. The noncommutativity creates fundamental challenges not only in defining quantum scores, but also in developing a training framework with efficient circuit implementations and rigorous theoretical guarantees. In this work, we bridge this gap by establishing a general quantum score-matching framework with end-to-end theoretical guarantees. Applied to Gibbs-state learning, our approach avoids additional thermal-state preparation and achieves information-theoretically optimal sample complexity in the high-temperature regime for Hamiltonians with bounded locality and interaction degree. This positions score matching as a new route to state-of-the-art performance in learning quantum Gibbs states. 

Beyond these theoretical results, numerical simulations show that our method remains effective even when gradients are estimated inaccurately under limited measurement budgets. Experiments on IBM quantum hardware further demonstrate that quantum score matching is NISQ-friendly: without any error mitigation or correction, it reduces the relative Hamiltonian-parameter error from 64\% to approximately 10\%. Together, these results extend score matching into an experimentally realizable paradigm for quantum-state learning.
\end{abstract}

\etocdepthtag.toc{maintext}
\begingroup
\etocsettagdepth{maintext}{all}
\etocsettagdepth{appendices}{none}
\tableofcontents
\endgroup

\section{Introduction}
\label{sec:introduction}

Score matching has driven a wave of advances in classical generative machine learning, from learning unnormalized statistical models~\cite{hyvarinen2005estimation,hyvarinen2007extensions} to score-based diffusion models for image generation~\cite{sohldickstein2015deep,song2019generative,ho2020denoising,song2021score}. At the heart of its success is a change of perspective: a distribution can be learned by matching how its log-density varies, without evaluating the normalized density itself~\cite{lyu2009interpretation,yu2019generalized}. This variation is described by the score, the gradient of the log-density, which points locally towards more likely data configurations~\cite{hyvarinen2005estimation,song2019generative}. Taking this gradient eliminates the normalization constant, also known as the partition function, whose computation can be prohibitively expensive~\cite{hyvarinen2005estimation,lyu2009interpretation}. Crucially, the score-matching objective can be evaluated from samples without knowing the data distribution's score~\cite{hyvarinen2005estimation,vincent2011connection,song2020sliced}. Together, these properties allow models to be learned even when their normalized densities are intractable to evaluate. This prospect is particularly compelling for quantum learning: experiments provide copies of unknown states whose density matrices grow exponentially with system size~\cite{haah2017tomography}, while even specifying a thermal-state model can leave its partition function computationally inaccessible~\cite{bravyi2022complexity,alhambra2023thermal}. Can the same score-matching principle turn these quantum data into a learning procedure without requiring explicit access to the unknown state or evaluation of the model's partition function?

However, such a quantum extension of score matching is not straightforward. A classical probability density is a scalar function, and its logarithmic derivative can be handled with commutative calculus. A quantum state, by contrast, is a matrix-valued operator that need not commute with its variation~\cite{helstrom1976quantum,holevo1982probabilistic,jiang2014quantum}. Consequently, the choice of an operator-valued score matters~\cite{petz1996monotone,lesniewski1999monotone,petz2011introduction}, and the classical identity that turns score matching into an expectation over samples does not automatically extend to matrices. Defining a quantum score is therefore only the first step. We must also determine how to match it using copies of an unknown state and how to evaluate the parameter gradients needed for training on a quantum computer. These challenges require us to revisit the foundations of classical learning theory and quantum information theory. We need a brand-new framework that defines quantum scores within noncommutative algebra and makes score matching trainable and implementable on quantum computers.

Existing tools in quantum information theory address only parts of this problem, but do not yet provide a score-matching framework. On the metric side, quantum Fisher information uses logarithmic derivatives to quantify the distinguishability of nearby states, giving these derivatives a statistical interpretation~\cite{braunstein1994statistical,paris2009quantum,jiang2014quantum,liu2020quantum,zhou2019correspondence}. These tools also underpin precision bounds and optimal protocols in quantum sensing~\cite{eldredge2018optimal,zhou2018heisenberg,zhou2021asymptotic}. Noncommutative differential calculus supplies ways to differentiate functions of density operators~\cite{carlen2017gradient,carlenmaas2019}. Yet these results do not show how matching operator-valued quantum scores can both identify the target state and be evaluated using only copies of it. On the algorithmic side, quantum Boltzmann machines have been developed for learning based on quantum relative entropy~\cite{amin2018quantum,kieferova2017tomography,wiebe2019generative,wilde2025boltzmann}. However, training requires copies of the thermal state of the current model, so its efficiency also depends on the cost of thermal-state preparation. Other lines of work borrow classical learning ideas to learn quantum states or ensembles, either by converting quantum information into measurement probabilities for classical processing or by constructing diffusion-inspired variational architectures~\cite{coyle2020born,liu2025measurement,kwun2025mixed}. Nonetheless, these routes do not establish score matching directly at the level of density operators and the noncommutative operator algebra intrinsic to quantum theory.

Quantum state learning is the analogue of learning an underlying classical data distribution: an unknown quantum state is characterized using copies of that state~\cite{haah2017tomography,rouze2024efficient}. Specifically, the Gibbs state is $\sigma=e^{-\beta H(\theta^*)}/\Tr(e^{-\beta H(\theta^*)})$, where $H(\theta^*)$ is the Hamiltonian whose parameters $\theta^*$ are the learning target. Recent developments have established sample-efficient learning of local Hamiltonians from Gibbs states~\cite{anshu2021sample} and polynomial-time algorithms at any fixed temperature for Hamiltonians with bounded locality and interaction degree~\cite{bakshi2023learning,narayanan2025improved}. Among them, high-temperature algorithms achieve information-theoretically optimal sample complexity~\cite{haah2024learning}, while local measurement protocols for lattice Hamiltonians attain near-optimal dependence on system size and accuracy beyond the high-temperature regime~\cite{chen2025learning}. At a known temperature, recovering these Hamiltonian parameters from copies of the Gibbs state is a well-studied Hamiltonian-learning problem~\cite{bairey2019learning,zhao2023maximum,gu2024practical}. Beyond learning from thermal states, access to real-time dynamics and quantum control enables Heisenberg-limited Hamiltonian learning~\cite{huang2023heisenberg,dutkiewicz2024control,bakshi2024structure,hu2025ansatz,ma2024compressed}. Yet, these approaches do not provide a score-matching-based method for learning quantum states.

In this paper, we introduce a quantum score-matching framework and demonstrate its application to Gibbs-state learning. The framework is built on two pillars: (i) a noncommutative extension of classical score matching that connects score representation to global identifiability of the target state within the model class, and (ii) a circuit-efficient representation of the gradients needed to train the model using copies of an unknown quantum state. These pillars extend the mathematical and statistical theory of score matching to noncommutative algebra and quantum-computable procedures. Applying our framework to Gibbs-state learning, we show that the resulting procedure is information-theoretically optimal in sample complexity for learning high-temperature Gibbs states. We therefore provide a score-matching approach that matches information-theoretic lower bounds and the state-of-the-art performance of existing Gibbs-state learning algorithms in the high-temperature regime~\cite{haah2024learning}. Remarkably, our method is not only theoretically clean but also NISQ-ready. We deploy our method to learn a four-qubit transverse-field Ising model (TFIM) on IBM quantum hardware and demonstrate that the learned Hamiltonian parameters converge to about 10\% relative error with only 32 measurement shots per circuit, without error mitigation or error correction. We also perform numerical simulations to show the feasibility of learning larger systems under limited resource budgets.

\subsection{Summary of main theoretical results}
\label{subsec:summary-main-results}

We summarize our main theoretical results here. Let $\sigma \succ 0$ be the unknown data state and $\rho(\theta)\succ0$ a model family containing $\sigma$. On the matrix manifold with a Harmitian frame $\{A_a\}_{a \in \mc{A}}$, the derivative alongside the $a$-th direction is defined as $\partial_aO = - \I [A_a, O]$. Furthermore, we define the \emph{symmetric} score $S_a(\theta)$ by $\partial_a \rho(\theta) = (\rho(\theta) S_a(\theta) + S_a(\theta) \rho(\theta))/2$. This is the symmetric logarithmic derivative along the derivation, rather than along a model parameter. More generally, we can define a $\lambda$-score whereas the symmetric score is a special case with $\lambda = 1/2$. More details are provided in \cref{append:lambda_score}. All theory below can be parallelized to the $\lambda$-score setting. For simplicity, we focus on the symmetric score in the main text.

Our first result states that the squared score discrepancy can be minimized using just the copy of the unknown state, without requiring access to the target density operator or its functions. This parallelizes the classical score-matching identity in Ref.~\cite{hyvarinen2005estimation}, which turns the score-matching objective into a sampleable form.

\begin{theorem}[Quantum score matching, informal]
\label{thm:intro-framework}
Let $S_a(\theta), S_a^*$ be the scores of the model state and the target state, respectively. The squared score discrepancy
\begin{equation}\label{eq:intro-score-discrepancy}
J_Q(\theta)=\frac12\sum_{a\in\cA}\operatorname{Tr}\bigl(\sigma(S_a(\theta)-S_a^*)^2\bigr)
\end{equation}
admits the following representation
\begin{equation}\label{eq:intro-hyvarinen}
J_Q(\theta)=\operatorname{Tr}\bigl(\sigma\mathcal L(\theta)\bigr)+\text{constant},\qquad \mathcal L(\theta)=\sum_{a\in\cA}\left(\frac12S_a^2(\theta)+\partial_a S_a(\theta)\right).
\end{equation}
Therefore, $J_Q(\theta)$ and $\Tr(\sigma\mathcal L(\theta))$ have the same global minima, and the latter can be evaluated using only copies of $\sigma$ through Monte Carlo.
\end{theorem}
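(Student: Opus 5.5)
The plan is to mimic Hyvärinen's integration-by-parts argument, with cyclicity of the trace playing the role of integration by parts. First I will record that the scores are well defined. Since $\rho(\theta)\succ0$ and $\sigma\succ0$, the Lyapunov equation $\partial_a\rho=\tfrac12(\rho S_a+S_a\rho)$ has a unique solution. In the eigenbasis of $\rho$ with eigenvalues $p_i$, it is $(S_a)_{ij}=2(\partial_a\rho)_{ij}/(p_i+p_j)$. Because $\partial_a\rho=-\I[A_a,\rho]$ is Hermitian, $S_a$ is Hermitian, and the same holds for $S_a^*$. Likewise $\partial_aS_a=-\I[A_a,S_a]$ is Hermitian, so $\mathcal L(\theta)$ is a Hermitian observable.

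Next I will expand the square for each fixed $a$:
\begin{equation*}
\Tr\bigl(\sigma(S_a-S_a^*)^2\bigr)=\Tr(\sigma S_a^2)-\Tr(\sigma S_aS_a^*)-\Tr(\sigma S_a^*S_a)+\Tr\bigl(\sigma (S_a^*)^2\bigr).
\end{equation*}
The last term does not depend on $\theta$. Summed with the prefactor $\tfrac12$, it gives the constant $\tfrac12\sum_a\Tr(\sigma(S_a^*)^2)$. The first term gives $\tfrac12\Tr(\sigma S_a^2)$ directly.

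The only substantive step is the cross term, which still involves the unknown score $S_a^*$. Using cyclicity, I rewrite
\begin{equation*}
\Tr(\sigma S_aS_a^*)+\Tr(\sigma S_a^*S_a)=\Tr\bigl(S_a(S_a^*\sigma+\sigma S_a^*)\bigr)=2\Tr(S_a\,\partial_a\sigma).
\end{equation*}
This is exactly where the symmetric form of the score is used. Then $\partial_a\sigma=-\I[A_a,\sigma]$, and cyclicity again gives the "integration by parts"
\begin{equation*}
\Tr\bigl(S_a(-\I)[A_a,\sigma]\bigr)=-\I\,\Tr\bigl(\sigma[S_a,A_a]\bigr)=-\Tr(\sigma\,\partial_aS_a).
\end{equation*}
Hence the cross term contributes $-\tfrac12\cdot2\Tr(S_a\partial_a\sigma)=+\Tr(\sigma\,\partial_aS_a)$. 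Summing over $a$ yields $J_Q(\theta)=\Tr(\sigma\mathcal L(\theta))+\text{constant}$. Since the two objectives differ by a $\theta$-independent constant, they share their minimizers.

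Finally, for the Monte Carlo claim I will note that $\mathcal L(\theta)$ is Hermitian and depends only on the model. Therefore $\Tr(\sigma\mathcal L(\theta))$ is the expectation of an observable measured on copies of $\sigma$, and it can be estimated by sample averages. No boundary terms arise because everything is finite-dimensional. I expect the main obstacle to be bookkeeping rather than analysis: one must get the sign and ordering in the commutator step right, and make sure that only the symmetrized combination of the two cross terms, not each one separately, reduces to $\partial_a\sigma$.
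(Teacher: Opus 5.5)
Your proof is correct and is essentially the paper's argument: expand the square, then remove the cross term using the defining equation $\mathscr{I}_\sigma(S_a^*)=\partial_a\sigma$ together with trace cyclicity, i.e.\ $\Tr(S_a\,\partial_a\sigma)=-\Tr(\sigma\,\partial_aS_a)$. The paper packages the same step through the adjoint $\partial_a^\sharp$ of $\partial_a$ in the $\sigma$-weighted inner product, using $-\partial_a^\sharp(I)=S_a^*$; this lets it handle all $\lambda$-scores, including non-Hermitian ones, at once, whereas your direct computation covers exactly the symmetric case stated here.
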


The discrepancy metric is equal to zero when the model density operator matches the target. The converse asks whether every global minimizer represents the target state, which is referred to as \emph{global identifiability}. This property requires a stronger condition on the derivation frame. Intuitively, when the derivation frame is sufficiently rich so that there is no nontrivial ambiguity in the matched scores, the target is globally identifiable. The formal result is given in \Cref{lem:global-identifiability-sld}. Specifically, for Hamiltonians with bounded locality and interaction degree, at high temperature, the single-qubit frame $\{X_x,Z_x\}_{x=1}^n$ suffices to guarantee global identifiability (\cref{thm:uniform-local-pauli-high-temperature}).

The formalism above lays the foundation for the quantum score-matching framework. Next, we need to train a model to match the score. Suppose the model is a Gibbs state $e^{-\beta H(\theta)}/\Tr(e^{-\beta H(\theta)})$ with a known inverse temperature $\beta$ and a parametrized Hamiltonian $H(\theta)=\sum_{j=1}^m\theta_jP_j$, where $P_j$ are distinct nonidentity Pauli operators. The gradient of the score-matching objective can be expressed as $\partial_{\theta_j}J_Q(\theta)=\operatorname{Tr}\bigl(\sigma\mathcal G_j(\theta)\bigr)$, where $\mathcal G_j(\theta)$ is an observable that can be estimated using copies of $\sigma$. The formal result is given in \cref{thm:parameter-shift-gradient-cost}.

\begin{theorem}[Circuit-efficient gradient measurement, informal]
\label{thm:intro-gradients}
Suppose $m=\operatorname{poly}(n)$, $|\theta_j|\leq 1$, and the derivation frame contains $\operatorname{poly}(n)$ Pauli operators. Assume query access to $e^{-\I tH(\theta)}$, with evolution time counted as a resource. For any $\beta>0$ and $0<\varepsilon,\delta<1$, a randomized circuit protocol returns an estimate $\widehat g$ satisfying $\mathbb P(\|\widehat g-\nabla_\theta J_Q(\theta)\|_\infty\leq\varepsilon)\geq1-\delta$. Its total copy cost and maximal Hamiltonian-evolution time per measurement circuit are
\begin{equation}\label{eq:intro-gradient-resources}
N_{\mathrm{copy}}=\mathcal{O}\!\left(\frac{\operatorname{poly}(n,\beta)}{\varepsilon^2}\log\frac{2m}{\delta}\right),\qquad T_{\max}=\mathcal{O}\!\left(\beta\log\frac{\operatorname{poly}(n,\beta)}{\varepsilon}\right).
\end{equation}
\end{theorem}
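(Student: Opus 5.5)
The plan is to make the gradient $\partial_{\theta_j}J_Q(\theta)=\Tr(\sigma\mathcal G_j(\theta))$ explicit as an average of Pauli-type expectation values in $\sigma$, conjugated by short Hamiltonian evolutions, and then to estimate that average by Monte Carlo over random times.

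\emph{Step 1: integral representations.} Write $\rho=\rho(\theta)$ and let $\mathcal K_\rho(X)=(\rho X+X\rho)/2$. Then $S_a=\mathcal K_\rho^{-1}(\partial_a\rho)$. Since $\rho=e^{-\beta H}/Z$, the normalization drops out of $\partial_a\rho=-\I[A_a,\rho]$. Going to the eigenbasis of $H$, the score has matrix elements
\begin{equation*}
(S_a)_{kl}=-2\I(A_a)_{kl}\tanh\bigl(\beta(E_l-E_k)/2\bigr).
\end{equation*}
Thus $S_a=f(\mathrm{ad}_H)(A_a)$ with $f(\omega)=2\I\tanh(\beta\omega/2)$ up to sign, and $Z$ never appears. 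The function $\tanh$ has poles at $\omega=\I\pi(2k+1)/\beta$, so it is not a Fourier transform of an $L^1$ function. However, $\tanh(\beta\omega/2)$ has a distributional Fourier transform $\propto 1/\sinh(\pi t/\beta)$, understood as a principal value. This gives
\begin{equation*}
S_a=\int \mu(t)\, e^{\I Ht}A_ae^{-\I Ht}\,\d t,\qquad |\mu(t)|\lesssim\beta^{-1}e^{-\pi|t|/\beta}\ \text{away from } t=0.
\end{equation*}
I would handle the singularity at $t=0$ by subtracting the odd part, or by writing $\tanh(x)=x\cdot g(x)$ with $g$ smooth and decaying. Then $S_a$ is expressed through $[H,A_a]$ averaged against an absolutely integrable kernel of $L^1$ norm $O(\beta)$.

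\emph{Step 2: differentiate in $\theta_j$.} The parameter enters only through $H(\theta)$, with $\partial_{\theta_j}H=P_j$. Differentiating $e^{\I Ht}Ae^{-\I Ht}$ uses Duhamel's formula, which yields nested conjugations $e^{\I Hs}P_je^{-\I Hs}$ with $s\in[0,t]$. The observable $\mathcal G_j=\sum_a\bigl(S_a\,\partial_{\theta_j}S_a+\partial_a\partial_{\theta_j}S_a\bigr)$, symmetrized in the first term, then becomes a double or triple integral over times of products of Heisenberg-evolved Paulis. For the term $\Tr(\sigma\,\mathcal{X}\mathcal{Y})$ with $\mathcal{X},\mathcal{Y}$ Hermitian, the symmetrized product $\{\mathcal X,\mathcal Y\}/2$ is measurable. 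I would use a Hadamard-test or parameter-shift style circuit: each Pauli factor acts as a $\pm1$ reflection, and time evolutions $e^{\pm\I Ht}$ are applied to a copy of $\sigma$. The commutator $\partial_a(\cdot)=-\I[A_a,\cdot]$ is realized by the standard shift $e^{\I\pi A_a/4}$, since $A_a$ is Pauli.

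\emph{Step 3: sampling and concentration.} Sample $a$ uniformly from the poly$(n)$ frame, and sample times from $|\mu|/\|\mu\|_1$ and the Duhamel measures. Each single-shot outcome is then bounded by
\begin{equation*}
B=|\mathcal A|\cdot\|\mu\|_1^{2}\cdot\poly(\beta)\cdot\max\|P\|=\poly(n,\beta),
\end{equation*}
with factors like $t$ from Duhamel controlled by the exponential tail. Hoeffding plus a union bound over $m$ coordinates gives $N_{\mathrm{copy}}=O(B^2\varepsilon^{-2}\log(2m/\delta))$. Truncating the time integrals at $T$ costs a bias of order $B\,e^{-\pi T/\beta}\poly(T)$. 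Choosing $T=O(\beta\log(\poly(n,\beta)/\varepsilon))$ makes this bias at most $\varepsilon/2$, which gives $T_{\max}$.

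\emph{Main obstacle.} The delicate step is Step 1. The kernel for $\tanh$ is singular at $t=0$, and Duhamel differentiation introduces polynomial-in-$t$ weights, so both must be controlled. I would first try the factorization $\tanh(x)/x$, which is entire in the strip $|\Im x|<\pi/2$. Its Fourier transform decays like $e^{-\pi|t|/\beta}$ with $L^1$ norm $O(\beta)$. This converts $S_a$ into an absolutely convergent integral of $e^{\I Ht}[H,A_a]e^{-\I Ht}$, and $\|[H,A_a]\|\le 2\sum_j|\theta_j|=\poly(n)$. If that fails, the fallback is to approximate $\tanh$ on $[-2\|H\|,2\|H\|]$ by a truncated Fourier series and absorb the approximation error into $\varepsilon$.
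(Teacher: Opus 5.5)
Your proposal is correct and is essentially the paper's proof. Your $\tanh(x)/x$ factorization is exactly the paper's random-time law ($t\sim q_\beta$, $\xi\mid t\sim\mathrm{Uniform}[-t,t]$, with characteristic function $\tanh(\beta\nu/2)/(\beta\nu/2)$), giving $S_a=-\beta\,\mathbb E_\xi\tau_\xi(\partial_aH_\theta)$; the paper then also differentiates by Duhamel, importance-samples unitary branches measured by Hadamard tests, and combines Hoeffding, a union bound over coordinates, and truncation at $R=\mathcal O(\beta\log(L_j/\varepsilon))$. One detail should be made explicit: absorb the Duhamel weight $|\xi|$ into the time law (the paper's $\nu_{1,R}$, with mass $\mathbb E|\xi|=c_U\beta$) so each shot is a bounded $\pm L_j(R)$ variable, since only bounding $|\xi|\le R$ after truncation would add an extra $\log^2(L_j/\varepsilon)$ factor to $N_{\mathrm{copy}}$.
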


The result in the preceding theorem can be further improved in the high-temperature regime for Hamiltonians of bounded locality and interaction degree. In particular, the quasi-locality of the gradient observables allows multiple gradient coordinates to be estimated simultaneously from a single copy of $\sigma$. The analysis of the objective landscape further guarantees the convergence of projected gradient descent. We apply our score-matching framework to obtain the following end-to-end guarantee for learning high-temperature Gibbs states, whose copy complexity matches the minimax lower bound up to constants.

\begin{theorem}[Sample-optimal high-temperature learning, informal]
\label{thm:intro-high-temperature}
Suppose $H^*=\sum_{j=1}^m\theta_j^*P_j$ is an $n$-qubit Hamiltonian where each Pauli term has non-identity weight at most $k$ and intersects at most $D$ other terms. Suppose $|\theta_j^*|\le1$. For fixed $k,D$, there is a threshold $\beta_0>0$ independent of $n$ such that, for $0<\beta<\beta_0$, quantum score matching with the single-qubit frame $\{X_x,Z_x\}_{x=1}^n$ returns $\widehat\theta$ satisfying
\begin{equation}\label{eq:intro-learning-guarantee}
\mathbb P\left(\|\widehat\theta-\theta^*\|_\infty\leq\varepsilon\right)\geq1-\delta
\end{equation}
using
\begin{equation}\label{eq:intro-copy-complexity}
N_{\rm copy}=\mathcal{O}\left(\frac{1}{\beta^2\varepsilon^2}\log\frac{2m}{\delta}\right)
\end{equation}
copies of $\sigma$, for any $0<\varepsilon, \delta<1$.
\end{theorem}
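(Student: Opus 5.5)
We plan to prove \cref{thm:intro-high-temperature} in three stages: strong convexity of the objective, a sample-efficient gradient oracle, and convergence of projected gradient descent with errors. Throughout we use the frame $\{X_x,Z_x\}_{x=1}^n$ and the box $\Theta=\{\|\theta\|_\infty\le 1\}$.

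\textbf{Stage 1: landscape.} By \cref{thm:intro-framework}, $J_Q(\theta)=\Tr(\sigma\mathcal L(\theta))+\mathrm{const}$, and $J_Q(\theta^*)=0$. We expand the symmetric score of $\rho(\theta)=e^{-\beta H(\theta)}/Z$ in $\beta$. Since $\partial_a\rho=-\I[A_a,\rho]$, the leading term is $S_a(\theta)=-\I\beta[A_a,H(\theta)]+\mathcal O(\beta^2)$ in operator norm, with quasi-local corrections of size $(C\beta)^\ell$ at range $\ell$. The Hessian of $J_Q$ is then, up to higher-order terms, $\beta^2\sum_a\Tr\bigl(\sigma\,[A_a,\partial_jH][A_a,\partial_kH]^\dagger\bigr)$. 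Because $\sigma\approx I/2^n$ at high temperature and distinct Pauli terms stay distinguishable under single-qubit $X,Z$ commutators, this is $\beta^2(c\,I+E)$, where $E$ is a matrix whose row sums are $\mathcal O(\beta)$ by bounded degree $D$. We therefore expect, for $\beta<\beta_0$, $\mu\beta^2 I\preceq\nabla^2 J_Q\preceq L\beta^2 I$ uniformly on $\Theta$, with $\mu,L$ depending only on $k,D$. This should also follow from \cref{thm:uniform-local-pauli-high-temperature} together with a local version of the identifiability argument. It gives a unique minimizer at $\theta^*$ and the error bound $\|\theta-\theta^*\|_\infty\lesssim \beta^{-2}\|\nabla J_Q(\theta)\|_\infty$, using diagonal dominance of the Hessian in $\ell_\infty$.

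\textbf{Stage 2: gradient oracle with shared copies.} Write $\partial_jJ_Q=\Tr(\sigma\mathcal G_j(\theta))$. Using the quasi-locality above, we truncate $\mathcal G_j$ to an operator supported on an $\mathcal O(1)$-radius neighbourhood of $\mathrm{supp}(P_j)$, with error $(C\beta)^{r}\beta^2$. Taking $r=\mathcal O(\log(1/\varepsilon)/\log(1/\beta))$ makes this error at most $\beta^2\varepsilon/4$. We then expand each truncated observable in Paulis, and measure each copy of $\sigma$ in uniformly random local Pauli bases (classical shadows). Since every truncated $\mathcal G_j$ has constant support and Pauli-$\ell_1$ norm $\mathcal O(\beta^2)$, each copy yields an unbiased estimate of all $m$ coordinates with variance $\mathcal O(\beta^4)$.

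\textbf{Stage 3: optimization.} Reaching gradient accuracy $\beta^2\varepsilon'$ in $\ell_\infty$ needs $\mathcal O(\varepsilon'^{-2}\log(m/\delta))$ copies, by Hoeffding and a union bound. This is not yet the claimed rate, so we do not run descent with fresh copies at every iteration. Instead we fix one batch of $N$ copies, form the empirical objective $\widehat J(\theta)=\widehat{\Tr}(\sigma\mathcal L(\theta))$, and let PGD (which is linearly convergent by Stage 1) converge to its minimizer $\widehat\theta$. Since $\nabla J_Q(\widehat\theta)$ differs from $\nabla\widehat J(\widehat\theta)=0$ by the empirical gradient error, $\|\widehat\theta-\theta^*\|_\infty\lesssim\beta^{-2}\sup_\theta\|\nabla\widehat J-\nabla J_Q\|_\infty$. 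Because $\mathcal L$ is polynomial-like in $\theta$ with locally bounded coefficients, this supremum reduces to concentration of finitely many local Pauli expectations $\Tr(\sigma P)$. Those are estimated to accuracy $\beta\varepsilon$ with $N=\mathcal O(\beta^{-2}\varepsilon^{-2}\log(m/\delta))$ copies. The key point is that the gradient's dependence on $\sigma$ is $\mathcal O(\beta)$ relative to the fluctuation of the measured operators, since the dominant $\beta^2$-scaled part pairs $\Tr(\sigma\cdot)$ against operators whose $\sigma$-sensitivity is first order in $\beta$. Combined with curvature $\beta^2$, this yields error $\varepsilon$.

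\textbf{Main obstacle.} The hard step is obtaining the sharp $\beta^{-2}$ rather than $\beta^{-4}$: naive variance $\beta^4$ divided by curvature $\beta^4$ gives only $\varepsilon^{-2}$, so the problem is correct bookkeeping of the orders in $\beta$. I would first try to show that at $\theta$ near $\theta^*$ the zeroth-order (in $\sigma-I/2^n$) part of $\nabla\widehat J$ is deterministic, so that only the $\mathcal O(\beta)$ fluctuating part contributes variance. I would then verify the uniform-in-$\Theta$ curvature via a cluster expansion, which is where $\beta_0$ depending only on $k,D$ enters.
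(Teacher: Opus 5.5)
Your Stage~1 is essentially the paper's uniform Hessian bound (\cref{lem:finite-high-temperature-hessian}). The sample-complexity argument, however, has a genuine gap, and it sits exactly at the point you flag as the main obstacle.

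\textbf{The order-$\beta$ bookkeeping.} The Stage~2 claim that $\mathcal G_j$ has Pauli-$\ell_1$ norm $\mathcal O(\beta^2)$ and per-copy variance $\mathcal O(\beta^4)$ is false. In $\mathcal G_j=\sum_a\bigl(\tfrac12\{S_a,T_{a,j}\}+\partial_aT_{a,j}\bigr)$, only the anticommutator is second order; the derivation term is first order. Since $T_{a,j}=-\beta\,\partial_aP_j+\mathcal O(\beta^2)$, and $\partial_a\partial_aP_j=-4P_j$ whenever $A_a$ anticommutes with $P_j$, you get $\mathcal G_j=\beta\Gamma_{j,j}P_j+\mathcal O(\beta^2)$. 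The per-copy variance is therefore $\Theta(\beta^2)$, because $\operatorname{Var}_\sigma(P_j)=1-\mathcal O(\beta^2)$ at high temperature.

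Your own arithmetic shows $\mathcal O(\beta^4)$ cannot be right. With curvature $\beta^2$ it would give $\mathcal O(\varepsilon^{-2}\log(m/\delta))$ copies, which violates the lower bound in \cref{thm:minimax-sample-lower-bound}. Once the norm is corrected, there is no obstacle left. The preconditioned projected map contracts in $\ell_\infty$, so a gradient tolerance $\tau\asymp\beta^2\varepsilon$ suffices, and $\beta^2/\tau^2=\beta^{-2}\varepsilon^{-2}$. Your proposed fix, making a ``zeroth-order part'' deterministic, cannot push the variance below the squared norm of the traceless part of $\mathcal G_j$, and it is not needed.

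Nor do you need single-batch ERM. Fresh-copy descent already attains the rate if you take $\tau_t\propto\beta^2r^t$ with $r=3/4$ and $\delta_t\propto r^{2(T-1-t)}$. The round costs $N_t$ then grow geometrically, $\sum_tN_t$ is dominated by the final round, and no $\log(1/\varepsilon)$ factor appears. This is how \cref{thm:end-to-end-qsm-upper-bound} is proved.

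\textbf{The shared-copy estimator.} The estimator you propose is also not justified, for three reasons.

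First, the truncation radius cannot be $\mathcal O(1)$. The tail of $\mathcal G_j$ beyond radius $r$ is only bounded by $C_{\mathrm{tail}}\beta e^{-\kappa r}$, with prefactor $\beta$, not $\beta^2$. Even with your faster $(C\beta)^r$ decay, reaching accuracy $\beta^2\varepsilon$ forces $r\to\infty$ as $\varepsilon\to0$, so the supports $B_r(X_j)$ grow like $b^r$.

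Second, random-Pauli shadows have variance exponential in Pauli weight. Lieb--Robinson bounds control the operator norm of the shells, not their Pauli-$\ell_1$ norm, and converting between the two can cost up to a factor $2^{|B_r|}$. So you obtain no $\varepsilon$-independent per-copy variance.

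Third, the single-batch ERM needs $\theta$-independent data and uniform-in-$\theta$ concentration. But $\mathcal L(\theta)$ is not polynomial in $\theta$: $S_a(\theta)$ involves $\tanh(\tfrac\beta2\mathbf{ad}_{H(\theta)})$ and is only quasi-local. The reduction to ``finitely many local Pauli expectations'' is therefore unsupported.

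The paper avoids all three problems:
\begin{enumerate}
\item It splits $\mathcal G_j=\sum_rG_{j,r}$ with $\operatorname{supp}G_{j,r}\subseteq B_r(X_j)$ and $\|G_{j,r}\|_\infty\le a_r=C_{\mathrm{sh}}\beta e^{-\kappa r}$ (\cref{thm:high-temperature-gradient-shells}).
\item It measures each shell with the two-outcome POVM $\tfrac12(I\pm G_{j,r}/a_r)$, whose outcome is bounded by $a_r$ regardless of support size.
\item It measures a whole color class of disjoint neighborhoods on one copy.
\item It samples $r$ with $p_r\propto a_r\sqrt{\chi_r}$, so the second moment is $(\sum_ra_r\sqrt{\chi_r})^2=\mathcal O(\beta^2)$, because $be^{-\kappa}<1$.
\item It uses a median of means to obtain the $\log(2m/\delta)$ dependence.
\end{enumerate}
These POVMs depend on $\theta$, which is precisely why fresh copies are drawn in every round instead of reusing one batch.
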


The guarantee combines identifiability from a local frame, controlled optimization dynamics, and shared-copy estimation of quasi-local gradient observables. The formal result is \cref{thm:end-to-end-qsm-upper-bound}. In the parameter regime of \cref{thm:minimax-sample-lower-bound}, the copy complexity matches the minimax lower bound up to constants. In particular, the logarithmic dependence on $m$ is achieved by the measurement protocol, rather than by estimating each coordinate on a separate batch of copies.

\paragraph{Concurrent work.}
While finalizing this manuscript, we became aware of an independent concurrent work~\cite{shukla2026operator}, submitted to arXiv on September 21, 2026 (two days before ours). They propose an operator score matching formulation corresponding to the $\lambda=1$ case of our framework (see~\cref{append:lambda_score}) and study its application to Hamiltonian learning primarily through numerical experiments across a range of physical models. Our work develops a general quantum score-matching framework with systematic and rigorous analyses, alongside demonstrations through numerical simulations and quantum experiments on IBM hardware. Both frameworks may also be applied beyond the high-temperature regime, and both papers report numerical success on particular low-temperature instances. It is worth noting that such instance-specific success should be distinguished from a general efficiency guarantee. In particular, information-theoretic lower bounds established in prior work~\cite{haah2024learning} and in ours~\cref{thm:minimax-sample-lower-bound} show that the required copy complexity can grow exponentially with inverse temperature.

\section{Quantum score matching}

\subsection{Symmetric quantum score}

As in the classical setting, we will define the score function $S(\theta)$ of a density operator $\rho(\theta)$ as its logarithmic derivative.
To generalize this notion to density operators, we first define the derivation on matrix algebras. 

Let $\cM_n$ denote the algebra consisting of all $2^n$-by-$2^n$ complex-valued matrices; without ambiguity, we may simply write $\cM$.
We denote $\|X\|$ as the spectral norm of $X$.
For a vector $v$, $\|v\|_p$ denotes its $\ell^p$-norm.

\begin{definition}[Derivation]
    Given a discrete index set $\cA$ and a set of Hermitian matrices $\{A_a\}_{a\in \cA}$, 
    we define the derivation $\partial_a\colon \cM \to \cM$ for each $a \in \cA$:
    \begin{align}
        \partial_a(\cdot) := -i [A_a, \cdot].
    \end{align}
    The set $\{A_a\}_{a\in \cA}$ is called the \textit{frame of derivations}, or simply a \textit{frame}.
\end{definition}
\begin{lemma}
    The derivation $\partial_a$ satisfies the Leibniz rule and is star-preserving. Namely, for any $X, Y \in \cM$, 
    \begin{align}
        \partial_a(XY) = \partial_a(X) Y + X \partial_a(Y),\qquad \partial_a(X^\dagger) = (\partial_a(X))^\dagger.
    \end{align}
\end{lemma}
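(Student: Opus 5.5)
Both identities follow by expanding the commutator in the definition $\partial_a(\cdot) = -\I[A_a,\cdot]$. We use only associativity of matrix multiplication and the properties of the adjoint: $(XY)^\dagger = Y^\dagger X^\dagger$, $(cX)^\dagger = \bar c X^\dagger$, and $A_a^\dagger = A_a$.

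\emph{Leibniz rule.} Write $-\I[A_a, XY] = -\I(A_a XY - XYA_a)$. Insert the telescoping term $-XA_aY + XA_aY$ to obtain
\begin{align}
-\I(A_aXY - XA_aY) - \I(XA_aY - XYA_a) = \bigl(-\I[A_a,X]\bigr)Y + X\bigl(-\I[A_a,Y]\bigr),
\end{align}
which is $\partial_a(X)Y + X\partial_a(Y)$. This is the standard fact that $\operatorname{ad}_{A_a}$ is a derivation, and multiplication by the scalar $-\I$ preserves that property.

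\emph{Star preservation.} Compute
\begin{align}
(\partial_a X)^\dagger = \bigl(-\I(A_aX - XA_a)\bigr)^\dagger = \I\,(X^\dagger A_a - A_a X^\dagger) = -\I\,[A_a, X^\dagger] = \partial_a(X^\dagger).
\end{align}
The second equality uses $A_a^\dagger = A_a$ together with conjugating the scalar $-\I$ to $\I$. The third reorders the commutator, absorbing the sign into $\I \mapsto -\I$.

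There is no real obstacle here. The one point that needs care is the star-preservation step: the factor $-\I$ and the Hermiticity of $A_a$ combine so that the two sign flips cancel. If either the prefactor were real or $A_a$ were not Hermitian, the identity would fail. This is why the frame is required to consist of Hermitian matrices.
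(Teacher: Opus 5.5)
Your proposal is correct and is precisely the direct commutator expansion the paper means by ``straightforward calculation'': inserting $\pm XA_aY$ gives the Leibniz rule, and $A_a^\dagger=A_a$ together with $\overline{-\I}=\I$ gives star preservation. One small caveat concerns only your closing remark: Hermiticity of $A_a$ is sufficient but not strictly necessary, since replacing $A_a$ by $A_a+\I cI$ with $c\in\RR$ leaves $\partial_a$ unchanged.
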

\begin{proof}
    Straightforward calculation. The second identity immediately implies that $(\partial_a(X))^\dagger = \partial_a(X)$ as long as $X^\dagger = X$.
\end{proof}

The classical score a probability density $q(x;\theta)$ is $s(x;\theta) := \nabla_x \log(q(x;\theta))$. By the chain rule, we have 
\begin{align}
    \nabla_x q(x;\theta) = q(x;\theta) s(q(x;\theta)).
\end{align}
Therefore, we may define a \textit{quantum score} function $S$ of a density operator $\rho$ such that $\partial_a \rho = \rho * S$,
where $*$ represents certain multiplicative operations, for example, left or right multiplication. 
Due to the non-commutativity of matrix multiplication, different definitions of multiplication lead to different quantum scores. 
In this work, we focus on the \textit{symmetric quantum score} (also known as \textit{symmetric logarithmic derivative}, or \textit{SLD}), where the ``multiplicative operation'' is given by an equal combination of left and right multiplication.

\begin{definition}[Symmetric quantum score]
    Let $\rho\succ0$ and let $\{A_a\}_{a\in\cA}$ be a derivation frame.
    The symmetric quantum score of a density operator $\rho$ with respect to $a \in \cA$, denoted by $S_a$ is the unique Hermitian matrix that solves the following equation,
    \begin{align}\label{eqn:sld-definition}
        \partial_a \rho = \frac{1}{2}\left(\rho S_a + S_a \rho\right).
    \end{align}
    We write $\mathscr{I}_{\rho}(X) := (\rho X + X\rho)/2$, so $S_a := \mathscr{I}^{-1}_{\rho}(\partial_a \rho)$.
\end{definition}
We use the $\rho$-weighted inner product
$\langle X,Y\rangle_\rho:=\Tr(X^\dagger\mathscr I_\rho(Y))$ and its norm
$\|X\|_\rho^2:=\langle X,X\rangle_\rho$. For Hermitian $X,Y$, this gives
$\langle X,Y\rangle_\rho=\frac12\Tr(\rho\{X,Y\})$ and
$\|X\|_\rho^2=\Tr(\rho X^2)$.
The uniqueness and Hermiticity of the symmetric quantum score is standard in quantum information theory. 
This definition can be generalized to allow an unequal combination of left and right multiplications, giving rise to the definition of $\lambda$-quantum score. We establish several properties of the $\lambda$-score, see~\cref{append:lambda_score}.

The symmetric quantum score admits an elegant closed-form expression.
We denote the left and right multiplication by $\rho \succ 0$ as $\mathbf{L}_\rho$, $\mathbf{R}_\rho$, and the 
modular operator is defined as
\begin{align*}
    \mathbf{\Delta}_\rho(X) := \mathbf{L}_\rho \mathbf{R}^{-1}_\rho (X) = \rho X \rho^{-1}.
\end{align*}

\begin{lemma}
    There is a unique Hermitian matrix $S_a$ that solves~\cref{eqn:sld-definition}.
    Moreover, it admits a closed-form expression, 
    \begin{align}
        S_a = 2\I \tanh \circ \log\left(\mathbf{\Delta}^{1/2}_\rho\right)(A_a).
    \end{align}
\end{lemma}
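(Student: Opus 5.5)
The plan is to work in the eigenbasis of $\rho$ and treat $\mathscr I_\rho$ and $\mathbf\Delta_\rho$ as commuting, diagonalizable superoperators on $\cM$. Write $\rho=\sum_i p_i\ket{i}\bra{i}$ with all $p_i>0$. In the matrix-unit basis $E_{ij}=\ket{i}\bra{j}$ one has
\begin{align*}
\mathbf L_\rho E_{ij}=p_iE_{ij},\qquad \mathbf R_\rho E_{ij}=p_jE_{ij},\qquad \mathscr I_\rho E_{ij}=\tfrac{p_i+p_j}{2}E_{ij},\qquad \mathbf\Delta_\rho E_{ij}=\tfrac{p_i}{p_j}E_{ij}.
\end{align*}
Since $p_i+p_j>0$, $\mathscr I_\rho$ is invertible, so \cref{eqn:sld-definition} has the unique solution $S_a=\mathscr I_\rho^{-1}(\partial_a\rho)$ in all of $\cM$. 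This settles existence and uniqueness.

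Hermiticity comes next. Since $\partial_a$ is star-preserving (previous lemma) and $\rho$ is Hermitian, $\partial_a\rho$ is Hermitian. Taking the adjoint of \cref{eqn:sld-definition} shows that $S_a^\dagger$ solves the same equation, and uniqueness then gives $S_a^\dagger=S_a$.

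For the closed form, I compute entrywise:
\begin{align*}
(\partial_a\rho)_{ij}=-\I\,(A_a)_{ij}(p_j-p_i)=\I\,(p_i-p_j)(A_a)_{ij}.
\end{align*}
Hence
\begin{align*}
(S_a)_{ij}=\frac{2\I(p_i-p_j)}{p_i+p_j}(A_a)_{ij}.
\end{align*}
Now set $x=\log\sqrt{p_i/p_j}$, which is the eigenvalue of $\log(\mathbf\Delta_\rho^{1/2})$ on $E_{ij}$. Then
\begin{align*}
\tanh x=\frac{\sqrt{p_i/p_j}-\sqrt{p_j/p_i}}{\sqrt{p_i/p_j}+\sqrt{p_j/p_i}}=\frac{p_i-p_j}{p_i+p_j}.
\end{align*}
Since $\tanh\circ\log(\mathbf\Delta_\rho^{1/2})$ acts on $E_{ij}$ by exactly this scalar (functional calculus of the diagonalizable, positive superoperator $\mathbf\Delta_\rho$), it follows that $S_a=2\I\tanh\circ\log(\mathbf\Delta_\rho^{1/2})(A_a)$.

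The only delicate point is justifying the functional calculus on superoperators. $\mathbf\Delta_\rho$ is positive definite with respect to the Hilbert–Schmidt inner product: it is self-adjoint with positive eigenvalues $p_i/p_j$ on the orthonormal basis $E_{ij}$. So $\mathbf\Delta_\rho^{1/2}$, its logarithm, and $\tanh$ of that are well defined, and each acts diagonally on $E_{ij}$ by the scalar function of the eigenvalue. The rest is routine algebra.
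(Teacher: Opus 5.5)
Your proof is correct, and it differs from the paper's mainly in how $\mathscr I_\rho$ is inverted. The paper gets uniqueness from injectivity ($\{X,\rho\}=0\Rightarrow X=0$ for $\rho\succ0$). It then writes the Lyapunov solution as $S_a=\I\int_0^\infty e^{-\rho t/2}(\rho A_a-A_a\rho)e^{-\rho t/2}\,\ud t$. Staying basis-free and using that $\mathbf L_\rho$ and $\mathbf R_\rho$ commute, it evaluates this as $2\I\,(\mathbf L_\rho-\mathbf R_\rho)(\mathbf L_\rho+\mathbf R_\rho)^{-1}(A_a)=2\I\,(\mathbf\Delta_\rho-I)(\mathbf\Delta_\rho+I)^{-1}(A_a)$, and finishes with $(x-1)/(x+1)=\tanh(\tfrac12\log x)$.

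You instead jointly diagonalize all the superoperators on the matrix units $E_{ij}$. This makes three things explicit: invertibility of $\mathscr I_\rho$, the entrywise formula $(S_a)_{ij}=2\I\,\frac{p_i-p_j}{p_i+p_j}(A_a)_{ij}$, and the justification of the functional calculus. That spectral picture is exactly what underlies the paper's step $\int_0^\infty e^{-(\lambda+\mu)t/2}\,\ud t=2/(\lambda+\mu)$ applied to the commuting pair $\mathbf L_\rho,\mathbf R_\rho$, which the paper uses without spelling it out. The integral route is coordinate-free; yours is more elementary and fully checkable entry by entry.

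You also prove Hermiticity (take the adjoint of the equation, then use uniqueness), which the paper only asserts as standard.
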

\begin{proof}
    Note that the equation $\{X, \rho\}=0$ has a unique solution $X=0$ when $\rho \succ 0$. This implies that~\cref{eqn:sld-definition} has a unique solution. 
    \cref{eqn:sld-definition} is a continuous-time Lyapunov equation, and its solution is given by
    \begin{align}
        S_a = \I \int^\infty_0 e^{-\rho t/2} \left(\rho A_a - A_a \rho\right) e^{-\rho t/2}\,\ud t.
    \end{align}
    Since $\mathbf{L}_\rho$ and $\mathbf{R}_\rho$ commute, we have 
    \begin{align*}
        S_a &= \I \int^\infty_0 e^{-(\mathbf{L}_\rho+\mathbf{R}_\rho)t/2}(\mathbf{L}_\rho - \mathbf{R}_\rho) = 2i \frac{\mathbf{L}_\rho - \mathbf{R}_\rho}{\mathbf{L}_\rho + \mathbf{R}_\rho}(A_a),\\
        &= 2\I \frac{\mathbf{\Delta}_\rho - I}{\mathbf{\Delta}_\rho + I}(A_a) = 2\I \tanh \circ \log\left(\mathbf{\Delta}^{1/2}_\rho\right)(A_a),
    \end{align*}
    where in the second step we use $\int^\infty_0 e^{-(\lambda+\mu)t/2} = \frac{2}{\lambda+\mu}$ for any $\lambda, \mu > 0$. The last step follows from $\frac{x-1}{x+1} = \tanh(\frac{1}{2}\log x)$.
\end{proof}

For a non-degenerate probability density $q(x) \propto e^{-h(x)}$, we have the identity $s(x) = - \nabla_x h(x)$. However, the symmetric quantum score does not satisfy this identity. In other words, if $\rho \propto e^{-H}$ for some Hamiltonian $H$, generally $S_a \neq - \partial_a H = i[A_a,H]$.
In~\cref{append:matrix-logarithm-recover}, we show that this identity can be recovered using the Bogoliubo--Kubo--Mori score. However, the resulting objective does not directly admit sample-based evaluation and training.
Fortunately, the symmetric quantum score turns out to give a quantum operator learning framework that is computationally tractable.

\subsection{Score-based Hamiltonian learning and the Hyv\"arinen formula}
\label{sec:score-based-learning}

Now, we introduce the standard quantum Gibbs state learning problem. We adopt the standard Pauli Hamiltonian model. 
Suppose that we have access to a data ensemble $\sigma$ is the Gibbs state of a $k$-local Hamiltonian $H^*$ at an inverse temperature $\beta > 0$,
\begin{align}
    \sigma = \frac{1}{\Tr(e^{-\beta H^*})} \exp(-\beta H^*), \qquad \text{with}\quad H^* = \sum^m_{j=1} \theta^*_j P_j\,.
\end{align}
where each of the $m$ distinct Pauli product terms act non-trivially on at most $k$ qubits and the coefficients satisfy $\theta^*_j \in [-1,1]$ for all $j\in [m]$.
Assuming that the inverse temperature $\beta$ and the Pauli terms $\{P_j\}^m_{j=1}$ are known, our goal is to learn the parameter $\theta^*$.

Inspired by the classical score matching framework, we turn the quantum Gibbs state learning problem into the following optimization problem, where the objective is to minimize the Fisher divergence between the observed $\sigma$ and a parametrized family $\rho(\theta)$.
We set $\Theta=[-1,1]^m$ and consider the thermal state
\begin{equation}
    \rho(\theta) := \frac{1}{\Tr(e^{-\beta H(\theta)})} \exp(-\beta H(\theta)), \qquad \text{with}\quad H(\theta) = \sum^m_{j=1} \theta_j P_j\,.
\end{equation}

The Fisher divergence between $\rho(\theta)$ and $\sigma$ is given by 
\begin{align}\label{eqn:score-matching-objective}
    J_Q(\theta) := \mathcal{F}(\rho(\theta)\|\sigma) = \frac{1}{2}\sum_{a\in \cA}\left\|S_{a}(\theta) - S^*_a\right\|^2_{\sigma}\, ,
\end{align}
where $S_a(\theta) := \mathscr{I}^{-1}_{\rho(\theta)}(\partial_a \rho(\theta))$ and $S^*_a := \mathscr{I}^{-1}_{\sigma}(\partial_a \sigma)$ are the symmetric score of $\rho(\theta)$ and $\sigma$ (with respect to $a \in \cA$), respectively.

The function $J_Q(\theta)$ will be referred as the score-matching objective, as $\theta = \theta^*$ is a global minimizer of $J_Q(\theta)$. However, it may not be the case that \textit{any} global minimizer of $J_Q(\theta)$ can recover the target Gibbs state $\sigma$.
We say the target parameter $\theta^*$ is \textit{globally identifiable} if every global minimum of the objective $J_Q(\theta)$ correspond to $\sigma=\rho(\theta^*)$. In the following lemma, we characterize the global identifiability in quantum score matching by a commutant condition of the derivation frame. 
This lemma is a direct corollary of~\cref{thm:global-identifiability}; more details are available in~\cref{append:lambda_score}.

Given a set of matrices $\cS = \{M_j\}_{j\in J}$ where $J$ is a finite index set, we denote $\cS'$ as the joint commutant of all $M_j \in \cS$, i.e., 
\begin{equation}
    \cS' = \bigcap_{j\in J}\left(\ker \mathbf{ad}_{M_j}\right),\qquad \text{where}\quad \mathbf{ad}_{M_j}(\cdot) := [M_j, \cdot]
\end{equation} 

\begin{lemma}[Global identifiability]
    \label{lem:global-identifiability-sld}
    Let $\{A_a\}_{a\in \cA}$ be a derivation frame and $\sigma \propto e^{-\beta H^*}$ be a quantum Gibbs state.
    Using this derivation frame and target state, we define $J_Q(\theta)$ as in~\cref{eqn:score-matching-objective}, let $\hat{\theta}$ be a global minimizer of $J_Q(\theta)$. 
    Define 
    \begin{align}\label{eqn:K-sigma-sld}
        K^\sigma_{a} := \operatorname{sech}\circ \log\!\left(\mathbf{\Delta}^{1/2}_\sigma\right)(A_a).
    \end{align}
    If $\{K^\sigma_{a}:a\in\cA\}'=\CC I$, we have $\rho(\hat{\theta}) = \sigma$.
\end{lemma}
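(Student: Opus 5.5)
The plan is to show first that any global minimizer forces exact score matching. Then I will turn the score-matching condition into a commutation condition on the ``relative density'' $\sigma^{-1/2}\rho(\hat\theta)\sigma^{-1/2}$, and finish with the commutant hypothesis $\{K^\sigma_a\}'=\CC I$.

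\emph{Step 1 (zero objective).} Since $\sigma=\rho(\theta^*)$ with $\theta^*\in\Theta$, we have $J_Q(\theta^*)=0$. Hence any global minimizer $\hat\theta$ satisfies $J_Q(\hat\theta)=0$. Because $\sigma\succ0$, $\|X\|_\sigma^2=\Tr(\sigma X^2)=\|\sigma^{1/2}X\|_F^2$ is a genuine norm on Hermitian matrices. Therefore $S_a(\hat\theta)=S^*_a=:S_a$ for every $a\in\cA$.

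\emph{Step 2 (a shared linear equation).} Write $\rho:=\rho(\hat\theta)$ and rewrite \cref{eqn:sld-definition} as a Sylvester-type equation. The identity $-\I[A_a,\rho]=\frac12(\rho S_a+S_a\rho)$ rearranges to
\begin{equation*}
B_a\rho+\rho B_a^\dagger=0,\qquad B_a:=S_a+2\I A_a .
\end{equation*}
The crucial point is that $B_a$ depends only on the common score $S_a$ and the frame. Hence both $\rho$ and $\sigma$ solve the same equation. In particular, $B_a\sigma=-\sigma B_a^\dagger$.

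\emph{Step 3 (conjugation by $\sigma^{1/2}$).} Set $X:=\sigma^{-1/2}\rho\,\sigma^{-1/2}$, which is Hermitian, and $C_a:=\sigma^{-1/2}B_a\sigma^{1/2}$. The relation $B_a\sigma=-\sigma B_a^\dagger$ gives $\sigma^{1/2}B_a^\dagger\sigma^{-1/2}=-C_a$, and also $C_a^\dagger=-C_a$. Substituting $\rho=\sigma^{1/2}X\sigma^{1/2}$ into $B_a\rho+\rho B_a^\dagger=0$ and multiplying by $\sigma^{-1/2}$ on both sides yields $[C_a,X]=0$ for all $a$.

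\emph{Step 4 (identify $C_a$ with $K^\sigma_a$).} I will work in an eigenbasis of $\sigma$ with eigenvalues $p_i>0$. From the closed form of the score (the preceding lemma),
\begin{equation*}
(S_a)_{ij}=2\I\,\frac{p_i-p_j}{p_i+p_j}(A_a)_{ij},
\end{equation*}
so $(B_a)_{ij}=4\I\,\frac{p_i}{p_i+p_j}(A_a)_{ij}$. It follows that
\begin{equation*}
(C_a)_{ij}=p_i^{-1/2}(B_a)_{ij}p_j^{1/2}=4\I\,\frac{\sqrt{p_ip_j}}{p_i+p_j}(A_a)_{ij}.
\end{equation*}
On the other hand, $\operatorname{sech}\circ\log(\mathbf{\Delta}_\sigma^{1/2})$ acts on matrix units $E_{ij}$ by multiplication with $\frac{2}{(p_i/p_j)^{1/2}+(p_j/p_i)^{1/2}}=\frac{2\sqrt{p_ip_j}}{p_i+p_j}$. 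Hence $C_a=2\I K^\sigma_a$.

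\emph{Step 5 (conclusion).} Step 3 gives $[K^\sigma_a,X]=0$ for all $a$, so $X\in\{K^\sigma_a\}'=\CC I$, i.e.\ $X=cI$. Then $\rho=c\,\sigma$, and $\Tr\rho=\Tr\sigma=1$ forces $c=1$.

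The only step requiring care is Step 4: the algebraic identification of the conjugated operator $C_a$ with $K^\sigma_a$, including the factor conventions of the modular calculus. Steps 2–3 are where the idea lies, since they turn matched scores into a commutation statement. I expect the rest to be routine.
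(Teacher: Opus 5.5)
Your proposal is correct and follows essentially the same route as the paper's proof (given for general $\lambda$ in the appendix). There, matched scores make $\rho(\hat\theta)$ and $\sigma$ satisfy the same linear relation; your $B_a$ equals $2\I\widetilde B^{\sigma}_{1/2,a}$ in the paper's notation. Conjugation by $\sigma^{1/2}$ then yields $[\sigma^{-1/2}\rho\,\sigma^{-1/2},K^\sigma_a]=0$, and the scalar commutant together with trace normalization forces $\rho=\sigma$. The only difference is that the paper identifies $\sigma^{1/2}\widetilde A\sigma^{-1/2}$ with $K^\sigma_{\lambda,a}$ via the modular-operator functional calculus, whereas you verify $C_a=2\I K^\sigma_a$ entrywise in an eigenbasis of $\sigma$; that computation is correct.
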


The definition of $J_Q(\theta)$ contains the unknown target scores $S_a^*$, which can not be efficiently computed without preparing the Gibbs state $\rho(\theta)$. 
A major advantage of classical score matching is that the objective $J_Q(\theta)$ can be rewritten as an expectation over the data ensemble $\sigma$ itself via integration by parts, a result first proposed in Ref.~\cite{hyvarinen2005estimation}.
We show that this structure generalizes to our quantum score matching framework, and it enables efficient training of the score matching objective without preparing the Gibbs state $\rho(\theta)$ nor the partition function $\Tr(\rho(\theta))$.

\begin{lemma}[Quantum Hyv\"arinen formula]\label{lem:quantum-hyvarinen-sld}
    The score-matching objective can be expressed as 
    \begin{align}\label{eqn:qsm-loss-observable}
    J_Q(\theta) = \Tr\!\left(\sigma\mathcal L(\theta)\right)+C,\qquad \mathcal L(\theta) := \frac{1}{2}\sum_{a\in \cA} \left(S^2_a(\theta) + 2\partial_a S_a(\theta)\right)
    \end{align}
    and the constant $C := \frac12\sum_a \|S^*_a\|^2_\sigma$ is independent of $\theta$.
\end{lemma}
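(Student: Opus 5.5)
We expand the square in the $\sigma$-weighted norm and treat the cross term by a noncommutative integration by parts, following Hyv\"arinen's argument. For Hermitian $X,Y$ we have $\|X-Y\|_\sigma^2=\Tr(\sigma X^2)-2\langle X,Y\rangle_\sigma+\Tr(\sigma Y^2)$. Applying this with $X=S_a(\theta)$ and $Y=S_a^*$ gives
\begin{align*}
J_Q(\theta)=\sum_{a\in\cA}\Bigl(\tfrac12\Tr\bigl(\sigma S_a^2(\theta)\bigr)-\langle S_a(\theta),S_a^*\rangle_\sigma\Bigr)+\tfrac12\sum_{a\in\cA}\|S_a^*\|_\sigma^2 .
\end{align*}
The last sum is the constant $C$, which does not depend on $\theta$. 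The first term is already $\Tr(\sigma\cdot\tfrac12\sum_a S_a^2(\theta))$, so it is evaluable from copies of $\sigma$. The only remaining task is to rewrite the cross term without $S_a^*$.

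For the cross term we use the definition of the inner product and of the score, together with self-adjointness of $\mathscr I_\sigma$ with respect to the Hilbert--Schmidt pairing. For Hermitian $S_a(\theta)$,
\begin{align*}
\langle S_a(\theta),S_a^*\rangle_\sigma=\Tr\bigl(S_a(\theta)\,\mathscr I_\sigma(S_a^*)\bigr)=\Tr\bigl(S_a(\theta)\,\partial_a\sigma\bigr),
\end{align*}
because $\mathscr I_\sigma(S_a^*)=\partial_a\sigma$ by \cref{eqn:sld-definition}. This identity is the quantum counterpart of $\int q\,s\,f=\int \nabla q\, f$. It is exactly where the symmetric score pays off: the unknown $\mathscr I_\sigma^{-1}$ cancels against the weight of the inner product.

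The integration by parts itself comes from the fact that the derivation $\partial_a=-\I[A_a,\cdot]$ is skew-adjoint under the trace. By cyclicity, $\Tr([A_a,Y])=0$, and the Leibniz rule then gives $\Tr(X\,\partial_aY)=-\Tr(\partial_a X\,Y)$ for all $X,Y$. Applying this with $X=S_a(\theta)$ and $Y=\sigma$ yields $\Tr(S_a(\theta)\,\partial_a\sigma)=-\Tr(\sigma\,\partial_aS_a(\theta))$. Substituting back, $-\langle S_a(\theta),S_a^*\rangle_\sigma=\Tr(\sigma\,\partial_aS_a(\theta))$, and summing over $a$ produces $\Tr(\sigma\mathcal L(\theta))+C$ with $\mathcal L$ as in \cref{eqn:qsm-loss-observable}.

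No step is a genuine obstacle; the only points that need care are bookkeeping. First, $S_a(\theta)$ is Hermitian by the uniqueness lemma, so $\langle\cdot,\cdot\rangle_\sigma$ reduces to the stated real form and the expansion of the square is valid. Second, $\partial_aS_a(\theta)$ is Hermitian by star-preservation, so $\mathcal L(\theta)$ is a genuine observable. Third, the trace of a commutator vanishes because the space is finite-dimensional, so integration by parts produces no boundary terms. This is in contrast to the classical case, which needs decay assumptions at infinity.
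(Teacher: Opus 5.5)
Your proof is correct and follows the same route as the paper: expand the square in the $\sigma$-weighted norm, then remove $S_a^*$ from the cross term by integrating $\partial_a$ by parts. The only difference is packaging: the paper (for general $\lambda$) computes the weighted adjoint $\partial_a^\sharp$ and uses $S_a^*=-\partial_a^\sharp(I)$, whereas you cancel $\mathscr I_\sigma^{-1}$ directly via $\mathscr I_\sigma(S_a^*)=\partial_a\sigma$ and then use the trace skew-adjointness of $\partial_a$, which is the same computation unpacked and extends verbatim to general $\lambda$.
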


The proof of~\Cref{lem:quantum-hyvarinen-sld} is given in~\Cref{append:lambda_score}; in fact, we prove a stronger result (see~\Cref{lem:quantum-hyvarinen}), where a similar Hyv\"arinen formula holds for all $\lambda$-score (and the symmetric quantum score is a special case with $\lambda=1/2$).

\subsection{Gradient estimation and training}

Classical score matching often relies on gradient-based methods to find the desired parameters.
In quantum score matching, similarly, we prove that $\nabla J_Q(\theta^*)=0$, meaning that the target parameter $\theta^*$ is always a first-order stationary point of the objective $J_Q(\theta)$, see~\Cref{lem:first-order-stationary}.
This motivates us to consider gradient-based methods in the quantum score matching framework.

It remains to estimate the gradient of the objective $J_Q(\theta)$.
We denote $T_{a,j}(\theta) := \partial_{\theta_j} S_a(\theta)$, and the derivatives of $J_Q(\theta)$ can be written as
\begin{equation}\label{eqn:exact-gradient-observable-final}
    \partial_{\theta_j}J_Q(\theta)=\Tr\left(\sigma\mathcal G_j(\theta)\right),\qquad \mathcal G_j(\theta):=\sum_a\left(\frac12\{S_a(\theta),T_{a,j}(\theta)\}-\I[A_a,T_{a,j}(\theta)]\right).
\end{equation}
The observable $S_a(\theta)$ admits the following random-time representation (we write $H_\theta := H(\theta)$):
\begin{align}\label{eqn:S_a_Fourier}
    S_a(\theta) = -\beta \int^\infty_0 q_\beta(t) \frac{1}{2t}\int^t_{-t} \tau^{H_\theta}_u(\partial_a(H_\theta)) \ud u \ud t,\qquad \tau^{H}_u(\cdot) := e^{iu H} \cdot  e^{-iu H},
\end{align}
where 
\begin{align}
    q_\beta(t) = \frac{4t}{\beta^2\sinh(\pi t/\beta)}, \qquad t>0,
    \label{eqn:appendix-random-time-density}
\end{align}
is a probability density on $(0,\infty)$. 
\Cref{eqn:S_a_Fourier} allows us to compute the observable $S_a(\theta)$ as a probabilistic expectation over a random variable $\xi \sim \mathrm{Uniform}[-t, t]$ conditioned on $t \sim q_\beta(t)$, in other words,
\begin{align}\label{eqn:probablistic_S_a}
    S_a(\theta) = -\beta \mathbb{E}_{\xi}\left(\tau^{H_\theta}_{\xi}(\partial_a H_\theta)\right).
\end{align}
Therefore, by differentiating~\cref{eqn:S_a_Fourier} using the chain rule, we have $T_{a,j}(\theta) = T^{(1)}_{a,j}(\theta) + T^{(2)}_{a,j}(\theta)$, where
\begin{align}
    T^{(1)}_{a,j}(\theta) &:= -\beta \mathbb{E}_{\xi}\left(\tau^{H_\theta}_\xi(\partial_a(\partial_{\theta_j} H_{\theta}))\right) = -\beta \mathbb{E}_{\xi}\left(\tau^{H_\theta}_\xi(\partial_a(P_j))\right),\label{eqn:probablistic_T_a_1}\\
    T^{(2)}_{a,j}(\theta) &:= \I \beta \mathbb{E}_\xi\left(\xi \int^1_0 \tau^{H_{\theta}}_{(1-s)\xi}\left([\tau^{H_{\theta}}_{s\xi}(\partial_a H_\theta), P_j]\right)\,\ud s\right).\label{eqn:probablistic_T_a_2}
\end{align}
The second identity involves commutators with Pauli matrices $P_j$. Because $P_j^2=I$, we further exploit the exact parameter-shift representation
\begin{equation}
    \I[P_j,O]=\mathcal R_{j,+1}(O)-\mathcal R_{j,-1}(O)
    =2\mathbb E_\eta(\eta\mathcal R_{j,\eta}(O)),
    \qquad
    \mathcal R_{j,\eta}(O):=e^{\eta\I\pi P_j/4}Oe^{-\eta\I\pi P_j/4}.
    \label{eqn:gradient-sampling-parameter-shift}
\end{equation}
This further simplifies the second term to
\begin{align}
    T^{(2)}_{a,j}(\theta) = -2\beta \mathbb{E}_{\xi,\eta}\left(\xi\eta \int^1_0 \tau^{H_{\theta}}_{(1-s)\xi}\left(\mathcal{R}_{j,\eta}(\tau^{H_{\theta}}_{s\xi}(\partial_a H_\theta))\right)\,\ud s\right),\label{eqn:probablistic_T_a_2_final}
\end{align}
where $\eta=\pm 1$ is a fair coin independent of $\xi$. The integral over $s$ is equivalently an expectation over an independent $s\sim\mathrm{Uniform}[0,1]$.

While these identities are complicated, we note that $\partial_a H_\theta$, $\partial_a P_j$ can be represented as weighted sums of unitary matrices (e.g., Paulis). Therefore, by plugging~\cref{eqn:probablistic_S_a}, \cref{eqn:probablistic_T_a_1}, and \cref{eqn:probablistic_T_a_2_final} into~\cref{eqn:exact-gradient-observable-final}, we can represent the gradient observable $\mathcal{G}_j(\theta)$ as an expectation over multiple branches of unitary operators.
This gives an algorithm to estimate the gradient of $J_Q(\theta)$ using time evolution of $H_\theta$, Pauli insertions and rotations, and copies of $\sigma$.

\begin{theorem}[Informal version of~\Cref{thm:parameter-shift-gradient-cost}]
    \label{thm:informal-gradient-estimation-cost}
    Fix $\theta$, $\beta>0$, a coordinate $j$, $0<\delta<1$, and $\varepsilon>0$.
    Assume access to the time-evolution oracle $e^{\I tH_\theta}$, and there are at most $\poly(n)$ number of generators in the derivation frame $\cA$.
    With success probability at least $1-\delta$, we can estimate $\partial_{\theta_j}J_Q(\theta)$ to additive error at most $\varepsilon$ using $\poly(n,\beta,1/\varepsilon)$ copies of the thermal state $\sigma$, and the maximal evolution time of $H_\theta$ is $\widetilde{\mathcal{O}}(\beta)$.
\end{theorem}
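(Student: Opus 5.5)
The plan is to write $\partial_{\theta_j}J_Q(\theta)=\Tr(\sigma\mathcal G_j(\theta))$ as an expectation of one-shot measurement outcomes whose values are uniformly bounded, and then apply Hoeffding's inequality. We use \cref{eqn:exact-gradient-observable-final} and expand each of its pieces through the random-time representations \cref{eqn:probablistic_S_a}, \cref{eqn:probablistic_T_a_1} and \cref{eqn:probablistic_T_a_2_final}. Write $\partial_a H_\theta=\sum_k c_{a,k}U_{a,k}$ and $\partial_a P_j=\sum_l d_{a,j,l}V_{a,j,l}$ as linear combinations of Pauli strings. For a frame of $\poly(n)$ Paulis and $m=\poly(n)$ terms with $|\theta_j|\le1$, the $\ell^1$ weights $\|c_a\|_1\le 2\sum_j|\theta_j|$ and $\|d_{a,j}\|_1\le 2$ are polynomial. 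Sampling $t\sim q_\beta$, $\xi\sim\mathrm{Uniform}[-t,t]$, $s\sim\mathrm{Uniform}[0,1]$, $\eta=\pm1$ and the Pauli indices with probabilities proportional to $|c|,|d|$ turns $S_a$ and $T_{a,j}$ into importance-weighted expectations of conjugated Paulis.

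The terms of $\mathcal G_j$ are handled as follows.

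The term $-\I[A_a,T_{a,j}]$ is linear in $T_{a,j}$. With $A_a$ a Pauli, we apply the parameter-shift identity \cref{eqn:gradient-sampling-parameter-shift} once more. Each sample is then $\Tr(\sigma W^\dagger P W)$, where $W$ is a product of evolutions $e^{\I uH_\theta}$, Pauli rotations and Pauli insertions. This quantity is measured as a $\pm1$ Pauli expectation on a single copy of $\sigma$ after applying $W$.

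The term $\tfrac12\{S_a,T_{a,j}\}$ is quadratic. Here we draw two independent samples, one giving a unitary-conjugated Pauli $O_1$ for $S_a$ and one giving $O_2$ for $T_{a,j}$. Since the samples are independent, $\mathbb E[\tfrac12\{O_1,O_2\}]=\tfrac12\{S_a,T_{a,j}\}$. To estimate $\Tr(\sigma\{O_1,O_2\})/2=\Re\Tr(\sigma O_1O_2)$, we use a Hadamard test with one ancilla, controlled on the two unitaries realizing $O_1$ and $O_2$. This gives a $\pm1$ outcome with the correct real-part expectation. Each one-shot estimator is therefore bounded in magnitude by a product of importance weights: a factor $\beta\|c_a\|_1$ or $2\beta$ from each score factor, and an extra factor $2|\xi|\le 2t$ in $T^{(2)}$. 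Summing over $a\in\cA$ gives the factor $|\cA|=\poly(n)$.

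The main obstacle is that $q_\beta$ has unbounded support, so both $\xi$ and the evolution times are unbounded. We truncate at $t\le t_{\max}$. Since $q_\beta(t)\sim e^{-\pi t/\beta}$, the truncation bias is at most $\poly(n,\beta)\int_{t_{\max}}^\infty (1+t)q_\beta(t)\,\ud t\le\varepsilon/2$ once $t_{\max}=\mathcal O(\beta\log(\poly(n,\beta)/\varepsilon))$. This bias estimate uses $\|\tau_u(O)\|=\|O\|$ together with the bound $\mathbb E[t^p]=\mathcal O(\beta^p)$ for $p\le 2$. After truncation, every estimator is bounded by $B=\poly(n,\beta)\cdot\log(1/\varepsilon)$. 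Every evolution time is at most $|\xi|\le t_{\max}$, which gives $\widetilde{\mathcal O}(\beta)$; for a controlled evolution of total time $(1-s)\xi+s\xi$ this is still at most $t_{\max}$.

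Finally, Hoeffding's inequality with $N=\mathcal O(B^2\varepsilon^{-2}\log(2/\delta))$ i.i.d.\ single-copy samples controls the statistical error to $\varepsilon/2$ with probability $1-\delta$, and adding the truncation bias gives total error $\varepsilon$. This uses $\poly(n,\beta,1/\varepsilon)$ copies of $\sigma$. A union bound over $j\in[m]$ yields the $\log(2m/\delta)$ form.
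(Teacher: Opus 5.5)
Your proposal is correct and follows the paper's proof in all essentials. Both arguments use the random-time representations of $S_a$ and $T_{a,j}$, importance sampling over Pauli decompositions with independent draws for the two factors of each product, a real-part Hadamard test on one fresh copy of $\sigma$ per shot, truncation of the random times at $\mathcal O(\beta\log(\poly(n,\beta)/\varepsilon))$ using the $e^{-\pi t/\beta}$ tail of $q_\beta$, and Hoeffding's inequality.

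The differences are minor:
\begin{itemize}
\item You keep the Duhamel factor $2|\xi|\le 2t_{\max}$ in the one-shot weight. The paper instead samples the $T^{(2)}$ time from the law proportional to $|\xi|\mathbf 1_{\{|\xi|\le R\}}$, so the one-shot range stays $L_j$ regardless of the cutoff. Your version can cost an extra $\log^2(\poly(n,\beta)/\varepsilon)$ factor in copies. That is fine for this informal statement, but it does not recover the formal $\mathcal O(L_j^2\varepsilon^{-2}\log(2/\delta))$ bound.
\item You measure $-\I[A_a,T_{a,j}]$ without an ancilla via a second parameter shift, whereas the paper uses phased Hadamard tests.
\item You should state explicitly that $\operatorname{ctrl}(CVC^\dagger)=(I\otimes C)\operatorname{ctrl}(V)(I\otimes C^\dagger)$. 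Then only the Pauli insertions, not the Hamiltonian evolutions, need to be controlled under the stated oracle access.
\end{itemize}
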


The formal statement of~\Cref{thm:informal-gradient-estimation-cost}, including the algorithm and detailed proof, is available in~\cref{append:gradient-estimation-proof}.
It is worth noting that the algorithm discussed here is very generic and is likely far from optimal, especially for structured thermal states or in the high-temperature regime. 
In~\cref{subsec:shared-copy-gradient-estimation}, we give a shared-copy gradient estimation algorithm for learning high-temperature Gibbs states. 
In this regime, it is sufficient to use the single-qubit frame $\cA=\{X_x,Z_x:1\le x\le n\}$, and the sample complexity of gradient estimation reduces to $\widetilde{\mathcal{O}}\left(\frac{\beta^2}{\varepsilon^2}\log\frac{m}{\delta}\right)$. This leads to a learning algorithm that matches the Hamiltonian learning sample lower bound up to a constant.

\section{Efficient Hamiltonian learning at high temperature}
\label{sec:high-temperature-learning}

Based on the quantum score-matching framework, we now investigate the task of learning quantum Gibbs states in the high-temperature regime, where $\beta$ is below a constant threshold that is independent of the system size.
We retain the model $H(\theta)=\sum_{j=1}^m\theta_jP_j$, $\Theta=[-1,1]^m$, and $\sigma=\rho(\theta^*)$ from~\cref{sec:score-based-learning}. Throughout this section, the known $P_j$ are distinct non-identity Pauli strings of weight at most $k$, and each term overlaps at most $D$ other terms. Excluding the identity fixes the unobservable additive constant in the Hamiltonian.
We use the single-qubit frame $\{A_a\}_{a\in\cA}=\{X_x,Z_x\}_{x=1}^n$, with index set $\cA=[n]\times\{X,Z\}$ and $A_{(x,p)}=p_x$. The symbol $m$ always counts Hamiltonian terms; $N_{\rm copy}$ counts copies of $\sigma$.
For learning a Hamiltonian with $m$ local terms to coefficient error $\|\widehat\theta-\theta^*\|_\infty\leq\varepsilon$, Ref.~\cite{haah2024learning} proved a worst-case lower bound of $\Omega\left(\frac{e^{2\beta}}{\beta^2\varepsilon^2}\log(m)\right)$ copies of the Gibbs state at constant failure probability.
In \cref{sec:information-lower-bounds}, we refine the lower-bound analysis through a general bound on the statistical distinguishability of perturbations confined to thermally suppressed subspaces, rather than relying on a specific hard-instance construction, and consequently show that exponential low-temperature hardness already arises in noninteracting one-local Pauli models.

In this section, we show that in the high-temperature regime, quantum score matching gives a near-optimal Hamiltonian learning algorithm that saturates the known sample complexity lower bound.

\subsection{Local Pauli frames suffice in high temperature}
\label{sec:high-temp-identifiability}

In~\Cref{lem:global-identifiability-sld}, we give a global identifiability test by computing the joint commutant of the operators $\{K^\sigma_a:a\in \cA\}$.
However, this commutant test is highly non-trivial as the operators $K^\sigma_a$ are very complicated.
In general, we may need an informationally complete frame to guarantee the global identifiability of the score-matching objective, which involves exponentially many ($4^n-1$) generators and thus is computationally intractable for learning many-body Hamiltonians.

In the high-temperature limit $\beta\to0^+$, the Gibbs state approaches $2^{-n}I$, so $K^\sigma_a\to A_a$.
The limiting frame satisfies $\{A_a:a\in\cA\}'=\CC I$. The following theorem shows that this commutant condition persists at sufficiently small positive $\beta$, uniformly in the system size. The restriction $\beta>0$ is essential for learning Hamiltonian coefficients, since at $\beta=0$ all parameters give the same Gibbs state.

\begin{theorem}
    \label{thm:uniform-local-pauli-high-temperature}
    Suppose $H^*=\sum_j \theta^*_j P_j$ is a $k$-local Hamiltonian with each $|\theta^*_j|\leq1$ and every local Pauli term intersects at most $D$ other local terms.
    There exists a threshold $\beta_0 = \Omega(1/\poly(k,D))$ such that for any $0 < \beta < \beta_0$, any global minimizer $\widehat\theta$ of $J_Q(\theta)$ using the single-qubit derivation frame $\{A_a\}_{a\in\cA}=\{X_x,Z_x\}_{x=1}^n$ corresponds to the target Gibbs state, i.e., $\rho(\widehat\theta) = \sigma$.
\end{theorem}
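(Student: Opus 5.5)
The plan is to reduce \cref{thm:uniform-local-pauli-high-temperature} to \Cref{lem:global-identifiability-sld}. It then suffices to show that, for $\beta<\beta_0$, the operators $K^\sigma_a$ built from the single-qubit frame have trivial joint commutant, $\{K^\sigma_a\}'=\CC I$.

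\textbf{Step 1: write $K^\sigma_a$ as a dressed single-site operator.} Since $\sigma\propto e^{-\beta H^*}$, we have $\mathbf{\Delta}_\sigma(X)=e^{-\beta H^*}Xe^{\beta H^*}$. Hence $\log\mathbf{\Delta}^{1/2}_\sigma=-\tfrac{\beta}{2}\mathbf{ad}_{H^*}$, and because $\operatorname{sech}$ is even,
\begin{equation*}
K^\sigma_a=\operatorname{sech}\bigl(\tfrac{\beta}{2}\mathbf{ad}_{H^*}\bigr)(A_a)=\int_{\RR}f_\beta(t)\,\tau^{H^*}_t(A_a)\,\ud t .
\end{equation*}
Here $f_\beta(t)=\beta^{-1}\operatorname{sech}(\pi t/\beta)$, up to normalization, is the Fourier transform of $\omega\mapsto\operatorname{sech}(\beta\omega/2)$. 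It is a probability density with tails $e^{-\pi|t|/\beta}$. Each $K^\sigma_a$ is therefore Hermitian and is an average of Heisenberg-evolved single-site Paulis.

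\textbf{Step 2: quasi-locality and smallness of the correction.} The sum $H^*$ has local terms of weight $\le k$ and degree $\le D$, so a Lieb--Robinson bound applies with velocity $v=\mathcal O(\poly(k,D))$. It lets us decompose
\begin{equation*}
K^\sigma_a-A_a=\sum_{r\ge1}W_{a,r},
\end{equation*}
where $W_{a,r}$ is supported on the ball of radius $r$ around the site $x$ of $a$ and satisfies $\|W_{a,r}\|\le C(\beta v)^{r}$. The lowest-order piece is controlled by $\|\tau_t(A_a)-A_a\|\le 2|t|\,\|[H^*,A_a]\|$, which gives a contribution of order $\beta\,\poly(k,D)$. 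The higher $r$ come from integrating the Lieb--Robinson growth $e^{v|t|}$ against the $e^{-\pi|t|/\beta}$ tail. This integral converges geometrically once $\beta v<\pi$, and this condition fixes $\beta_0=\Omega(1/\poly(k,D))$, independently of $n$.

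\textbf{Step 3: a weighted-$\ell^1$ gap argument in the Pauli basis.} Consider the positive superoperator $\mathcal K_\beta:=\sum_a\mathbf{ad}_{K^\sigma_a}^2$ on the Hilbert--Schmidt space. Since $\mathbf{ad}_{K}$ is anti-Hermitian for Hermitian $K$, we have $\mathcal K_\beta(X)=0$ if and only if $X$ commutes with every $K^\sigma_a$. At $\beta=0$, $\mathcal K_0=\sum_x(\mathbf{ad}_{X_x}^2+\mathbf{ad}_{Z_x}^2)$ is diagonal in the Pauli basis, with $\mathcal K_0(P)=4w'(P)P$. Here $w'(P)$ counts the pairs $(x,p)$ with $p_x$ anticommuting with $P$. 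Every nontrivial site of $P$ anticommutes with at least one of $X_x,Z_x$, so $w'(P)\ge|\operatorname{supp}P|\ge1$ for traceless $P$.

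Write $V=\mathcal K_\beta-\mathcal K_0$ and expand $X=\sum_P c_PP$ with $c_I=0$. Using Step 2, only frame sites $x$ whose dressed operator reaches $\operatorname{supp}P$ contribute to $V(P)$, with a geometric penalty in the distance. This gives a column bound in the Pauli basis,
\begin{equation*}
\sum_Q|V_{QP}|\le\epsilon(\beta)\,|\operatorname{supp}P|\le\epsilon(\beta)\,w'(P),\qquad \epsilon(\beta)=\mathcal O(\beta\,\poly(k,D)),
\end{equation*}
uniformly in $n$. Projecting onto traceless components then yields
\begin{equation*}
\|(\mathcal K_\beta c)_{\neq I}\|_1\ge(4-\epsilon(\beta))\sum_P w'(P)|c_P|.
\end{equation*}
If $\epsilon(\beta)<4$, this forces $c=0$, so $\{K^\sigma_a\}'=\CC I$. \Cref{lem:global-identifiability-sld} then gives $\rho(\widehat\theta)=\sigma$.

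\textbf{Main obstacle.} The hardest step is the $n$-uniform column bound in Step 3. A naive operator-norm perturbation of the commutant fails, because $\|V\|$ grows with $n$. The essential point is that the size of the perturbation acting on $P$ must be charged against the support of $P$, and hence against the diagonal gap $4w'(P)$, rather than against $n$. I would first establish this by bounding $[W_{a,r},P]$: it vanishes unless the ball of radius $r$ around $x$ meets $\operatorname{supp}P$. The number of such pairs $(a,r)$ is at most $|\operatorname{supp}P|\cdot\poly(k,D)^r$, and summing the geometric series $(\beta v)^r$ against this count keeps the bound uniform in $n$. The cross terms $\mathbf{ad}_{A_a}\mathbf{ad}_{W}$ and $\mathbf{ad}_W^2$ are handled by the same counting.
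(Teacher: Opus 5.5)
Steps 1 and 2 are sound and essentially match the paper: it writes $K^\sigma_a$ as the same $\operatorname{sech}$-kernel average of $\tau_t^{H^*}(A_a)$, and uses Lieb--Robinson to get shell bounds $\|E_{a,r}\|_\infty\le C\beta b^{-3r}$. The gap is in Step 3.

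The column sum $\sum_Q|V_{QP}|$ is the $\ell^1$ norm of the Pauli coefficients of $V(P)$. Step 2 only gives operator-norm bounds on $W_{a,r}$. Your counting in the last paragraph therefore controls $\sum_{a,r}\|[W_{a,r},P]\|_\infty$, not the Pauli-$\ell^1$ norm of $[W_{a,r},P]$.

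For an operator supported on $B_r(x)$, these two quantities can differ by a factor of order $2^{|B_r(x)|}$. A generic unitary on $s$ qubits has Pauli-$\ell^1$ norm about $2^s$, while its operator norm is $1$. Under the theorem's hypotheses only $k$ and $D$ are bounded, so $|B_r(x)|$ can be as large as $b^r$. Even on a 2D lattice, $|B_r|\sim r^2$.

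The conversion factor $2^{b^r}$ (or $2^{cr^2}$) overwhelms $(\beta v)^r$, so your series for $\epsilon(\beta)$ diverges for every $\beta>0$. The argument would only close when $|B_r|$ grows linearly, and even then with a threshold exponential in $k$. There is also no cheap alternative route to Pauli-$\ell^1$ control. Expanding $\operatorname{sech}(\tfrac\beta2\mathbf{ad}_{H^*})(A_a)$ in nested commutators fails, because $\operatorname{sech}$ has finite radius of convergence while the number of nonvanishing nested-commutator sequences grows factorially. So the $n$-uniform column bound, which you correctly identify as the crux, is not established.

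The fix, which is what the paper does, is to compare quadratic forms in Hilbert--Schmidt norm rather than matrix entries in the Pauli basis. Write $\langle X,\mathcal K_\beta X\rangle=\sum_a\|[K^\sigma_a,X]\|_2^2$. The paper's local Pauli Dirichlet bound states
$\|[X,B]\|_2^2\le\|B\|_\infty^2\sum_{x\in\operatorname{supp}B}\bigl(\|[X,X_x]\|_2^2+\|[X,Z_x]\|_2^2\bigr)$.
This converts operator-norm shell bounds directly into relative HS bounds. Since each site lies in at most $2b^r$ of the balls $B_r(Y_a)$, one gets
$\bigl(\sum_a\|[X,E_{a,r}]\|_2^2\bigr)^{1/2}\le C\beta b^{-3r}\sqrt{2b^r}\,\bigl(\sum_a\|[X,A_a]\|_2^2\bigr)^{1/2}$.

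Minkowski's inequality over $r$ then gives a relative perturbation bound of size $\widetilde C\beta$. Combined with the gap $\sum_a\|[X,A_a]\|_2^2\ge4\|X\|_2^2$ for traceless $X$, this yields the trivial commutant uniformly in $n$. This is exactly your idea of charging the perturbation to the local Dirichlet form, but implemented in $\ell^2$, where operator-norm locality suffices.

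A minor point: for Hermitian $K$, $\mathbf{ad}_K$ is self-adjoint, not anti-Hermitian, on Hilbert--Schmidt space. Hence $\mathbf{ad}_K^2\succeq0$, which is consistent with your $\mathcal K_0(P)=4w'(P)P$.
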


The proof can be found in~\cref{append:local-pauli-frame-high-temperature}. In fact, we prove a stronger version that applies to any $\lambda$-score (note that the symmetric score corresponds to $\lambda = 1/2$).
The proof bounds the perturbation of the commutant test from its $\beta=0$ limit. A Lieb--Robinson bound controls this perturbation using only $\beta$, $k$, and $D$, yielding a threshold $\beta_0 = \Omega(1/\poly(k,D))$ independent of the system size.

\subsection{Sample-efficient gradient estimation}
\label{subsec:shared-copy-gradient-estimation}

Recall that the gradient of the score-matching objective can be written as $\partial_{\theta_j}J_Q(\theta) = \Tr(\sigma \mathcal{G}_j(\theta))$, where
\begin{align}
    \mathcal G_j(\theta) := \sum_{a\in\cA}\left(\frac12\{S_a(\theta),T_{a,j}(\theta)\} -\I[A_a,T_{a,j}(\theta)]\right),
    \label{eqn:qsm-gradient-observable}
\end{align}
where $T_{a,j}(\theta)=\partial_{\theta_j}S_a(\theta)$, as in~\cref{eqn:exact-gradient-observable-final}.

Gradient-based training on $J_Q(\theta)$ requires all $m$ coordinates of the gradient. The gradient estimation subroutine in~\Cref{thm:parameter-shift-gradient-cost} only evaluates one coordinate of $\nabla_\theta J_Q(\theta)$ at a time, so each gradient estimation incurs a linear dependence on the number $m$ of parameters.
This strategy is likely not optimal, as most Hamiltonian terms are spatially separated.
In the high-temperature regime, we find that the gradient observables $\mathcal{G}_j$ are \textit{quasi-local}, meaning that they are concentrated near the support $X_j:=\operatorname{supp}(P_j)$ of the $j$th Hamiltonian term, with a tail that remains summable after accounting for the number of overlapping neighborhoods.

The key idea is to measure spatially separated contributions to different gradient coordinates on the same copy of $\sigma$. We decompose each gradient observable into local terms supported on progressively larger neighborhoods, and group terms with disjoint supports for simultaneous measurement. At sufficiently high temperature, the decay of these contributions offsets the measurement cost of larger neighborhoods. This yields a shared-copy estimator whose copy cost depends only logarithmically on $m$, rather than linearly.

\begin{theorem}[Shared-copy gradient estimation]
\label{thm:shared-copy-gradient-estimation}
    There is a $\beta_0 = \Omega(1/\poly(D,k))$ such that for every $0<\beta\leq\beta_0$, $0<\tau<1$, $0<\delta<1$, and a fixed $\theta\in\Theta$, there is an algorithm that returns an estimate $\widehat g$ satisfying
    \begin{align}
        \PP\left(\|\widehat g-\nabla_\theta J_Q(\theta)\|_\infty \leq\tau\right) \geq 1 - \delta
    \end{align}
    using
    \begin{align}
        N_{\rm copy}= \mathcal{O}\left(
            \left(1+\frac{\beta^2}{\tau^2}\right)
            \log\frac{2m}{\delta}
        \right)
        \label{eqn:shared-copy-gradient-cost}
    \end{align}
    copies of $\sigma$, using joint measurements on disjoint local neighborhoods as specified in~\cref{alg:shared-copy-gradient}. Here, the pre-constant scales polynomially in $k$ and $D$. This is a copy-complexity bound; it does not bound the cost of synthesizing the neighborhood measurements.
\end{theorem}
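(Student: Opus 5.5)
We will build the estimator from a quasi-local decomposition of each gradient observable $\mathcal G_j(\theta)$, and then measure many decomposed pieces on the same copy of $\sigma$.

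\textbf{Step 1: Local expansion.} Write
\[
\mathcal G_j(\theta)=\sum_{r\ge0}\mathcal G_j^{(r)}(\theta),
\]
where $\mathcal G_j^{(r)}$ is supported on the radius-$r$ neighborhood $B_r(X_j)$ in the interaction graph. The pieces are obtained from the random-time representations~\cref{eqn:probablistic_S_a}, \cref{eqn:probablistic_T_a_1} and~\cref{eqn:probablistic_T_a_2_final}, or equivalently from a cluster (high-temperature series) expansion of $\mathscr I_\rho^{-1}$ in powers of $\beta$. The telescoping differences of restrictions of $\tau^{H_\theta}_\xi$ to growing regions give the local pieces.

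Only frame sites $a$ near $X_j$ contribute to $T_{a,j}$, up to exponentially small tails. The kernel $q_\beta$ has tails like $e^{-\pi t/\beta}$. A Lieb--Robinson bound on $[-t,t]$ therefore gives
\[
\|\mathcal G_j^{(r)}\|\le C\,\beta\,(c\,\beta\,\poly(k,D))^{r}.
\]
For the leading term, the prefactor $\beta$ arises because $S_a=O(\beta)$ and $T_{a,j}=O(\beta)$. The term $\{S_a,T_{a,j}\}$ is $O(\beta^2)$, while $[A_a,T_{a,j}]$ is $O(\beta)$. In the local terms this gives an $O(\beta)$ sized observable; its $\beta^0$ part vanishes because $T$ carries an explicit factor $\beta$.

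\textbf{Step 2: Shared-copy measurement.} For each radius $r$, colour the neighborhoods $\{B_r(X_j)\}_j$ so that same-coloured neighborhoods are disjoint. Since $|B_r|\le k D^{r}$ and the degree is bounded, about $\chi_r\le (kD)^{O(r)}$ colours suffice. Drawing a random level $r$ with probability $p_r\propto (c'\beta\poly(k,D))^{r}$ and a uniformly random colour, we jointly measure all same-coloured $\mathcal G_j^{(r)}$ on one copy. Each local piece is measured with a two-outcome, or bounded, POVM after normalization.

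Importance weighting gives an unbiased estimate $\hat g_j$ for every $j$ simultaneously from each copy. Its magnitude is bounded by
\[
\sup_r \frac{\chi_r\|\mathcal G_j^{(r)}\|}{p_r}=O(\beta)
\]
plus an $O(1)$ contribution that accounts for the additive term $1$ in~\cref{eqn:shared-copy-gradient-cost}. This bound holds provided $\beta_0$ makes $\chi_r (c\beta)^r$ geometrically decaying, which is exactly the choice $\beta_0=\Omega(1/\poly(k,D))$. Note that $\beta_0$ is independent of $n$.

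\textbf{Step 3: Concentration.} For each $j$ the per-copy estimates are i.i.d. and bounded, with variance $O(\beta^2)$ and range $O(1)$. Bernstein's inequality, or a median-of-means estimator, with accuracy $\tau$ and failure probability $\delta/m$, followed by a union bound over the $m$ coordinates, yields
\[
N_{\rm copy}=O\bigl((1+\beta^2/\tau^2)\log(2m/\delta)\bigr).
\]
The $1$ term is the range contribution in Bernstein's inequality, and all constants are $\poly(k,D)$.

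\textbf{Main obstacle.} The hardest step is Step 1: proving that the decomposition pieces decay fast enough to beat the growth $\chi_r$ of the neighborhood colourings, while keeping the overall $\beta$ prefactor. This is delicate for $T^{(2)}_{a,j}$, which nests two evolutions. Our first attempt would be to bound the pieces via $\beta$-power-series cluster expansions of the Lyapunov solution $\frac{\mathbf\Delta-1}{\mathbf\Delta+1}$ and its derivative, using the convergence of the cluster expansion at high temperature. We would fall back to the Lieb--Robinson-plus-$q_\beta$-tail argument if the combinatorics of the cluster expansion become unwieldy.
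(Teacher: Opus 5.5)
Your plan has the same architecture as the paper's proof: shells $G_{j,r}$ supported on $B_r(X_j)$, a proper colouring of the neighbourhood-overlap graph with $\chi_r\le(1+(D+1)k)b^{2r}$ colours, one random radius and colour per copy, two-outcome POVMs $\frac12(I\pm G_{j,r}/a_r)$ on disjoint supports, importance weights, and a union bound over $j$.

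\textbf{Where you genuinely differ: trading shell decay against $\chi_r$.} The paper proves only a fixed rate, $\|G_{j,r}\|_\infty\le C_{\mathrm{sh}}\beta e^{-\kappa r}$ with $\kappa=\frac54\log b$. It gets this from single-site commutator bounds on $\mathcal G_j$ and the partial-trace projections $\Pi_{B_r(X_j)}$, not by restricting the dynamics. That rate beats $\sqrt{\chi_r}\lesssim b^r$ but not $\chi_r\lesssim b^{2r}$, so the paper takes $p_r\propto a_r\sqrt{\chi_r}$. The records then have second moment at most $S_{R_\tau}^2=O(\beta^2)$. Their range $S_{R_\tau}\sqrt{\chi_{R_\tau}}$, however, grows with the cutoff $R_\tau\approx\kappa^{-1}\log(\beta/\tau)$. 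This is why the paper truncates (bias $\tau/2$) and uses median-of-means: Chebyshev within blocks, Hoeffding across $\lceil 8\log(2m/\delta)\rceil$ blocks.

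\textbf{Your faster decay is attainable.} Integrate the Lieb--Robinson series $\sum_{p\ge r}(v|\xi|)^p/p!$ against $\mathbb E|\xi|^p=O(p!(\beta/\pi)^p)$. The tree bound for the nested and product terms may turn $c\beta$ into $\sqrt{c\beta}$, which still beats $\chi_r$ for $\beta\le 1/\poly(k,D)$. More cheaply, the paper's own argument run with $\mu_0=11\log b$ gives $\kappa=\frac94\log b>2\log b$, shrinking $\beta_0$ only by a $\poly(k,D)$ factor. Either way $\sum_r\chi_r a_r=O(\beta)$. Then $p_r\propto\chi_r a_r$ gives records bounded by $O(\beta)$, an unbiased estimator with no truncation, and plain Hoeffding. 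Your route buys simpler statistics at the cost of a stronger locality estimate. The paper's route needs weaker locality but pays with heavy-tailed records and the median trick.

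\textbf{Correction 1: drop the ``$O(1)$ contribution'' to the range.} Nothing in your estimator produces it. If it were present, Bernstein's range term would be $\tau^{-1}\log(2m/\delta)$, not $\log(2m/\delta)$. That is not $O\bigl((1+\beta^2/\tau^2)\log(2m/\delta)\bigr)$; take $\tau=\beta$. With range $O(\beta)$, Hoeffding gives $N=\lceil O(\beta^2\tau^{-2}\log(2m/\delta))\rceil$. The additive $1$ in the bound is only rounding; in the paper it is the requirement of at least one record per block.

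\textbf{Correction 2: do not begin with the $\beta$-power-series expansion of $\mathscr I_\rho^{-1}$.} In $\tanh(\frac\beta2\mathbf{ad}_H)$ the coefficient of $\beta^p\mathbf{ad}_H^p$ decays only like $\pi^{-p}$. Meanwhile $\|\mathbf{ad}_H^p(B)\|_\infty$ for a local $B$ can grow like $C^pp!$. Unlike cluster expansions of $e^{-\beta H}$, there is no $1/p!$ to absorb this combinatorics. So the series is not expected to converge uniformly in $n$ at any fixed $\beta>0$. Your fallback is the right tool: the random-time kernel with $e^{-\pi|t|/\beta}$ tails played against Lieb--Robinson growth $e^{v|\xi|}$. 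It is what the paper uses, under the condition $2v_{\mu_0}\beta\le\pi/2$.
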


The algorithm and proof of~\Cref{thm:shared-copy-gradient-estimation} are available in~\Cref{append:shared_copy_grad_estimate}.

\subsection{Learning dynamics and sample complexity}
\label{subsec:general-lambda-score-derivatives}

The learning dynamics of quantum score matching are given by the gradient flow on $J_Q(\theta)$,
\begin{align}\label{eq:learning-dynamics}
    \frac{\mathrm d\theta(t)}{\mathrm d t} = - \nabla_\theta J_Q(\theta),\qquad \theta(0) = \theta_0.
\end{align}
We denote the Hessian of the score-matching objective $J_Q(\theta)$ as $\mathsf{H}(\theta)$. 
In the high-temperature regime ($0 < \beta \ll 1$), we find that $\mathsf{H}(\theta) = \beta^2 \Gamma + \mathcal{O}(\beta^3)$ (see~\Cref{lem:finite-high-temperature-hessian}), where the leading-order term $\Gamma = (\Gamma_{i,j})^m_{i,j=1}$ admits an analytical form,
\begin{align}
    \Gamma_{i,j} = 2^{-n} \sum_{a\in \cA}\Tr\left([P_i, A_a]^\dagger[P_j, A_a]\right).
\end{align}
Here, $\{A_a\}_a$ are the derivation frame generators and $P_i = \partial_{\theta_i}H(\theta)$ forms the linear basis of the parametrized Hamiltonian.
Since $\Gamma$ is the Gram matrix of the operator tuples $\{v_i := \frac{1}{\sqrt{2^{-n}}}([P_i,A_a])_a\}$, it is positive semidefinite. Consequently, the objective $J_Q(\theta)$ is always a convex function for sufficiently high temperature.
In what follows, we use the single-qubit derivation frame $A_a = \{X_x, Z_x: 1\le x \le n\}$, thus $\Gamma$ is diagonal.
At a site carrying an $X$ or $Z$ factor, exactly one of the two local frame elements anticommutes with $P_j$; at a site carrying a $Y$ factor, both do. Therefore
\begin{align}
    \Gamma_{j,j} = 4\bigl(n_X(P_j)+n_Z(P_j)+2n_Y(P_j)\bigr).
    \label{eqn:score-gram-diagonal}
\end{align}
Here $n_X(P_j)$, $n_Y(P_j)$, and $n_Z(P_j)$ count the corresponding single-qubit factors. As every $P_j$ is non-identity and has weight at most $k$, then $4 \le \Gamma_{j,j} \le 8k$, and the condition number $\kappa(\Gamma) \le 2k$.

The above analysis shows that, in the high-temperature regime, the score-matching objective $J_Q(\theta)$ has a benign optimization landscape. 
Moreover, the asymptotic Hessian $\mathsf{H}(\theta) \approx \beta^2 \Gamma$ can also serve as a good pre-conditioner that accelerates the convergence of gradient-based methods.
However, the sample-based gradient estimation subroutine in~\Cref{thm:shared-copy-gradient-estimation} may still introduce stochastic errors in each step, which could lead to slow or no convergence in gradient-based training.

In the following theorem, we show that gradient methods are robust to the noise introduced by gradient estimation. 
Specifically, given an estimate $\widehat g_t$ of the gradient at the $t$-th update $\theta^{(t)}$, we use the update rule,
\begin{align}
    \theta^{(t+1)} = \Pi_\Theta\left(\theta^{(t)}-\beta^{-2}\Gamma^{-1}\widehat g_t\right),
    \label{eqn:stochastic-high-temperature-update}
\end{align}
where $\Pi_\Theta$ is coordinatewise projection onto $[-1,1]^m$. Here, $\beta^{-2}\Gamma^{-1}$ is a pre-conditioner applied to the gradient estimate.
We prove that, with only $\log(1/\varepsilon)$ iterations, the update can achieve an $\varepsilon$-approximate of the global minimizer $\theta^*$.

\begin{theorem}
\label{thm:end-to-end-qsm-upper-bound}
    There is a $\beta_0 = \Omega(1/\poly(D,k))$ such that for every $0<\beta\leq\beta_0$, $0<\varepsilon<1$, and $0<\delta<1$, the iteration
    \cref{eqn:stochastic-high-temperature-update}, initialized at
    $\theta^{(0)}=0$ and using the tolerance and failure-probability schedules below, returns $\widehat\theta=\theta^{(T)}$ after $T=\lceil\log(1/\varepsilon)/\log(4/3)\rceil$ rounds,
    satisfying
    \begin{align}
        \PP\left(
            \|\widehat\theta-\theta^*\|_\infty
            \leq\varepsilon
        \right)
        \geq1-\delta
    \end{align}
    using
    \begin{align}
        N_{\rm copy} = \mathcal{O}\left(\frac{1}{\beta^2\varepsilon^2}\log\frac{2m}{\delta}\right)
        \label{eqn:end-to-end-qsm-copy-bound}
    \end{align}
    copies of $\sigma$. 
\end{theorem}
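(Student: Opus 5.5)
The plan is to show that the preconditioned projected iteration \cref{eqn:stochastic-high-temperature-update} contracts the $\ell^\infty$ error by a factor of $3/4$ per round, up to an additive noise term. The noise is controlled through the shared-copy estimator of \cref{thm:shared-copy-gradient-estimation} with a geometrically shrinking tolerance. The copy cost is then a geometric sum dominated by the last round.

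\textbf{Step 1 (deterministic contraction).} Since $\theta^*$ is a stationary point (\Cref{lem:first-order-stationary}) and lies in the box $\Theta$, we have $\theta^*=\Pi_\Theta(\theta^*-\beta^{-2}\Gamma^{-1}\nabla J_Q(\theta^*))$. Coordinatewise projection onto a box is $1$-Lipschitz in $\ell^\infty$. It therefore suffices to bound
\begin{align}
\bigl\|(\theta-\theta^*)-\beta^{-2}\Gamma^{-1}(\nabla J_Q(\theta)-\nabla J_Q(\theta^*))\bigr\|_\infty
=\Bigl\|\Bigl(I-\beta^{-2}\Gamma^{-1}\int_0^1\mathsf H(\theta^*+s(\theta-\theta^*))\,\ud s\Bigr)(\theta-\theta^*)\Bigr\|_\infty .
\end{align}
By \Cref{lem:finite-high-temperature-hessian}, $\mathsf H=\beta^2\Gamma+E$, where $\Gamma$ is diagonal with entries in $[4,8k]$. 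I would need the remainder $E$ bounded in the $\ell^\infty\to\ell^\infty$ operator norm, i.e.\ in maximum row sum, by $C(k,D)\beta^3$ uniformly on $\Theta$. This row-sum version of the expansion is the main obstacle. It does not follow from a spectral-norm bound, so I would rederive it from quasi-locality: $\partial_{\theta_i}\partial_{\theta_j}J_Q$ is a $\sigma$-expectation of nested commutators. At order $\beta^3$ and higher, its contributions decay with the graph distance between $P_i$ and $P_j$, by the same Lieb--Robinson/cluster-expansion estimates used for \cref{thm:uniform-local-pauli-high-temperature} and \cref{thm:shared-copy-gradient-estimation}. At high temperature the number of terms at distance $\ell$ grows like $D^\ell$, while the coefficients decay like $(c\beta)^\ell$, so the row sum converges. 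Once this is in place, the operator $\beta^{-2}\Gamma^{-1}E$ has $\ell^\infty$ norm at most $C'\beta\le 1/4$ for $\beta\le\beta_0=\Omega(1/\poly(k,D))$. This gives $\|\Pi_\Theta(\theta-\beta^{-2}\Gamma^{-1}\nabla J_Q(\theta))-\theta^*\|_\infty\le\frac14\|\theta-\theta^*\|_\infty$.

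\textbf{Step 2 (noise).} With the estimated gradient $\widehat g_t$, the extra error is at most $\beta^{-2}\|\Gamma^{-1}\|_\infty\tau_t\le\tau_t/(4\beta^2)$. Write $e_t=\|\theta^{(t)}-\theta^*\|_\infty$, so $e_0\le1$. Choose $\tau_t=2\beta^2(3/4)^{t+1}$. On the success event, the recursion $e_{t+1}\le\frac14e_t+\frac12(3/4)^{t+1}$ gives $e_t\le(3/4)^t$ by induction, since $\frac14(3/4)^t+\frac12(3/4)^{t+1}\le(3/4)^{t+1}$. Taking $T=\lceil\log(1/\varepsilon)/\log(4/3)\rceil$ then yields $e_T\le\varepsilon$. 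Set $\delta_t=\delta/T$, or a summable schedule such as $\delta_t=\delta\,2^{-(T-t)}$, and apply a union bound.

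\textbf{Step 3 (counting).} By \cref{eqn:shared-copy-gradient-cost}, round $t$ uses $\mathcal O((1+\beta^2/\tau_t^2)\log(2m/\delta_t))$ copies. Here $\beta^2/\tau_t^2=\Theta(\beta^{-2}(4/3)^{2t})$. Summing the geometric series over $t<T$ gives $\mathcal O(\beta^{-2}(4/3)^{2T})=\mathcal O(1/(\beta^2\varepsilon^2))$. With the schedule $\delta_t=\delta2^{-(T-t)}$, the extra factor $\log 2^{T-t}$ is absorbed, because $\sum_t(T-t)(4/3)^{2t}=\mathcal O((4/3)^{2T})$. The additive "1" terms contribute $T\log(2m/\delta)$, which is also dominated by $\beta^{-2}\varepsilon^{-2}$. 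Together these give \cref{eqn:end-to-end-qsm-copy-bound}.
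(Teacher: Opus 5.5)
Your proposal is correct and is essentially the paper's proof. The shared steps are: contraction of the preconditioned projected population map from a uniform $\ell^\infty\!\to\!\ell^\infty$ Hessian bound, a one-step noise term of order $\tau_t/(\Gamma_{\min}\beta^2)$, a geometric tolerance schedule with ratio $3/4$, and geometrically weighted failure probabilities (applied conditionally on the previous rounds, with fresh copies). Together these make the copy cost a geometric series dominated by the last round.

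The row-sum bound you flag as the main obstacle is already exactly the statement of \Cref{lem:finite-high-temperature-hessian}, which is proved from the summability estimates in \cref{prop:uniform-score-response}, so nothing needs re-deriving. Also, the uniform choice $\delta_t=\delta/T$ would cost an extra $\log T$ factor, so the geometric schedule you actually use in Step 3 is the one needed.
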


The proof of the theorem can be found in~\Cref{append:high-temp-sample-complexity-proof}. For bounded-weight Pauli Hamiltonians with bounded term-overlap degree, $m\geq2$, $0<\varepsilon<1/4$, and $0<\delta\leq1/4$, this upper bound matches the lower bound in \cref{thm:minimax-sample-lower-bound}, up to constants.

\section{Quantum score matching with finite measurement budgets}

A moderate measurement budget per iteration is sufficient for finite-shot QSM to converge to a small neighborhood of the target at small and moderate inverse temperatures. The intuition is similar to stochastic gradient descent: individual gradient estimates need not be accurate if their errors average out and the accumulated updates retain a systematic descent direction. We demonstrate this behavior in an eight-qubit inhomogeneous transverse-field Ising model (TFIM) and then examine why the same budget becomes less effective as the temperature is lowered. The simulations separate the accuracy of an individual gradient estimate from the accuracy ultimately achieved by the learning dynamics.

We consider the eight-qubit TFIM Hamiltonian
\begin{equation}\label{eq:numerical-inhomogeneous-tfim}
H(\theta)=-\sum_{i=1}^{7}J_i Z_iZ_{i+1}-\sum_{i=1}^{8}h_i^xX_i,
\end{equation}
with target coefficients $J_i^*=1$ and $(h_i^x)^*=1.5$. All $15$ coefficients are learned independently using the local frame $\{X_x,Z_x\}_{x=1}^8$. At each iteration, we estimate the complete gradient using the randomized Hadamard-test construction in \cref{alg:randomized-gradient-coordinate}, without coherent amplitude amplification, and allocate a total budget of $N$ measurement shots across the gradient coordinates according to their estimator ranges. We use preconditioned gradient descent and compare a far initialization ($J_i^{(0)}=0.5$ and $(h_i^x)^{(0)}=1$) with random local initializations at a relative distance of $0.05$ from $\theta^*$. The complete simulation protocol is given in \cref{app:finite-shot-numerical-methods}. We measure learning accuracy by
\begin{equation}\label{eq:numerical-relative-learning-error}
e_{\mathrm{rel}}(\theta):=\frac{\|\theta-\theta^*\|_2}{\|\theta^*\|_2}.
\end{equation}

\Cref{fig:finite-shot-learning-dynamics} shows that the resulting stochastic gradients support end-to-end learning with a moderate value of $N$. For $\beta\leq0.6$, $N=10^5$ is sufficient to reduce the mean parameter error below $10\%$ from either initialization. Increasing the budget to $N=10^6$ extends this behavior to approximately $\beta=1$. Within this regime, the final error decreases approximately as $N^{-1/2}$. At larger $\beta$, the same budget reaches a progressively higher error floor. The similar temperature dependence from the two initializations shows that this deterioration is not solely caused by the difficulty of reaching the target region from a distant point. The initial rise from local initialization is a finite-shot effect rather than repulsion by the population landscape. Close to $\theta^*$, finite-shot errors can initially displace the parameters from the target. This transient becomes more pronounced as $\beta$ increases because both the absolute and relative gradient estimation errors grow, as shown in panels~(a) and~(b) of \cref{fig:finite-shot-gradient-diagnostics}.

\begin{figure}[ht!]
    \centering
    \includegraphics[width=0.96\linewidth]{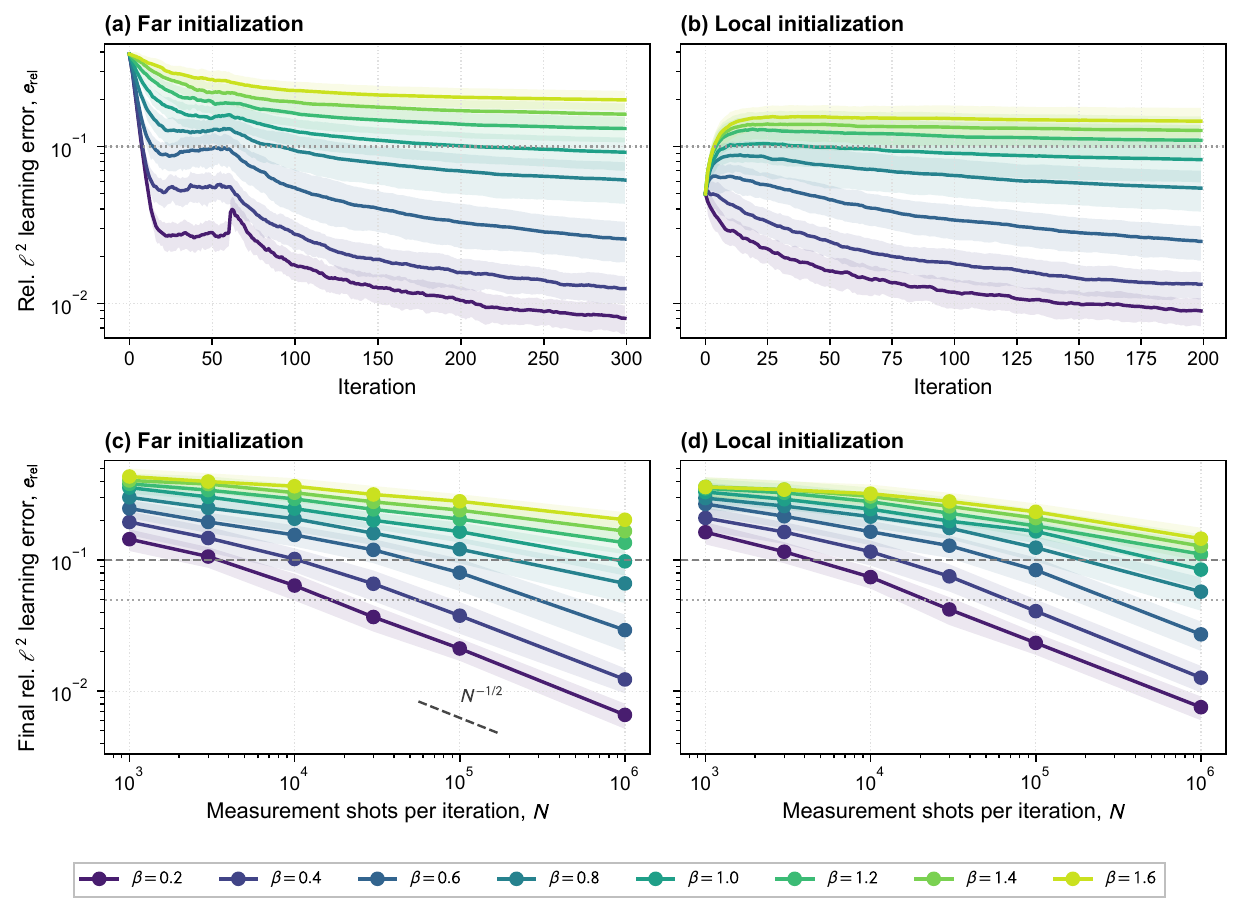}
    \caption{Finite-shot QSM learning for the inhomogeneous TFIM in \cref{eq:numerical-inhomogeneous-tfim}. The top row shows the mean relative parameter error at $N=10^6$ from far and local initializations. The bottom row shows the error of the Polyak--Ruppert averaged output as a function of the total measurement budget per iteration. Solid curves and shaded regions show the mean and one standard deviation over $100$ independent trajectories. The short dashed segment indicates the $N^{-1/2}$ sampling rate. Optimization details are given in \cref{app:finite-shot-numerical-methods}.}
    \label{fig:finite-shot-learning-dynamics}
\end{figure}

The population landscapes in \cref{fig:finite-shot-learning-landscape} locate the mechanism behind this loss of noise tolerance. The target remains a clear minimum in each displayed slice. As $\beta$ increases, however, the low-loss valley becomes increasingly elongated and tilted in the displayed coordinate plane. This increasing anisotropy reflects a more ill-conditioned Hessian. The endpoint distribution consequently changes from overlapping the target neighborhood at high temperature to remaining outside it at low temperature. The noisy dynamics can therefore continue to reduce the loss while making little progress toward the target parameters.

\begin{figure}[ht!]
    \centering
    \includegraphics[width=0.96\linewidth]{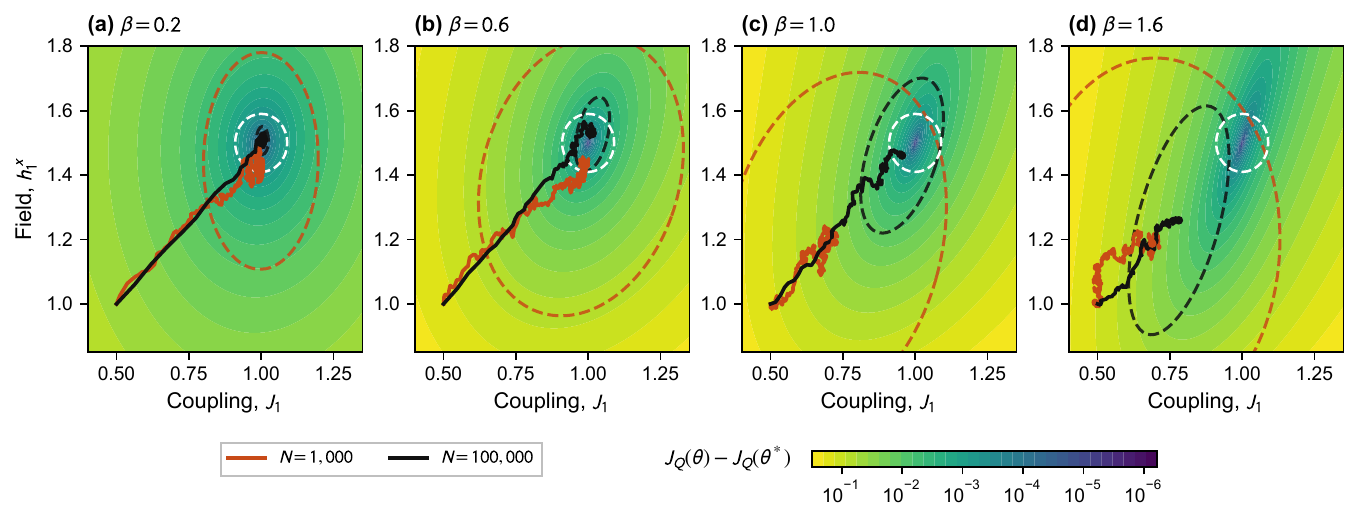}
    \caption{Temperature-dependent population landscape and finite-shot learning dynamics from far initialization. The color map shows $J_Q(\theta)-J_Q(\theta^*)$ in the $(J_1,h_1^x)$ plane with the remaining coefficients fixed at their target values. Orange and black curves are the mean trajectories for $N=10^3$ and $10^5$, respectively. Dashed ellipses show one standard deviation of the trajectory endpoints over $100$ independent trajectories, and the white circle has radius $0.05\|(J_1^*,(h_1^x)^*)\|_2$. The trajectories are projected onto the displayed coordinate plane.}
    \label{fig:finite-shot-learning-landscape}
\end{figure}

The comparison in \cref{fig:finite-shot-gradient-diagnostics} further shows that successful learning does not require an accurate gradient estimate at every iteration. Even at $\beta=0.2$, the relative gradient error becomes larger than one and the cosine similarity remains far below one over much of the trajectory, while the parameter error continues to decrease. As $\beta$ increases, the relative error grows and the alignment with the population gradient weakens. A fixed shot budget therefore resolves progressively less of the descent signal, and the learning dynamics approaches the target less closely. Accurate gradient estimation is therefore a sufficient but potentially overly pessimistic requirement for end-to-end learning. The stochastic updates need only retain enough systematic descent when accumulated over the optimization trajectory. This distinction is substantial at moderate temperature but narrows as the useful gradient signal becomes difficult to resolve at low temperature.

\begin{figure}[ht!]
    \centering
    \includegraphics[width=0.8\linewidth]{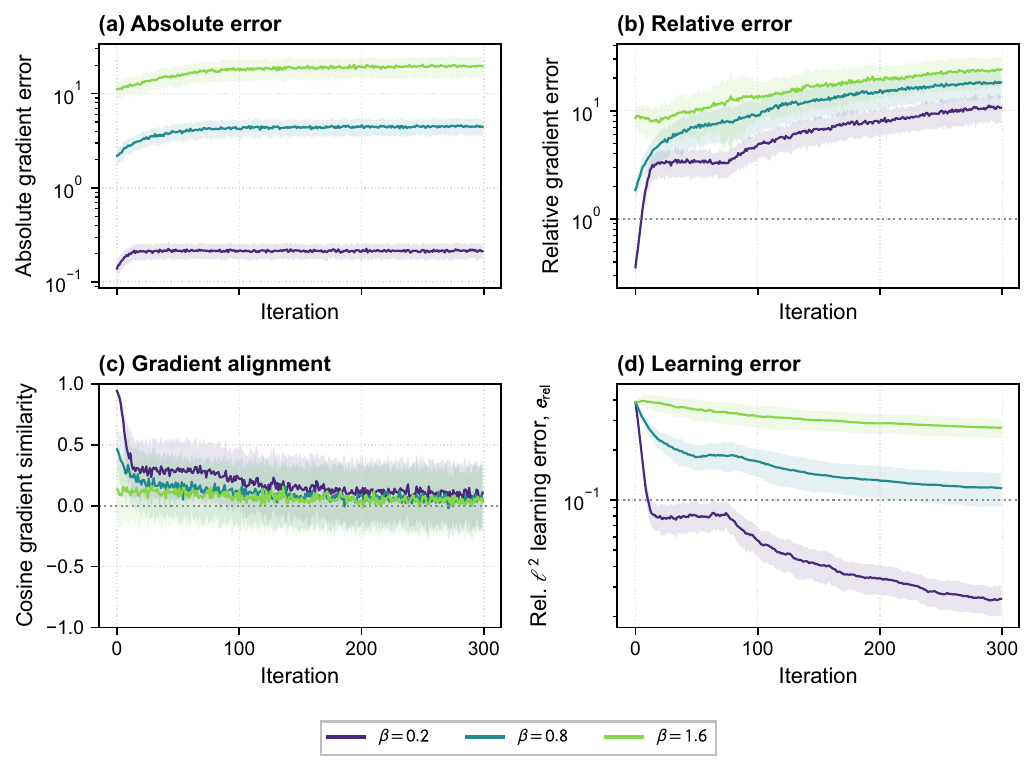}
    \caption{Gradient accuracy and learning from far initialization with $N=10^5$. The first three panels show the absolute gradient error $\|\widehat{\nabla J_Q}-\nabla J_Q\|_2$, its value relative to $\|\nabla J_Q\|_2$, and the cosine similarity between the estimated and exact gradients. The final panel shows the relative parameter error. Curves and shaded regions show the pointwise mean and one standard deviation over $100$ independent trajectories.}
    \label{fig:finite-shot-gradient-diagnostics}
\end{figure}

\section{Quantum score matching on IBM quantum hardware}\label{sec:ibm-hardware-experiment}

The simulation in the preceding section accounts for finite measurement shots and a finite number of randomized circuit instances, but assumes perfect quantum circuits without device noise. To see whether our method can run on real quantum devices, we deploy it on the IBM \texttt{ibm\_pittsburgh} quantum processor. We focus on learning the four-qubit TFIM in \cref{eq:numerical-inhomogeneous-tfim} with a homogeneous coupling parameter $J$ and field parameter $h$. These two parameters are shared across qubit sites, with $J_i=J$ and $h_i^x=h$, and we learn them within the parameter space $(J,h)\in\Theta_{\mathrm{num}}:=[0,2]^2$. The derivation frame is $\{X_x,Z_x\}_{x=1}^4$. The target state is specified by $J^*=1$ and $h^*=1.5$, while the optimization starts from $J^{(0)}=h^{(0)}=0.5$.

The target state is implemented as a probabilistic mixture of its energy eigenstates, with each randomly drawn eigenstate prepared using Qiskit's StatePreparation method. The time evolution of the model Hamiltonian is implemented by second-order Trotterization. An additional ancilla qubit is used to perform the Hadamard test, so each estimation circuit uses five qubits. At each gradient estimation step along a trajectory, we draw $256$ randomized circuit instances and measure each instance using $32$ shots. Hence, each complete gradient descent step requires only $8192$ measurement shots across these circuit instances, each on a freshly prepared target-state copy. The circuit instances are packed into parallel blocks on the processor, as described in \cref{app:ibm-hardware-implementation}.

As shown in \cref{fig:ibm-hardware-learning-errors}, the hardware measurements drive both parameters towards their target values, despite the presence of complex hardware noise and errors. After $45$ updates, the mean relative parameter error falls from $62\%$ to about $10\%$ at $\beta=0.2$ and to about $15\%$ at $\beta=0.4$. The separate coupling and field errors show that this improvement is not confined to one parameter.

\begin{figure}[ht!]
    \centering
    \includegraphics[width=0.98\linewidth]{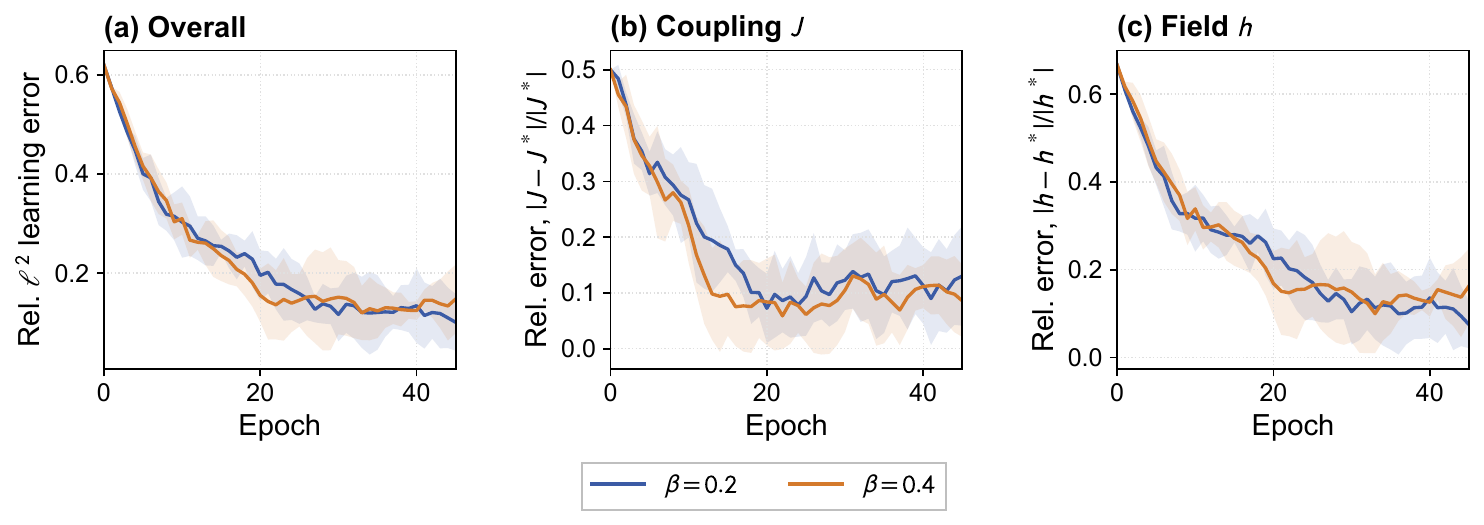}
    \caption{Learning a four-qubit TFIM on IBM quantum hardware. Curves and shaded regions show the mean and one sample standard deviation of the errors over five independent runs from a common initialization.}
    \label{fig:ibm-hardware-learning-errors}
\end{figure}

Furthermore, the mean parameter trajectories in \cref{fig:ibm-hardware-mean-dynamics} show how learning proceeds from the common initialization. Despite real-device noise, the optimizer moves closer to the target parameters as training progresses. The background population objective visualizes the optimization landscape, whose contours become more elongated and tilted as temperature decreases. The complete per-epoch trajectory summary is given in \cref{tab:ibm-epoch-trajectories}.

\begin{figure}[ht!]
    \centering
    \includegraphics[width=0.7\textwidth]{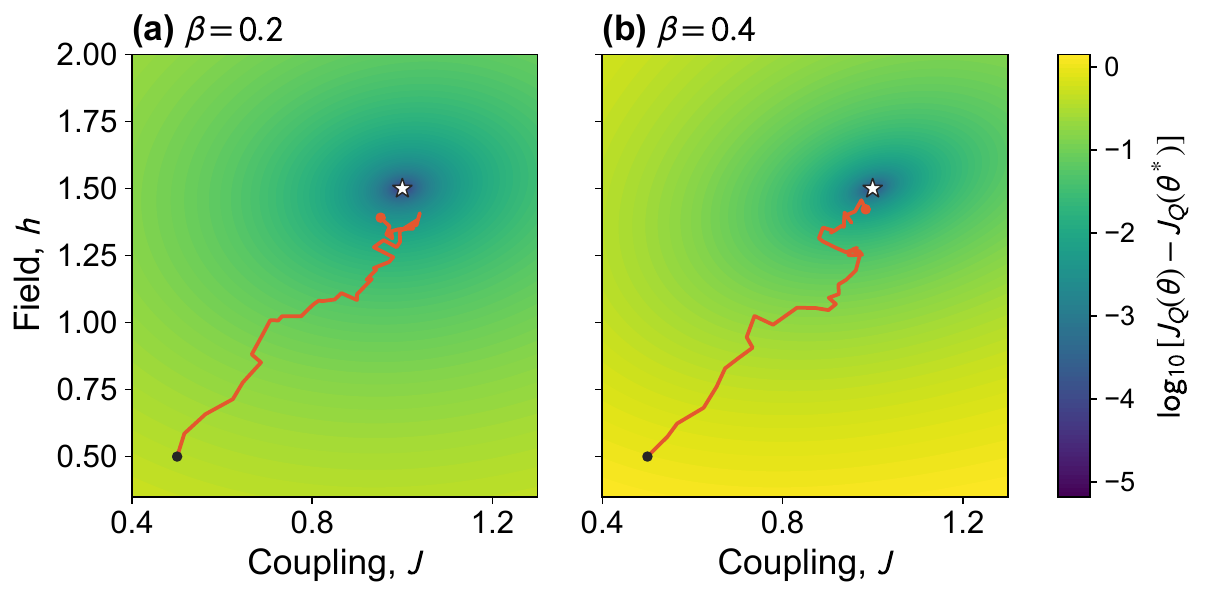}
    \caption{Mean hardware learning trajectories over $45$ updates on the population loss landscape. Each curve averages the parameters of five independent runs. The black point and white star mark the initialization and target.}
    \label{fig:ibm-hardware-mean-dynamics}
\end{figure}

\section{Discussion}

In this work, we develop quantum score matching, which extends a classical statistical learning principle from probability distributions to density operators. 
The central step is to connect the matching of noncommutative scores to an objective that can be evaluated using copies of an unknown state. This connection makes quantum scores not only a way to describe variations of a state, but also a means of learning it. 
We demonstrate the strength of this approach in the task of Gibbs-state learning, where our framework admits efficient quantum-circuit implementation and leads to sample-optimal learning in the high-temperature regime.
It thus advances both the theoretical and algorithmic foundations of quantum learning.

Our numerical and hardware experiments also point to a distinction between the accuracy of gradient estimation and the accuracy of learning. Controlling every gradient estimate accurately provides a route to rigorous convergence guarantees, but successful learning need not require such precise estimates at every iteration. In our simulations, parameter errors decrease even when individual gradient estimates have substantial relative errors, while the IBM experiments show that learning can proceed under both limited measurement budgets and device noise. Yet, using fewer measurement shots per iteration may require more iterations for convergence. This suggests that measurement budgets and optimization should be designed together.

The framework also leaves room to design the quantum score itself. The parameter $\lambda$ determines the operator ordering used to define and match scores, while the derivation frame specifies the directions along which states are compared. Our focus in the main text, with $\lambda=1/2$, already reveals useful analytical structure and supports rigorous learning guarantees. Exploring other score parameters and derivation frames may improve optimization and reduce measurement costs while preserving identifiability. 

It is worth noting that our quantum score-matching identities are universal and apply to state families beyond Gibbs-state models. However, obtaining explicit score representations generally requires a separate derivation associated with the specific parametrized state family. We leave the development of such generalizations to future work.

\paragraph{Acknowledgements.}
This research used resources of the National Energy Research Scientific Computing Center (NERSC), a Department of Energy Office of Science User Facility under Contract No.~DE-AC02-05CH11231, using NERSC awards DDR-ERCAP0038972 and DDR-ERCAP0038957 for high-performance computing and DDR-ERCAP0038973 for IBM Quantum Hub resources (Y.D.).
J.L. is partially sponsored by the Commonwealth Cyber Initiative (CCI) Faculty Fellow program.

\paragraph{Code and data availability.}
The code and data supporting the results of this study are available at \url{https://github.com/dongsnaq/Quantum-Score-Matching}.

\bibliographystyle{alphaurl}
\bibliography{ref}

\newpage
\appendix
\noindent
\begin{center}
    {\bf\Large Appendices}
\end{center}

\etocdepthtag.toc{appendices}
\begingroup
\etocsettagdepth{maintext}{none}
\etocsettagdepth{appendices}{all}
\etocsettocstyle{}{}
\tableofcontents
\endgroup

\section{Formulations of quantum score}

In this appendix, we discuss several ways to define the quantum score, i.e., the logarithmic derivative of a quantum density operator.

\paragraph{Weighted operator Fourier transform.}
Let $H=\sum_{j} \lambda_j P_j$ be an $n$-qubit Hamiltonian, where $\lambda_j$ and $P_j$ denote the eigenvalues and eigenprojectors of $H$. We denote $\mathrm{Spec}(H)$ as the spectrum (i.e., the set of eigenvalues) of $H$. The Bohr frequencies of $H$ are
\begin{align}
    B_H = \{\nu = \lambda_i - \lambda_j \colon\, \lambda_i, \lambda_j \in \mathrm{Spec}(H)\}.
\end{align}
A (non-zero) matrix $A \in \CC^{2^n\times 2^n}$ can be decomposed into Bohr components, 
\begin{align}
    A = \sum_{\nu \in B_H} A_\nu,\quad A_\nu \coloneqq \sum_{\lambda_i - \lambda_j = \nu} P_i A P_j.
\end{align}
Given an integrable function $\varphi(t)$, its Fourier transform is
\begin{align}
    \hat{\varphi}(\xi) := \int^\infty_{-\infty} e^{i\xi t} \varphi(t)~\d t,
\end{align}
and the inverse Fourier transform gives $\varphi(t) = \frac{1}{2\pi}\int_{-\infty}^{+\infty} e^{-i\xi t}\hat{\varphi}(\xi)~\d \xi$. 
The weighted operator Fourier transform (WOFT) implements a linear combination of the Bohr-frequency components weighted by $\hat{\varphi}$,
\begin{align}\label{eqn:woft}
    \int^\infty_{-\infty} A(t) \varphi(t)~\d t = \sum_{\nu \in B_H} \hat{\varphi}(\nu)A_\nu,\qquad \text{where}\quad A(t) := e^{iHt}A e^{-iHt} =\sum_{\nu \in B_H} e^{i \nu t} A_{\nu}.
\end{align}

\subsection{The $\lambda$-quantum score}\label{append:lambda_score}

Given a derivation generator $A_a$ and an $\lambda \in [0,1]$, we define the \textit{$\lambda$-score} of $\rho$ as the operator $S_{a, \lambda}$ that solves the matrix equation:
\begin{align}\label{eqn:general-lambda-definition}
    \partial_a \rho = (1-\lambda)\rho S_{\lambda, a} + \lambda S_{\lambda, a} \rho.
\end{align}
The special cases for $\lambda = 0, 1/2, 1$ are called the left, symmetric, and right logarithmic derivatives of $\rho$, respectively.

In the following lemma, we show that the $\lambda$-score of a full-rank $\rho$ is always well-defined.

\begin{lemma}
    Suppose that $\rho \succ 0$. For any $0 \le \lambda \le 1$, the super-operator
    \begin{align}
        \mathscr{I}_{\rho, \lambda}(X) := (1-\lambda)\rho X + \lambda X \rho
    \end{align}
    is strictly positive. Therefore, $\mathscr{I}_{\rho, \lambda}(\cdot)$ is invertible and the $\lambda$-score of $\rho$ is
    \begin{align}
        S_{a,\lambda} = \mathscr{I}^{-1}_{\rho, \lambda}(\partial_a \rho).
    \end{align}
\end{lemma}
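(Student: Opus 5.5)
Since $\mathbf{L}_\rho$ and $\mathbf{R}_\rho$ commute, I will diagonalize both simultaneously in the eigenbasis of $\rho$ and read off the spectrum of $\mathscr{I}_{\rho,\lambda}$. Write $\rho=\sum_i p_i\ket{i}\bra{i}$ with all $p_i>0$, which is possible because $\rho\succ0$. The matrix units $E_{ij}=\ket{i}\bra{j}$ form an orthonormal basis of $\cM$ with respect to the Hilbert--Schmidt inner product $\langle X,Y\rangle=\Tr(X^\dagger Y)$. A direct computation gives
\begin{align*}
\mathscr{I}_{\rho,\lambda}(E_{ij})=\bigl((1-\lambda)p_i+\lambda p_j\bigr)E_{ij}.
\end{align*}
So $\mathscr{I}_{\rho,\lambda}$ is diagonal in an orthonormal basis with real eigenvalues $(1-\lambda)p_i+\lambda p_j$. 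It is therefore self-adjoint with respect to the Hilbert--Schmidt inner product.

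Next I check positivity of these eigenvalues. For $\lambda\in[0,1]$, the number $(1-\lambda)p_i+\lambda p_j$ is a convex combination of $p_i$ and $p_j$. Hence it is at least $\min_k p_k>0$. This gives
\begin{align*}
\Tr\bigl(X^\dagger\mathscr{I}_{\rho,\lambda}(X)\bigr)\ge \lambda_{\min}(\rho)\,\|X\|_{2}^2,
\end{align*}
where $\|X\|_2$ is the Hilbert--Schmidt norm. In particular $\mathscr{I}_{\rho,\lambda}$ is strictly positive and has trivial kernel.

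Finally, $\mathscr{I}_{\rho,\lambda}$ is a linear map on the finite-dimensional space $\cM$ with trivial kernel, so it is invertible. Its inverse acts by $E_{ij}\mapsto \bigl((1-\lambda)p_i+\lambda p_j\bigr)^{-1}E_{ij}$. Consequently the equation in~\cref{eqn:general-lambda-definition} has the unique solution $S_{a,\lambda}=\mathscr{I}^{-1}_{\rho,\lambda}(\partial_a\rho)$.

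The argument has no real obstacle. The one point to handle carefully is the endpoint cases $\lambda=0$ and $\lambda=1$. There only one of $p_i$, $p_j$ survives in each eigenvalue, so positivity comes from full rank of $\rho$ rather than from averaging, and the proof must rely on $\rho\succ0$ at that step.
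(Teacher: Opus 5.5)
Your proof is correct, but it takes a slightly different route from the paper's. The paper argues without choosing a basis. It computes the quadratic form $\Tr\bigl(X^\dagger\mathscr{I}_{\rho,\lambda}(X)\bigr)=(1-\lambda)\Tr(\rho XX^\dagger)+\lambda\Tr(\rho X^\dagger X)$, notes that this is strictly positive for $X\neq0$ because $\rho\succ0$, and concludes invertibility from the trivial kernel. You instead diagonalize $\mathscr{I}_{\rho,\lambda}$ in the matrix units built from the eigenbasis of $\rho$.

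The two computations are really the same: expanding the paper's expression in that basis gives exactly $\sum_{i,j}\bigl((1-\lambda)p_i+\lambda p_j\bigr)\abs{X_{ij}}^2$. The paper's version is shorter. Yours additionally gives the full spectrum, Hilbert--Schmidt self-adjointness, the quantitative bound $\lambda_{\min}(\rho)\|X\|_2^2$, and an explicit entrywise formula for $\mathscr{I}^{-1}_{\rho,\lambda}$. That is the same structure the paper uses in the next lemma, where it writes $\mathscr{I}_{\rho,\lambda}=\bigl((1-\lambda)\mathbf{\Delta}_\rho+\lambda I\bigr)\mathbf{R}_\rho$.

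One small correction to your closing remark: full rank of $\rho$ is needed for every $\lambda$, not only at the endpoints, since $\mathscr{I}_{\rho,\lambda}(E_{ii})=p_iE_{ii}$ for all $\lambda$. Your main argument already handles this correctly through $\min_k p_k>0$.
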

\begin{proof}
    Any strictly positive linear operator in a finite-dimensional vector space is invertible.
    It suffices to check $\mathscr{I}_{\rho,\lambda}$ is positive. For any $X \neq 0$, 
    \begin{align}
        \Tr(X^\dagger \mathscr{I}_{\rho,\lambda}(X)) &= \Tr(X^\dagger \left((1-\lambda)\rho X + \lambda X \rho\right))\\
        &= (1-\lambda)\Tr(\rho XX^\dagger) + \lambda \Tr(\rho X^\dagger X),
    \end{align}
    which is strictly positive since $\rho \succ 0$. 
\end{proof}

Let $\rho \succ 0$, and we denote the left and right multiplication by $\rho$ as $\mathbf{L}_\rho$, $\mathbf{R}_\rho$, respectively:
\begin{align}
    \mathbf{L}_\rho(X) := \rho X,\qquad \mathbf{R}_\rho(X) := X \rho.
\end{align}
The modular operator is defined as
\begin{align*}
    \mathbf{\Delta}_\rho(X) := \rho X \rho^{-1} = \mathbf{L}_\rho \mathbf{R}^{-1}_\rho (X).
\end{align*}

\begin{lemma}\label{lem:fourer_representation_lambda_score}
    Suppose that $\rho \propto e^{-\beta H}$. 
    The $\lambda$-score of $\rho$ with respect to $\partial_a$ can be represented as 
    \begin{align}
        S_{\lambda, a} = \frac{1}{2\pi}\int^\infty_{-\infty} g_{\lambda}(t) e^{iHt}A_a e^{-iHt}\,\ud t,
    \end{align}
    where the kernal $g_\lambda(t)$ has the Fourier transform 
    \begin{align}\label{eqn:g_lambda_ft}
        \hat{g}_{\lambda}(\nu) = \I \frac{e^{-\beta \nu}-1}{\lambda + (1-\lambda)e^{-\beta \nu}}.
    \end{align}
\end{lemma}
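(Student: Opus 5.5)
The plan is to diagonalize everything in the eigenbasis of $H$ and solve the defining equation \cref{eqn:general-lambda-definition} one Bohr component at a time; the representation then falls out of the weighted operator Fourier transform identity \cref{eqn:woft}. Write $H=\sum_i\lambda_iP_i$ with eigenprojectors $P_i$, so that $\rho=Z^{-1}\sum_i e^{-\beta\lambda_i}P_i$. Decompose both the frame generator and the unknown score into Bohr components with respect to $H$:
\begin{align*}
A_a=\sum_{\nu\in B_H}(A_a)_\nu,\qquad S_{\lambda,a}=\sum_{\nu\in B_H}(S_{\lambda,a})_\nu .
\end{align*}
If a component $X_\nu$ has only blocks $P_iXP_j$ with $\lambda_i-\lambda_j=\nu$, then left and right multiplication by $\rho$ act on it as scalars:
\begin{align*}
\rho X_\nu=Z^{-1}e^{-\beta\lambda_i}X_\nu,\qquad X_\nu\rho=Z^{-1}e^{-\beta\lambda_j}X_\nu \quad\text{(blockwise)}.
\end{align*}
Consequently $\partial_a\rho=-\I[A_a,\rho]$ has $(i,j)$ block
\begin{align*}
-\I\,Z^{-1}\bigl(e^{-\beta\lambda_j}-e^{-\beta\lambda_i}\bigr)P_iA_aP_j ,
\end{align*}
and $\mathscr I_{\rho,\lambda}(S)$ has $(i,j)$ block
\begin{align*}
Z^{-1}\bigl((1-\lambda)e^{-\beta\lambda_i}+\lambda e^{-\beta\lambda_j}\bigr)P_iSP_j .
\end{align*}
The denominator is strictly positive, which is consistent with the invertibility established in the preceding lemma.

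Equating blocks and dividing by $e^{-\beta\lambda_j}$ gives
\begin{align*}
P_iS_{\lambda,a}P_j=-\I\,\frac{1-e^{-\beta\nu}}{\lambda+(1-\lambda)e^{-\beta\nu}}\,P_iA_aP_j
=\hat g_\lambda(\nu)\,P_iA_aP_j,\qquad \nu=\lambda_i-\lambda_j .
\end{align*}
This matches \cref{eqn:g_lambda_ft}; the factor $Z^{-1}$ cancels, so normalization is irrelevant. Summing over blocks yields
\begin{align*}
S_{\lambda,a}=\sum_{\nu\in B_H}\hat g_\lambda(\nu)(A_a)_\nu .
\end{align*}

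The last step is to identify this sum with the stated time integral using \cref{eqn:woft}. Applied with $\varphi=g_\lambda/(2\pi)$, that identity gives $\int A_a(t)\varphi(t)\,\ud t=\sum_\nu\hat\varphi(\nu)(A_a)_\nu$, where $A_a(t)=e^{\I Ht}A_ae^{-\I Ht}$. The $1/(2\pi)$ prefactor in the lemma has to be matched to the Fourier normalization; I would read it as defining the convention in which $g_\lambda$ is given. So I would track the constant carefully, defining $g_\lambda$ through its transform and, if needed, absorbing the $2\pi$ into it.

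The main obstacle is analytic: showing that $\hat g_\lambda$ is actually the Fourier transform of an integrable (or at least tempered) $g_\lambda$. As $\nu\to-\infty$, $\hat g_\lambda(\nu)\to-\I/(1-\lambda)$, and as $\nu\to+\infty$ it tends to $\I/\lambda$. Since these limits are nonzero, $g_\lambda$ is not integrable: it contains delta and principal-value ($\sim 1/t$) pieces. For $0<\lambda<1$, I would subtract a constant plus a multiple of $\tanh(\beta\nu/2)$, whose transform is a $1/\sinh$ kernel. The remainder is analytic in a strip and decays exponentially, so its inverse transform is an integrable, exponentially decaying kernel. That makes the formula rigorous.

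Because $B_H$ is a finite set, however, only the values of $\hat g_\lambda$ on $B_H$ matter. A cleaner route is therefore to take any integrable $g$ whose transform agrees with $\hat g_\lambda$ on the finitely many Bohr frequencies; for example, a Schwartz function interpolating those values. The identity is then an exact finite-dimensional statement, and I would note this in the proof. The endpoint cases $\lambda\in\{0,1\}$ need this interpretation, since $\hat g_\lambda$ then has an unbounded side.
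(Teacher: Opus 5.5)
Your proposal is correct and is essentially the paper's argument. The paper writes $\mathscr{I}_{\rho,\lambda}=\bigl((1-\lambda)\mathbf{\Delta}_\rho+\lambda I\bigr)\mathbf{R}_\rho$ and $\partial_a\rho=\I(\mathbf{\Delta}_\rho-I)\mathbf{R}_\rho(A_a)$, inverts to get a function of $\mathbf{\Delta}_\rho$ applied to $A_a$, evaluates it on Bohr components (where $\mathbf{\Delta}_\rho$ acts as $e^{-\beta\nu}$), and then invokes the weighted operator Fourier transform; this is your blockwise solve written in modular-operator language. Your two caveats are correct and go beyond the paper's terse proof: with the paper's convention $\hat\varphi(\xi)=\int e^{\I\xi t}\varphi(t)\,\ud t$, the prefactor $1/(2\pi)$ must be absorbed into $g_\lambda$, and since $B_H$ is finite, any kernel matching $\hat g_\lambda$ on $B_H$ suffices, which is needed at $\lambda\in\{0,1\}$. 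One harmless sign slip: the limits are $\hat g_\lambda(\nu)\to\I/(1-\lambda)$ as $\nu\to-\infty$ and $\hat g_\lambda(\nu)\to-\I/\lambda$ as $\nu\to+\infty$.
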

\begin{proof}
    First, note that the super-operators $\mathscr{I}_{\rho, \lambda}$ and $\partial_a$ can be represented as 
    \begin{align}
        \mathscr{I}_{\rho, \lambda} = \left((1-\lambda) \mathbf{\Delta}_\rho + \lambda I\right) \mathbf{R}_\rho,\qquad  \partial_a \rho &= \I (\mathbf{\Delta}_\rho - I) \mathbf{R}_\rho(A_a), 
    \end{align}
    therefore,
    \begin{align*}
        S_{\lambda, a} = \mathscr{I}^{-1}_{\rho,\lambda}(\partial_a \rho) = \I\frac{\mathbf{\Delta}_\rho - I}{(1-\lambda) \mathbf{\Delta}_\rho + \lambda I}(A_a)\label{eq:score-fractional-rep} = \I \sum_{\nu}  \frac{e^{-\beta \nu}-1}{\lambda + (1-\lambda)e^{-\beta \nu}} (A_a)_{\nu},
    \end{align*}
and the weighted operator Fourier transform can be implemented using the kernel $g(t)$ such that its Fourier transform is given in~\cref{eqn:g_lambda_ft}.
\end{proof}

\begin{remark}
    In the special cases where $\lambda = 1/2$ and $\lambda=1$, the Fourier coefficients reduces to 
    \begin{align}
        \hat{g}_{1/2}(\nu) = -2i\tanh(\beta \nu/2),\qquad \hat{g}_1(\nu) = i(e^{-\beta\nu}-1).
    \end{align} 
\end{remark}

Now, we generalize the classical score matching framework to quantum learning and derive a quantum Hyv\"arinen formula. 
This formula enables sampling-based evaluation of the score-matching objective.
\begin{definition}\label{def:rho-lambda-weighted-inner-product}
    Let $\rho \succ 0$. We define the $(\rho, \lambda)$-weighted inner product:
    \begin{align}
        \langle X, Y\rangle_{\rho,\lambda} := \Tr(X^\dagger \mathscr{I}_{\rho,\lambda}(Y)) = \Tr\left(\rho\left(\lambda X^\dagger Y + (1-\lambda)YX^\dagger\right)\right).
    \end{align}
    And the associated 2-norm is given by
    \begin{align}
        \|X\|^2_{\rho,\lambda} := \Tr\left(\rho\left(\lambda X^\dagger X + (1-\lambda)XX^\dagger\right)\right).
    \end{align}
\end{definition}

Given two density operators $\rho, \sigma \succ 0$ and a frame of derivations $\{A_a\}_{a\in \cA}$, the $\lambda$-Fisher divergence is defined as 
\begin{align}
    \mathcal{F}_{\lambda,\cA}(\rho\|\sigma) := \frac{1}{2}\sum_{a\in \cA}\left\|\mathscr{I}^{-1}_{\rho,\lambda}(\partial_a \rho) - \mathscr{I}^{-1}_{\sigma,\lambda}(\partial_a \sigma)\right\|^2_{\sigma,\lambda}\,.
\end{align}

Suppose we have a parametrized density operator $\rho(\theta)$ and a data ensemble $\sigma$ such that $\rho(\theta), \sigma  \succ 0$ for any $\theta$. 
A natural formulation of quantum score matching requires minimizing the quantum Fisher divergence. For a fixed $\lambda \in [0,1]$, we define
\begin{align}\label{eqn:q_F_div}
    J_Q(\theta) := \mathcal{F}_{\lambda,\cA}(\rho(\theta)\|\sigma) = \frac{1}{2}\sum_{a\in \cA}\left\|S_{a}(\theta) - S^*_a\right\|^2_{\sigma,\lambda}\, ,
\end{align}
where 
\begin{align}
    S_a(\theta) := \mathscr{I}^{-1}_{\rho(\theta),\lambda}(\partial_a \rho(\theta)),\qquad S^*_a := \mathscr{I}^{-1}_{\sigma,\lambda}(\partial_a \sigma).
\end{align}

\begin{lemma}[Quantum Hyv\"arinen formula]\label{lem:quantum-hyvarinen}
    For every $\lambda\in[0,1]$,
    \begin{align}\label{eqn:quantum-hyvarinen}
        J_Q(\theta)
        = \frac{1}{2}\sum_a\Tr\left(\sigma\left(
            \lambda S_a^\dagger(\theta)S_a(\theta)
            +(1-\lambda)S_a(\theta)S_a^\dagger(\theta)
            +\partial_a\bigl(S_a(\theta)+S_a^\dagger(\theta)\bigr)
        \right)\right) + C_\lambda,
    \end{align}
    where 
    \begin{align}
        C_\lambda
        := \frac{1}{2}\sum_a\|S_a^*\|_{\sigma,\lambda}^2
        = \frac{1}{2}\sum_a\Tr\left(\sigma\left(
            \lambda(S_a^*)^\dagger S_a^*
            +(1-\lambda)S_a^*(S_a^*)^\dagger
        \right)\right)
    \end{align}
    is independent of the parametrized operator $\rho(\theta)$.
\end{lemma}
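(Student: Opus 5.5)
Expanding the weighted square and applying an integration by parts in the noncommutative sense (the Leibniz rule plus a vanishing trace) will give this identity. Write $S_a=S_a(\theta)$, $S^*_a$ for the target score, and expand
\begin{align*}
\|S_a-S^*_a\|^2_{\sigma,\lambda}=\|S_a\|^2_{\sigma,\lambda}-\langle S^*_a,S_a\rangle_{\sigma,\lambda}-\langle S_a,S^*_a\rangle_{\sigma,\lambda}+\|S^*_a\|^2_{\sigma,\lambda}.
\end{align*}
The last term summed over $a$ (with the factor $\tfrac12$) gives exactly $C_\lambda$. By \Cref{def:rho-lambda-weighted-inner-product}, the first term equals $\Tr\bigl(\sigma(\lambda S_a^\dagger S_a+(1-\lambda)S_aS_a^\dagger)\bigr)$, which matches the quadratic part of \cref{eqn:quantum-hyvarinen}. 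Everything therefore reduces to rewriting the two cross terms, which are the only places where the unknown $S^*_a$ appears.

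For the cross term, I will use the defining property of the $\lambda$-score: $\mathscr I_{\sigma,\lambda}(S^*_a)=\partial_a\sigma$. Hence
\begin{align*}
\langle S_a,S^*_a\rangle_{\sigma,\lambda}=\Tr\bigl(S_a^\dagger\,\mathscr I_{\sigma,\lambda}(S^*_a)\bigr)=\Tr\bigl(S_a^\dagger\,\partial_a\sigma\bigr).
\end{align*}
The other cross term is its complex conjugate, since $\mathscr I_{\sigma,\lambda}$ is self-adjoint (indeed positive) for the Hilbert--Schmidt inner product. Using $\sigma=\sigma^\dagger$ and the star-preserving property of $\partial_a$, it equals $\overline{\Tr(S_a^\dagger\partial_a\sigma)}=\Tr(S_a\,\partial_a\sigma)$. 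The two cross terms together are thus $\Tr\bigl((S_a+S_a^\dagger)\,\partial_a\sigma\bigr)$.

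The key step is the quantum integration by parts. For any $X$, the Leibniz rule gives $\Tr(X\partial_a\sigma)=\Tr(\partial_a(X\sigma))-\Tr(\partial_a(X)\sigma)$, and $\Tr(\partial_a(Y))=-\I\Tr([A_a,Y])=0$ by cyclicity. So
\begin{align*}
-\Tr\bigl((S_a+S_a^\dagger)\partial_a\sigma\bigr)=\Tr\bigl(\sigma\,\partial_a(S_a+S_a^\dagger)\bigr).
\end{align*}
Summing over $a$ with the prefactor $\tfrac12$ then produces exactly \cref{eqn:quantum-hyvarinen}.

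The main care point is bookkeeping the adjoints when $\lambda\neq 1/2$. For general $\lambda$, $S_a$ need not be Hermitian, so I must use $\Tr(X^\dagger\mathscr I(Y))$ in the correct order and verify the conjugation step: $\overline{\Tr(S_a^\dagger\partial_a\sigma)}=\Tr((\partial_a\sigma)^\dagger S_a)=\Tr(\partial_a\sigma\, S_a)$. The argument works with finite-dimensional traces, so no boundary terms arise. Finally, $C_\lambda$ depends only on $\sigma$ and the frame, hence not on $\theta$.
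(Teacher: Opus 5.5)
Your proof is correct, including the adjoint bookkeeping for $\lambda\neq 1/2$: $\mathscr I_{\sigma,\lambda}$ is indeed Hilbert--Schmidt self-adjoint, so the second cross term is the complex conjugate of the first.

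The paper reaches the same cross-term identity by a more structural route. It introduces the adjoint $\partial_a^\sharp=\mathscr I_{\sigma,\lambda}^{-1}\partial_a^\dagger\mathscr I_{\sigma,\lambda}$ of the derivation with respect to the weighted inner product. From the Leibniz rule it computes $\partial_a^\sharp=-\partial_a-\mathscr I_{\sigma,\lambda}^{-1}\mathscr I_{\partial_a\sigma,\lambda}$ and observes that $-\partial_a^\sharp(I)=S_a^*$. It then moves $\partial_a^\sharp$ across the weighted inner product to obtain $\langle S_a,S_a^*\rangle_{\sigma,\lambda}=-\Tr(\sigma\,\partial_a(S_a^\dagger))$.

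Unwinding $\partial_a^\sharp$ gives exactly your two steps, so the arguments are equivalent. Yours is shorter and needs only three ingredients: the score equation $\mathscr I_{\sigma,\lambda}(S_a^*)=\partial_a\sigma$, the Leibniz rule, and $\Tr[A_a,\cdot\,]=0$. It is essentially the argument the paper itself uses for the BKM objective in \cref{append:matrix-logarithm-recover}.

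What the paper's packaging buys is an explicit analogy with classical score matching. The identity $-\partial_a^\sharp(I)=S_a^*$ is the noncommutative counterpart of $-\widetilde{\nabla^\sharp}(1)=\nabla\log\widetilde\sigma$, which makes the reduction to Hyv\"arinen's formula in the commutative case immediate.
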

\begin{proof}
    Fix $a$ and abbreviate $S_a(\theta)$ by $S_a$. Expanding the
    corresponding summand in~\cref{eqn:q_F_div} separates the two pure
    terms from the cross term:
    \begin{align}
        \frac{1}{2}\|S_a-S_a^*\|_{\sigma,\lambda}^2
        &= \frac{1}{2}\|S_a\|_{\sigma,\lambda}^2 + \frac{1}{2}\|S_a^*\|_{\sigma,\lambda}^2 -\frac{1}{2}T_a,\label{eqn:hyvarinen-expansion}\\
        \text{where}\quad T_a
        &:= \langle S_a, S^*_a\rangle_{\sigma,\lambda}+\langle S^*_a, S_a\rangle_{\sigma,\lambda}.\label{eqn:hyvarinen-cross-term}
    \end{align}
    It remains to simplify $T_a$. Let the adjoint operator of the derivation operator under the $(\sigma, \lambda)$-weighted inner product be $\partial^\sharp_a$ which satisfies
    \begin{equation}
        \braket{X, \partial_a Y}_{\sigma, \lambda} = \braket{\partial^\sharp_a X, Y}_{\sigma, \lambda}.
    \end{equation}
    Note that the regular adjoint under the Hilbert--Schmidt inner product is $\partial_a^\dagger = - \partial_a$. From \cref{def:rho-lambda-weighted-inner-product}, we see that
    \begin{equation}
        \partial_a^\sharp = \mathscr{I}_{\sigma, \lambda}^{-1} \partial_a^\dagger \mathscr{I}_{\sigma, \lambda} \quad \text{ and }\quad \partial_a \mathscr{I}_{\sigma, \lambda} = \mathscr{I}_{\sigma, \lambda} \partial_a + \mathscr{I}_{\partial_a \sigma, \lambda}.
    \end{equation}
    Therefore
    \begin{equation}
        \partial_a^\sharp = - \partial_a - \mathscr{I}_{\sigma, \lambda}^{-1} \mathscr{I}_{\partial_a \sigma, \lambda}.
    \end{equation}
    Now, note that the following identity holds
    \begin{equation}
        - \partial_a^\sharp(I) = \partial_a(I) + \mathscr{I}_{\sigma, \lambda}^{-1} \mathscr{I}_{\partial_a \sigma, \lambda}(I) = \mathscr{I}_{\sigma, \lambda}^{-1}(\partial_a \sigma) = S_a^*.
    \end{equation}
    We can apply the integral-by-parts formula through the adjoint operator
    \begin{align}
        \braket{S_a, S_a^*}_{\sigma, \lambda} &= - \braket{S_a, \partial_a^\sharp(I)}_{\sigma, \lambda} = - \braket{\partial_a (S_a), I}_{\sigma, \lambda} = - \Tr(\sigma (\partial_a(S_a))^\dagger) = - \Tr(\sigma \partial_a(S_a^\dagger)),\label{eq:inner-prod-Sa-1}\\
        \braket{S_a^*, S_a}_{\sigma, \lambda} &= - \braket{\partial_a^\sharp(I), S_a}_{\sigma, \lambda} = - \braket{I, \partial_a(S_a)}_{\sigma, \lambda} = - \Tr(\sigma \partial_a(S_a))\label{eq:inner-prod-Sa-2}.
    \end{align}
    Combining~\cref{eq:inner-prod-Sa-1} and~\cref{eq:inner-prod-Sa-2}, we obtain 
    \begin{align}\label{eqn:Ta-partial-S-relation}
        T_a=-\Tr\left(\sigma\,
            \partial_a\bigl(S_a+S_a^\dagger\bigr)\right).
    \end{align}
    Substituting this into~\cref{eqn:hyvarinen-expansion} and summing over
    $a$ proves~\cref{eqn:quantum-hyvarinen}.
\end{proof}

In the commutative setting, our construction reduces to the classical score-matching formalism.
To see this, we use $\widetilde{\cdot}$ to denote the classical counterparts of these operators. Suppose $\wt{\sigma}(x)$ is a probability density. Then
\begin{equation}
    (\wt{\mathscr{I}}_{\sigma, \lambda} f)(x) = \wt{\sigma}(x) f(x) \quad \text{and}\quad \braket{f, g}_{\sigma, \lambda} = \int f^\dagger (x) g(x) \sigma(x) \ud x.
\end{equation}
Then, the adjoint operator becomes the regular integration by part with the score trick
\begin{equation}
    \wt{\partial_a^\sharp}f(x) = - \frac{\partial f(x)}{\partial x_a} - f(x) \frac{\partial \log \wt{\sigma}(x)}{\partial x_a} \quad \text{and} \quad \wt{\nabla^\sharp} = - \nabla - (\nabla \log \wt{\sigma}). 
\end{equation}
From this correspondence, it can also be seen that
\begin{equation}
    - \wt{\nabla^\sharp}(1)(x) = \nabla \log \wt{\sigma}(x) = s^*(x).
\end{equation}
In classical setting, $s(x; \theta) = \nabla \log q(x; \theta)$ and 
\begin{equation}
    \braket{s_\theta, s^*}_{\sigma, \lambda} = - \braket{s_\theta, \wt{\nabla^\sharp}(1)}_{\sigma, \lambda} = - \braket{\nabla s_\theta, 1}_{\sigma, \lambda} = - \int \Delta \log q(x; \theta) \wt{\sigma}(x) \ud x.
\end{equation}
This exactly reduces to the classical Hyv\"arinen formula.

However, we note that our $\lambda$-score is not equivalent to the $\partial_a$ derivative of the ``logarithm'' of a density operator. In particular, if $\rho = e^{-H}/\Tr(e^{-H})$ for some Hamiltonian $H$, the matrix logarithm of $\rho$ is simply $\log(\rho) = -H - \log(\Tr(e^{-H}))$. In general,
\begin{align}
    S_{\lambda, a} \neq \partial_a (\log(\rho)) = - \partial_a H.
\end{align}
In~\cref{append:matrix-logarithm-recover}, we discuss an alternative quantum score defined by the Bogoliubov--Kubo--Mori (BKM) logarithmic derivative, which recovers the derivative of the matrix logarithm exactly. A formal Hyv\"arinen identity still holds, but its quadratic term depends nonlinearly on the unknown target state and does not in general reduce to an expectation of a candidate-dependent observable.

Below, we give the global identifiability condition for quantum score matching using the $\lambda$-score. Recall that $\cS'$ is the joint commutant of all operators in $\cS$.

\begin{theorem}[Global identifiability]
    \label{thm:global-identifiability}
    For $\lambda\in [0,1]$, let $\{A_a\}_{a\in \cA}$ be a derivation frame and $\sigma \propto e^{-\beta H^*}$ be a thermal state.
    Using this derivation frame and target state, we define $J_Q(\theta)$ as in~\cref{eqn:score-matching-objective}, let $\hat{\theta}$ be a global minimizer of $J_Q(\theta)$. 
    Define 
    \begin{align}\label{eqn:K-sigma-lambda-a}
        K^\sigma_{\lambda, a} := \left(\lambda \mathbf{\Delta}^{-1/2}_\sigma + (1-\lambda) \mathbf{\Delta}^{1/2}_\sigma\right)^{-1}(A_a)\,.
    \end{align}
    If $\{K^\sigma_{\lambda,a}:a\in\cA\}'=\CC I$, we have $\rho(\hat{\theta}) = \sigma$.
\end{theorem}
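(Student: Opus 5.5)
The plan is to reduce to exact score equality and then derive an algebraic intertwining relation that the commutant hypothesis can exploit. Write $\rho:=\rho(\hat\theta)$.

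\emph{Step 1: exact score matching.} Since $\sigma=\rho(\theta^*)$ lies in the model family, $J_Q(\theta^*)=0$. Moreover $J_Q\ge 0$, so any global minimizer satisfies $J_Q(\hat\theta)=0$. The norm $\|\cdot\|_{\sigma,\lambda}$ is positive definite because $\mathscr I_{\sigma,\lambda}$ is strictly positive for $\sigma\succ0$ (the lemma on invertibility of $\mathscr I_{\rho,\lambda}$). Hence $S_a:=S_a(\hat\theta)=S_a^*$ for every $a\in\cA$.

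\emph{Step 2: intertwining relation.} Expand the defining equation $\partial_a\sigma=(1-\lambda)\sigma S_a+\lambda S_a\sigma$ with $\partial_a\sigma=-\I(A_a\sigma-\sigma A_a)$ and regroup it as
\begin{align}
    \sigma Y_a = X_a\sigma,\qquad X_a:=\lambda S_a+\I A_a,\quad Y_a:=\I A_a-(1-\lambda)S_a .
\end{align}
The identical computation for $\rho$ uses the same $S_a$ and the same $A_a$, so it gives $\rho Y_a=X_a\rho$ with the same $X_a,Y_a$. Therefore $\sigma Y_a\sigma^{-1}=\rho Y_a\rho^{-1}$, which means $W:=\rho^{-1}\sigma$ commutes with every $Y_a$.

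\emph{Step 3: identify $Y_a$ with the $K$'s.} Insert the closed form of the score from the proof of \cref{lem:fourer_representation_lambda_score},
\begin{align}
    S_a=\I\,\frac{\mathbf\Delta_\sigma-I}{(1-\lambda)\mathbf\Delta_\sigma+\lambda I}(A_a).
\end{align}
Simplifying with functional calculus in $\mathbf\Delta_\sigma$ gives
\begin{align}
    Y_a=\I\bigl((1-\lambda)\mathbf\Delta_\sigma+\lambda I\bigr)^{-1}(A_a)=\I\,\mathbf\Delta_\sigma^{-1/2}\bigl(\lambda\mathbf\Delta_\sigma^{-1/2}+(1-\lambda)\mathbf\Delta_\sigma^{1/2}\bigr)^{-1}(A_a)=\I\,\sigma^{-1/2}K^\sigma_{\lambda,a}\sigma^{1/2}.
\end{align}
This step is the one I expect to need the most care. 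One has to check the algebra of the rational function, which is harmless because all the superoperators $\mathbf L_\sigma,\mathbf R_\sigma,\mathbf\Delta_\sigma$ commute. One also has to check that $\mathbf\Delta_\sigma^{-1/2}(Z)=\sigma^{-1/2}Z\sigma^{1/2}$ with the correct sign of the exponent.

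\emph{Step 4: conclude.} Conjugating the commutation relation of Step 2 by $\sigma^{1/2}$ shows that $\sigma^{1/2}W\sigma^{-1/2}=\sigma^{1/2}\rho^{-1}\sigma^{1/2}$ commutes with every $K^\sigma_{\lambda,a}$. By the hypothesis $\{K^\sigma_{\lambda,a}\}'=\CC I$, we get $\sigma^{1/2}\rho^{-1}\sigma^{1/2}=cI$. Hence $\rho^{-1}=c\,\sigma^{-1}$, i.e.\ $\rho=c^{-1}\sigma$. Positivity gives $c>0$, and unit trace of both states forces $c=1$, so $\rho(\hat\theta)=\sigma$.

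The symmetric case, \cref{lem:global-identifiability-sld}, follows by setting $\lambda=1/2$. There $\bigl(\tfrac12(\mathbf\Delta^{1/2}+\mathbf\Delta^{-1/2})\bigr)^{-1}=\operatorname{sech}\circ\log\mathbf\Delta^{1/2}$, which matches the stated $K^\sigma_a$.
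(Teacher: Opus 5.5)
Your proof is correct and follows essentially the same route as the paper. Both arguments pass through exact score matching at a zero of $J_Q$, the intertwining relation $\sigma Y_a\sigma^{-1}=\rho Y_a\rho^{-1}$ (the paper's $\widetilde B\sigma=\sigma\widetilde A$, with your $Y_a=\I\widetilde A$), the identification of $\sigma^{1/2}Y_a\sigma^{-1/2}$ with $\I K^\sigma_{\lambda,a}$ via functional calculus in $\mathbf{\Delta}_\sigma$, and the scalar-commutant conclusion. The only difference is cosmetic: you work with $\sigma^{1/2}\rho^{-1}\sigma^{1/2}$, whereas the paper uses its inverse $\sigma^{-1/2}\rho\sigma^{-1/2}$.
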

\begin{proof}
    Since $\theta^*$ is feasible, $J_Q(\hat{\theta})=J_Q(\theta^*)=0$. Since $\sigma \succ 0$, this implies that the $\lambda$-score of $\rho(\hat{\theta})$ and $\sigma$ matches on every direction $a \in \cA$. In what follows, we write $\rho=\rho(\hat{\theta})$ for simplicity, and denote $L_a$ as the identitcal score in the direction $a$, so
    \begin{align}
        L_a := S_a(\hat{\theta}) = S^*_a,\quad \forall~ a\in \cA.
    \end{align}
    Rearranging the score equations for $\sigma$ and $\rho$ gives
    \begin{align}\label{eq:score-equations-sigma-rho}
        \widetilde B_{\lambda,a}^{\sigma}\sigma=\sigma\widetilde A_{\lambda,a}^{\sigma},
        \qquad
        \widetilde B_{\lambda,a}^{\sigma}\rho=\rho\widetilde A_{\lambda,a}^{\sigma},
    \end{align}
    where
    \begin{align}
        \widetilde A_{\lambda,a}^{\sigma}:=A_a+\I(1-\lambda)L_a,
        \qquad
        \widetilde B_{\lambda,a}^{\sigma}:=A_a-\I\lambda L_a.
    \end{align}
    Rearranging \cref{eq:score-equations-sigma-rho} and conjugating by $\sigma^{1/2}$, we obtain
    \begin{align}
        [\sigma^{-1/2}\rho \sigma^{-1/2}, \sigma^{1/2}\widetilde{A}_{\lambda,a}^{\sigma} \sigma^{-1/2}] = 0,\quad \forall~a\in \cA.
    \end{align}
    By~\cref{eq:score-fractional-rep}, we can further simplify 
    \begin{align}
        \sigma^{1/2}\widetilde{A}_{\lambda,a}^{\sigma}\sigma^{-1/2} &=  \sigma^{1/2} \left(I -(1-\lambda)\frac{\mathbf{\Delta}_\sigma - I}{(1-\lambda)\mathbf{\Delta}_\sigma + \lambda I}\right)(A_a) \sigma^{-1/2}\\
         &= \left(\lambda \mathbf{\Delta}^{-1/2}_\sigma + (1-\lambda) \mathbf{\Delta}^{1/2}_\sigma\right)^{-1}(A_a) =: K^\sigma_{\lambda,a}.
    \end{align}
    In other words, $J_Q(\hat{\theta})=0$ implies that for every $a \in \cA$, there is 
    \begin{align}\label{eq:consequence-zero-score}
        [\sigma^{-1/2}\rho \sigma^{-1/2}, K^\sigma_{\lambda, a}] = 0.
    \end{align}
    If $\{K^\sigma_{\lambda,a}:a\in\cA\}'=\CC I$, by~\cref{eq:consequence-zero-score}, there exists a $c \in \mathbb{C}$ such that
        \begin{equation}
            \sigma^{-1/2}\rho \sigma^{-1/2} = c I \implies \rho = c\sigma.
        \end{equation}
    Since $\Tr(\rho)=\Tr(\sigma)=1$, we must have $c = 1$ and this implies $\rho = \sigma$.
\end{proof}

At the two endpoints $\lambda = 0, 1$, the commutant condition is significantly simplified, and we show that a set of single-qubit Pauli operators suffices. 

\begin{lemma}[Endpoint logarithmic derivatives]
\label{lem:endpoint-consistency}
    For $\lambda=1$ and $\lambda=0$,
    \begin{align}
        K_{1,a}^{\sigma}
        =\sigma^{1/2}A_a\sigma^{-1/2},
        \qquad
        K_{0,a}^{\sigma}
        =\sigma^{-1/2}A_a\sigma^{1/2}.
    \end{align}
    Hence $\{A_a:a\in\cA\}'=\CC I$ guarantees global Fisher consistency at
    either endpoint.  In particular, the single-qubit operators
    $\{X_x,Z_x\}_{x=1}^{n}$ form such a frame.
\end{lemma}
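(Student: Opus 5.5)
Plan: substitute $\lambda=1$ and $\lambda=0$ into the definition \cref{eqn:K-sigma-lambda-a} of $K^\sigma_{\lambda,a}$, then use that conjugation by an invertible matrix preserves trivial commutants. Finally, invoke \cref{thm:global-identifiability}.

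For $\lambda=1$ the superoperator in \cref{eqn:K-sigma-lambda-a} is $\mathbf{\Delta}^{-1/2}_\sigma$. Its inverse is $\mathbf{\Delta}^{1/2}_\sigma$, which for $\sigma\succ0$ acts as $X\mapsto \sigma^{1/2}X\sigma^{-1/2}$. This holds because $\mathbf{\Delta}_\sigma=\mathbf{L}_\sigma\mathbf{R}_\sigma^{-1}$ is a product of commuting positive superoperators, so its functional calculus factorizes as $\mathbf{\Delta}^{s}_\sigma=\mathbf{L}_{\sigma^s}\mathbf{R}_{\sigma^{-s}}$. One can also check this in the eigenbasis of $\sigma$, where $\mathbf{\Delta}_\sigma$ acts on matrix units $|i\rangle\langle j|$ by multiplication by $p_i/p_j$. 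Hence $K^\sigma_{1,a}=\sigma^{1/2}A_a\sigma^{-1/2}$. Symmetrically, for $\lambda=0$ the operator is $\mathbf{\Delta}^{1/2}_\sigma$, whose inverse $\mathbf{\Delta}^{-1/2}_\sigma$ gives $K^\sigma_{0,a}=\sigma^{-1/2}A_a\sigma^{1/2}$.

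Next, take $V=\sigma^{\pm1/2}$, which is invertible, so $K_a=VA_aV^{-1}$. We have $[X,VA_aV^{-1}]=0$ if and only if $[V^{-1}XV,A_a]=0$. Therefore the joint commutant satisfies $\{K_a\}'=V\{A_a\}'V^{-1}$. If $\{A_a\}'=\CC I$, this gives $\{K_a\}'=V(\CC I)V^{-1}=\CC I$. \Cref{thm:global-identifiability} then yields $\rho(\hat\theta)=\sigma$ for any global minimizer, which is global Fisher consistency at either endpoint.

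For the single-qubit frame, we show $\{X_x,Z_x\}_{x=1}^n{}'=\CC I$. The algebra generated by $X_x$ and $Z_x$ contains $Y_x=\I X_xZ_x$, so it contains all Pauli strings, i.e.\ all of $\cM_n$. Any matrix commuting with the generators commutes with the whole generated algebra, hence lies in the center of $\cM_n$, which is $\CC I$. Alternatively, expand $X=\sum_P c_PP$ in the Pauli basis. Commuting with $X_x$ and $Z_x$ forces $c_P=0$ whenever $P$ has a nonidentity factor at site $x$, since at least one of $X_x,Z_x$ anticommutes with that factor.

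No step is a real obstacle. The only point needing care is identifying the inverse superoperator with two-sided multiplication by $\sigma^{\pm1/2}$, which is immediate in the eigenbasis of $\sigma$.
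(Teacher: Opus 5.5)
Your proof is correct and follows the same route as the paper. At $\lambda\in\{0,1\}$ the defining superoperator reduces to $\mathbf{\Delta}_\sigma^{\mp1/2}$, so each $K^\sigma_{\lambda,a}$ is the same similarity transform of $A_a$, which preserves a scalar commutant, and $X_x,Z_x$ generate all of $\cM_n$. You simply spell out steps the paper states in one line, namely $\mathbf{\Delta}_\sigma^{s}=\mathbf{L}_{\sigma^{s}}\mathbf{R}_{\sigma^{-s}}$ and $\{VA_aV^{-1}\}'=V\{A_a\}'V^{-1}$.
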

\begin{proof}
    At either endpoint, the transformed frame is obtained from the
    original frame by one common similarity transformation.  For example,
    \begin{align*}
        \{K_{1,a}^{\sigma}\}_a'=\sigma^{1/2}\{A_a\}_a'\sigma^{-1/2}.
    \end{align*}
    Thus a scalar commutant remains scalar.  The local $X_x$ and $Z_x$
    generate the full matrix algebra, so their commutant is $\CC I$.
\end{proof}

Finally, we prove that the target parameter $\theta^*$ is always a first-order stationary point of $J_Q(\theta)$, regardless of the choice of derivation frame.

\begin{lemma}[First-order stationarity]\label{lem:first-order-stationary}
    Suppose that $H^*=H(\theta^*)$ for a $\theta^* \in \Theta$. Let $J_Q(\theta)$ be the same as in~\cref{eqn:score-matching-objective}. 
    Then we have
    \begin{align}
        J_Q(\theta^*) = 0,\qquad \nabla_\theta J_Q(\theta^*) = 0.
    \end{align} 
\end{lemma}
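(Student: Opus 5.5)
The plan is to exploit that $J_Q$ is a weighted squared norm whose residual vanishes at $\theta^*$; the gradient then vanishes because it is linear in that residual.

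\textbf{Step 1: $J_Q(\theta^*)=0$.} Since $\sigma=\rho(\theta^*)$, the model score and the target score coincide in every direction:
\begin{align*}
S_a(\theta^*)=\mathscr I^{-1}_{\rho(\theta^*),\lambda}\bigl(\partial_a\rho(\theta^*)\bigr)=\mathscr I^{-1}_{\sigma,\lambda}(\partial_a\sigma)=S^*_a .
\end{align*}
Each summand $\|S_a(\theta^*)-S_a^*\|^2_{\sigma,\lambda}$ in \cref{eqn:q_F_div} is therefore zero.

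\textbf{Step 2: regularity.} For the gradient I will not use the Hyv\"arinen representation of \cref{lem:quantum-hyvarinen}. Instead I differentiate the defining expression \cref{eqn:q_F_div} directly. The weight $\sigma$ and the target scores $S_a^*$ do not depend on $\theta$, so only $S_a(\theta)$ varies. The map $\theta\mapsto\rho(\theta)$ is smooth, since it is the exponential of an affine Hamiltonian, normalized by a positive trace. Because $\rho(\theta)\succ0$, the inverse $\mathscr I^{-1}_{\rho(\theta),\lambda}$ exists, and it depends smoothly on $\theta$ by smoothness of matrix inversion on invertible superoperators. Hence $S_a(\theta)$ is differentiable; write $T_{a,j}:=\partial_{\theta_j}S_a(\theta)$.

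\textbf{Step 3: the gradient formula.} The form $\langle X,Y\rangle_{\sigma,\lambda}$ is sesquilinear with a fixed weight, so
\begin{align*}
\partial_{\theta_j}J_Q(\theta)=\frac12\sum_a\Bigl(\langle T_{a,j},S_a(\theta)-S^*_a\rangle_{\sigma,\lambda}+\langle S_a(\theta)-S_a^*,T_{a,j}\rangle_{\sigma,\lambda}\Bigr)=\sum_a\Re\,\langle T_{a,j},S_a(\theta)-S_a^*\rangle_{\sigma,\lambda}.
\end{align*}

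\textbf{Step 4: evaluation at $\theta^*$.} By Step 1 the residual $S_a(\theta^*)-S_a^*$ is zero, so every term vanishes and $\nabla_\theta J_Q(\theta^*)=0$. Equivalently, $J_Q\ge0$ because $\|\cdot\|_{\sigma,\lambda}$ is a genuine norm, by the strict positivity of $\mathscr I_{\sigma,\lambda}$. A differentiable nonnegative function attaining zero at a point has vanishing gradient there. This requires $\theta^*$ to be interior to $\Theta$ or the gradient to be understood as the unconstrained derivative, which it is here.

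The only step requiring care is Step 2, the differentiability of $S_a(\theta)$. Strict positivity of $\mathscr I_{\rho(\theta),\lambda}$, established earlier, gives invertibility, and smooth dependence on $\theta$ then follows. Nothing else is delicate, and no frame condition is needed, matching the remark in the statement that the result holds for any derivation frame.
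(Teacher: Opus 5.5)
Your proof is correct and follows the paper's argument: matching scores at $\theta^*$ make every summand of $J_Q$ vanish, and differentiating the weighted squared norm directly gives a gradient that is linear in the residual $S_a(\theta)-S_a^*$, so it vanishes at $\theta^*$. You state it for general $\lambda$ (hence the $\Re$), which reduces to the paper's formula at $\lambda=1/2$, and your regularity step via invertibility of $\mathscr I_{\rho(\theta),\lambda}$ spells out the paper's one-line appeal to smoothness of the full-rank Gibbs family.
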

\begin{proof}
    Since $H(\theta^*)=H^*$, we have $\rho(\theta^*)=\sigma$ and hence
    $S_a(\theta^*)=S_a^*$ for every $a\in\cA$. Substitution into
    \cref{eqn:score-matching-objective} gives $J_Q(\theta^*)=0$.
    Since the Gibbs family is smooth and full-rank, the score map is
    differentiable. Differentiating the squared score distance with respect
    to any parameter $\theta_j$ yields
    \begin{align}
        \partial_{\theta_j}J_Q(\theta)=\sum_{a\in\cA}
        \left\langle
            \partial_{\theta_j}S_a(\theta),
            S_a(\theta)-S_a^*
        \right\rangle_{\sigma}.
    \end{align}
    Evaluating at $\theta=\theta^*$ makes every summand vanish, so
    $\partial_{\theta_j}J_Q(\theta^*)=0$ for all $j$.
\end{proof}

\subsection{The Bogoliubov--Kubo--Mori logarithmic derivative}
\label{append:matrix-logarithm-recover}

For a full-rank density operator $\rho$, define the BKM map and score by~\cite{petz1993bogoliubov},
\begin{equation}
    \Omega_\rho(O):=\int_0^1\rho^s O\rho^{1-s}\ud s,
    \qquad K_a^\rho:=\partial_a\log\rho.
    \label{eqn:bkm-map-score}
\end{equation}
Duhamel's formula gives $\partial_a\rho=\Omega_\rho(K_a^\rho)$,
so $K_a^\rho$ is the logarithmic derivative associated with
$\Omega_\rho$. For a Gibbs state $\rho_H=e^{-\beta H}/Z$, the scalar
normalization disappears under $\partial_a=-\I[A_a,\cdot]$, giving
\begin{equation}
    K_a^{\rho_H}=-\beta\partial_aH=\I\beta[A_a,H].
    \label{eqn:bkm-hamiltonian-score}
\end{equation}
This identity holds for every finite $\beta>0$, without the nonlinear
function of $\mathbf{ad}_H$ appearing in the SLD score. In particular,
the BKM score is linear in the parameters of a Pauli Hamiltonian.

To construct a score-matching objective, the logarithmic derivative
must be paired with a choice of inner product. The corresponding BKM
inner product is
\begin{equation}
    \langle X,Y\rangle_{\sigma,\mathrm{BKM}}
    :=\Tr(X^\dagger\Omega_\sigma(Y)).
\end{equation}
Writing $K_a(\theta):=K_a^{\rho(\theta)}$ and $K_a^*:=K_a^\sigma$, a
natural alternative to the objective in the main text is
\begin{equation}
    J_{\mathrm{BKM}}(\theta)
    :=\frac12\sum_a
      \|K_a(\theta)-K_a^*\|_{\sigma,\mathrm{BKM}}^2.
    \label{eqn:bkm-score-matching}
\end{equation}
There is, in fact, an exact integration-by-parts identity. Since
$\Omega_\sigma(K_a^*)=\partial_a\sigma$ and the trace of a commutator
vanishes,
\begin{equation}
    \langle K_a(\theta),K_a^*\rangle_{\sigma,\mathrm{BKM}}
    =\Tr(K_a(\theta)\partial_a\sigma)
    =-\Tr(\sigma\partial_aK_a(\theta)).
\end{equation}
Expanding the square therefore eliminates the target score:
\begin{equation}
    J_{\mathrm{BKM}}(\theta)
    =\frac12\sum_a\Tr(K_a(\theta)\Omega_\sigma(K_a(\theta)))
      +\sum_a\Tr(\sigma\partial_aK_a(\theta))+C_\sigma,
    \label{eqn:bkm-formal-hyvarinen}
\end{equation}
where $C_\sigma$ is independent of $\theta$.

The difficulty of evaluating the BKM score-matching objective comes from the first term in~\cref{eqn:bkm-formal-hyvarinen}.
It contains the correlation
\[
    \Tr(K\Omega_\sigma(K))
    =\int_0^1\Tr(K\sigma^sK\sigma^{1-s})\ud s,
\]
which depends nonlinearly on the unknown state $\sigma$.
For the SLD inner product, the analogous
quantity is simply $\Tr(\sigma K^2)$, an expectation of a known
observable. The BKM expression does not in general admit this
reduction, even though $K_a(\theta)$ itself is easy to construct.

\section{Gradient estimation via importance sampling}
\label{append:gradient-estimation-proof}

We first give an explicit implementation of the estimator in
\cref{thm:informal-gradient-estimation-cost}. For each measurement shot, the
algorithm uses classical randomness to choose a unitary circuit, runs a
Hadamard test on one fresh copy of $\sigma$, and rescales its binary
outcome. The proof below bounds the bias from truncating the evolution
times and the number of shots needed to estimate the mean.

\subsection{The gradient estimation algorithm}

\paragraph{Classical preprocessing and time sampling.}
Fix a coordinate $j$ and abbreviate
$B_a:=\partial_aH_\theta$ and $Q_{a,j}:=\partial_aP_j$.
Expand $A_a,B_a,Q_{a,j}$ in Hermitian Pauli products, with their real
coefficients stored classically. For any such expansion
$O=\sum_r c_rV_r$ with $\ell(O)>0$, let $\mathrm{Pauli}(O)$ denote the
following classical draw: choose $r$ with probability $|c_r|/\ell(O)$
and return $(\chi,V)=(\operatorname{sgn}(c_r),V_r)$. Thus
\[
    O=\ell(O)\mathbb E(\chi V).
\]
Zero coefficients and choices with zero total weight are always omitted.
For a Pauli frame, $Q_{a,j}$ is either zero or twice a signed Pauli
product, so its nonzero Pauli draw is deterministic.

Let $\nu$ be the law of the random time $\xi$ in
\cref{eqn:probablistic_S_a}. For a cutoff $R>0$, define
\begin{align}
    z_0(R)&:=\mathbb E\mathbf1_{\{|\xi|\leq R\}},&
    \nu_{0,R}(\ud u)&:=\frac{\mathbf1_{\{|u|\leq R\}}}{z_0(R)}\nu(\ud u),\nonumber\\
    z_1(R)&:=\mathbb E(|\xi|\mathbf1_{\{|\xi|\leq R\}}),&
    \nu_{1,R}(\ud u)&:=\frac{|u|\mathbf1_{\{|u|\leq R\}}}{z_1(R)}\nu(\ud u).
    \label{eqn:gradient-algorithm-time-laws}
\end{align}
Both distributions can be sampled by rejection. To sample $\nu_{0,R}$,
draw $t\sim q_\beta$ and $u\sim\mathrm{Uniform}[-t,t]$ until $|u|\leq R$.
To sample $\nu_{1,R}$, additionally accept a draw from $\nu_{0,R}$ with
probability $|u|/R$, repeating until acceptance. These rejections are
entirely classical and consume no copies of $\sigma$ or Hamiltonian
evolution. The normalizing factors are
$z_0(R)=1-p_R$ and $z_1(R)=c_U\beta-r_R$, with the convergent expressions
for $p_R,r_R$ derived in
\cref{eqn:gradient-sampling-time-tail,eqn:gradient-sampling-weighted-tail}.

The coefficients that determine the branch probabilities are
\begin{align}
    s_{a,R}&:=\beta\mathfrak b_a z_0(R),&
    t^{(1)}_{a,j,R}&:=\beta\mathfrak c_{a,j}z_0(R),\nonumber\\
    t^{(2)}_{a,j,R}&:=2\beta\mathfrak b_a z_1(R),&
    t_{a,j,R}&:=t^{(1)}_{a,j,R}+t^{(2)}_{a,j,R},\nonumber\\
    L_j(R)&:=\sum_{a\in\cA}(s_{a,R}+2\mathfrak a_a)t_{a,j,R}.
    \label{eqn:gradient-algorithm-masses}
\end{align}
All of these quantities depend only on the candidate Hamiltonian and the
chosen frame. In particular, preprocessing does not require measurements
of the target state.

\paragraph{Sampling the elementary factors.}
Write $S_{a,R}$ and $T^{(d)}_{a,j,R}$ for the expectations in
\cref{eqn:probablistic_S_a,eqn:probablistic_T_a_1,eqn:probablistic_T_a_2_final}
with an indicator $\mathbf1_{\{|\xi|\leq R\}}$ inserted, without
renormalizing them, and set $T_{a,j,R}=T^{(1)}_{a,j,R}+T^{(2)}_{a,j,R}$.
For the parameter-shift branch, define
\begin{equation}
    W_{j,\eta}(u,s;V):=
    \tau^{H_\theta}_{(1-s)u}\!\left(
        \mathcal R_{j,\eta}(\tau^{H_\theta}_{su}(V))\right).
    \label{eqn:gradient-algorithm-shifted-unitary}
\end{equation}
\Cref{tab:gradient-elementary-samplers} specifies a unitary-valued sample
$\mathsf U_F$ for each elementary factor $F$, including its coefficient
sign. Each row satisfies $F=m_F\mathbb E\mathsf U_F$, where $m_F$ is the
weight in that row. For the last row, also draw independent
$s\sim\mathrm{Uniform}[0,1]$ and a fair sign $\eta\in\{-1,1\}$.

\begin{table}[H]
    \centering
    \small
    \begin{tabular}{@{}lllll@{}}
        \toprule
        Factor $F$ & Weight $m_F$ & Pauli draw $(\chi,V)$ & Time law & Sample $\mathsf U_F$ \\
        \midrule
        $A_a$ & $\mathfrak a_a$ & $\mathrm{Pauli}(A_a)$ & --- & $\chi V$ \\
        $S_{a,R}$ & $s_{a,R}$ & $\mathrm{Pauli}(B_a)$ & $\nu_{0,R}$ & $-\chi\tau_u^{H_\theta}(V)$ \\
        $T^{(1)}_{a,j,R}$ & $t^{(1)}_{a,j,R}$ & $\mathrm{Pauli}(Q_{a,j})$ & $\nu_{0,R}$ & $-\chi\tau_u^{H_\theta}(V)$ \\
        $T^{(2)}_{a,j,R}$ & $t^{(2)}_{a,j,R}$ & $\mathrm{Pauli}(B_a)$ & $\nu_{1,R}$ & $-\chi\operatorname{sgn}(u)\eta W_{j,\eta}(u,s;V)$ \\
        \bottomrule
    \end{tabular}
    \caption{Classical sampling rules for the elementary unitary factors.
    The Pauli, time, insertion-time, and coin draws are independent.}
    \label{tab:gradient-elementary-samplers}
\end{table}

To sample a unitary $\mathsf U_T$ representing $T_{a,j,R}$, choose
$d\in\{1,2\}$ with probability $t^{(d)}_{a,j,R}/t_{a,j,R}$ and use
the corresponding row of \cref{tab:gradient-elementary-samplers}.
Then $T_{a,j,R}=t_{a,j,R}\mathbb E\mathsf U_T$.
The factor $|u|$ in the time law $\nu_{1,R}$ is essential: it absorbs
the magnitude of the Duhamel coefficient into $t^{(2)}_{a,j,R}$,
leaving only $\operatorname{sgn}(u)\eta$ in the sampled unitary.

\paragraph{Combining factors and measuring one shot.}
For a fixed frame index $a$, expand its contribution to the truncated
gradient as
\[
    \frac12S_{a,R}T_{a,j,R}+\frac12T_{a,j,R}S_{a,R}
      -\I A_aT_{a,j,R}+\I T_{a,j,R}A_a.
\]
Its four product branches are listed in
\cref{tab:gradient-product-samplers}. First draw the frame index with
probability
\begin{equation}
    \PP(a)=\frac{(s_{a,R}+2\mathfrak a_a)t_{a,j,R}}{L_j(R)},
    \label{eqn:gradient-algorithm-frame-law}
\end{equation}
then draw a row with probability equal to its weight divided by
$(s_{a,R}+2\mathfrak a_a)t_{a,j,R}$. Draw the two factors in that row
independently, even when both use the same Pauli decomposition or time
law. In the table, $\mathsf U_S,\mathsf U_A,\mathsf U_T$ denote fresh
samples for $S_{a,R},A_a,T_{a,j,R}$, respectively.

\begin{table}[H]
    \centering
    \begin{tabular}{@{}lll@{}}
        \toprule
        Product branch & Weight & Sampled unitary $\mathsf U$ \\
        \midrule
        $S_{a,R}T_{a,j,R}/2$ & $s_{a,R}t_{a,j,R}/2$ & $\mathsf U_S\mathsf U_T$ \\
        $T_{a,j,R}S_{a,R}/2$ & $s_{a,R}t_{a,j,R}/2$ & $\mathsf U_T\mathsf U_S$ \\
        $-\I A_aT_{a,j,R}$ & $\mathfrak a_a t_{a,j,R}$ & $-\I\mathsf U_A\mathsf U_T$ \\
        $\I T_{a,j,R}A_a$ & $\mathfrak a_a t_{a,j,R}$ & $\I\mathsf U_T\mathsf U_A$ \\
        \bottomrule
    \end{tabular}
    \caption{The four algebraic branches for one frame index. Their
    weights sum to $(s_{a,R}+2\mathfrak a_a)t_{a,j,R}$.}
    \label{tab:gradient-product-samplers}
\end{table}

These choices give $\mathbb E\mathsf U=\mathcal G_{j,R}/L_j(R)$.
Prepare an ancilla in $|+\rangle$ and the system in $\sigma$, apply
controlled-$\mathsf U$, and measure Pauli $X$ on the ancilla. The outcome
$Y\in\{-1,1\}$ has conditional mean
$\mathbb E(Y\mid\mathsf U)=\Re\Tr(\sigma\mathsf U)$.
Consequently, $Z=L_j(R)Y$ estimates the truncated gradient.
The known phases $\pm1,\pm\I$ in the tables are implemented as phase
gates on the control ancilla. Every other factor is a conjugated Pauli
$CVC^\dagger$, so its controlled circuit is implemented by
$(I\otimes C)\operatorname{ctrl}(V)(I\otimes C^\dagger)$.
In particular, the Hamiltonian evolutions within $C$ are ordinary forward
and backward evolutions; only the Pauli insertion is controlled.

\begin{algorithm}[H]
    \caption{Estimating one gradient coordinate by parameter-shift sampling}
    \label{alg:randomized-gradient-coordinate}
    \KwIn{Candidate $H_\theta$, coordinate $j$, inverse temperature $\beta$,
    frame and Pauli decompositions, copies of $\sigma$, error $\varepsilon>0$,
    and failure probability $0<\delta<1$.}
    Compute $L_j$ from \cref{eqn:gradient-sampling-coordinate-mass}\;
    \If{$\varepsilon\geq L_j$}{\textbf{return} $0$\;}
    Set $R=(2\beta/\pi)\log(2+12L_j/\varepsilon)$\;
    Compute $z_0(R),z_1(R)$ and the weights in
    \cref{eqn:gradient-algorithm-masses}\;
    \If{$L_j(R)=0$}{\textbf{return} $0$\;}
    Set $N=\left\lceil8L_j^2(R)\varepsilon^{-2}\log(2/\delta)\right\rceil$\;
    \For{$k=1,\ldots,N$}{
        Draw $a$ according to \cref{eqn:gradient-algorithm-frame-law}\;
        Draw a product branch from \cref{tab:gradient-product-samplers}
        proportionally to its weight\;
        Draw the response type $d$ with probability
        $t^{(d)}_{a,j,R}/t_{a,j,R}$ and sample its elementary unitary
        using \cref{tab:gradient-elementary-samplers}\;
        Independently sample the companion score or frame factor from
        \cref{tab:gradient-elementary-samplers}, and form $\mathsf U_k$
        in the selected product order, including its phase\;
        Run the ancilla-$X$ Hadamard test for $\mathsf U_k$ on one fresh
        copy of $\sigma$, obtaining $Y_k\in\{-1,1\}$\;
        Set $Z_k=L_j(R)Y_k$\;
    }
    \KwOut{$\widehat g_j=N^{-1}\sum_{k=1}^N Z_k$.}
\end{algorithm}

The choices of $R$ and $N$ assign at most $\varepsilon/2$ to truncation
bias and $\varepsilon/2$ to statistical error, respectively. The
algorithm estimates one coordinate using one copy per shot. To estimate
all $m$ coordinates with simultaneous additive error $\varepsilon$,
run it for each $j$ with failure budget $\delta/m$; a union bound then
gives $\|\widehat g-\nabla_\theta J_Q(\theta)\|_\infty\leq\varepsilon$
with probability at least $1-\delta$, and the resources add over the
coordinates.

\subsection{Correctness and resource bounds}

To state the resource bounds, fix Pauli decompositions $O=\sum_r c_r P_r$ and let $\ell(O):=\sum_r|c_r|$.
For operators $A_a$, $\partial_aH_\theta$, and $\partial_aP_j$, we write 
\begin{equation}
    \mathfrak{a}_a:=\ell(A_a),\qquad \mathfrak{b}_a:=\ell(\partial_aH_\theta),\qquad
    \mathfrak{c}_{a,j}:=\ell(\partial_aP_j).
    \label{eqn:gradient-sampling-local-masses}
\end{equation}
Define
\begin{equation}
    L_j:=\sum_{a\in\cA}(\beta \mathfrak{b}_a+2\mathfrak{a}_a)(\beta \mathfrak{c}_{a,j}+2c_U\beta^2 \mathfrak{b}_a),
    \qquad c_U:=\frac{7\zeta(3)}{\pi^3}.
    \label{eqn:gradient-sampling-coordinate-mass}
\end{equation}

\begin{theorem}[Gradient estimation]
\label{thm:parameter-shift-gradient-cost}
    Fix $\theta$, $\beta>0$, a coordinate $j$, $0<\delta<1$, and $\varepsilon>0$.
    Assume exact classical sampling of the distributions below, access to $e^{\I tH_\theta}$ for the requested positive and negative times $t$, and implementations of controlled Pauli insertions and the rotations $e^{\pm\I\pi P_j/4}$.
    With success probability at least $1-\delta$, we can estimate $\partial_{\theta_j}J_Q(\theta)$ to additive error at most $\varepsilon$ using
    \begin{enumerate}
        \item total number of copies of the thermal state $\sigma$:
        $N_{\rm copy}=\mathcal{O}\!\left(\dfrac{L_j^2}{\varepsilon^2}\log\dfrac{2}{\delta}\right)$;
        \item maximal total evolution time of $H_\theta$ in one measurement circuit:
        $T_{\max}=\mathcal{O}\!\left(\beta\log\left(2+\dfrac{L_j}{\varepsilon}\right)\right)$;
        \item total evolution time of $H_\theta$ over all measurement circuits:
        \[
            T_{\rm tot}=\mathcal{O}\!\left(
                \frac{\beta L_j^2}{\varepsilon^2}
                \log\frac{2}{\delta}\,
                \log\left(2+\frac{L_j}{\varepsilon}\right)
            \right).
        \]
    \end{enumerate}
\end{theorem}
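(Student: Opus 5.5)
The plan is to analyze \cref{alg:randomized-gradient-coordinate} directly. The error splits into a deterministic truncation bias $|\partial_{\theta_j}J_Q(\theta)-\Tr(\sigma\mathcal G_{j,R})|$ and a statistical error $|\widehat g_j-\Tr(\sigma\mathcal G_{j,R})|$. We allot $\varepsilon/2$ to each. The trivial branch $\varepsilon\ge L_j$ is handled first. We will show $|\Tr(\sigma\mathcal G_j)|\le L_j$ by bounding every factor in operator norm: $\|S_a\|\le\beta\mathfrak b_a$ and $\|T^{(1)}_{a,j}\|\le\beta\mathfrak c_{a,j}$ follow from \cref{eqn:probablistic_S_a,eqn:probablistic_T_a_1} because conjugation is an isometry. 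For the Duhamel term \cref{eqn:probablistic_T_a_2_final}, $\|T^{(2)}_{a,j}\|\le2\beta\mathfrak b_a\mathbb E|\xi|$. We then compute $\mathbb E|\xi|=\mathbb E_t(t/2)=c_U\beta$ from $q_\beta$ via $\int_0^\infty t^2/\sinh(\pi t/\beta)\,\ud t=7\zeta(3)\beta^3/(2\pi^3)$. Summing the anticommutator and commutator terms in \cref{eqn:exact-gradient-observable-final} then gives exactly the mass $L_j$ in \cref{eqn:gradient-sampling-coordinate-mass}, so returning $0$ is valid.

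\textbf{Unbiasedness.} Next we check that the sampler is unbiased for the truncated observable. Each elementary row of \cref{tab:gradient-elementary-samplers} satisfies $F=m_F\mathbb E\mathsf U_F$; this uses the Pauli draw identity $O=\ell(O)\mathbb E(\chi V)$, the truncated time laws \cref{eqn:gradient-algorithm-time-laws}, and, for $T^{(2)}$, the fact that the $|u|$ reweighting leaves $\operatorname{sgn}(u)$ together with the parameter-shift identity \cref{eqn:gradient-sampling-parameter-shift}. Independence of the two factors in a product branch gives $\mathbb E(\mathsf U_1\mathsf U_2)=\mathbb E\mathsf U_1\,\mathbb E\mathsf U_2$. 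With the branch and frame probabilities, this yields $\mathbb E\mathsf U=\mathcal G_{j,R}/L_j(R)$. The Hadamard test gives $\mathbb E Y=\Re\Tr(\sigma\mathbb E\mathsf U)=\Tr(\sigma\mathcal G_{j,R})/L_j(R)$, since $\mathcal G_{j,R}$ is Hermitian.

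\textbf{Statistical error.} Each $Z_k$ lies in $[-L_j(R),L_j(R)]$, and $L_j(R)\le L_j$ because $z_0\le1$ and $z_1\le c_U\beta$. Hoeffding's inequality with $N=\lceil8L_j^2(R)\varepsilon^{-2}\log(2/\delta)\rceil$ then gives deviation at most $\varepsilon/2$ with probability $1-\delta$. This is the copy bound, with one copy per shot.

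\textbf{Truncation bias (main obstacle).} $\mathcal G_j-\mathcal G_{j,R}$ is bilinear in the truncation errors $S_a-S_{a,R}$ and $T_{a,j}-T_{a,j,R}$. These are bounded by $\beta\mathfrak b_a p_R$, $\beta\mathfrak c_{a,j}p_R$, and $2\beta\mathfrak b_a r_R$, where $p_R=\PP(|\xi|>R)$ and $r_R=\mathbb E(|\xi|\mathbf 1_{|\xi|>R})$. The main technical work is the tail estimate. Since $\xi$ given $t$ is uniform on $[-t,t]$, we have $p_R\le\PP(t>R)$ and $r_R\le\mathbb E(t\mathbf 1_{t>R})$. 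Using $\sinh x\ge e^x/4$ for large $x$ gives $p_R,\ r_R/\beta=\mathcal O((1+R/\beta)^2e^{-\pi R/\beta})$. One should check that the prefactor is absorbed by the choice $R=(2\beta/\pi)\log(2+12L_j/\varepsilon)$: then $e^{-\pi R/\beta}=(2+12L_j/\varepsilon)^{-2}$, and the squared factor dominates the polynomial factor. Expanding the product difference as $(S-S_R)T+S_R(T-T_R)$ and the commutator term, the total bias is at most $L_j\cdot\mathcal O(p_R+r_R/(c_U\beta))\le\varepsilon/2$, possibly after enlarging constants.

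\textbf{Resources.} Each circuit contains at most two elementary factors, each with evolution time at most $|u|\le R$ ($W$ uses $s|u|+(1-s)|u|$ forward and backward). Hence $T_{\max}=\mathcal O(R)=\mathcal O(\beta\log(2+L_j/\varepsilon))$, and $T_{\rm tot}\le NT_{\max}$, which gives the stated bound.
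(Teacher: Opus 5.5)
Your proposal is correct and follows the paper's proof essentially step by step: the mass bound $\|\mathcal G_j\|\le L_j$ using $\mathbb E|\xi|=c_U\beta$, truncation at $R=(2\beta/\pi)\log(2+12L_j/\varepsilon)$, unbiased importance sampling of unitary branches via Hadamard tests, Hoeffding with $N=\lceil 8L_j^2(R)\varepsilon^{-2}\log(2/\delta)\rceil$, and $T_{\max}=\mathcal{O}(R)$ with $T_{\rm tot}\le NT_{\max}$. The only difference is the tail estimate. You bound it crudely via $|\xi|\le t$ and $\sinh x\ge e^x/4$, which leaves a polynomial prefactor and may force a constant larger than $12$ inside the logarithm; this is harmless for the $\mathcal{O}$-bounds. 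The paper instead expands $1/\sinh$ as a series to get $p_R\le e^{-\pi R/\beta}$ and $r_R\le(R+\beta/\pi)e^{-\pi R/\beta}$, so the bias is at most $6L_je^{-\pi R/(2\beta)}\le\varepsilon/2$ exactly for the stated $R$.
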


\begin{remark}
    When $\mathfrak{a}_a,\mathfrak{b}_a,\mathfrak{c}_{a,j}$ are all absolute constants independent of $|\cA|$ and $\beta$, one has $L_j=\mathcal{O}(|\cA|\beta^3)$.
    In this case, both the number of samples of $\sigma$ and the total evolution time of $H_\theta$ scale polynomially in parameters $|\cA|$, $\beta$, and $1/\varepsilon$.
\end{remark}

\begin{proof}[Proof of~\cref{thm:parameter-shift-gradient-cost}]
    We first control the random times. For $R\geq0$, conditioning on $t$ and expanding $1/\sinh x=2\sum_{k\geq0}e^{-(2k+1)x}$ gives
    \begin{align}
        p_R:=\PP(|\xi|>R)
        &=\int_R^\infty q_\beta(t)\left(1-\frac Rt\right)\ud t
          =\frac8{\pi^2}\sum_{k=0}^\infty
             \frac{e^{-(2k+1)\pi R/\beta}}{(2k+1)^2}
          \leq e^{-\pi R/\beta},\label{eqn:gradient-sampling-time-tail}\\
        \mu_1:=\mathbb E|\xi|
        &=\int_0^\infty p_R\ud R
          =\frac{7\zeta(3)}{\pi^3}\beta=c_U\beta,\label{eqn:gradient-sampling-time-moment}\\
        r_R:=\mathbb E(|\xi|\mathbf1_{\{|\xi|>R\}})
        &=Rp_R+\int_R^\infty p_v\ud v
          \leq\left(R+\frac\beta\pi\right)e^{-\pi R/\beta}.
          \label{eqn:gradient-sampling-weighted-tail}
    \end{align}
    All exchanges of integrals and sums above follow from nonnegativity; the identity at $R=0$ also verifies normalization. Unitary conjugation preserves operator norm, so the score expectation is absolutely convergent. The Duhamel integrand for its derivative is bounded in norm by $2|\xi|\|\partial_aH_\theta\|$, since $\|P_j\|=1$. The finite first moment therefore justifies differentiation under the expectation, locally uniformly in $\theta$, and all subsequent operator integrals and products are absolutely convergent.

    Next, expand the three expectations in
    \cref{eqn:probablistic_S_a,eqn:probablistic_T_a_1,eqn:probablistic_T_a_2_final} using the chosen Pauli decompositions. Every resulting branch is unitary: the response branches are conjugations
    \begin{equation}
        W_{j,\eta}(u,s;V)
        :=C_{j,\eta}(u,s)V C_{j,\eta}^\dagger(u,s),
        \qquad
        C_{j,\eta}(u,s):=e^{\I(1-s)uH_\theta}
            e^{\eta\I\pi P_j/4}e^{\I suH_\theta}.
        \label{eqn:gradient-sampling-shifted-branch}
    \end{equation}
    The total absolute coefficient masses of these displayed decompositions are
    \begin{equation}
        s_a:=\beta \mathfrak b_a,\qquad
        t_{a,j}:=\beta \mathfrak c_{a,j}+2\beta \mathfrak b_a\mu_1.
        \label{eqn:gradient-sampling-score-response-masses}
    \end{equation}
    In particular, the parameter-shift term contributes $2\beta \mathfrak b_a\mu_1$: the fair-coin and $s$ distributions each have unit mass, while the factor $\xi\eta$ contributes $\mathbb E|\xi|$. Unitary conjugation changes neither the coefficients nor their absolute mass.
    In the anticommutator in \cref{eqn:exact-gradient-observable-final}, use independent random times for the $S_a$ and $T_{a,j}$ factors. The two product orders, each with coefficient $1/2$, have total mass $s_at_{a,j}$. Similarly, the two orders in $-\I[A_a,T_{a,j}]$ have total mass $2\mathfrak a_at_{a,j}$. Thus the uncombined gradient branches have total mass $L_j$, and $\|\mathcal G_j\|\leq L_j$. If $\varepsilon\geq L_j$, returning zero proves the claim, so assume henceforth $0<\varepsilon<L_j$.

    To bound every circuit's evolution time, truncate each occurrence of $\xi$ to $|\xi|\leq R$. The resulting operators $S_{a,R}$ and $T_{a,j,R}$ retain the indicators inside the expectations, without renormalizing them. Their branch masses are
    \begin{equation}
        s_{a,R}=\beta \mathfrak b_a(1-p_R),\qquad
        t_{a,j,R}=\beta \mathfrak c_{a,j}(1-p_R)+2\beta \mathfrak b_a(\mu_1-r_R).
        \label{eqn:gradient-sampling-truncated-factor-masses}
    \end{equation}
    Their omitted masses, which also bound their operator-norm errors, are
    \begin{equation}
        d_{S,a}=s_ap_R,\qquad
        d_{T,a,j}=\beta \mathfrak c_{a,j}p_R+2\beta \mathfrak b_ar_R.
        \label{eqn:gradient-sampling-omitted-masses}
    \end{equation}
    Let $\mathcal G_{j,R}$ be obtained by replacing $S_a,T_{a,j}$ by these truncated operators in \cref{eqn:exact-gradient-observable-final}. Using
    $ST-S_RT_R=(S-S_R)T+S_R(T-T_R)$ for each product gives
    \begin{align}
        \|\mathcal G_j-\mathcal G_{j,R}\|
        &\leq\sum_a\left(d_{S,a}t_{a,j}
                     +(s_{a,R}+2\mathfrak a_a)d_{T,a,j}\right)\nonumber\\
        &\leq L_j\left(p_R+\max\left(p_R,\frac{r_R}{\mu_1}\right)\right)
         \leq 6L_j e^{-\pi R/(2\beta)}.
        \label{eqn:gradient-sampling-truncation-bias}
    \end{align}
    For the last step, put $x=\pi R/\beta$ and use
    $1/(\pi c_U)<2$ to bound the bracket by $3(1+x)e^{-x}\leq6e^{-x/2}$.
    Consequently, the explicit cutoff
    \begin{equation}
        R:=\frac{2\beta}{\pi}\log\left(2+\frac{12L_j}{\varepsilon}\right)
        =\mathcal{O}\!\left(\beta\log\left(2+\frac{L_j}{\varepsilon}\right)\right)
        \label{eqn:gradient-sampling-cutoff}
    \end{equation}
    ensures $|\Tr(\sigma(\mathcal G_j-\mathcal G_{j,R}))|\leq\varepsilon/2$.

    We now importance-sample the retained unitary branches. Write
    \begin{equation}
        \mathcal G_{j,R}=\int c_j(z)V_j(z)\ud\nu_j(z),\qquad
        L_j(R):=\int|c_j(z)|\ud\nu_j(z)
             =\sum_a(s_{a,R}+2\mathfrak a_a)t_{a,j,R}\leq L_j,
        \label{eqn:gradient-sampling-truncated-branches}
    \end{equation}
    where $\nu_j$ is a reference measure on the branch labels $z$, which record the frame index, Pauli terms, product order, response type, random times, insertion time $s$, and coin $\eta$. Every displayed branch is retained separately, so the mass identity holds even when different branches cancel as operators. The finite choices can be sampled proportionally to their absolute coefficients. For a score or explicit derivative factor, the time law is that of $\xi$ conditioned on $|\xi|\leq R$. For a parameter-shift response factor, it is instead the law proportional to $|\xi|\mathbf1_{\{|\xi|\leq R\}}$, with normalizing factor $\mu_1-r_R$; $s$ and $\eta$ retain their uniform laws. These normalizing factors are included in \cref{eqn:gradient-sampling-truncated-factor-masses}, and the sign of $\xi\eta$ is included in the branch coefficient. The times belonging to distinct factors are sampled independently conditional on the finite branch choices. This is precisely the sampling procedure specified in \cref{tab:gradient-elementary-samplers,tab:gradient-product-samplers} and used in \cref{alg:randomized-gradient-coordinate}.

    If $L_j(R)=0$, return zero, with error at most $\varepsilon/2$ by the truncation bound. Otherwise sample $z$ from $|c_j(z)|\ud\nu_j(z)/L_j(R)$ and perform a real-part Hadamard test for $e^{\I\arg c_j(z)}V_j(z)$ on one fresh copy of $\sigma$. Its outcome $Y\in\{-1,1\}$ has conditional mean
    $\Re\Tr(\sigma e^{\I\arg c_j(z)}V_j(z))$. Thus $Z:=L_j(R)Y$ satisfies
    \begin{equation}
        \mathbb E Z=\Tr(\sigma\mathcal G_{j,R})=:g_{j,R},
        \qquad |Z|\leq L_j(R).
        \label{eqn:gradient-sampling-single-shot}
    \end{equation}
    Here $\mathcal G_{j,R}$ is Hermitian, so its expectation is real. Take
    \begin{equation}
        N:=\left\lceil\frac{8L_j^2(R)}{\varepsilon^2}
                         \log\frac2\delta\right\rceil
        \label{eqn:gradient-sampling-shot-count}
    \end{equation}
    independent shots and return their sample mean $\widehat g_j$. Hoeffding's inequality gives
    \begin{equation}
        \PP\left(|\widehat g_j-g_{j,R}|>\frac\varepsilon2\right)
        \leq2\exp\left(-\frac{N\varepsilon^2}{8L_j^2(R)}\right)
        \leq\delta.
    \end{equation}
    Combining this with the deterministic bias proves the claimed error guarantee. Each shot consumes one copy of $\sigma$, and the ceiling in $N$ is absorbed by the stated bound because $\varepsilon<L_j$.

    Finally, the Hadamard tests require only the time evolutions and controlled Pauli insertions stated in the theorem. For any conjugated insertion $CVC^\dagger$, its controlled implementation is
    \begin{equation}
        \operatorname{ctrl}(CVC^\dagger)
        =(I\otimes C)\operatorname{ctrl}(V)(I\otimes C^\dagger).
    \end{equation}
    This applies both to $\tau_u^{H_\theta}(V)$ and to the parameter-shift branch in \cref{eqn:gradient-sampling-shifted-branch}. Each costs $2|u|$ in total absolute Hamiltonian-evolution time; splitting $u$ into $su$ and $(1-s)u$ does not increase this sum. A product branch contains at most one score factor and one response factor. It therefore costs at most $2(|u_1|+|u_2|)\leq4R$, with at most six Hamiltonian-evolution segments, two controlled Pauli insertions, and two parameter-shift rotations. Each individual evolution segment has duration at most $R$. Hence $T_{\max}\leq4R$ and $T_{\rm tot}\leq4RN$, which yield the two evolution-time bounds. The final scaling of $L_j$ follows by expanding \cref{eqn:gradient-sampling-coordinate-mass} under the stated uniform bounds.
\end{proof}

\section{High-temperature Hamiltonian learning}

Throughout this appendix, use the model and single-qubit frame of
\cref{sec:high-temperature-learning}: $H(\theta)=\sum_{j=1}^m\theta_jP_j$,
$\Theta=[-1,1]^m$, and $\{A_a:a\in\cA\}=\{X_x,Z_x:x\in[n]\}$.
The $P_j$ are distinct nonidentity Pauli strings of weight at most $k$,
and each support $X_j:=\operatorname{supp}(P_j)$ intersects at most $D$
other term supports.  Write $Y_a:=\operatorname{supp}(A_a)$.
Unless explicitly stated otherwise, we use the SLD notation $S_a$,
$T_{a,j}=\partial_{\theta_j}S_a$, and
$U_{a,ij}=\partial_{\theta_i}\partial_{\theta_j}S_a$.
The target is $\sigma=\rho(\theta^*)$, and
$\Delta S_a(\theta):=S_a(\theta)-S_a(\theta^*)$.

\begin{definition}[Interaction distance]\label{defn:interaction-dist}
    The fixed family $\{P_j\}_{j=1}^m$ defines the undirected
    \textit{site interaction graph} $\mathfrak G=(\mathcal V,\mathcal E)$,
    with $\mathcal V=[n]$: distinct sites are adjacent if they occur in
    a common support $X_j$.  All model terms are included, even when
    their coefficients vanish at the current parameter $\theta$.
    Let $d(x,y)$ be its graph distance, with $d(x,y)=\infty$ for sites
    in different connected components.  For nonempty subsets $X,Y$,
    and integers $r\geq0$, set
    \begin{align}
        d(X,Y)&:=\min_{x\in X,y\in Y}d(x,y),
        &
        B_r(X)&:=\{y\in\mathcal V:d(y,X)\leq r\}.
    \end{align}
    We abbreviate $d(x,X):=d(\{x\},X)$ and $B_r(x):=B_r(\{x\})$.
\end{definition}

\subsection{Lieb--Robinson bounds}

\begin{lemma}[Lieb-Robinson bound for low-intersection Hamiltonians]
    \label{lem:low-intersection-lieb-robinson}
    Let $H=\sum_jH_j$ be an $n$-qubit Hamiltonian, where $H_j$ is
    supported on a prescribed set $\Lambda_j$ of at most $k$ qubits
    and $\|H_j\|\leq1$.  Assume that each $\Lambda_j$ intersects at
    most $D$ other interaction supports.  Let $d(X,Y)$ be the distance
    in the site graph whose edges join sites in a common $\Lambda_j$.
    Define $\tau_t^H(O):=e^{\I tH}Oe^{-\I tH}$ and
    $v_\mu:=2(D+1)e^\mu$. For every $\mu>0$, $s,t\in\RR$, and
    operators $O_X,B_Y$ supported on nonempty sets $X,Y$, respectively,
    \begin{align}
        \|[\tau_t^H(O_X),\tau_s^H(B_Y)]\|_\infty
        \leq
        2|X|\|O_X\|_\infty\|B_Y\|_\infty
        e^{v_\mu|t-s|-\mu d(X,Y)}.
        \label{eqn:low-intersection-LR}
    \end{align}
    For $H=H(\theta)$, the bound is uniform in $n,m$ and
    $\theta\in\Theta$. If the support sizes are bounded by $s_0$,
    the prefactor $2|X|$ may be replaced by $C_{\mathrm{LR}}=2s_0$.
\end{lemma}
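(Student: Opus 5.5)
We will prove \cref{lem:low-intersection-lieb-robinson} by the standard commutator-expansion (iterated Duhamel) argument, organized as a sum over paths in the interaction graph whose edges join supports $\Lambda_j,\Lambda_l$ that intersect. First, time-translation invariance reduces to $s=0$: since $\tau^H_s$ is a unitary conjugation, $\|[\tau_t^H(O_X),\tau_s^H(B_Y)]\|=\|[\tau_{t-s}^H(O_X),B_Y]\|$. We write $t$ for $t-s$ and assume $t\ge 0$ (the case $t<0$ is symmetric). If $d(X,Y)=\infty$, or more generally if no chain of supports links $X$ to $Y$, then $\tau_t^H(O_X)$ is supported away from $Y$ and the commutator vanishes. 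So we may assume $d(X,Y)<\infty$.

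For the core estimate we define
\[
C_B(Z,t):=\sup_{\|O_Z\|\le1}\|[\tau_t^H(O_Z),B_Y]\|.
\]
Differentiating $f(t)=[\tau_t^H(O_Z),B_Y]$ and splitting off the terms $H_j$ with $\Lambda_j\cap Z\neq\emptyset$ gives the usual norm-preserving differential inequality
\[
C_B(Z,t)\le C_B(Z,0)+2\|O_Z\|\sum_{j:\Lambda_j\cap Z\ne\emptyset}\int_0^t C_B(\Lambda_j,u)\,du .
\]
We iterate this starting from $Z=X$. The $\ell$-th term sums over sequences $j_1,\dots,j_\ell$ with $\Lambda_{j_1}\cap X\ne\emptyset$ and $\Lambda_{j_{i+1}}\cap\Lambda_{j_i}\neq\emptyset$. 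It is nonzero only if $\Lambda_{j_\ell}\cap Y\ne\emptyset$, because $C_B(Z,0)\le 2\|B_Y\|\mathbf 1_{Z\cap Y\ne\emptyset}$. Its contribution is bounded by $2\|B_Y\|(2t)^\ell/\ell!$ times the number of such sequences.

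The counting step uses the two structural hypotheses. The number of choices of $j_1$ is at most the number of terms touching $X$, which is at most $|X|(D+1)$. Each subsequent choice has at most $D+1$ options, namely the term itself or one of its at most $D$ intersecting neighbors. A nonvanishing path must have length $\ell\ge d(X,Y)$, since each support has site-graph diameter at most $1$. Hence
\[
\|[\tau_t(O_X),B_Y]\|\le 2\|O_X\|\|B_Y\|\,|X|\sum_{\ell\ge d}\frac{(2(D+1)|t|)^\ell}{\ell!},
\]
where the extra factor $D+1$ from the first step is absorbed by reindexing. We then apply the Chernoff-type trick: for $\ell\ge d$, $1\le e^{\mu(\ell-d)}$, so the tail is at most $e^{-\mu d}\sum_\ell (2(D+1)e^\mu|t|)^\ell/\ell!=e^{v_\mu|t|-\mu d}$. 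This gives \eqref{eqn:low-intersection-LR}.

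The main obstacle is keeping the prefactor exactly $2|X|$ with velocity $v_\mu=2(D+1)e^\mu$. The first-step count naively yields $|X|(D+1)$ rather than $|X|$. To fix this we will bound the number of terms touching $X$ by $|X|(D+1)$ and spend one power of $(D+1)$ from the exponential series, using $(D+1)x^{\ell}/\ell!\le$ the corresponding term with $x\mapsto(D+1)x$ for $\ell\ge1$. The $\ell=0$ term is handled directly, since it is nonzero only when $X\cap Y\neq\emptyset$, i.e.\ $d=0$. Finally, for $H=H(\theta)$ with $|\theta_j|\le1$ and Pauli $P_j$ we have $\|H_j\|\le1$ and supports $\Lambda_j=X_j$. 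The bound therefore depends only on $(k,D,\mu)$, which gives uniformity in $n,m,\theta$. If $|X|\le s_0$, the prefactor is at most $2s_0=C_{\rm LR}$.
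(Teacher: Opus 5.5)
Your proposal is correct and follows essentially the same route as the paper. Both use the commutator profile $C_{B_Y}(S,t)$, the Duhamel/Jacobi recursion, and the interaction-chain expansion with $a_p(X,Y)\le|X|(D+1)^p$ vanishing for $p<d(X,Y)$, followed by the tilt $1\le e^{\mu(p-d)}$ and unitary invariance to reduce to $s=0$. The ``obstacle'' you flag is not one: $|X|(D+1)$ first choices times $(D+1)^{\ell-1}$ subsequent choices is already $|X|(D+1)^\ell$, exactly as the paper counts. Also, the factor in front of the integral in your recursion should be $\|H_j\|\le1$ rather than $\|O_Z\|$, which is harmless.
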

\begin{proof}
    We give the interaction-chain proof explicitly, specializing the standard Lieb--Robinson argument to low-intersection Hamiltonians~\cite{nachtergaele2006propagation,nachtergaele2019quasi}.
    The interaction graph on qubits joins two distinct sites whenever
    they belong to a common $\Lambda_j$; throughout the proof,
    $d(X,Y)$ is the induced graph distance.  Fix $B_Y$, and for a set
    of sites $S$ define
    \begin{align}
        C_{B_Y}(S,t)
        :=
        \sup_{\substack{O\neq0\\ \operatorname{supp}(O)\subseteq S}}
        \frac{\|[\tau_t^H(O),B_Y]\|}
        {\|O\|}.
        \label{eqn:LR-commutator-profile}
    \end{align}
    Also put
    \begin{align}
        H_S:=\sum_{j:\Lambda_j\cap S\neq\varnothing}H_j.
    \end{align}
    If $O$ is supported on $S$, then
    $[H-H_S,O]=0$.  Therefore, for
    $f(t):=[\tau_t^H(O),B_Y]$, differentiation and the Jacobi identity
    give an equation of the form
    \begin{align}
        f'(t)
        =
        \I[\tau_t^H(H_S),f(t)]
        +\I[[\tau_t^H(H_S),B_Y],\tau_t^H(O)].
        \label{eqn:LR-differential-equation}
    \end{align}
    The first term generates norm-preserving unitary conjugations.
    Variation of constants, the triangle inequality, and
    $\|[P,Q]\|\leq2\|P\|\|Q\|$ therefore imply
    \begin{align}
        C_{B_Y}(S,t)
        \leq
        C_{B_Y}(S,0)
        +2\sum_{j:\Lambda_j\cap S\neq\varnothing}
        \|H_j\|
        \int_0^{|t|}C_{B_Y}(\Lambda_j,s)\,\ud s.
        \label{eqn:LR-recursion}
    \end{align}
    At time zero, operators with disjoint supports commute, so
    \begin{align}
        C_{B_Y}(S,0)
        \leq
        2\|B_Y\|
        \mathbf{1}_{\{S\cap Y\neq\varnothing\}}.
    \end{align}

    Iterating~\cref{eqn:LR-recursion} yields the absolutely convergent
    interaction-chain expansion
    \begin{align}
        C_{B_Y}(X,t)
        \leq
        2\|B_Y\|
        \sum_{p=0}^\infty
        \frac{(2|t|)^p}{p!}\,a_p(X,Y),
        \label{eqn:LR-iteration}
    \end{align}
    where
    $a_0(X,Y):=\mathbf{1}_{\{X\cap Y\neq\varnothing\}}$, and for
    $p\geq1$,
    \begin{align}
        a_p(X,Y)
        :=
        \sum_{\substack{
        j_1,\ldots,j_p:\;
        \Lambda_{j_1}\cap X\neq\varnothing,\;
        \Lambda_{j_\ell}\cap\Lambda_{j_{\ell+1}}\neq\varnothing
        \ (1\leq\ell<p),\\
        \Lambda_{j_p}\cap Y\neq\varnothing}}
        \prod_{\ell=1}^p\|H_{j_\ell}\|.
        \label{eqn:LR-chain-coefficient}
    \end{align}
    Thus only chains of pairwise-overlapping interaction supports that
    connect $X$ to $Y$ contribute.

    Write $r:=d(X,Y)$.  Any contributing chain of length $p$ produces
    a path of at most $p$ edges in the interaction graph from $X$ to
    $Y$; hence $a_p(X,Y)=0$ for $p<r$.  The first interaction support
    in a chain has at most $|X|(D+1)$ choices.  Indeed, all local terms
    containing a fixed site overlap one another, so there can be at
    most $D+1$ of them.  Once $j_\ell$ is fixed, the next index has at
    most $D+1$ choices: $j_{\ell+1}=j_\ell$ or one of the at most $D$
    other terms overlapping $H_{j_\ell}$.  Using also $\|H_j\|\leq1$,
    we obtain, for $p\geq r$,
    \begin{align}
        a_p(X,Y)
        \leq
        |X|(D+1)^p.
        \label{eqn:LR-chain-count}
    \end{align}
    The same final bound holds for $p=0$ when $r=0$.  Consequently,
    \begin{align}
        C_{B_Y}(X,t)
        &\leq
        2|X|\|B_Y\|
        \sum_{p\geq r}
        \frac{(2(D+1)|t|)^p}{p!}
        \nonumber \leq
        2|X|\|B_Y\|
        e^{-\mu r}
        \sum_{p=0}^\infty
        \frac{(2(D+1)e^\mu|t|)^p}{p!}
        \nonumber\\
        &=
        2|X|\|B_Y\|
        e^{2(D+1)e^\mu|t|-\mu r}.
        \label{eqn:LR-exponential-tilt}
    \end{align}
    Multiplying by $\|O_X\|$ and recalling $r=d(X,Y)$ proves
    \cref{eqn:low-intersection-LR} for $s=0$. If $d(X,Y)=\infty$, there is no
    interaction chain connecting the two supports, and every
    coefficient in~\cref{eqn:LR-iteration} vanishes.
    Finally, unitary invariance gives
    \begin{align*}
        \|[\tau_t^H(O_X),\tau_s^H(B_Y)]\|_\infty
        =\|[\tau_{t-s}^H(O_X),B_Y]\|_\infty,
    \end{align*}
    which proves \cref{eqn:low-intersection-LR} for arbitrary $s,t$.
\end{proof}

\begin{lemma}[Three-observable tree bound]
\label{lem:appendix-triple-LR}
    Fix $\mu_0>0$, put $\nu=\mu_0/2$, and let
    $v_{\mu_0}=2(D+1)e^{\mu_0}$ be the velocity in
    \cref{lem:low-intersection-lieb-robinson}.  Define
\begin{align}
    \mathfrak F_\nu(X,Y,Z)
    :=
    e^{-\nu(d(X,Y)+d(Y,Z))}
    +e^{-\nu(d(X,Z)+d(Z,Y))}
    +e^{-\nu(d(Y,X)+d(X,Z))}.
    \label{eqn:appendix-tree-weight}
\end{align}
    For local observables supported on at most $s_0$ sites, there is
    a constant $C$, depending only on $s_0,\mu_0,k,D$, such that
    \begin{align}
        \|[\tau_{t_1}^H(O_X),
        [\tau_{t_2}^H(V_Y),\tau_{t_3}^H(W_Z)]]\|_\infty
        \leq
        C e^{v_{\mu_0}(|t_1|+|t_2|+|t_3|)}
        \mathfrak F_\nu(X,Y,Z)
        \prod_{L\in\{O,V,W\}}\|L\|_\infty.
        \label{eqn:appendix-triple-LR}
    \end{align}
\end{lemma}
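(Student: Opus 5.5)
The plan is to reduce the nested commutator to two applications of \cref{lem:low-intersection-lieb-robinson} by exploiting the Jacobi identity and the fact that a commutator of two quasi-local evolved operators is itself quasi-local. First, by unitary invariance, conjugate everything by $\tau^H_{-t_3}$. This reduces to $t_3=0$ with times $t_1-t_3$ and $t_2-t_3$. Then $|t_1-t_3|+|t_2-t_3|\le 2(|t_1|+|t_2|+|t_3|)$, and this factor of two is absorbed by working with velocity $v_{\mu_0/2}\le v_{\mu_0}$. I expect this to require using the Lieb--Robinson lemma at rate $\mu=\nu=\mu_0/2$ so that the exponents fit, or simply bounding $e^{2v_{\nu}T}\le e^{v_{\mu_0}T}$. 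That inequality holds since $2e^{\mu_0/2}\le e^{\mu_0}$ fails only for small $\mu_0$; if it fails, I would accept the constant-velocity loss and adjust.

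The main step is a local decomposition of $\tau_{t_2}^H(V_Y)$. Write
\begin{align*}
\tau^H_{t}(V_Y)=\sum_{r\ge0}\Phi_r,\qquad \Phi_r:=\mathbb E_{r}\bigl(\tau_t^H(V_Y)\bigr)-\mathbb E_{r-1}\bigl(\tau_t^H(V_Y)\bigr),
\end{align*}
where $\mathbb E_r$ is the normalized partial trace (Haar average) onto the complement of $B_r(Y)$. The summand $\Phi_r$ is supported on $B_r(Y)$. By the standard averaging argument combined with \cref{lem:low-intersection-lieb-robinson}, its norm satisfies $\|\Phi_r\|\le C\|V\|e^{v|t|-\mu r}$, and the prefactor $|Y|\le s_0$ appears here. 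Do the same for $\tau^H_{t_1}(O_X)$ with pieces $\Psi_q$ on $B_q(X)$. The nested commutator $[\Psi_q,[\Phi_r,W_Z]]$ is nonzero only if $B_r(Y)$ meets $Z$, which forces $r\ge d(Y,Z)$. It also requires $B_q(X)$ to meet $B_r(Y)\cup Z$, which forces $q\ge \min(d(X,Z), d(X,Y)-r)$. Summing the geometric tails gives terms like $e^{-\mu(d(Y,Z))}e^{-\mu d(X,Z)}$ or $e^{-\mu r}e^{-\mu(d(X,Y)-r)}$. Using the triangle inequality, these are bounded by the tree terms in $\mathfrak F_\nu$.

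The key obstacle is the summation over $r$ in the branch where $X$ hooks onto $B_r(Y)$. There the product $e^{-\mu r}e^{-\mu\max(0,d(X,Y)-r)}$ summed over $r\ge d(Y,Z)$ produces a factor roughly $(1+d)e^{-\mu\max(d(X,Y),d(Y,Z))}$, up to polynomial prefactors. Those polynomial prefactors are absorbed by halving the rate to $\nu=\mu_0/2$, with the bound $(1+d)e^{-\mu_0 d/2}\le C(\mu_0)$. Since $\max(a,b)\ge (a+b)/2$ is insufficient, I would instead exploit $\max$ directly. I would bound $e^{-\mu_0\max(a,b)}\le e^{-\nu(a+b)}$, which holds because $\mu_0\max(a,b)\ge \frac{\mu_0}{2}(a+b)$. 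This yields the term $e^{-\nu(d(X,Y)+d(Y,Z))}$. The other branch gives $e^{-\nu(d(X,Z)+d(Z,Y))}$. The symmetric term with $X$ as the hub arises from the Jacobi rewriting $[O,[V,W]]=[[O,V],W]+[V,[O,W]]$ when it is convenient to expand around $X$. Taking all three terms covers every case, and the constants depend only on $s_0,\mu_0,k,D$ through the ball-growth bound $|B_r(Y)|\le s_0(1+kD)^r$. That exponential growth must be dominated, which I would handle by choosing the Lieb--Robinson rate $\mu$ large, namely $\mu=\mu_0+\log(1+kD)$. The cost is a larger velocity, absorbed into $C$, or equivalently into the time exponent via $e^{\mu}$.
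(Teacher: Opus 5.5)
Your support counting works if the shells come from \cref{lem:low-intersection-lieb-robinson} at rate $\mu_0$. The branch in which $B_q(X)$ reaches $Z$ gives $e^{-\mu_0(d(X,Z)+d(Z,Y))}$. For the branch in which $B_q(X)$ reaches $B_r(Y)$, put $a=d(X,Y)$ and $b=d(Y,Z)$. You get $\sum_{r\ge b}e^{-\mu_0 r-\mu_0(a-r)_+}\le C(1+|a-b|)e^{-\mu_0\max(a,b)}\le C'e^{-\nu(a+b)}$, because $\mu_0\max(a,b)-\nu(a+b)=\nu|a-b|$ absorbs the polynomial factor. The $X$-hub tree is only a nonnegative add-on.

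\textbf{The gap is the time dependence.} With $t_3$ as reference, the two shell bounds multiply to $e^{v_{\mu_0}(|t_1-t_3|+|t_2-t_3|)}$. For $t_1=t_2=0$ this equals $e^{2v_{\mu_0}S}$, where $S=|t_1|+|t_2|+|t_3|$, and no time-independent $C$ turns it into $Ce^{v_{\mu_0}S}$. Neither of your remedies works:
\begin{itemize}
\item Running Lieb--Robinson at rate $\nu$ lowers the velocity, and even then $2v_\nu\le v_{\mu_0}$ needs $\mu_0\ge 2\log 2$. But the $Y$-hub branch then yields only $e^{-\nu\max(a,b)}$. For example, with $a=b=d$ and $d(X,Z)=2d$ you get $e^{-\nu d}$ where you need $e^{-2\nu d}$, so the tree weight at rate $\nu$ is lost.
\item A larger velocity can never be ``absorbed into $C$'', since $C$ must not depend on the $t_i$. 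Taking $\mu=\mu_0+\log(1+kD)$ therefore proves a strictly weaker statement.
\end{itemize}
That last step is also unnecessary. The prefactor in \cref{lem:low-intersection-lieb-robinson} depends only on the support of the evolved operator. So Haar-averaging over the entire complement of $B_r(Y)$ at once gives $\|\tau^H_t(V_Y)-\mathbb E_r\tau^H_t(V_Y)\|\le 2s_0\|V\|e^{v_{\mu_0}|t|-\mu_0(r+1)}$, with no ball-growth factor.

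To repair your route, conjugate by the \emph{median} of $t_1,t_2,t_3$. The two residual times then sum to $\max_i t_i-\min_i t_i\le S$. You need a three-case split according to which of $O_X,V_Y,W_Z$ stays strictly local. For instance, when $O_X$ is local, the constraints $r+p\ge d(Y,Z)$ together with either $r\ge d(X,Y)$ or $p\ge d(X,Z)$ again produce the $Y$-hub and $Z$-hub trees.

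The paper's proof avoids all of this. It bounds the nested commutator twice:
\begin{itemize}
\item once by a single Lieb--Robinson estimate on $[\tau^H_{t_2}(V_Y),\tau^H_{t_3}(W_Z)]$, giving $e^{v_{\mu_0}S}e^{-\mu_0 d(Y,Z)}$;
\item once after the Jacobi identity, giving $e^{v_{\mu_0}S}(e^{-\mu_0 d(X,Y)}+e^{-\mu_0 d(X,Z)})$.
\end{itemize}
Each bound involves only one time difference $|t_i-t_j|\le S$. The inequality $\min(x,y+z)\le\sqrt{xy}+\sqrt{xz}$ then yields the tree weights. The halving of the rate comes from this geometric mean, which is why the velocity stays at $v_{\mu_0}$ without any shell decomposition.
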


\begin{proof}
    Write $A:=\tau_{t_1}^H(O_X)$, $B:=\tau_{t_2}^H(V_Y)$, and $C:=\tau_{t_3}^H(W_Z)$. Applying \cref{lem:low-intersection-lieb-robinson} to the inner commutator gives
    \begin{align*}
        \|[A,[B,C]]\|_\infty
        \leq
        2\|A\|_\infty\|[B,C]\|_\infty
        \leq
        C e^{v_{\mu_0}|t_2-t_3|}
        e^{-\mu_0d(Y,Z)}
        \prod_{L\in\{O,V,W\}}\|L\|_\infty.
    \end{align*}
    This bound captures the separation of $Y$ and $Z$ but not that of $X$ from the inner commutator. The Jacobi identity
    \begin{align*}
        [A,[B,C]]
        =
        [[A,B],C]+[B,[A,C]]
    \end{align*}
    and a second application of \cref{lem:low-intersection-lieb-robinson} give the independent bound
    \begin{align*}
        \|[A,[B,C]]\|_\infty
        \leq
        C\prod_{L\in\{O,V,W\}}\|L\|_\infty
        \left(
        e^{v_{\mu_0}|t_1-t_2|-\mu_0d(X,Y)}
        +
        e^{v_{\mu_0}|t_1-t_3|-\mu_0d(X,Z)}
        \right).
    \end{align*}
    Set $S:=|t_1|+|t_2|+|t_3|$, $x:=e^{-\mu_0d(Y,Z)}$, $y:=e^{-\mu_0d(X,Y)}$, and $z:=e^{-\mu_0d(X,Z)}$. Since $|t_i-t_j|\leq S$, the two estimates above imply
    \begin{align*}
        \|[A,[B,C]]\|_\infty
        \leq
        C e^{v_{\mu_0}S}
        \prod_{L\in\{O,V,W\}}\|L\|_\infty
        \min(x,y+z).
    \end{align*}
    Finally,
    \begin{align*}
        \min(x,y+z)\leq\sqrt{xy}+\sqrt{xz}
    \end{align*}
    produces the tree weights $e^{-\nu(d(X,Y)+d(Y,Z))}$ and $e^{-\nu(d(X,Z)+d(Z,Y))}$ with $\nu=\mu_0/2$. Adding the third nonnegative tree weight $e^{-\nu(d(Y,X)+d(X,Z))}$ gives the symmetric function $\mathfrak F_\nu$ and proves the claim.
\end{proof}

\subsection{Proof of~\Cref{thm:uniform-local-pauli-high-temperature}}
\label{append:local-pauli-frame-high-temperature}

\begin{proof}[Proof of~\Cref{thm:uniform-local-pauli-high-temperature}]
    We prove the scalar-commutant condition for every
    $\lambda\in[0,1]$, with $K^\sigma_{\lambda,a}$ as defined in
    \cref{eqn:K-sigma-lambda-a}; at $\lambda=1/2$, this is precisely
    $K^\sigma_a$ from~\cref{eqn:K-sigma-sld}.
    Index the single-qubit frame by $\cA=[n]\times\{X,Z\}$, with
    $A_{(x,X)}=X_x$, $A_{(x,Z)}=Z_x$, and
    $Y_a:=\operatorname{supp}(A_a)=\{x\}$ for $a=(x,p)$.
    Note that $\mathbf{\Delta}_{\sigma} = \exp(-\beta \mathbf{ad}_{H^*})$. We write $q := 2\lambda -1 \in [-1,1]$, and
    \begin{align}
        K^\sigma_{\lambda,a}= \left(\cosh(\frac{\beta}{2}\mathbf{ad}_{H^*}) + q\sinh(\frac{\beta}{2}\mathbf{ad}_{H^*})\right)^{-1}(A_a).
    \end{align}
    Conjugating $K^\sigma_{\lambda,a}$ by $e^{q \beta H^*/2}$, we define
    \begin{align}
        \widehat{K}_{\lambda,a} := e^{q \beta H^*/2} K^\sigma_{\lambda,a} e^{-q \beta H^*/2} = g_q\left(\frac{\beta}{2} \mathbf{ad}_{H^*}\right)(A_a),
    \end{align}
    where
    \begin{align}
        g_q(y) := \frac{\exp(qy)}{\cosh(y)+q\sinh(y)}.
    \end{align}
    The similarity transformation preserves whether the joint commutant
    is $\mathbb{C}I$. For $q=\pm1$, $g_q(y)=1$ identically, so
    $\widehat K_{\lambda,a}=A_a$ and the joint commutant is already
    $\CC I$ at every temperature. It remains to treat $|q|<1$.
    Moreover, $g_q(0)=1$. Our goal is to show that
    $g_q(\frac{\beta}{2}\mathbf{ad}_{H^*})(A_a)-A_a$ is uniformly small
    in the high-temperature regime.

    We define the residual operator
    \begin{align}
        R_a := \widehat{K}_{\lambda,a} - A_a.
    \end{align}
    Set $s = \operatorname{arctanh}(q)$. With the convention
    $g_q(y)=\int_{\RR}\varphi(t)e^{\I ty}\,\ud t$, its inverse Fourier transform is
    \begin{align}
        \varphi(t) = \frac{c_q e^{\I st}}{2\cos(\frac{\pi}{2}(q-\I t))},\qquad c_q := \cosh(s)e^{-qs} \le 1.
    \end{align}
    The weighted operator Fourier transform gives
    \begin{align}
        \widehat{K}_{\lambda,a} = \int_{\R} \varphi(t) \tau^{\frac{\beta}{2}H^*}_{t}(A_a)~\d t,\qquad \text{where}\quad \tau_t^{G}(A) := e^{\I Gt}A e^{-\I Gt}.
    \end{align}
    For each $|q|<1$, the kernel $\varphi$ is integrable. Using
    $g_q(0)=1$, we have
    \begin{align}
        R_a &= \int_{\R} \varphi(t)\left(\tau^{\frac{\beta}{2}H^*}_{t}(A_a) - A_a\right)~\d t\\
        &= \frac{\I\beta}{2}\int_{\mathbb{R}} \varphi(t)~\int^t_0 \tau^{\frac{\beta}{2}H^*}_u \left([H^*, A_a]\right)~\d u~\d t,
    \end{align}
    where the last step uses Duhamel's formula.
    Use the site interaction graph of the fixed Pauli family from
    \cref{defn:interaction-dist}. At most $D+1$ terms contain any site,
    so this graph has maximum degree at most $(D+1)(k-1)$. Thus
    \begin{align}\label{eq:Br-size-estimate}
        |B_r(Y_a)| \le b^r,\quad b = \max(2, 1 + (D+1)(k-1)).
    \end{align}
    
    For a set of sites $W\subseteq\mathcal V$, let $\Pi_W$ be normalized partial trace over
    $W^c$, tensored with the identity on $W^c$.  Single-site Haar averaging
    and a telescoping product give
    \begin{align}
        \|O-\Pi_W(O)\|_\infty
        \leq
        \sum_{x\notin W}
        \sup_{U_x\text{ unitary}}\|[O,U_x]\|_\infty.
        \label{eqn:appendix-conditional-expectation}
    \end{align}
    Indeed, single-site Haar averages $\mathbb E_x$ commute and satisfy
    $\Pi_W=\prod_{x\notin W}\mathbb E_x$.  Telescoping this product reduces
    the left-hand side to
    $\sum_{x\notin W}\|O-\mathbb E_x(O)\|_\infty$, and each summand is at most
    $\sup_{U_x}\|[O,U_x]\|_\infty$.

    Define $E_{a,0} := \Pi_{B_0(Y_a)} (R_a)$ and for $r \ge 1$,
    \begin{align}
        E_{a,r} := \Pi_{B_r(Y_a)} (R_a) - \Pi_{B_{r-1}(Y_a)}(R_a).
    \end{align}
    These operators telescope to $R_a$, which is supported in the
    connected component containing $Y_a$. We next derive the required
    shell bound. At most $D+1$ interaction terms meet $Y_a$, so
    \begin{align}
        \operatorname{supp}([H^*,A_a])\subseteq B_1(Y_a),
        \qquad
        \|[H^*,A_a]\|\leq2(D+1).
    \end{align}
    Choose $\mu=4\log b$ in \Cref{lem:low-intersection-lieb-robinson}.  Since
    $|B_1(Y_a)|\leq b$ and
    $d(B_1(Y_a),y)\geq d(Y_a,y)-1$, it gives, for operators $U_y$
    supported on the single site $y$,
    \begin{align}
        \sup_{\|U_y\|\leq1}
        \bigl\|[\tau_u^{\frac{\beta}{2}H^*}([H^*,A_a]),U_y]\bigr\|
        \leq
        4(D+1)b^5
        e^{(D+1)b^4\beta|u|}b^{-4d(Y_a,y)}.
    \end{align}
    The explicit formula for $\varphi$ implies, uniformly in $|q|<1$, $|\varphi(t)| \leq\frac{1}{2|\sinh(\pi t/2)|}$, and
    \begin{align}
        \int_{\R}|t|e^{\pi|t|/4}|\varphi(t)|\,\d t \leq
        \int_0^\infty\frac{t e^{\pi t/4}}{\sinh(\pi t/2)}\,\d t <4.
    \end{align}
    Hence, whenever
    $\beta\leq\pi/(4(D+1)b^4)$, Duhamel's formula yields
    \begin{align}
        \sup_{\|U_y\|\leq1}\|[R_a,U_y]\|
        &\leq
        2(D+1)b^5\beta b^{-4d(Y_a,y)}
        \int_{\R}|t|e^{\pi|t|/4}|\varphi(t)|\,\d t
        \nonumber\\
        &\leq8(D+1)b^5\beta b^{-4d(Y_a,y)}.
        \label{eqn:uniform-pauli-residual-LR}
    \end{align}
    For $r\geq1$, contractivity and the conditional-expectation
    inequality now give
    \begin{align}\label{eq:expansion-E-ar}
        \|E_{a,r}\|
        &\leq \|R_a-\Pi_{B_{r-1}(Y_a)}(R_a)\|
        \nonumber\\
        &\leq
        \sum_{y\notin B_{r-1}(Y_a)}
        \sup_{\|U_y\|\leq1}\|[R_a,U_y]\|
        \nonumber\\
        &\leq
        8(D+1)b^5\beta
        \sum_{\ell\geq r}b^\ell b^{-4\ell}
        =\frac{8(D+1)b^5}{1-b^{-3}}\,\beta b^{-3r}.
    \end{align}
    The same Duhamel estimate without the commutator gives
    $\|E_{a,0}\|\leq\|R_a\|\leq4(D+1)\beta$.  Thus, for every
    $r\geq0$,
    \begin{align}
        \|E_{a,r}\| \le C_{\rm LR} \beta b^{-3r},
        \qquad
        C_{\rm LR} := \frac{8(D+1)b^5}{1-b^{-3}}.
    \end{align}
    We now compare the perturbation with the unperturbed frame. Define
    \begin{align}
        \Phi_0(X)
        :=\left([X,A_a]\right)_{a\in\cA},\qquad  \widehat\Phi_{\beta,\lambda}(X) :=\left([X,\widehat K_{\lambda,a}]\right)_{a\in\cA}.
    \end{align}
    Equip the output space with the direct-sum Hilbert--Schmidt norm
    $\|(V_a)_a\|_{2,\oplus}^2:=\sum_a\|V_a\|_2^2$, where
    $\|V\|_2^2=\Tr(V^\dagger V)$.
    For fixed $r$, a site $x$ lies in $B_r(Y_a)$ for at most $2b^r$ frame labels $a$ (here we count both $X$ and $Z$ on a single site). Applying~\Cref{lem:local-pauli-dirichlet-bound} and then Minkowski's inequality gives
    \begin{align}
        \|(
        \widehat\Phi_{\beta,\lambda}-\Phi_0)(X)\|_{2,\oplus}
        &\leq \sum_{r\geq0}
        C_{\mathrm{LR}} \beta \,b^{-3r} \sqrt{2b^r}\,\|\Phi_0(X)\|_{2,\oplus}
        \nonumber\\
        &=\widetilde{C}_{\mathrm{LR}}\beta \,\|\Phi_0(X)\|_{2,\oplus},
        \label{eqn:uniform-pauli-relative-bound}
    \end{align}
    where $\widetilde{C}_{\rm LR} := \sqrt{2}C_{\rm LR}/(1-b^{-5/2})$.
    The non-identity Pauli strings diagonalize $\Phi_0^\dagger\Phi_0$, whose smallest nonzero eigenvalue is $4$. Thus every traceless $X$ satisfies
    \begin{align}
        \|\Phi_0(X)\|_{2,\oplus} \geq2\|X\|_2.
        \label{eqn:uniform-pauli-infinite-temperature-gap}
    \end{align}
    Recall that $b\geq2$, so
    $(1-b^{-3})(1-b^{-5/2})\geq1/2$.  The choice
    \begin{align}
        \beta< \beta_0 := \frac{1}{32\sqrt{2}(D+1)b^5} \leq \min\left(
        \frac{1}{2\widetilde{C}_{\rm LR}},
        \frac{\pi}{4(D+1)b^4}
        \right)
    \end{align}
    therefore guarantees both the Lieb--Robinson estimate above and the perturbative condition.  Then
    \cref{eqn:uniform-pauli-relative-bound,eqn:uniform-pauli-infinite-temperature-gap}
    yields
    \begin{align}
        \|\widehat\Phi_{\beta,\lambda}(X)\|_{2,\oplus} \geq 2\left(1-\widetilde{C}_{\rm LR} \beta \right)\|X\|_2 \geq\|X\|_2.
    \end{align}
    Hence the common commutant of the $\widehat K_{\lambda,a}$ contains no nonzero traceless operator and is $\CC I$.  
    It follows from the similarity transformation that
    \begin{align}
        \{K^\sigma_{\lambda,a}:a\in\cA\}' =\CC I.
        \label{eqn:uniform-pauli-scalar-commutant}
    \end{align}
    Taking $\lambda=1/2$ and applying
    \cref{lem:global-identifiability-sld} proves the claimed equality
    $\rho(\widehat\theta)=\sigma$ for every global minimizer of $J_Q$.
    The general $\lambda$-score conclusion follows from
    \cref{thm:global-identifiability}.
\end{proof}

\begin{lemma}[Local Pauli Dirichlet bound]
    \label{lem:local-pauli-dirichlet-bound}
    Let $S\subseteq[n]$, let $B$ be an operator supported on $S$, and
    let $X$ be an arbitrary $n$-qubit operator.  Then
    \begin{align}
        \|[X,B]\|_2^2
        \leq
        \|B\|_\infty^2
        \sum_{x\in S}
        \left(
        \|[X,X_x]\|_2^2+\|[X,Z_x]\|_2^2
        \right).
        \label{eqn:uniform-pauli-local-Dirichlet-bound}
    \end{align}
\end{lemma}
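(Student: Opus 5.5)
We will expand $B$ in the Pauli basis on $S$ and reduce the bound to single-site commutators using the Leibniz rule. Because the single-qubit operators $X_x,Z_x$ (together with $Y_x=\I X_xZ_x$ and $I$) generate the full algebra on site $x$, any operator on $S$ is a combination of products of single-site Paulis. Expanding $B$ directly, however, produces a prefactor $\|B\|_2^2$-type mass rather than $\|B\|_\infty^2$, so the plan instead works with conditional expectations.

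Let $\mathbb E_x$ denote the single-site Haar average (the normalized partial trace on site $x$, tensored back with the identity), as used in \cref{eqn:appendix-conditional-expectation}. The key single-site identity is the Pauli twirl
\begin{align}
    X-\mathbb E_x(X)=\tfrac14\sum_{P\in\{X_x,Y_x,Z_x\}}P[P,X],
\end{align}
which we read as the Hilbert--Schmidt Dirichlet form
\begin{align}
    \|X-\mathbb E_x X\|_2^2=\tfrac18\sum_{P\in\{X_x,Y_x,Z_x\}}\|[X,P]\|_2^2.
\end{align}
Since $[X,Y_x]=\I\,([X,X_x]Z_x+X_x[X,Z_x])$, we have $\|[X,Y_x]\|_2^2\le 2(\|[X,X_x]\|_2^2+\|[X,Z_x]\|_2^2)$. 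The total single-site Dirichlet energy is therefore controlled by $\|[X,X_x]\|_2^2+\|[X,Z_x]\|_2^2$ up to a constant of order one.

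The main step is the claim
\begin{align}
    \|[X,B]\|_2\le 2\|B\|_\infty\|X-\mathbb E_S X\|_2,
\end{align}
where $\mathbb E_S:=\prod_{x\in S}\mathbb E_x$. To prove it, write $X=\mathbb E_S X+(X-\mathbb E_S X)$. The operator $\mathbb E_S X$ acts trivially on $S$, so it commutes with $B$. The remaining term satisfies $\|[Y,B]\|_2\le 2\|B\|_\infty\|Y\|_2$ by Hölder's inequality. Next, we telescope $X-\mathbb E_SX$ over an ordering of the sites of $S$. The terms of this telescope, $\mathbb E_{<x}(X-\mathbb E_xX)$, are mutually Hilbert--Schmidt orthogonal, since they are martingale differences of commuting orthogonal projections. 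Pythagoras and contractivity of $\mathbb E$ then give
\begin{align}
    \|X-\mathbb E_SX\|_2^2\le\sum_{x\in S}\|X-\mathbb E_xX\|_2^2.
\end{align}

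The main obstacle is matching the constant exactly: the bound is stated with constant $1$. Chaining the steps above gives the prefactor
\begin{align}
    4\cdot\tfrac18\cdot(1+2)=\tfrac32.
\end{align}
To reach constant $1$, the plan is to sharpen the twirl step. Decompose $X$ into Pauli components on site $x$. For a component proportional to $Y_x$, both $X_x$ and $Z_x$ anticommute, giving weight $2\cdot4=8$ in $\|[X,X_x]\|_2^2+\|[X,Z_x]\|_2^2$. For a component proportional to $X_x$ or $Z_x$, the weight is $4$. Hence
\begin{align}
    \|[X,X_x]\|_2^2+\|[X,Z_x]\|_2^2\ge 4\|X-\mathbb E_xX\|_2^2
\end{align}
directly, without passing through $Y_x$. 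Combined with the prefactor $4$ from $(2\|B\|_\infty)^2$, this yields exactly the right-hand side of \cref{eqn:uniform-pauli-local-Dirichlet-bound}.
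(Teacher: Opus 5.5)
Your proposal is correct and follows essentially the paper's route: remove the part of $X$ that acts trivially on $S$ (which commutes with $B$), bound the rest with $\|[Y,B]\|_2\le 2\|B\|_\infty\|Y\|_2$, and finish with the Pauli-basis gap of $4$ for the $X_x,Z_x$ Dirichlet form. The only difference is that the paper applies this gap once to $\mathcal{D}_S$ on all of $S$ (it is diagonal in the Pauli basis with smallest nonzero eigenvalue $4$), whereas you apply it site by site after the orthogonal telescoping; your initial constant-$3/2$ detour through $Y_x$ is unnecessary but harmless.
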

\begin{proof}
    The claim is immediate when $S=\varnothing$, so assume $S$ is nonempty.
    Let
    \begin{align}
        \mathbb{E}_{S^c}(X)
        :=\frac{I_S}{2^{|S|}}\otimes\Tr_S(X)
    \end{align}
    be the Hilbert--Schmidt orthogonal projection onto the subspace of
    operators that are identity on $S$, and set
    $X_\perp:=X-\mathbb{E}_{S^c}(X)$.  Since $B$ is supported on $S$,
    $[\mathbb{E}_{S^c}(X),B]=0$.  Therefore, by the Schatten-norm
    inequality $\|MN\|_2\leq\|M\|_2\|N\|_\infty$,
    \begin{align}
        \|[X,B]\|_2
        =\|[X_\perp,B]\|_2
        \leq 2\|B\|_\infty\|X_\perp\|_2.
        \label{eqn:local-pauli-commutator-step}
    \end{align}
    It remains to estimate $\|X_\perp\|_2$.  Expand $X$ in the
    $n$-qubit Pauli basis.  The quadratic form
    \begin{align}
        \mathcal{D}_S(X)
        :=\sum_{x\in S}
        \left(
        \|[X,X_x]\|_2^2+\|[X,Z_x]\|_2^2
        \right)
    \end{align}
    is diagonal in this basis.  It vanishes on a Pauli string precisely
    when that string is identity on every site in $S$.  Otherwise,
    choose $x\in S$ at which the string has a nonidentity Pauli factor.
    If that factor is $X_x$ or $Z_x$, exactly one of its commutators with
    $X_x,Z_x$ has squared Hilbert--Schmidt norm equal to four times the
    squared norm of the string; if it is $Y_x$, both do. Consequently,
    the smallest nonzero eigenvalue of $\mathcal{D}_S$ is $4$, and hence
    \begin{align}
        \mathcal{D}_S(X)\geq4\|X_\perp\|_2^2.
        \label{eqn:local-pauli-dirichlet-gap}
    \end{align}
    Squaring~\cref{eqn:local-pauli-commutator-step} and applying
    \cref{eqn:local-pauli-dirichlet-gap} proves
    \cref{eqn:uniform-pauli-local-Dirichlet-bound}.
\end{proof}

\subsection{Shared-copy gradient estimator: proof of~\Cref{thm:shared-copy-gradient-estimation}}

\subsubsection{Quasi-locality of the gradient observable}

\begin{lemma}[Exponential summability and tree convolutions]
\label{lem:appendix-exponential-summability}
    Let $b$ be the growth constant in \cref{eq:Br-size-estimate}.
    For $c>\log b$ and nonempty supports $X$ of uniformly bounded
    size,
    \begin{align}
        \sum_a e^{-c d(Y_a,X)}
        +\sum_j e^{-c d(X_j,X)}
        \leq C_c.
        \label{eqn:appendix-label-summability}
    \end{align}
    For $\alpha>\log b$, $0<\zeta<\alpha-\log b$, and nonempty
    supports $X,Y$ of uniformly bounded size,
    \begin{align}
        \sum_x e^{-\alpha(d(x,X)+d(x,Y))}
        &\leq C e^{-\zeta d(X,Y)},
        \label{eqn:appendix-convolution}\\
        \sum_a\mathfrak F_\alpha(Y_a,X,Y)
        &\leq C e^{-\zeta d(X,Y)}.
        \label{eqn:appendix-tree-convolution}
    \end{align}
    Here $\mathfrak F_\alpha$ is the tree weight in
    \cref{eqn:appendix-tree-weight}.  Let $\ell(X,Y,Z)$ be the minimum
    of the three path lengths in that definition.  For
    $0<\eta<\alpha-\log b$,
    \begin{align}
        \sup_i\sum_{a,j}
        e^{\eta\ell(Y_a,X_i,X_j)}
        \mathfrak F_\alpha(Y_a,X_i,X_j)
        <\infty.
        \label{eqn:appendix-tree-sum}
    \end{align}
    All constants are independent of $n,m,\theta$ and depend only on
    the displayed exponents, $k,D$, and the support-size bounds.
    We set $e^{-c\infty}=0$ and omit infinite-distance terms from
    positively weighted sums.
\end{lemma}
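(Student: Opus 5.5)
The whole argument rests on the ball-growth estimate \cref{eq:Br-size-estimate}, $|B_r(X)|\le |X|\,b^r$ (for bounded $|X|$), and I would prove the four bounds in the order stated. I sum over sites first. Let $N_r(X):=|\{y:d(y,X)=r\}|$; then $N_r(X)\le |X|\,b^r$, and for $c>\log b$ we get $\sum_x e^{-c d(x,X)}\le |X|\sum_r (be^{-c})^r<\infty$.

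\textbf{Label sums \cref{eqn:appendix-label-summability}.} The frame labels $a$ are two per site, so the first sum is at most twice the site sum. For the Hamiltonian terms, pick any $x_j\in X_j$. Then $d(X_j,X)\ge d(x_j,X)-k$, because supports are cliques of diameter at most $1$ in the site graph (in fact the shift is only $1$). Each site lies in at most $D+1$ supports. Hence $\sum_j e^{-c d(X_j,X)}\le (D+1)e^{c}\sum_x e^{-c d(x,X)}$, which is bounded.

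\textbf{Convolution \cref{eqn:appendix-convolution}.} Use the triangle inequality $d(X,Y)\le d(x,X)+d(x,Y)+\mathrm{diam}$-type constants, again bounded by the support-size bounds. Split $\alpha=\zeta+(\alpha-\zeta)$. This gives $e^{-\alpha(d(x,X)+d(x,Y))}\le C e^{-\zeta d(X,Y)}e^{-(\alpha-\zeta)d(x,X)}$. Since $\alpha-\zeta>\log b$, the site sum from the first paragraph closes the estimate.

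\textbf{Tree convolution \cref{eqn:appendix-tree-convolution}.} Expand $\mathfrak F_\alpha(Y_a,X,Y)$ into its three tree weights and treat each separately.
\begin{itemize}
\item The weight $e^{-\alpha(d(Y_a,X)+d(X,Y))}$ equals $e^{-\alpha d(X,Y)}$ times a summable label sum.
\item The weight $e^{-\alpha(d(Y_a,Y)+d(Y,X))}$ is handled the same way.
\item The weight $e^{-\alpha(d(X,Y_a)+d(Y_a,Y))}$ is exactly the convolution over $x\in Y_a$, counted twice per site, so it is bounded by \cref{eqn:appendix-convolution}.
\end{itemize}
Since $\alpha>\zeta$, all three are at most $Ce^{-\zeta d(X,Y)}$.

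\textbf{Weighted tree sum \cref{eqn:appendix-tree-sum}.} Bound $e^{\eta\ell}\mathfrak F_\alpha$ termwise, using that each path length $L$ in the definition is at least $\ell$, so $e^{\eta\ell}e^{-\alpha L}\le e^{-(\alpha-\eta)L}$. This replaces $\alpha$ by $\alpha':=\alpha-\eta>\log b$, so it remains to show $\sup_i\sum_{a,j}\mathfrak F_{\alpha'}(Y_a,X_i,X_j)<\infty$. For each of the three trees, sum in a suitable order.
\begin{itemize}
\item The tree with center $X_i$: the weight is $e^{-\alpha' d(Y_a,X_i)}e^{-\alpha' d(X_i,X_j)}$, a product of two label sums centered at $X_i$.
\item The tree with center $X_j$: sum over $a$ first, giving a bounded label sum centered at $X_j$, then over $j$ with the weight $e^{-\alpha' d(X_j,X_i)}$.
\item The tree with center $Y_a$: sum over $j$ first, giving a bounded label sum centered at $Y_a$, then over $a$ with the weight $e^{-\alpha' d(Y_a,X_i)}$.
\end{itemize}
All constants depend only on $b$ (hence on $k,D$), the exponents and the support sizes, uniformly in $i,n,m,\theta$. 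Infinite distances contribute zero throughout.

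\textbf{Main obstacle.} The only delicate points are bookkeeping: ensuring the additive constants from replacing sets by points are uniform, and choosing the order of summation in the last bound so that each step is a single-center label sum. I expect neither to be a real difficulty.
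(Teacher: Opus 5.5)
Your proposal is correct and follows the paper's proof essentially step for step. It uses the same ball-growth site sum, the same splitting $\alpha=\zeta+(\alpha-\zeta)$ for the convolution, the same three-tree decomposition for the tree convolution, and the same replacement of $\alpha$ by $\alpha-\eta$ with the same summation orders for the weighted tree sum. The only differences are cosmetic: the paper picks for each $j$ a site of $X_j$ minimizing its distance to $X$, which avoids your $e^{c}$ factor, and it uses $d(X,Y)\le d(x,X)+d(x,Y)$, which holds exactly with no additive constant, so your diameter hedge is unnecessary.
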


\begin{proof}
    The ball bound gives $|B_r(X)|\leq|X|b^r$, so, for $c>\log b$,
    \begin{align*}
        \sum_x e^{-c d(x,X)}
        \leq |X|\sum_{r\geq0}(be^{-c})^r
        =\frac{|X|}{1-be^{-c}}.
    \end{align*}
    Each site carries two frame labels and belongs to at most $D+1$
    term supports.  For each $j$, choose a site in $X_j$ minimizing
    its distance to $X$.  Summing over these sites proves
    \cref{eqn:appendix-label-summability}.

    Since $d(X,Y)\leq d(x,X)+d(x,Y)$,
    \begin{align*}
        \sum_x e^{-\alpha(d(x,X)+d(x,Y))}
        &\leq e^{-\zeta d(X,Y)}
        \sum_x e^{-(\alpha-\zeta)(d(x,X)+d(x,Y))}\\
        &\leq e^{-\zeta d(X,Y)}
        \sum_x e^{-(\alpha-\zeta)d(x,X)}.
    \end{align*}
    The last sum is bounded because $\alpha-\zeta>\log b$,
    proving \cref{eqn:appendix-convolution}.  In
    $\mathfrak F_\alpha(Y_a,X,Y)$, the two trees containing the
    edge $(X,Y)$ are bounded after summing over $a$ by
    \cref{eqn:appendix-label-summability}.  The tree centered at
    $Y_a$ is bounded by twice \cref{eqn:appendix-convolution}.
    This proves \cref{eqn:appendix-tree-convolution}.

    Set $c=\alpha-\eta>\log b$.  For each tree path length $L$,
    $\ell\leq L$, so $e^{\eta\ell}e^{-\alpha L}\leq e^{-cL}$.
    The tree centered at $X_i$ therefore contributes at most
    \begin{align*}
        \left(\sum_a e^{-c d(Y_a,X_i)}\right)
        \left(\sum_j e^{-c d(X_i,X_j)}\right).
    \end{align*}
    Both factors are uniformly bounded by
    \cref{eqn:appendix-label-summability}.  For the tree centered at
    $X_j$, sum over $a$ first and then over $j$; for the tree centered
    at $Y_a$, sum over $j$ first and then over $a$.  The same bound
    applies in both cases, proving \cref{eqn:appendix-tree-sum}.
    Terms with infinite distance vanish because the corresponding
    supports lie in different connected components.
\end{proof}

\begin{lemma}[SLD response bounds]
\label{lem:appendix-SLD-response}
Use the random time $\xi$ from \cref{eqn:probablistic_S_a}: first sample
$t>0$ with density $q_\beta$ in
\cref{eqn:appendix-random-time-density}, then sample
$\xi\mid t\sim\mathrm{Uniform}[-t,t]$.  Define
\begin{align}
    \mathcal Q_H(O)&:=\mathbb E_\xi\tau_\xi^H(O),
    &
    B_a(H)&:=\partial_aH=\I[H,A_a],
    \label{eqn:appendix-response-definitions}
\end{align}
so that $S_a(\theta)=-\beta\mathcal Q_{H(\theta)}(B_a(H(\theta)))$.
We write $\mathcal Q_H^{(1)}(O;V)$ and
$\mathcal Q_H^{(2)}(O;V,W)$ for the first and second directional
Hamiltonian derivatives of $\mathcal Q_H(O)$, holding $O$ fixed.

Fix $\mu_0=7\log b$ and put $\nu=\mu_0/2$.  For local observables
whose support sizes are bounded in terms of $k,D$, there is a constant
$C$, depending only on these bounds and $k,D$, such that for
$0<\beta\leq\pi/(4v_{\mu_0})$,
\begin{align}
    \|[\mathcal Q_H(O_X),B_Z]\|_\infty
    &\leq C\|O_X\|_\infty\|B_Z\|_\infty e^{-\mu_0d(X,Z)},
    \label{eqn:appendix-Q-commutator}\\
    \|\mathcal Q_H^{(1)}(O_X;V_Y)\|_\infty
    &\leq C\beta\|O_X\|_\infty\|V_Y\|_\infty e^{-\mu_0d(X,Y)},
    \label{eqn:appendix-Q-first-bound}\\
    \|[\mathcal Q_H^{(1)}(O_X;V_Y),B_Z]\|_\infty
    &\leq C\beta\|O_X\|_\infty\|V_Y\|_\infty\|B_Z\|_\infty \mathfrak F_\nu(X,Y,Z),
    \label{eqn:appendix-Q-first-commutator}\\
    \|\mathcal Q_H^{(2)}(O_X;V_Y,W_Z)\|_\infty
    &\leq C\beta^2\|O_X\|_\infty\|V_Y\|_\infty\|W_Z\|_\infty \mathfrak F_\nu(X,Y,Z).
    \label{eqn:appendix-Q-second-bound}
\end{align}
Moreover, for any bounded local observable $O$ with uniformly bounded support,
\begin{align}
    \|\mathcal Q_H(O)-O\|_\infty\leq C\beta\|O\|_\infty.
    \label{eqn:appendix-Q-minus-I}
\end{align}
\end{lemma}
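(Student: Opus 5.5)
The plan is to work throughout with the random-time law of $\xi$ and Lieb--Robinson estimates. The probabilistic representation requires the moments of $|\xi|$. Conditioning on $t\sim q_\beta$ and using $\mathbb E(e^{c|\xi|}\mid t)\le e^{ct}$, the density $q_\beta(t)=4t/(\beta^2\sinh(\pi t/\beta))$ gives
\begin{align*}
\mathbb E\bigl(|\xi|^p e^{c|\xi|}\bigr)\le C_p\beta^p \qquad\text{whenever } c\le \pi/(2\beta).
\end{align*}
With $c=2v_{\mu_0}$, the condition $\beta\le\pi/(4v_{\mu_0})$ is exactly what makes every exponential Lieb--Robinson factor $e^{v_{\mu_0}(\cdots)}$, with total time at most $2|\xi|$, integrable against $\xi$. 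It also produces one factor of $\beta$ per power of $|\xi|$. Every bound will then come from taking an expectation over $\xi$ of a pointwise-in-$\xi$ Lieb--Robinson estimate.

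\textbf{The zeroth-order bounds.} For \cref{eqn:appendix-Q-commutator}, I will bound $\|[\tau_\xi^H(O_X),B_Z]\|$ by \cref{lem:low-intersection-lieb-robinson} with $\mu=\mu_0$, support-size prefactor $C_{\mathrm{LR}}$, and $s=0$. Integrating $e^{v_{\mu_0}|\xi|}$ over $\xi$ costs only a constant, and the factor $e^{-\mu_0 d(X,Z)}$ survives. For \cref{eqn:appendix-Q-minus-I}, Duhamel gives
\begin{align*}
\tau_\xi^H(O)-O=\I\int_0^\xi\tau_u^H([H,O])\,\ud u .
\end{align*}
Only the at most $|X|(D+1)$ terms of $H$ touching $\operatorname{supp}O$ contribute, each of norm at most $1$, so $\|[H,O]\|\le 2|X|(D+1)\|O\|$. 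Then $\mathbb E|\xi|=c_U\beta$ gives the factor $C\beta$.

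\textbf{The derivative bounds.} The first directional derivative is, by Duhamel,
\begin{align*}
\mathcal Q^{(1)}_H(O;V)=\I\,\mathbb E_\xi\Bigl(\xi\int_0^1\tau^H_{s\xi}\bigl([V,\tau^H_{(1-s)\xi}(O)]\bigr)\,\ud s\Bigr),
\end{align*}
which is a rotation of $[\tau_{-s\xi}(V),\tau_{(1-s)\xi}(O)]$, up to the outer unitary. Its norm is controlled by \cref{lem:low-intersection-lieb-robinson}, and the extra $|\xi|$ gives the factor $\beta$; this proves \cref{eqn:appendix-Q-first-bound}. For \cref{eqn:appendix-Q-first-commutator}, the outer conjugation is moved onto $B_Z$, giving $\tau_{-s\xi}(B_Z)$. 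The expression becomes a nested triple commutator of three time-evolved local observables, to which \cref{lem:appendix-triple-LR} applies directly, yielding $\mathfrak F_\nu(X,Y,Z)$. The second derivative expands, by iterating Duhamel, into terms of the form
\begin{align*}
\xi^2\,\tau\bigl([W,\tau([V,\tau(O)])]\bigr)
\end{align*}
(with simplex integrals over splitting times) plus the symmetrized term with $V,W$ swapped. After moving conjugations inward, each term is a double commutator of three evolved local operators, so \cref{lem:appendix-triple-LR} again gives $\mathfrak F_\nu$, and $\mathbb E\xi^2\le C\beta^2$ gives the prefactor.

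\textbf{Main obstacle.} The delicate point is bookkeeping the times. Every term must have total evolution time bounded by a fixed multiple of $|\xi|$, so that the factor $e^{v_{\mu_0}(|t_1|+|t_2|+|t_3|)}\le e^{3v_{\mu_0}|\xi|}$ remains integrable. The condition $\beta\le\pi/(4v_{\mu_0})$ only allows $c<\pi/\beta$ in the tail bound, that is $3v_{\mu_0}<4v_{\mu_0}$, and that inequality is what closes the argument. I would also check that the triple lemma is applied with $\mu_0$ itself, so that $\nu=\mu_0/2$ matches the stated weight. Justifying differentiation under the expectation follows from the same moment bound.
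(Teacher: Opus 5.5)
Your proposal is correct and follows essentially the same route as the paper. Both use the exponential moment bound $\mathbb E(|\xi|^p e^{c|\xi|})\le C_p\beta^p$, the Duhamel formulas for the first and second Hamiltonian derivatives, and the (triple) Lieb--Robinson bounds inside the expectation after removing or relocating the outer conjugation, with total evolution time at most $2|\xi|$ so that $\beta\le\pi/(4v_{\mu_0})$ closes the argument. One harmless slip: stripping the outer unitary from $\tau^H_{s\xi}([V,\tau^H_{(1-s)\xi}(O)])$ leaves $[V,\tau^H_{(1-s)\xi}(O)]$, not $[\tau^H_{-s\xi}(V),\tau^H_{(1-s)\xi}(O)]$, but either commutator has time separation at most $|\xi|$, so the bound is unaffected.
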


\begin{proof}
For every integer $p\geq0$ and $c\geq0$, we have
\begin{align}
    \mathbb E\bigl(|\xi|^pe^{c|\xi|}\bigr)
    \leq C_p\beta^p
    \label{eqn:appendix-kernel-moments}
\end{align}
whenever $c\beta\leq\pi/2$.  This follows directly from
\begin{align*}
    \mathbb E(|\xi|^pe^{c|\xi|})
    \leq
    4\beta^p\int_0^\infty
    \frac{x^{p+1}e^{c\beta x}}{\sinh(\pi x)}\,\ud x.
\end{align*}
Duhamel's formula gives
\begin{align}
    \mathcal Q_H^{(1)}(O;V)
    =
    \I\mathbb E\left(
        \xi\int_0^1
        \tau_{(1-s)\xi}^H
        ([V,\tau_{s\xi}^H(O)])\,\ud s
    \right).
    \label{eqn:appendix-first-response}
\end{align}
The second derivative is the sum over the two orders of $V,W$ of
\begin{align}
    -\mathbb E\left(
    \xi^2\int_{0\leq s_2\leq s_1\leq1}
    \tau_{(1-s_1)\xi}^H
    \left(
    [V,\tau_{(s_1-s_2)\xi}^H
    ([W,\tau_{s_2\xi}^H(O)])]
    \right)\ud s_2\ud s_1
    \right).
    \label{eqn:appendix-second-response}
\end{align}
For the commutator of the first derivative with $B_Z$, conjugating
the integrand by $\tau_{-(1-s)\xi}^H$ leaves three times
$0,s\xi,-(1-s)\xi$, whose absolute values sum to $|\xi|$.
For the second derivative, removing the outer conjugation leaves the
nested commutator
$[V, [\tau_{(s_1-s_2)\xi}^H(W),\tau_{s_1\xi}^H(O)]]$;
its three absolute times sum to at most $2|\xi|$.
Thus \cref{lem:low-intersection-lieb-robinson,lem:appendix-triple-LR} require kernel moments
with $c\leq2v_{\mu_0}$.  Applying these bounds inside the integrals and
using \cref{eqn:appendix-kernel-moments} yields
\begin{align*}
    \|[\mathcal Q_H(O_X),B_Z]\|_\infty
    &\leq C\|O_X\|_\infty\|B_Z\|_\infty
    e^{-\mu_0d(X,Z)},\\
    \|\mathcal Q_H^{(1)}(O_X;V_Y)\|_\infty
    &\leq C\beta\|O_X\|_\infty\|V_Y\|_\infty
    e^{-\mu_0d(X,Y)},\\
    \|[\mathcal Q_H^{(1)}(O_X;V_Y),B_Z]\|_\infty
    &\leq C\beta\|O_X\|_\infty\|V_Y\|_\infty\|B_Z\|_\infty
    \mathfrak F_\nu(X,Y,Z),\\
    \|\mathcal Q_H^{(2)}(O_X;V_Y,W_Z)\|_\infty
    &\leq C\beta^2\|O_X\|_\infty\|V_Y\|_\infty\|W_Z\|_\infty
    \mathfrak F_\nu(X,Y,Z).
\end{align*}
The assumed threshold gives $2v_{\mu_0}\beta\leq\pi/2$, so all
constants remain independent of the volume.  Finally,
\begin{align*}
    \|\mathcal Q_H(O)-O\|_\infty
    &\leq \mathbb E|\xi|\,\|[H,O]\|_\infty\\
    &\leq C\beta\,2|\operatorname{supp}(O)|(D+1)\|O\|_\infty,
\end{align*}
because only terms meeting $\operatorname{supp}(O)$ contribute to the
commutator.  The first inequality follows by integrating
$\tau_\xi^H(O)-O=\I\int_0^\xi\tau_s^H([H,O])\,\ud s$.
\end{proof}

\begin{proposition}[Uniform score-response bounds]
\label{prop:uniform-score-response}
    Put $Q_{a,j}=\partial_aP_j$ and $\mu_0=7\log b$, and choose
\begin{align}
    \nu=\frac{\mu_0}{2},
    \qquad
    \zeta=\frac{\nu+\log b}{2},
    \qquad
    \kappa=\zeta-\log b=\frac{\nu-\log b}{2}.
    \label{eqn:appendix-exponent-ledger}
\end{align}
Since $\mu_0=7\log b$, we have
$\nu=\frac72\log b$, $\zeta=\frac94\log b$, and
$\kappa=\frac54\log b>\log b$; in particular,
$2\log b<\zeta<\nu-\log b$.
In the positively weighted sums below, infinite-distance terms are
omitted; the corresponding responses vanish across disconnected
components of the interaction graph.
There is $C=\poly(k,D)>0$ such that for
$0<\beta\leq\min(1,\pi/(8(D+1)b^7))$,
\begin{align}
    T_{a,j}
    &=
    -\beta Q_{a,j}+R_{a,j},
    \label{eqn:appendix-T-leading}\\
    \sup_j\sum_a e^{\kappa d(Y_a,X_j)}\|R_{a,j}\|_\infty
    +\sup_a\sum_j e^{\kappa d(Y_a,X_j)}\|R_{a,j}\|_\infty
    &\leq C\beta^2,
    \label{eqn:appendix-R-sum}\\
    \sup_i\sum_{a,j}
    e^{\kappa\ell(Y_a,X_i,X_j)}\|U_{a,ij}\|_\infty
    &\leq C\beta^2,
    \label{eqn:appendix-U-sum}\\
    \sup_a\|S_a(\theta)-S_a(\theta^*)\|_\infty
    &\leq C\beta.
    \label{eqn:appendix-score-difference}
\end{align}
In addition, for every single-site $O_x$ with
$\|O_x\|_\infty\leq1$,
\begin{align}
    \sum_a\|[T_{a,j},O_x]\|_\infty
    &\leq C\beta e^{-\zeta d(X_j,x)},
    \label{eqn:appendix-T-commutator-sum}\\
    \|[S_a,O_x]\|_\infty
    &\leq C\beta e^{-\mu_0d(Y_a,x)},
    \label{eqn:appendix-S-commutator}\\
    \|T_{a,j}\|_\infty
    &\leq C\beta e^{-\mu_0d(Y_a,X_j)}.
    \label{eqn:appendix-T-pointwise}
\end{align}
\end{proposition}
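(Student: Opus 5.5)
The plan is to write every object through the response map $\mathcal Q_H$ of \cref{lem:appendix-SLD-response} and to differentiate in $\theta$ with the chain rule. Since $S_a(\theta)=-\beta\mathcal Q_{H_\theta}(B_a(H_\theta))$ and $B_a(H_\theta)=\sum_i\theta_i\,\I[P_i,A_a]=-\sum_i\theta_iQ_{a,i}$ is linear in $\theta$, differentiating in $\theta_j$ gives
\begin{align*}
    T_{a,j}=-\beta\,\mathcal Q_{H}(Q_{a,j})-\beta\,\mathcal Q^{(1)}_{H}(B_a;P_j).
\end{align*}
Writing $\mathcal Q_H(Q_{a,j})=Q_{a,j}+(\mathcal Q_H(Q_{a,j})-Q_{a,j})$ isolates the leading term $-\beta Q_{a,j}$ of \cref{eqn:appendix-T-leading}. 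The remainder is
\begin{align*}
    R_{a,j}=-\beta\bigl(\mathcal Q_H(Q_{a,j})-Q_{a,j}\bigr)-\beta\,\mathcal Q^{(1)}_H(B_a;P_j).
\end{align*}

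\textbf{Bounding $R_{a,j}$.} The first piece of $R_{a,j}$ vanishes unless $Y_a\cap X_j\neq\varnothing$. In that case \cref{eqn:appendix-Q-minus-I} bounds it by $C\beta^2$. For fixed $j$ (resp.\ fixed $a$) there are at most $2k$ (resp.\ $D+1$) such partners, and the weight $e^{\kappa d}$ equals $1$ on them. For the second piece, expand $B_a=\sum_{i:X_i\cap Y_a\neq\varnothing}\theta_i\,\I[P_i,A_a]$, which has at most $D+1$ local terms supported within $B_1(Y_a)$. Applying \cref{eqn:appendix-Q-first-bound} gives $\|\mathcal Q^{(1)}_H(B_a;P_j)\|\le C\beta e^{-\mu_0(d(Y_a,X_j)-1)}$, so this piece is at most $C\beta^2 e^{-\mu_0 d(Y_a,X_j)}$. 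Multiplying by $e^{\kappa d}$ and summing over $a$ or over $j$ uses \cref{eqn:appendix-label-summability} with $c=\mu_0-\kappa>\log b$, which gives \cref{eqn:appendix-R-sum}. The same computation without the weight gives \cref{eqn:appendix-T-pointwise}; there the $Q_{a,j}$ term is nonzero only at distance $0$ and is bounded by $2\beta$.

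\textbf{Commutator and score-difference bounds.} For \cref{eqn:appendix-S-commutator}, apply \cref{eqn:appendix-Q-commutator} to each local term of $B_a$. For \cref{eqn:appendix-T-commutator-sum}, commute each piece of $T_{a,j}$ with $O_x$:
\begin{itemize}
    \item the $\mathcal Q_H(Q_{a,j})$ piece is bounded by \cref{eqn:appendix-Q-commutator} and carries $e^{-\mu_0 d(X_j,x)}$, with only $O(k)$ nonzero choices of $a$;
    \item the $\mathcal Q^{(1)}$ piece is bounded by \cref{eqn:appendix-Q-first-commutator} and carries $\beta^2\mathfrak F_\nu(Y_a',X_j,\{x\})$, with $Y_a'\subseteq B_1(Y_a)$.
\end{itemize}
The sum over $a$ is controlled by the tree convolution \cref{eqn:appendix-tree-convolution}, which yields $e^{-\zeta d(X_j,x)}$ because $\zeta<\nu-\log b$. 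For \cref{eqn:appendix-score-difference}, integrate along the segment $\theta^*+s(\theta-\theta^*)$:
\begin{align*}
    S_a(\theta)-S_a(\theta^*)=\int_0^1\sum_j(\theta_j-\theta^*_j)\,T_{a,j}\,\ud s .
\end{align*}
Since $|\theta_j-\theta_j^*|\le2$, the sup over $a$ of $\sum_j\|T_{a,j}\|$ is at most $C\beta$ by \cref{eqn:appendix-T-pointwise} together with label summability.

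\textbf{Second derivatives $U_{a,ij}$.} Differentiating again gives
\begin{align*}
    U_{a,ij}=-\beta\bigl[\mathcal Q^{(1)}(Q_{a,j};P_i)+\mathcal Q^{(1)}(Q_{a,i};P_j)+\mathcal Q^{(2)}(B_a;P_i,P_j)\bigr].
\end{align*}
The first two terms are bounded by $C\beta^2 e^{-\mu_0 d}$, each with a forced overlap constraint: $Y_a$ must meet $X_j$, respectively $X_i$. The $\mathcal Q^{(2)}$ term is bounded by \cref{eqn:appendix-Q-second-bound} as $C\beta^3\mathfrak F_\nu(B_1(Y_a),X_i,X_j)$. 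Replacing $B_1(Y_a)$ by $Y_a$ costs a factor $e^{2\nu}$ in $\mathfrak F_\nu$ and a bounded support size. The sum over $a,j$ with weight $e^{\kappa\ell}$ then follows from \cref{eqn:appendix-tree-sum}, since $\kappa<\nu-\log b$. For the first two terms, $\ell(Y_a,X_i,X_j)\le d(X_i,X_j)$ up to constants, given the overlap constraint, so their weighted sums reduce to \cref{eqn:appendix-label-summability}.

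\textbf{Main obstacle.} The hardest step is the exponent bookkeeping. Each appeal to the Lieb--Robinson lemmas enlarges supports by one shell; the kernel moments require $2v_{\mu_0}\beta\le\pi/2$, which is the stated $\beta$ threshold since $v_{\mu_0}=2(D+1)b^7$; and each convolution loses $\log b$ in the exponent. The choice $\mu_0=7\log b$ leaves enough room for $\nu\to\zeta\to\kappa$ while keeping every constant $\poly(k,D)$, because the constants are powers of $b$ and $D$.
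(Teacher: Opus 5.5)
Your proposal is correct and follows essentially the same route as the paper. Both write $T_{a,j}$ and $U_{a,ij}$ through $\mathcal Q_H$, $\mathcal Q_H^{(1)}$, $\mathcal Q_H^{(2)}$, bound each piece with \cref{lem:appendix-SLD-response}, and close the weighted sums with \cref{lem:appendix-exponential-summability}. The differences are cosmetic: the paper gets \cref{eqn:appendix-score-difference} directly from $\|S_a(\theta)\|_\infty\leq\beta\|B_a\|_\infty\leq2(D+1)\beta$ rather than by integrating $T_{a,j}$ along a segment, and it absorbs the first two terms of $U_{a,ij}$ into $\mathfrak F_\nu$ instead of using your (valid) bound $\ell(Y_a,X_i,X_j)\leq d(X_i,X_j)$. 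Also, $B_a=+\sum_i\theta_iQ_{a,i}$ since $Q_{a,i}=\I[P_i,A_a]$; this is a sign slip that does not affect your (correct) formula for $T_{a,j}$.
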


\begin{proof}
    Differentiating \cref{eqn:probablistic_S_a} gives the exact identities
\begin{align}
    T_{a,j}
    &=
    -\beta\left(
        \mathcal Q_H(Q_{a,j})
        +\mathcal Q_H^{(1)}(B_a;P_j)
    \right),
    \label{eqn:appendix-T-formula}\\
    U_{a,ij}
    &=
    -\beta\left(
        \mathcal Q_H^{(1)}(Q_{a,j};P_i)
        +\mathcal Q_H^{(1)}(Q_{a,i};P_j)
        +\mathcal Q_H^{(2)}(B_a;P_i,P_j)
    \right).
    \label{eqn:appendix-U-formula}
\end{align}
    Only terms meeting $Y_a$ contribute to $B_a(H)$, so
    \begin{align*}
        \operatorname{supp}(B_a)&\subseteq B_1(Y_a),
        &\|B_a\|_\infty&\leq2(D+1).
    \end{align*}
    Also $Q_{a,j}=\partial_aP_j$ vanishes unless $Y_a\subseteq X_j$,
    and, whenever it is nonzero, has norm $2$ and support
    $X_j\subseteq B_1(Y_a)$.
    Replacing any of these supports by $Y_a$ in a decay estimate costs
    at most a factor $e^{\mu_0}=b^7$ per incident tree edge, since
    $d(B_1(Y_a),X)\geq d(Y_a,X)-1$.

    Subtracting the leading term from \cref{eqn:appendix-T-formula} gives
    \begin{align*}
        R_{a,j}
        =
        -\beta\left(
            \mathcal Q_H(Q_{a,j})-Q_{a,j}
            +\mathcal Q_H^{(1)}(B_a;P_j)
        \right).
    \end{align*}
    The first difference vanishes unless $Y_a$ meets $X_j$ and is then
    $\mathcal{O}(\beta)$ by \cref{eqn:appendix-Q-minus-I}.  The second term is
    $\mathcal{O}(\beta e^{-\mu_0d(Y_a,X_j)})$ by
    \cref{eqn:appendix-Q-first-bound}.  Hence
    \begin{align*}
        \|R_{a,j}\|_\infty
        \leq C\beta^2e^{-\mu_0d(Y_a,X_j)},
    \end{align*}
    and \cref{eqn:appendix-label-summability} proves both sums in \cref{eqn:appendix-R-sum}.  The same bound together with the local leading term proves \cref{eqn:appendix-T-pointwise}.

    In \cref{eqn:appendix-U-formula}, the first two terms are bounded by
    $C\beta^2\mathfrak F_\nu(Y_a,X_i,X_j)$ using
    \cref{eqn:appendix-Q-first-bound}; the final term is
    $\mathcal{O}(\beta^3\mathfrak F_\nu)$ by
    \cref{eqn:appendix-Q-second-bound}.  The tree sum
    \cref{eqn:appendix-tree-sum} proves \cref{eqn:appendix-U-sum}.
    Equation \cref{eqn:appendix-score-difference} follows immediately
    from \cref{eqn:probablistic_S_a}, because $B_a$ is uniformly
    local and bounded.

    Finally, commuting \cref{eqn:appendix-T-formula} with $O_x$ and using
    \cref{eqn:appendix-Q-commutator,eqn:appendix-Q-first-commutator,eqn:appendix-tree-convolution}
    proves \cref{eqn:appendix-T-commutator-sum}. Integrating the ordinary
    LR bound in \cref{eqn:probablistic_S_a} gives
    \cref{eqn:appendix-S-commutator}.

    The constants in these bounds can be chosen polynomially in $k,D$: the LR prefactors and the norms and support sizes of $B_a$ are polynomial, every support shift costs a fixed power of $b$, and the geometric sums have ratios bounded away from one because
    $b\geq2$ and all exponent gaps are fixed positive multiples of $\log b$. The moment constants involve only $p=0,1,2$ and are
    absolute after rescaling by $\beta$.
\end{proof}

\begin{theorem}[Quasi-locality of the exact gradient]
\label{thm:high-temperature-gradient-shells}
    Suppose that every $P_j$ has weight at most $k$ and intersects at most $D$ other Hamiltonian terms.
    Define
    \begin{align}
        \beta_0 :=
        \min\left(
            1,
            \frac{\pi}{8(D+1)b^7}
        \right).
        \label{eqn:gradient-shell-beta-threshold}
    \end{align}
    There exist $C_{\mathrm{sh}}, C_{\mathrm{tail}}=\poly(k,D)$ such that, for every $0<\beta\leq\beta_0$, with $\kappa=\frac54\log b>\log b$,
    \begin{align}
        \mathcal G_j(\theta)
        =
        \sum_{r=0}^{\infty}G_{j,r}(\theta),
        \qquad
        \operatorname{supp}G_{j,r}(\theta)\subseteq B_r(X_j),
        \qquad
        \|G_{j,r}(\theta)\|_\infty
        \leq
        C_{\mathrm{sh}}\beta e^{-\kappa r}.
        \label{eqn:high-temperature-gradient-shells}
    \end{align}
    Equivalently, the order-$R$ partial sum $\mathcal G_j^{(R)}:=\sum_{r=0}^R G_{j,r}$ is strictly supported on $B_R(X_j)$ and approximates the exact gradient observable with exponentially small error,
    \begin{align}
        \operatorname{supp}\mathcal G_j^{(R)}
        &\subseteq B_R(X_j),
        &
        \|\mathcal G_j-\mathcal G_j^{(R)}\|_\infty
        &\leq
        C_{\mathrm{tail}}\beta e^{-\kappa R},
        \label{eqn:high-temperature-gradient-truncation}
    \end{align}
    The bounds are uniform in $n,m,j$, and $\theta\in\Theta$.
\end{theorem}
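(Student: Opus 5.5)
The plan is to decompose $\mathcal G_j$ into a sum over frame labels $a$. Each summand will then be split into telescoping shells around $X_j$, using the conditional expectations $\Pi_W$ and the commutator inequality \cref{eqn:appendix-conditional-expectation} from the proof of \Cref{thm:uniform-local-pauli-high-temperature}. Concretely, I would define
\begin{align*}
G_{j,0}:=\Pi_{B_0(X_j)}(\mathcal G_j),\qquad G_{j,r}:=\Pi_{B_r(X_j)}(\mathcal G_j)-\Pi_{B_{r-1}(X_j)}(\mathcal G_j),\quad r\ge1.
\end{align*}
The support condition $\operatorname{supp}G_{j,r}\subseteq B_r(X_j)$ is then automatic. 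The sum telescopes to $\mathcal G_j$, since $\mathcal G_j$ lives on the connected component of $X_j$ (terms from other components vanish because responses are zero across disconnected components). Contractivity of $\Pi_W$ gives
\begin{align*}
\|G_{j,r}\|\le\|\mathcal G_j-\Pi_{B_{r-1}(X_j)}\mathcal G_j\|\le\sum_{x\notin B_{r-1}(X_j)}\sup_{U_x}\|[\mathcal G_j,U_x]\|.
\end{align*}
So everything reduces to a single-site commutator bound of the form $\sup_{U_x}\|[\mathcal G_j,U_x]\|\le C\beta e^{-\zeta d(X_j,x)}$ with $\zeta=\frac94\log b$. Summing over the at most $|X_j|b^\ell$ sites at distance $\ell\ge r$ then yields $C\beta\sum_{\ell\ge r}b^\ell e^{-\zeta\ell}\le C\beta e^{-(\zeta-\log b)r}=C\beta e^{-\kappa' r}$ with $\kappa'=\frac54\log b=\kappa$. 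The $r=0$ term is handled by $\|\mathcal G_j\|\le\|G\|$-type norm bounds. The tail statement \cref{eqn:high-temperature-gradient-truncation} follows by summing the shells beyond $R$ geometrically.

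For the commutator bound, I would expand $[\mathcal G_j,O_x]$ via the Leibniz rule. The first piece is
\begin{align*}
\tfrac12[\{S_a,T_{a,j}\},O_x]=\tfrac12\bigl(\{[S_a,O_x],T_{a,j}\}+\{S_a,[T_{a,j},O_x]\}\bigr).
\end{align*}
The second piece, $-\I[[A_a,T_{a,j}],O_x]$, is rewritten by the Jacobi identity as $-\I([A_a,[T_{a,j},O_x]]+[[A_a,O_x],T_{a,j}])$. The last term vanishes unless $Y_a=\{x\}$.

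I would then use \Cref{prop:uniform-score-response} term by term:
\begin{itemize}
\item The term $\{S_a,[T_{a,j},O_x]\}$ is bounded using $\|S_a\|\le C\beta$ (from the integral representation, since $\|B_a\|\le 2(D+1)$) together with \cref{eqn:appendix-T-commutator-sum}. This gives $C\beta^2e^{-\zeta d(X_j,x)}$ after summing over $a$.
\item The term $[A_a,[T_{a,j},O_x]]$ is bounded by $2\sum_a\|[T_{a,j},O_x]\|\le C\beta e^{-\zeta d(X_j,x)}$, again by \cref{eqn:appendix-T-commutator-sum}.
\item The term $\{[S_a,O_x],T_{a,j}\}$ is bounded by combining \cref{eqn:appendix-S-commutator} and \cref{eqn:appendix-T-pointwise}. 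This gives $C\beta^2\sum_a e^{-\mu_0(d(Y_a,x)+d(Y_a,X_j))}\le C\beta^2e^{-\zeta d(X_j,x)}$ by the convolution bound \cref{eqn:appendix-convolution}, since $\zeta<\mu_0-\log b$.
\item The Jacobi term with $[A_a,O_x]\neq0$ involves only the two labels at site $x$, each bounded by $\|T_{a,j}\|\le C\beta e^{-\mu_0 d(x,X_j)}$.
\end{itemize}
Collecting these, and using $\beta\le1$, gives the desired $C\beta e^{-\zeta d(X_j,x)}$. The threshold $\beta_0$ is exactly the one under which \Cref{prop:uniform-score-response} applies.

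The main obstacle I expect is the bookkeeping of exponents. Each convolution over $a$ and each sum over shells consumes a factor $b$, and the final rate must remain at least $\kappa=\frac54\log b$. I would check that the weakest rate entering is $\zeta=\frac94\log b$ from \cref{eqn:appendix-T-commutator-sum}, and that one shell sum loses exactly $\log b$. A secondary point is confirming that the $r=0$ shell has norm $\mathcal O(\beta)$ rather than $\mathcal O(1)$. This follows because every term of $\mathcal G_j$ contains a factor $T_{a,j}=\mathcal O(\beta e^{-\mu_0 d(Y_a,X_j)})$, with the other factor ($S_a$ or $A_a$) bounded, and summing over $a$ by \cref{eqn:appendix-label-summability}. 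All constants inherit polynomial dependence on $k,D$ from \Cref{prop:uniform-score-response} and $|X_j|\le k$.
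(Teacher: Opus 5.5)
Your proposal is correct and matches the paper's proof essentially step for step. You use the same telescoping shells $\Pi_{B_r(X_j)}(\mathcal G_j)-\Pi_{B_{r-1}(X_j)}(\mathcal G_j)$, the same four-term Leibniz/Jacobi expansion of $[\mathcal G_j,O_x]$ controlled by \cref{eqn:appendix-T-commutator-sum,eqn:appendix-S-commutator,eqn:appendix-T-pointwise} together with the convolution bound \cref{eqn:appendix-convolution}, the same loss of one factor of $\log b$ from $\zeta=\frac94\log b$ to $\kappa=\frac54\log b$ in the shell sum, and the same $r=0$ estimate via $\sum_a\|T_{a,j}\|_\infty=\mathcal O(\beta)$ and $\sup_a\|S_a\|_\infty=\mathcal O(\beta)$.
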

\begin{proof}[Proof of \cref{thm:high-temperature-gradient-shells}]
    Let $O_x$ be any single-site observable of norm at most one.
    Commuting \cref{eqn:qsm-gradient-observable} with $O_x$ gives
    \begin{align*}
        [\mathcal G_j,O_x]
        =
        \sum_a\Bigl(
        &\frac12\{S_a,[T_{a,j},O_x]\}
        +\frac12\{[S_a,O_x],T_{a,j}\}\\
        &-\I[A_a,[T_{a,j},O_x]]
        +\I[T_{a,j},[A_a,O_x]]
        \Bigr).
    \end{align*}
    The first term is $\mathcal{O}(\beta^2e^{-\zeta d(X_j,x)})$ by
    \cref{eqn:appendix-T-commutator-sum}.  The second combines
    \cref{eqn:appendix-S-commutator,eqn:appendix-T-pointwise} and the
    convolution bound \cref{eqn:appendix-convolution}.  The third is
    $\mathcal{O}(\beta e^{-\zeta d(X_j,x)})$.  In the last term,
    $[A_a,O_x]$ vanishes unless $Y_a=\{x\}$, so
    \cref{eqn:appendix-T-pointwise} applies directly.  Consequently,
    \begin{align}
        \sup_{\|O_x\|_\infty\leq1}
        \|[\mathcal G_j(\theta),O_x]\|_\infty
        \leq
        C\beta e^{-\zeta d(X_j,x)}.
        \label{eqn:appendix-gradient-commutator}
    \end{align}

    Apply \cref{eqn:appendix-conditional-expectation} with
    $W=B_r(X_j)$.  Summing \cref{eqn:appendix-gradient-commutator} over
    the complement and using \cref{eq:Br-size-estimate} gives
    \begin{align}
        \|\mathcal G_j-
        \Pi_{B_r(X_j)}(\mathcal G_j)\|_\infty
        \leq
        C\beta e^{-(\zeta-\log b)r}
        =
        C\beta e^{-\kappa r}.
    \end{align}
    Set
    $\widetilde{\mathcal G}_{j,r}=\Pi_{B_r(X_j)}(\mathcal G_j)$,
    $G_{j,0}=\widetilde{\mathcal G}_{j,0}$, and
    $G_{j,r}=\widetilde{\mathcal G}_{j,r}
    -\widetilde{\mathcal G}_{j,r-1}$ for $r\geq1$.  These Hermitian operators
    have the required supports and telescope to $\mathcal G_j$ in
    operator norm.  The preceding approximation bound gives
    $\|G_{j,r}\|_\infty\leq C\beta e^{-\kappa r}$ after changing the
    constant.  The $r=0$ bound follows from
    $\sum_a\|T_{a,j}\|_\infty=\mathcal{O}(\beta)$ and
    $\sup_a\|S_a\|_\infty=\mathcal{O}(\beta)$.
\end{proof}

\subsubsection{Algorithm and complexity analysis}
\label{append:shared_copy_grad_estimate}

\begin{lemma}[Coloring overlapping neighborhoods]
\label{lem:appendix-neighborhood-coloring}
    For fixed $r$, let $\mathscr O_r$ be the graph on the term labels $1,\ldots,m$ in which $i$ and $j$ are adjacent when $B_r(X_i)\cap B_r(X_j)\neq\varnothing$. Then
    \begin{align}
        \deg_{\mathscr O_r}(j)
        \leq
        (D+1)|B_{2r}(X_j)|
        \leq
        (D+1)k b^{2r}.
        \label{eqn:appendix-neighborhood-degree}
    \end{align}
    Consequently, $\mathscr O_r$ admits a proper coloring with
    \begin{align}
        \chi_r
        \leq
        (1+(D+1)k)b^{2r}.
        \label{eqn:appendix-neighborhood-coloring}
    \end{align}
\end{lemma}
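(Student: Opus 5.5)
The degree bound is a counting argument in the site interaction graph of \cref{defn:interaction-dist}. The coloring bound then follows from the greedy (Brooks-type) bound $\chi\le 1+\Delta$.

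The first step is to show that adjacency forces $X_i$ to meet $B_{2r}(X_j)$. Suppose $i\neq j$ and $B_r(X_i)\cap B_r(X_j)\neq\varnothing$, and pick a site $y$ in the intersection. Then $d(y,X_i)\le r$ and $d(y,X_j)\le r$. Choose $x\in X_i$ with $d(x,y)\le r$. The triangle inequality gives $d(x,X_j)\le d(x,y)+d(y,X_j)\le 2r$, so $x\in B_{2r}(X_j)$. In particular $X_i\cap B_{2r}(X_j)\neq\varnothing$.

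The second step counts the neighbours of $j$. Assign to each neighbour $i$ a site $x_i\in X_i\cap B_{2r}(X_j)$. Each site $x$ lies in at most $D+1$ supports $X_i$, since all terms containing a fixed site pairwise overlap; this was already used in the proof of \cref{lem:low-intersection-lieb-robinson}. Hence the map $i\mapsto x_i$ is at most $(D+1)$-to-one. This gives $\deg_{\mathscr O_r}(j)\le (D+1)|B_{2r}(X_j)|$, which is the first inequality.

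For the second inequality, write $B_{2r}(X_j)=\bigcup_{x\in X_j}B_{2r}(x)$. Combining $|X_j|\le k$ with the ball bound \cref{eq:Br-size-estimate} applied to single sites, $|B_{2r}(x)|\le b^{2r}$, gives $|B_{2r}(X_j)|\le k b^{2r}$.

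The third step applies greedy coloring. Any graph with maximum degree $\Delta$ has a proper coloring with at most $\Delta+1$ colors. So
\[
\chi_r\le 1+(D+1)kb^{2r}\le (1+(D+1)k)b^{2r},
\]
using $b^{2r}\ge1$.

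The only point needing care is the ball-size estimate \cref{eq:Br-size-estimate}, stated as $|B_r(Y)|\le b^r$ for a single site. For $r\ge1$ it follows because the maximum degree of the site graph is at most $(D+1)(k-1)\le b-1$. A naive count then gives at most $1+(b-1)\sum_{\ell<r}(b-1)^\ell\le b^r$ sites. At $r=0$ the bound is trivially $1$.

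I expect no real obstacle. The one subtlety is double counting: a neighbour $i$ may meet $B_{2r}(X_j)$ at several sites, but assigning a single representative site $x_i$ avoids this. Also, $j$ itself is not counted as its own neighbour, though including it would only help the bound.
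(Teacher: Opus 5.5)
Your proposal is correct and follows the paper's proof essentially step for step: the same triangle-inequality argument places a site of $X_i$ in $B_{2r}(X_j)$, the same fact bounds the number of terms through any site by $D+1$ (your representative-site map is the paper's union bound over $x\in B_{2r}(X_j)$), and the proof finishes with $|B_{2r}(X_j)|\le k b^{2r}$ and greedy coloring. Your extra check of the ball-size estimate via $\sum_{\ell=0}^{r}(b-1)^\ell\le b^r$ is valid, and calling $\chi\le 1+\Delta$ ``Brooks-type'' is only a naming slip, since it is the greedy bound you actually use.
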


\begin{proof}
    Fix $j$ and write $N_r(j):=\{i\neq j:B_r(X_i)\cap B_r(X_j)\neq\varnothing\}$. If $i\in N_r(j)$, choose $z\in B_r(X_i)\cap B_r(X_j)$. There are $x_i\in X_i$ and $x_j\in X_j$ with $d(x_i,z)\leq r$ and $d(z,x_j)\leq r$, hence $x_i\in X_i\cap B_{2r}(X_j)$. It follows that
    \begin{align}
        N_r(j)
        \subseteq
        \bigcup_{x\in B_{2r}(X_j)}
        \{i:x\in X_i\}.
        \label{eqn:appendix-neighborhood-cover}
    \end{align}
    For a fixed qubit $x$, all terms with $x\in X_i$ mutually overlap. If there are $L$ such terms, each has at least $L-1$ neighbors in the term-overlap graph, so $L-1\leq D$ and $L\leq D+1$. Therefore
    \begin{align}
        |N_r(j)|
        \leq
        \sum_{x\in B_{2r}(X_j)}
        \#\{i:x\in X_i\}
        \leq
        (D+1)|B_{2r}(X_j)|.
    \end{align}
    Finally, $|X_j|\leq k$ and \cref{eq:Br-size-estimate} give
    \begin{align}
        |B_{2r}(X_j)|
        \leq
        \sum_{x\in X_j}|B_{2r}(x)|
        \leq
        kb^{2r}.
    \end{align}
    This proves \cref{eqn:appendix-neighborhood-degree}.  Greedy coloring
    uses at most one more color than the maximum degree, and
    $1+(D+1)kb^{2r}\leq(1+(D+1)k)b^{2r}$, proving
    \cref{eqn:appendix-neighborhood-coloring}.
\end{proof}

\begin{algorithm}[H]
\caption{Shared-copy gradient estimator}
\label{alg:shared-copy-gradient}
\KwIn{$N_{\rm copy}$ copies of $\sigma$, a fixed parameter $\theta$, target accuracy $\tau$, failure probability $\delta$, and shell operators $G_{j,r}(\theta)$ from \cref{thm:high-temperature-gradient-shells}.}
Set $a_r=C_{\mathrm{sh}}\beta e^{-\kappa r}$ and choose $R_\tau=\mathcal{O}(1+(\log(C\beta/\tau))_+)$ so that $\sum_{r>R_\tau}a_r\leq\tau/2$\;
\For{$r=0,\ldots,R_\tau$}{
Partition the labels into $\chi_r$ color classes
$\mathcal C_{r,1},\ldots,\mathcal C_{r,\chi_r}$ so that, for every color
$c$ and all distinct $j,j'\in\mathcal C_{r,c}$,
$B_r(X_j)\cap B_r(X_{j'})=\varnothing$\;
}
Set $S_{R_\tau}:=\sum_{s=0}^{R_\tau}a_s\sqrt{\chi_s}$ and $p_r=a_r\sqrt{\chi_r}/S_{R_\tau}$\;
\For{$u=1,\ldots,N_{\rm copy}$}{
Sample one radius $r$ with probability $p_r$, then sample one color $c$ uniformly from $\{1,\ldots,\chi_r\}$\;
Initialize the $u$th record by setting $Z_j^{(u)}=0$ for all $j$\;
Simultaneously measure the tensor-product POVM $\bigotimes_{j\in\mathcal C_{r,c}}\{E_{j,r}^+,E_{j,r}^-\}$, where $E_{j,r}^{\pm}=\frac12(I\pm G_{j,r}/a_r)$, and record all outcomes $Y_{j,r}^{(u)}\in\{-1,1\}$\;
Set $Z_j^{(u)}=(a_r\chi_r/p_r)Y_{j,r}^{(u)}$ for every $j\in\mathcal C_{r,c}$\;
}
Set $L=\lceil8\log(2m/\delta)\rceil$ and $s=\lfloor N_{\rm copy}/L\rfloor$; assume $s\geq\max(1,\lceil16C_{\mathrm{var}}\beta^2/\tau^2\rceil)$\;
Partition the first $sL$ records into $L$ blocks $I_1,\ldots,I_L$ of size $s$\;
Set $\widehat g_j=\operatorname{median}_{\ell\in\{1,\ldots,L\}}\bigl(|I_\ell|^{-1}\sum_{u\in I_\ell}Z_j^{(u)}\bigr)$ for every $j$\;
\KwOut{$\widehat g=(\widehat g_1,\ldots,\widehat g_m)$.}
\end{algorithm}

\begin{proof}[Proof of~\Cref{thm:shared-copy-gradient-estimation}]
    The choice of $R_\tau$ makes the truncation bias at most $\tau/2$. By \cref{lem:appendix-neighborhood-coloring}, the labels admit a coloring with
    \begin{align}
        \chi_r
        \leq
        (1+(D+1)k)b^{2r}.
        \label{eqn:gradient-shell-coloring}
    \end{align}
    The POVMs in each color class are compatible because their supports are disjoint, and they are valid because $\|G_{j,r}\|_\infty\leq a_r$. For fixed $j$ and $r$, the color containing $j$ is selected with probability $p_r/\chi_r$, while $\mathbb E(Y_{j,r}^{(u)}\mid r,j\text{ selected})=\Tr(\sigma G_{j,r})/a_r$. Therefore each record has
    \begin{align*}
        \mathbb E Z_j^{(u)}
        =
        \sum_{r=0}^{R_\tau}
        \frac{p_r}{\chi_r}
        \frac{a_r\chi_r}{p_r}
        \frac{\Tr(\sigma G_{j,r})}{a_r}
        =
        \Tr\left(\sigma\sum_{r=0}^{R_\tau}G_{j,r}\right).
    \end{align*}
    The same selection probability gives
    \begin{align*}
        \mathbb E (Z_j^{(u)})^2
        \leq
        \sum_{r=0}^{R_\tau}
        \frac{p_r}{\chi_r}
        \left(\frac{a_r\chi_r}{p_r}\right)^2
        =
        \sum_{r=0}^{R_\tau}\frac{a_r^2\chi_r}{p_r}.
    \end{align*}
    Substituting $p_r=a_r\sqrt{\chi_r}/S_{R_\tau}$ gives
    \begin{align*}
        \sum_{r=0}^{R_\tau}
        \frac{a_r^2\chi_r}{p_r}
        &=
        \sum_{r=0}^{R_\tau}
        a_r^2\chi_r
        \frac{S_{R_\tau}}{a_r\sqrt{\chi_r}}\\
        &=
        S_{R_\tau}
        \sum_{r=0}^{R_\tau}a_r\sqrt{\chi_r}
        =
        S_{R_\tau}^2.
    \end{align*}
    The decay--growth bound then gives
    \begin{align*}
        S_{R_\tau}
        &\leq
        C_{\mathrm{sh}}\beta
        \sqrt{1+(D+1)k}
        \sum_{r\geq0}(be^{-\kappa})^r\\
        &=\mathcal{O}(\beta).
    \end{align*}
    The implicit constants in this proof depend only on $k$ and $D$.
    The records are independent across $u$. Let $C_{\mathrm{var}}$ be a constant such that $\operatorname{Var}(Z_j^{(u)})\leq C_{\mathrm{var}}\beta^2$ for every $j$. If a block $I_\ell$ contains $s$ records, its coordinatewise average
    \begin{align*}
        \overline Z_{j,\ell}
        :=
        \frac1s\sum_{u\in I_\ell}Z_j^{(u)}
    \end{align*}
    satisfies $\operatorname{Var}(\overline Z_{j,\ell})\leq C_{\mathrm{var}}\beta^2/s$. Chebyshev's inequality therefore gives
    \begin{align*}
        \PP\left(
            |\overline Z_{j,\ell}-\mathbb EZ_j^{(u)}|>\frac{\tau}{2}
        \right)
        \leq
        \frac{4C_{\mathrm{var}}\beta^2}{s\tau^2}.
    \end{align*}
    Taking $s=\max(1,\lceil16C_{\mathrm{var}}\beta^2/\tau^2\rceil)$ makes one block bad with probability at most $1/4$. The coordinatewise median can be bad only if at least half of the $L$ independent blocks are bad, so Hoeffding's inequality gives
    \begin{align*}
        \PP\left(
            |\widehat g_j-\mathbb EZ_j^{(u)}|>\frac{\tau}{2}
        \right)
        \leq
        e^{-L/8}.
    \end{align*}
    With $L=\lceil8\log(2m/\delta)\rceil$, this probability is at most $\delta/(2m)$. A union bound controls all $m$ coordinates simultaneously, while the truncation contributes the remaining error $\tau/2$. The total number of copies is
    \begin{align*}
        N_{\rm copy}=sL
        =
        \mathcal{O}\left(
            \left(1+\frac{\beta^2}{\tau^2}\right)
            \log\frac{2m}{\delta}
        \right),
    \end{align*}
    proving \cref{eqn:shared-copy-gradient-cost}.
\end{proof}

\subsection{End-to-end sample complexity}
\label{append:high-temp-sample-complexity-proof}

\begin{lemma}[Uniform local thermal bias]
\label{lem:local-thermal-bias}
    Suppose $H(\theta)=\sum_{j=1}^m\theta_jP_j$, where the known,
    distinct, nonidentity Pauli terms have weight at most $k$ and each
    overlaps at most $D$ other terms. For every $\beta\geq0$, every
    operator $O$ on at most $q$ qubits satisfies
    \begin{align}
        \left|
            \Tr(\rho(\theta)O)-2^{-n}\Tr O
        \right|
        \leq
        2q(D+1)\beta\|O\|_\infty.
        \label{eqn:appendix-local-thermal-bias}
    \end{align}
    The bound is uniform in $n,m$, and $\theta\in\Theta=[-1,1]^m$.
\end{lemma}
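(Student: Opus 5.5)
We would prove \cref{lem:local-thermal-bias} by interpolating in inverse temperature. Set $f(s):=\Tr(\rho_s O)$ with $\rho_s:=e^{-sH}/\Tr(e^{-sH})$ and $H=H(\theta)$. Then $f(0)=2^{-n}\Tr O$ and $f(\beta)=\Tr(\rho(\theta)O)$, so it suffices to show $|f'(s)|\leq 2q(D+1)\|O\|_\infty$ for all $s\geq0$ and integrate from $0$ to $\beta$.

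The first step is to compute $f'(s)$ so that it exposes a commutator with $H$ rather than $H$ itself. We will not use the Duhamel form $-\int_0^1\Tr(\rho_s^{u}H\rho_s^{1-u}O)\,\ud u+\Tr(\rho_sH)\Tr(\rho_sO)$ directly, since it involves the full extensive $H$ and gives no volume-independent bound. Instead, split $H=H_O+H_{\rm far}$, where $H_O:=\sum_{j:X_j\cap\operatorname{supp}O\neq\varnothing}\theta_jP_j$. At most $q(D+1)$ terms meet $\operatorname{supp}O$, since each site lies in at most $D+1$ supports. Hence $\|H_O\|_\infty\leq q(D+1)$, and $[H_{\rm far},O]=0$.

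The second step exploits this commutation. Because $H_{\rm far}$ commutes with $O$, the contribution of $H_{\rm far}$ to the covariance-type expression should cancel or be controllable. Concretely, we write $f'(s)=-\operatorname{Cov}^{\rm KMS}_{\rho_s}(H,O)$, with Kubo--Mori covariance $\int_0^1\Tr(\rho_s^uH\rho_s^{1-u}O)\,\ud u-\Tr(\rho_sH)\Tr(\rho_sO)$. We then try to show that this equals $-\operatorname{Cov}^{\rm KMS}_{\rho_s}(H_O,O)$ up to terms bounded by $\|H_O\|\|O\|$. Each bounded piece is at most $\|H_O\|_\infty\|O\|_\infty$, and there are two of them, which gives the factor $2q(D+1)$.

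This cancellation is the main obstacle. For a product state it would be automatic, but here $H_{\rm far}$ does not commute with $H_O$, so $H_{\rm far}$ can still correlate with $O$ through $\rho_s$. If the direct cancellation fails, the first alternative is the standard interpolation in the \emph{local} coupling rather than in global temperature. Define $\rho^{(t)}\propto e^{-\beta(H_{\rm far}+tH_O)}$ and write $\Tr(\rho^{(1)}O)-\Tr(\rho^{(0)}O)=\int_0^1\frac{\ud}{\ud t}\Tr(\rho^{(t)}O)\,\ud t$. Each derivative is a Kubo--Mori covariance of $\beta H_O$ with $O$, so its absolute value is at most $2\beta\|H_O\|\|O\|\leq2q(D+1)\beta\|O\|$.

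This alternative then requires $\Tr(\rho^{(0)}O)=2^{-n}\Tr O$, which is where we need structure. For instance, $e^{-\beta H_{\rm far}}$ has zero overlap with nonidentity Paulis on $\operatorname{supp}O$, which holds if the terms of $H_{\rm far}$ act trivially there. That is true by definition of $H_{\rm far}$, so $e^{-\beta H_{\rm far}}=I_{\operatorname{supp}O}\otimes M$. The reduced state of $\rho^{(0)}$ on $\operatorname{supp}O$ is then maximally mixed, giving the required identity. This route seems cleanest, and we would adopt it as the main line.
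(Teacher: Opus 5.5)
Your adopted main line (interpolating $H_{\rm far}+tH_O$, bounding each $t$-derivative by $2\beta\|H_O\|_\infty\|O\|_\infty\leq 2q(D+1)\beta\|O\|_\infty$, and using that $e^{-\beta H_{\rm far}}$ acts as the identity on $\operatorname{supp}O$ so the $t=0$ marginal there is maximally mixed) is exactly the paper's proof, which justifies the derivative bound via Schatten H\"older, $\|\omega_t^sV\omega_t^{1-s}\|_1\leq\|V\|_\infty$. You should drop the preliminary interpolation in $\beta$ entirely: there $f'(s)=-\bigl(\Tr(\rho_sHO)-\Tr(\rho_sH)\Tr(\rho_sO)\bigr)$ is an ordinary covariance with the extensive $H$, which is not bounded uniformly in $n$ (e.g.\ for a nearest-neighbour energy term of the 2D Ising model near criticality), so no cancellation argument can rescue that route.
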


\begin{proof}
    Let $S=\operatorname{supp}(O)$ and separate the terms meeting $S$:
    \begin{align*}
        V:=\sum_{j:\operatorname{supp}(P_j)\cap S\neq\varnothing}
            \theta_jP_j,\qquad
        H_0:=H(\theta)-V.
    \end{align*}
    At most $D+1$ terms contain any one site, since such terms all
    overlap one another. Therefore $\|V\|_\infty\leq q(D+1)$.
    As $H_0$ acts trivially on $S$, its Gibbs state has maximally mixed
    marginal on $S$.

    Interpolate between $H_0$ and $H(\theta)$ by setting
    \begin{align*}
        H_t:=H_0+tV,\qquad
        \omega_t:=\frac{e^{-\beta H_t}}{\Tr(e^{-\beta H_t})},
        \qquad 0\leq t\leq1.
    \end{align*}
    Duhamel's formula and differentiation of the normalization give
    \begin{align*}
        \frac{\d\omega_t}{\d t}
        =-\beta\int_0^1
            \omega_t^s\bigl(V-\Tr(\omega_tV)I\bigr)
            \omega_t^{1-s}\,\ud s.
    \end{align*}
    Write $\|B\|_p:=(\Tr|B|^p)^{1/p}$ for the Schatten $p$-norm,
    $1\leq p<\infty$, with $|B|=(B^\dagger B)^{1/2}$; the
    $p=\infty$ case is the spectral norm. Schatten H\"older's inequality implies
    \begin{align*}
        \|\omega_t^s V\omega_t^{1-s}\|_1
        \leq\|\omega_t^s\|_{1/s}\|V\|_\infty
            \|\omega_t^{1-s}\|_{1/(1-s)}
        =\|V\|_\infty,
    \end{align*}
    with the endpoint cases understood by continuity. Consequently,
    $\|\d\omega_t/\d t\|_1\leq2\beta\|V\|_\infty$.
    Since $\omega_1=\rho(\theta)$ and
    $\Tr(\omega_0O)=2^{-n}\Tr O$, trace duality yields
    \begin{align*}
        |\Tr(\rho(\theta)O)-2^{-n}\Tr O|
        &\leq\|O\|_\infty
            \int_0^1\left\|\frac{\d\omega_t}{\d t}\right\|_1\ud t\\
        &\leq2\beta q(D+1)\|O\|_\infty.
    \end{align*}
\end{proof}

We next express the gradient and Hessian in terms of the score responses.
For these identities, allow a general $\lambda\in[0,1]$ and use the
$\lambda$-scores and weighted inner product defined
in~\cref{append:lambda_score}. Set
\begin{align*}
    \Delta S_{\lambda,a}(\theta)
    &:=S_{\lambda,a}(\theta)-S_{\lambda,a}(\theta^*),\\
    T^j_{\lambda,a}(\theta)
    &:=\partial_{\theta_j}S_{\lambda,a}(\theta),
    &
    U^{ij}_{\lambda,a}(\theta)
    &:=\partial_{\theta_i}\partial_{\theta_j}S_{\lambda,a}(\theta).
\end{align*}
Write $J_{Q,\lambda}(\theta):=\frac12\sum_a\|\Delta S_{\lambda,a}(\theta)\|^2_{\sigma,\lambda}$,
so that $J_{Q,1/2}=J_Q$. Since the target $\sigma=\rho(\theta^*)$ is
fixed, differentiating this expression gives the exact identities
\begin{align}
    g_{\lambda,j}(\theta)
    &:=\partial_{\theta_j}J_{Q,\lambda}(\theta)
    =\sum_a\Re\left\langle
        T^j_{\lambda,a}(\theta),\Delta S_{\lambda,a}(\theta)
    \right\rangle_{\sigma,\lambda},
    \label{eqn:general-lambda-objective-gradient}\\
    \mathsf H_{\lambda,ij}(\theta)
    &:=\partial_{\theta_i}\partial_{\theta_j}J_{Q,\lambda}(\theta)
    \nonumber\\
    &=\sum_a\Re\left(
        \left\langle T^i_{\lambda,a}(\theta),T^j_{\lambda,a}(\theta)
        \right\rangle_{\sigma,\lambda}
        +\left\langle U^{ij}_{\lambda,a}(\theta),\Delta S_{\lambda,a}(\theta)
        \right\rangle_{\sigma,\lambda}
    \right).
    \label{eqn:general-lambda-objective-hessian}
\end{align}

For the remainder of the proof, specialize to $\lambda=1/2$ and write
$S_a=S_{1/2,a}$, $T_{a,j}=T^j_{1/2,a}$, and
$U_{a,ij}=U^{ij}_{1/2,a}$, together with
$g=g_{1/2}=\nabla_\theta J_Q$ and
$\mathsf H=\mathsf H_{1/2}=\nabla_\theta^2J_Q$.
For Hermitian $X,Y$, abbreviate
$\langle X,Y\rangle_\sigma:=\langle X,Y\rangle_{\sigma,1/2}
=\frac12\Tr(\sigma\{X,Y\})$.
The commutator Gram matrix from~\cref{sec:high-temperature-learning}
can be written as
\begin{align}
    \Gamma_{i,j}
    &:=2^{-n}\sum_a\Tr\bigl(Q_{a,i}Q_{a,j}\bigr),
    &Q_{a,j}&:=\partial_a(P_j)=\I[P_j,A_a].
    \label{eqn:infinite-temperature-score-gram}
\end{align}
For the Pauli model and frame used here, \cref{eqn:score-gram-diagonal}
shows that $\Gamma$ is diagonal and
\begin{align}
    \Gamma_{\min}:=\min_j\Gamma_{j,j}\geq4,
    \qquad
    \max_j\Gamma_{j,j}\leq8k,
    \qquad
    \kappa(\Gamma)\leq2k.
    \label{eqn:score-gram-diagonal-bounds}
\end{align}

To see the leading high-temperature behavior, expand the SLD multiplier
from~\cref{eqn:g_lambda_ft}:
\begin{align*}
    \phi_{1/2}(z)=-2\I\tanh(z/2)=-\I z+\mathcal{O}(z^3).
\end{align*}
At fixed finite volume, uniformly for
$\theta,\theta^*\in\Theta$, this gives
\begin{align*}
    T_{a,j}(\theta)&=-\beta Q_{a,j}+\mathcal{O}(\beta^3),
    &U_{a,ij}(\theta)&=\mathcal{O}(\beta^3),\\
    \Delta S_a(\theta)
    &=-\beta\sum_j(\theta_j-\theta_j^*)Q_{a,j}+\mathcal{O}(\beta^3).
\end{align*}
Since $\sigma=2^{-n}I+\mathcal{O}(\beta)$ at fixed volume, substituting these
expansions into
\cref{eqn:general-lambda-objective-gradient,eqn:general-lambda-objective-hessian}
yields
\begin{align}
    g(\theta)
    &=\beta^2\Gamma(\theta-\theta^*)+\mathcal{O}(\beta^3),
    &\mathsf H(\theta)&=\beta^2\Gamma+\mathcal{O}(\beta^3).
    \label{eqn:formal-sld-gradient-hessian}
\end{align}
The following lemma makes the Hessian estimate uniform in the system
size and turns it into a contraction bound for the learning update.

\begin{lemma}[Uniform high-temperature Hessian]
\label{lem:finite-high-temperature-hessian}
    Let $g=\nabla_\theta J_Q$ and $\mathsf H=\nabla_\theta^2J_Q$, with
    the target $\theta^*$ fixed; write $g(\theta;\theta^*)$ when its
    target dependence is needed. Let $\Gamma$ and $\Gamma_{\min}$ be
    as in~\cref{eqn:infinite-temperature-score-gram,eqn:score-gram-diagonal-bounds}.
    There exist a constant $C_{\mathrm H}$ and a threshold
    $\beta_0\in(0,1]$, satisfying
    \begin{align*}
        C_{\mathrm H}=\poly(k,D),
        \qquad
        \beta_0=\Omega\bigl(1/\poly(k,D)\bigr),
    \end{align*}
    such that, for $0<\beta\leq\beta_0$,
    \begin{align}
        \sup_{\theta,\theta^*\in\Theta}
        \left\|
            \mathsf H(\theta)-\beta^2\Gamma
        \right\|_{\infty\to\infty}
        \leq
        C_{\mathrm H}\beta^3.
        \label{eqn:finite-high-temperature-hessian}
    \end{align}
    Let
    \begin{align}
        q
        &:=
        \frac{C_{\mathrm H}\beta}{\Gamma_{\min}}.
        \label{eqn:finite-high-temperature-contraction-factor}
    \end{align}
    After decreasing $\beta_0$ if necessary, $q<1$.  For each fixed
    target $\theta^*\in\Theta$, the projected population update
    \begin{align}
        \mathcal T_{\theta^*}(\theta)
        :=
        \Pi_\Theta\left(
            \theta-\beta^{-2}\Gamma^{-1}g(\theta)
        \right)
        \label{eqn:high-temperature-population-update}
    \end{align}
    is a contraction: for all $\theta,\eta\in\Theta$,
    \begin{align}
        \left\|
            \mathcal T_{\theta^*}(\theta)
            -
            \mathcal T_{\theta^*}(\eta)
        \right\|_\infty
        \leq
        q\|\theta-\eta\|_\infty.
        \label{eqn:high-temperature-population-pairwise-contraction}
    \end{align}
    In particular, $\mathcal T_{\theta^*}(\theta^*)=\theta^*$
    and
    \begin{align}
        \left\|
            \mathcal T_{\theta^*}(\theta)-\theta^*
        \right\|_\infty
        \leq
        q\|\theta-\theta^*\|_\infty.
        \label{eqn:high-temperature-population-contraction}
    \end{align}
\end{lemma}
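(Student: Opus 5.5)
Our approach is to work from the exact Hessian identity \cref{eqn:general-lambda-objective-hessian} at $\lambda=1/2$,
\begin{align*}
    \mathsf H_{ij}(\theta)=\sum_a\Bigl(\langle T_{a,i},T_{a,j}\rangle_\sigma+\langle U_{a,ij},\Delta S_a\rangle_\sigma\Bigr),
\end{align*}
and to estimate each piece uniformly in $n$ with \cref{prop:uniform-score-response} and \cref{lem:local-thermal-bias}. For the first piece we insert the decomposition $T_{a,j}=-\beta Q_{a,j}+R_{a,j}$ from \cref{eqn:appendix-T-leading}. This gives four terms.

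The leading term is $\beta^2\sum_a\langle Q_{a,i},Q_{a,j}\rangle_\sigma$. The operator $\frac12\{Q_{a,i},Q_{a,j}\}$ has norm at most $4$, support of size at most $2k$, and vanishes unless $Y_a\subseteq X_i\cap X_j$. Applying \cref{lem:local-thermal-bias} replaces $\sigma$ by $2^{-n}I$ at cost $\mathcal O(\beta\cdot k(D+1))$ per term, and this recovers $\beta^2\Gamma_{ij}$.

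For the $\infty\to\infty$ norm we need the row-sum bound $\sup_i\sum_j|\cdot|$. The error in the leading term sums over at most $2k$ labels $a$ and, for each $a$, over at most $D+1$ indices $j$ (those with $X_j\ni x$). It is therefore $\mathcal O(\poly(k,D)\beta^3)$.

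The cross terms $\beta\langle Q_{a,i},R_{a,j}\rangle_\sigma$ are bounded by $4\beta\|R_{a,j}\|$. Summing over the $\mathcal O(k)$ labels $a$ with $Q_{a,i}\neq0$ and over all $j$, the second sum in \cref{eqn:appendix-R-sum} gives $\mathcal O(k\beta^3)$. The symmetric cross term is handled the same way using the first sum in \cref{eqn:appendix-R-sum}. The quadratic term $\sum_{a,j}\|R_{a,i}\|\|R_{a,j}\|$ is bounded by the product of the two sums, which is $\mathcal O(\beta^4)$.

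The curvature piece satisfies $|\langle U_{a,ij},\Delta S_a\rangle_\sigma|\le\|U_{a,ij}\|\,\|\Delta S_a\|$. By \cref{eqn:appendix-score-difference} and \cref{eqn:appendix-U-sum}, its row sum is $\mathcal O(\beta\cdot\beta^2)=\mathcal O(\beta^3)$. Combining the four terms and the curvature piece gives \cref{eqn:finite-high-temperature-hessian}, with $\beta_0$ taken from \cref{prop:uniform-score-response}.

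The contraction statement follows from three observations.
\begin{itemize}
    \item $\Pi_\Theta$ is coordinatewise $1$-Lipschitz in $\ell^\infty$.
    \item The map $F(\theta)=\theta-\beta^{-2}\Gamma^{-1}g(\theta)$ has Jacobian $I-\beta^{-2}\Gamma^{-1}\mathsf H(\theta)=-\beta^{-2}\Gamma^{-1}(\mathsf H-\beta^2\Gamma)$.
    \item Since $\Gamma$ is diagonal with entries at least $\Gamma_{\min}$, this Jacobian has $\infty\to\infty$ norm at most $C_{\mathrm H}\beta/\Gamma_{\min}=q$.
\end{itemize}
The mean value inequality along the segment in the convex set $\Theta$ then gives \cref{eqn:high-temperature-population-pairwise-contraction}. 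Shrinking $\beta_0$ so that $C_{\mathrm H}\beta_0<4$ ensures $q<1$. Finally, $g(\theta^*)=0$ by \cref{lem:first-order-stationary}, so $\theta^*$ is a fixed point, which yields \cref{eqn:high-temperature-population-contraction}.

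The main obstacle is the leading term. We need the $\mathcal O(\beta)$ deviation of $\sigma$ from the maximally mixed state to enter only locally, so that the row sums do not grow with $n$. This is why the local bias lemma is used rather than a global bound on $\|\sigma-2^{-n}I\|$. A second point is that the row-sum bounds in \cref{prop:uniform-score-response} must be taken over the correct index. In particular, \cref{eqn:appendix-U-sum} is stated with $\sup_i$ and a sum over $(a,j)$, which is exactly the row-sum form required here.
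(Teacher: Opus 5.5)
Your proposal is correct and follows essentially the same route as the paper. Both split the exact SLD Hessian using $T_{a,j}=-\beta Q_{a,j}+R_{a,j}$ and recover $\beta^2\Gamma$ from the leading term via the local thermal-bias lemma, with row-sum error $\mathcal O(\beta^3)$. Both then control the mixed, $R$--$R$, and $U$--$\Delta S$ terms with the row/column summability bounds of \cref{prop:uniform-score-response}, and obtain the contraction from the integral mean-value bound, $\|\Gamma^{-1}\|_{\infty\to\infty}=\Gamma_{\min}^{-1}$, nonexpansiveness of $\Pi_\Theta$, and $g(\theta^*)=0$ (which you cite from \cref{lem:first-order-stationary} and the paper reads off from $\Delta S_a(\theta^*)=0$, the same argument).
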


\begin{proof}
    The fixed-volume expansion in \cref{eqn:formal-sld-gradient-hessian} is not by itself uniform in an extensive
    system, because the Hessian contains sums over $a$ and the parameter labels.  In the
    notation of this subsection, \cref{prop:uniform-score-response} gives
    \begin{align}
        T_{a,i}(\theta)
        &=
        -\beta\partial_a(P_i)+R_{a,i}(\theta),
        \label{eqn:finite-score-response-expansion}\\
        \sup_i\sum_a\|R_{a,i}(\theta)\|_\infty
        +
        \sup_a\sum_i\|R_{a,i}(\theta)\|_\infty
        &\leq
        C\beta^2,\nonumber\\
        \sup_i\sum_{a,j}
        \|U_{a,ij}(\theta)\|_\infty
        &\leq
        C\beta^2,\nonumber\\
        \sup_a\|\Delta S_a(\theta)\|_\infty
        &\leq
        C\beta.
        \label{eqn:uniform-response-summary}
    \end{align}
    These row- and column-summability bounds are uniform in
    $\theta,\theta^*\in\Theta$ and have constants independent of
    $n$ and $m$.

    Specializing \cref{eqn:general-lambda-objective-hessian} to the SLD,
    the contribution obtained by retaining the leading term in both
    score responses is
    \begin{align}
        L_{i,j}
        :=
        \frac{\beta^2}{2}
        \sum_a
        \Tr\left(
            \sigma
            \left\{
                \partial_a(P_i),\partial_a(P_j)
            \right\}
        \right).
        \label{eqn:finite-temperature-score-gram-term}
    \end{align}
    For fixed $i$, only a number of pairs $(a,j)$ bounded in terms of
    $k$ and $D$ contribute.  Each anticommutator in
    \cref{eqn:finite-temperature-score-gram-term} has support on at most
    $2k$ qubits and has norm at most $8$.
    Hence \cref{lem:local-thermal-bias} allows us to replace
    its expectation in $\sigma=\rho(\theta^*)$ by its normalized
    trace with row-sum error at most $C\beta^3$, where
    $C=\poly(k,D)$.  The normalized-trace
    contribution is exactly
    \begin{align*}
        \beta^2 2^{-n-1}\sum_a
        \Tr\left(
            \left\{
                \partial_a(P_i),\partial_a(P_j)
            \right\}
        \right)
        =
        \beta^2\Gamma_{i,j}.
    \end{align*}

    To bound the response remainders, note that locality of the Pauli
    strings and of the derivation frame gives
    \begin{align*}
        \sup_i\sum_a\|\partial_a(P_i)\|_\infty
        &\leq
        4k,
        &
        \sup_a\sum_j\|\partial_a(P_j)\|_\infty
        &\leq
        2(D+1).
    \end{align*}
    Combining these estimates with
    \cref{eqn:uniform-response-summary}, the two mixed
    $\partial_a(P)$--$R$ contributions have row-sum norm
    $\mathcal{O}(\beta^3)$, while the $R$--$R$ contribution is $\mathcal{O}(\beta^4)$.
    The remaining term in the exact Hessian satisfies
    \begin{align*}
        \sup_i\sum_j
        \left|
            \sum_a
            \left\langle
                U_{a,ij}(\theta),
                \Delta S_a(\theta)
            \right\rangle_\sigma
        \right|
        &\leq
        \left(
            \sup_i\sum_{a,j}
            \|U_{a,ij}(\theta)\|_\infty
        \right)
        \left(
            \sup_a
            \|\Delta S_a(\theta)\|_\infty
        \right)
        \nonumber\\
        &\leq
        C\beta^3.
    \end{align*}
    Enlarging the constant if necessary proves
    \cref{eqn:finite-high-temperature-hessian}.

    It remains to prove the contraction claim.  For a fixed target, let
    \begin{align*}
        F(\theta)
        :=
        \theta-\beta^{-2}\Gamma^{-1}g(\theta)
    \end{align*}
    denotes the unprojected update.  Since $\Theta$ is convex, the
    fundamental theorem of calculus and
    \cref{eqn:finite-high-temperature-hessian} give, for
    $\theta,\eta\in\Theta$,
    \begin{align*}
        F(\theta)-F(\eta)
        &=
        \int_0^1
        \left(
            I-\beta^{-2}\Gamma^{-1}
            \mathsf H\bigl(\eta+t(\theta-\eta)\bigr)
        \right)
        (\theta-\eta)\,\ud t,\\
        \|F(\theta)-F(\eta)\|_\infty
        &\leq
        \frac{C_{\mathrm H}\beta}{\Gamma_{\min}}
        \|\theta-\eta\|_\infty.
    \end{align*}
    Here we used that $\Gamma$ is diagonal and therefore
    $\|\Gamma^{-1}\|_{\infty\to\infty}=\Gamma_{\min}^{-1}$.
    Coordinatewise projection onto $\Theta$ is nonexpansive in
    $\ell_\infty$, which proves
    \cref{eqn:high-temperature-population-pairwise-contraction}.
    Finally, \cref{eqn:general-lambda-objective-gradient} gives
    $g(\theta^*)=0$ because
    $\Delta S_a(\theta^*)=0$.  Thus the update fixes
    $\theta^*$, and
    \cref{eqn:high-temperature-population-contraction} follows.
\end{proof}

\begin{lemma}[One-step statistical error]
\label{lem:one-step-high-temperature-error}
    Under \cref{lem:finite-high-temperature-hessian}, let
    $\theta^{(t+1)}$ be the update in
    \cref{eqn:stochastic-high-temperature-update}.  If
    \begin{align}
        \|\widehat g_t
        -g(\theta^{(t)};\theta^*)\|_\infty
        \leq\tau_t,
    \end{align}
    then
    \begin{align}
        \|\theta^{(t+1)}-\theta^*\|_\infty
        \leq
        q\|\theta^{(t)}-\theta^*\|_\infty
        +
        \frac{\tau_t}{\Gamma_{\min}\beta^2}.
        \label{eqn:one-step-high-temperature-error}
    \end{align}
\end{lemma}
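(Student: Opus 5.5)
We split the noisy update into its exact population part plus a perturbation, and control each with results already available. Write $F(\theta):=\theta-\beta^{-2}\Gamma^{-1}g(\theta)$ for the unprojected population update and $\widehat F_t:=\theta^{(t)}-\beta^{-2}\Gamma^{-1}\widehat g_t$ for the stochastic one. Then $\theta^{(t+1)}=\Pi_\Theta(\widehat F_t)$. Likewise $\mathcal T_{\theta^*}(\theta^{(t)})=\Pi_\Theta(F(\theta^{(t)}))$ by \cref{eqn:high-temperature-population-update}.

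First, we use that coordinatewise projection onto the box $[-1,1]^m$ is $1$-Lipschitz in $\ell_\infty$. Each coordinate is a clamp, which is nonexpansive on $\RR$. This gives
\begin{align*}
    \|\theta^{(t+1)}-\mathcal T_{\theta^*}(\theta^{(t)})\|_\infty
    \leq\|\widehat F_t-F(\theta^{(t)})\|_\infty
    =\beta^{-2}\|\Gamma^{-1}(\widehat g_t-g(\theta^{(t)};\theta^*))\|_\infty.
\end{align*}
Since $\Gamma$ is diagonal with entries at least $\Gamma_{\min}$ by \cref{eqn:score-gram-diagonal-bounds}, we have $\|\Gamma^{-1}\|_{\infty\to\infty}=\Gamma_{\min}^{-1}$. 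The right-hand side is therefore at most $\tau_t/(\Gamma_{\min}\beta^2)$ under the hypothesis on $\widehat g_t$.

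Second, the iterate $\theta^{(t)}$ lies in $\Theta$: it is a projected point for $t\geq1$, and $\theta^{(0)}=0\in\Theta$. So \cref{eqn:high-temperature-population-contraction} of \cref{lem:finite-high-temperature-hessian} applies and gives $\|\mathcal T_{\theta^*}(\theta^{(t)})-\theta^*\|_\infty\leq q\|\theta^{(t)}-\theta^*\|_\infty$. The triangle inequality, splitting $\theta^{(t+1)}-\theta^*$ through $\mathcal T_{\theta^*}(\theta^{(t)})$, combines the two bounds into \cref{eqn:one-step-high-temperature-error}.

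There is no real obstacle. The only point needing care is to compare against the projected population map $\mathcal T_{\theta^*}$, rather than against $F$ directly. This way the contraction lemma, whose statement already includes the projection, is used as stated, and we need only the nonexpansiveness of $\Pi_\Theta$ and the diagonal structure of $\Gamma$.
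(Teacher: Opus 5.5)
Your proposal is correct and follows the paper's proof. You insert and subtract the projected population update $\mathcal T_{\theta^*}(\theta^{(t)})$, contract one piece using the Hessian lemma, and bound the other by $\tau_t/(\Gamma_{\min}\beta^2)$ via $\|\Gamma^{-1}\|_{\infty\to\infty}=\Gamma_{\min}^{-1}$; you also make explicit two points the paper leaves implicit, namely the $\ell_\infty$ nonexpansiveness of $\Pi_\Theta$ for the noise term and the membership $\theta^{(t)}\in\Theta$ needed to invoke the contraction.
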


\begin{proof}
    Insert and subtract the population update
    \cref{eqn:high-temperature-population-update}.  Its contribution
    contracts by \cref{eqn:high-temperature-population-contraction}.
    The remaining term is bounded by
    \begin{align*}
        \left\|
            \beta^{-2}\Gamma^{-1}
            \bigl(
                \widehat g_t
                -g(\theta^{(t)};\theta^*)
            \bigr)
        \right\|_\infty
        \leq
        \frac{\tau_t}{\Gamma_{\min}\beta^2}.
    \end{align*}
\end{proof}

\begin{lemma}[Geometric stochastic contraction]
\label{lem:geometric-stochastic-contraction}
    Under \cref{lem:one-step-high-temperature-error}, let $q<r<1$,
    let $0<\varepsilon<R_0$, and suppose
    $\|\theta^{(0)}-\theta^*\|_\infty\leq R_0$.  Choose
    \begin{align}
        \tau_t
        =
        \Gamma_{\min}\beta^2
        (r-q)R_0r^t.
        \label{eqn:nonresonant-gradient-schedule}
    \end{align}
    If every gradient estimate meets its prescribed tolerance, then
    \begin{align}
        \|\theta^{(t)}-\theta^*\|_\infty
        \leq
        R_0r^t.
        \label{eqn:stochastic-geometric-contraction}
    \end{align}
    Hence
    \begin{align}
        T
        =
        \left\lceil
            \frac{\log(R_0/\varepsilon)}
            {\log(1/r)}
        \right\rceil
        \label{eqn:high-temperature-iteration-count}
    \end{align}
    iterations suffice for coefficient error at most $\varepsilon$.
\end{lemma}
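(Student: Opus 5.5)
The plan is an induction on $t$ driven by the one-step bound of \cref{lem:one-step-high-temperature-error}. The base case $t=0$ is exactly the hypothesis $\|\theta^{(0)}-\theta^*\|_\infty\leq R_0=R_0r^0$.

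For the inductive step, assume $\|\theta^{(t)}-\theta^*\|_\infty\leq R_0r^t$ and that the estimate $\widehat g_t$ meets its tolerance $\tau_t$. Then \cref{eqn:one-step-high-temperature-error} together with the schedule \cref{eqn:nonresonant-gradient-schedule} gives
\begin{align*}
    \|\theta^{(t+1)}-\theta^*\|_\infty
    \leq qR_0r^t+\frac{\Gamma_{\min}\beta^2(r-q)R_0r^t}{\Gamma_{\min}\beta^2}
    =qR_0r^t+(r-q)R_0r^t
    =R_0r^{t+1}.
\end{align*}
This closes the induction and proves \cref{eqn:stochastic-geometric-contraction}. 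The schedule is chosen precisely so that the statistical contribution fills the gap between the population contraction rate $q$ and the target rate $r$. Since $q<r$, each $\tau_t$ is strictly positive, so the schedule is admissible for the gradient estimator.

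For the iteration count, I need $R_0r^T\leq\varepsilon$, which is equivalent to $T\log(1/r)\geq\log(R_0/\varepsilon)$. The hypothesis $\varepsilon<R_0$ makes the right-hand side positive, and $r<1$ makes $\log(1/r)>0$. Hence the ceiling in \cref{eqn:high-temperature-iteration-count} is a well-defined positive integer that satisfies this inequality.

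There is no substantive obstacle. The only point requiring care is that the induction is deterministic: it presupposes that all $T$ estimates meet their tolerances. The probabilistic control, namely allocating failure probabilities across rounds and taking a union bound, belongs to the end-to-end theorem, not to this lemma.
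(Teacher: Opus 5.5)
Your proof is correct and matches the paper's argument: induction on $t$ from the one-step bound, with the schedule chosen so that $qR_0r^t+(r-q)R_0r^t=R_0r^{t+1}$, and then solving $R_0r^T\leq\varepsilon$ for $T$. Your remark that the lemma is deterministic, with the allocation of failure probabilities and the union bound left to the end-to-end theorem, also matches how the paper organizes the argument.
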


\begin{proof}
    The claim holds at $t=0$.  If it holds at $t$, then
    \cref{eqn:one-step-high-temperature-error,eqn:nonresonant-gradient-schedule}
    give
    \begin{align*}
        \|\theta^{(t+1)}-\theta^*\|_\infty
        \leq
        qR_0r^t+(r-q)R_0r^t
        =
        R_0r^{t+1}.
    \end{align*}
    Induction proves \cref{eqn:stochastic-geometric-contraction}, and
    the stated value of $T$ makes its right-hand side at most
    $\varepsilon$.
\end{proof}

\begin{proof}[Proof of~\Cref{thm:end-to-end-qsm-upper-bound}]
    Choose $\beta_0$ below the thresholds in
    \cref{lem:finite-high-temperature-hessian,thm:high-temperature-gradient-shells}
    and small enough that $q\leq q_0:=1/2$ and $\beta_0\leq1/(6k)$. Conditioned on the preceding
    rounds, $\theta^{(t)}$ is fixed, so
    \cref{thm:shared-copy-gradient-estimation} applies to the adaptive
    gradient observable at that round, using fresh copies of $\sigma$. It gives gradient error at most
    $\tau_t$ with conditional failure probability $\delta_t$ using
    \begin{align}
        N_t
        =
        \mathcal{O}\left(
            \frac{\beta^2}{\tau_t^2}
            \log\frac{2m}{\delta_t}
        \right)
        \label{eqn:finite-gradient-estimation-cost}
    \end{align}
    copies: the tolerances below satisfy $\tau_t\leq\frac34\Gamma_{\min}\beta^2\leq6k\beta^2\leq\beta$, so the additive $1$ in~\cref{eqn:shared-copy-gradient-cost} is absorbed.

    Since $\theta^*\in[-1,1]^m$, the initialization satisfies
    $\|\theta^{(0)}-\theta^*\|_\infty\leq R_0$ with $R_0=1$.
    Choose $r=(1+q_0)/2=3/4$ and use
    \cref{eqn:nonresonant-gradient-schedule}.  For the $T$ in
    \cref{eqn:high-temperature-iteration-count}, assign
    \begin{align}
        \delta_t
        =
        \delta\,
        \frac{(1-r^2)r^{2(T-1-t)}}{1-r^{2T}},
        \qquad
        0\leq t<T.
    \end{align}
    These probabilities sum to $\delta$.  Substituting the tolerance schedule in \cref{eqn:nonresonant-gradient-schedule} into \cref{eqn:finite-gradient-estimation-cost} makes the copy costs $N_t$ grow geometrically.  With $s=T-1-t$,
    \begin{align*}
        \sum_{t=0}^{T-1}N_t
        &=
        \mathcal{O}\left(
            \frac{1}
            {\Gamma_{\min}^2\beta^2(r-q)^2
            R_0^2r^{2(T-1)}}
            \sum_{s=0}^{T-1}
            r^{2s}
            \left(
                \log\frac{2m}{\delta}
                +\mathcal{O}(s+1)
            \right)
        \right).
    \end{align*}
    The two finite sums are bounded by their infinite counterparts, $\sum_{s\geq0}r^{2s}=(1-r^2)^{-1}$ and $\sum_{s\geq0}(s+1)r^{2s}=(1-r^2)^{-2}$. Since $r=3/4$, both are absolute constants independent of $T$. Since $R_0r^{T-1}>\varepsilon$ and
    $r-q\geq r-q_0 = (1-q_0)/2$, the preceding display is
    \cref{eqn:end-to-end-qsm-copy-bound}.  On the event that all
    gradient estimates succeed,
    \cref{lem:geometric-stochastic-contraction} gives the desired
    coefficient error; the chosen failure probabilities make this
    event occur with probability at least $1-\delta$.
\end{proof}

\section{Information-theoretic lower bounds for learning Gibbs states}
\label{sec:information-lower-bounds}

We show that a Hamiltonian coefficient becomes exponentially harder to
resolve when it changes only a thermally suppressed population.  This
effect already appears in a noninteracting Pauli model.  Its
high-temperature limit recovers the minimax rate
$\Omega(\log(M/\delta)/(\beta^2\epsilon^2))$.  Both conclusions concern
the Gibbs-state statistical experiment and therefore apply
independently of the estimator used.

Consider
\begin{align}
    H(\theta)=\sum_{j=1}^M\theta_jP_j,
    \qquad
    \rho_\theta=\frac{e^{-\beta H(\theta)}}
    {\Tr(e^{-\beta H(\theta)})},
    \qquad
    \theta\in[-1,1]^M ,
    \label{eqn:lower-bound-model}
\end{align}
where the $P_j$ are fixed Pauli strings and $\beta>0$ is known.  An
estimator based on $m$ copies has accuracy $\epsilon$ and failure
probability $\delta$ if
\begin{align}
    \PP_\theta\left(
        \norm{\widehat\theta-\theta}_\infty\leq\epsilon
    \right)\geq1-\delta
    \label{eqn:lower-bound-guarantee}
\end{align}
uniformly over the parameter class.  The results below allow arbitrary
collective measurements on the $m$ copies.

\begin{theorem}[Information-theoretic sample lower bound]
    \label{thm:minimax-sample-lower-bound}
    For any $\beta>0$, $M\geq2$, $0<\epsilon<1/4$, and
    $0<\delta\leq1/4$, there exists a noninteracting one-local Pauli
    model with $M$ unknown coefficients for which every estimator
    satisfying~\cref{eqn:lower-bound-guarantee} requires
    \begin{align}
        m
        =
        \Omega\left(
            \frac{e^{\beta(2-8\epsilon)}}
            {\beta^2\epsilon^2}
            \log\frac{M}{\delta}
        \right).
        \label{eqn:hidden-sector-minimax-lower-bound}
    \end{align}
    For $\beta=\mathcal{O}(1)$, this reduces to the high-temperature
    minimax rate
    $m=\Omega(\log(M/\delta)/(\beta^2\epsilon^2))$.
\end{theorem}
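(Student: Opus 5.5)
We construct a noninteracting one-local instance and reduce the estimation problem to a multiple-hypothesis test, then apply Fano's inequality (for the $\log M$ dependence) together with a two-point/Le Cam bound (for the $\log(1/\delta)$ dependence). Take $n=M$ qubits and $P_j=Z_j$, so that $\rho_\theta=\bigotimes_{j}\rho_{\theta_j}$ with single-qubit states $\rho_{\theta_j}\propto e^{-\beta\theta_j Z}$. Every state in the family is diagonal, so each copy is equivalent to a product of classical Bernoulli variables with $p(\theta_j)=e^{\beta\theta_j}/(2\cosh\beta\theta_j)$. Arbitrary collective measurements on $m$ copies then cannot beat the classical sample (data processing), and all quantum relative entropies reduce to classical KL divergences.

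To obtain the $e^{2\beta}$ factor, center the hypotheses near the thermally suppressed edge of the parameter box. Fix the base point $\theta_j=1-4\epsilon$ for all $j$, under which the excited population is $\approx e^{-2\beta(1-4\epsilon)}$. The $j$-th alternative perturbs coordinate $j$ only, to $1-4\epsilon+2\epsilon\cdot s$ (with sign $s=\pm1$), staying inside $[-1,1]$ because $\epsilon<1/4$. Distinct hypotheses are then $2\epsilon$-separated in $\ell_\infty$, so an $\epsilon$-accurate estimator identifies the hypothesis.

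The key computation is the single-qubit KL (or $\chi^2$) divergence between Bernoulli laws with parameters $\theta$ and $\theta+2\epsilon$. Using $\mathrm{KL}\le \chi^2\approx (\Delta p)^2/(p(1-p))$ with $\Delta p\approx 2\epsilon\beta\, p(1-p)$ gives
\[
\mathrm{KL}\lesssim 4\epsilon^2\beta^2\, p(1-p)\lesssim \epsilon^2\beta^2 e^{-2\beta(1-4\epsilon)}.
\]
This factor $p(1-p)$ is exactly the thermal suppression of the relevant subspace. I would bound it uniformly over the interval $[1-6\epsilon,1-2\epsilon]$ via the mean-value theorem applied to $\log$-odds. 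This step is the main obstacle: keeping the constants honest so that the exponent comes out as $\beta(2-8\epsilon)$ rather than something weaker. I would track the worst endpoint of the interval to pin this down.

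With this bound, consider the $M+1$ hypotheses (the base point and the $M$ one-coordinate alternatives). Each alternative differs from the base point in one tensor factor only, so the $m$-copy KL to the base is at most $m\,\epsilon^2\beta^2 e^{-\beta(2-8\epsilon)}\cdot C$. Fano's inequality, in the form requiring mutual information $\gtrsim (1-\delta)\log M-\log 2$, or better the Birgé/Tsybakov version that yields $\log(M/\delta)$ directly via $\mathrm{KL}\ge (1-\delta)\log(M)+\log(1/\delta)\cdot c$ for $\delta\le1/4$, gives
\[
m\gtrsim \frac{e^{\beta(2-8\epsilon)}}{\beta^2\epsilon^2}\log\frac{M}{\delta}.
\]
If a single inequality does not cover both regimes, I would combine the Fano bound (giving $\log M$) with the two-point Bretagnolle–Huber bound $\delta\ge \tfrac14 e^{-\mathrm{KL}}$ (giving $\log(1/\delta)$), and take the maximum, which is $\Omega$ of the sum. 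Finally, for $\beta=\mathcal O(1)$ the exponential factor is bounded below by a constant, which yields the stated high-temperature rate.
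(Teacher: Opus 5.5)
Your overall route is the paper's: $M$ independent one-qubit $Z$ blocks, hypotheses placed near the suppressed edge $\theta_j=1$, and a per-block relative entropy of order $\beta^2\epsilon^2$ times the excited-state population. This is followed by Fano with the Holevo bound for $\log M$, a two-point bound for $\log(1/\delta)$, and taking the maximum. The paper gets the divergence bound from its Gibbs chord identity and thermally-hidden-direction theorem, which also handle noncommuting perturbations. Your direct Bernoulli computation is a legitimate shortcut here because every hypothesis is diagonal.

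Two steps fail as written. First, $\ell_\infty$-separation of exactly $2\epsilon$ does not force an $\epsilon$-accurate estimator to identify the hypothesis, because the closed $\epsilon$-balls touch. For the base point together with the upward alternatives, the data-independent output $\widehat\theta\equiv(1-3\epsilon,\dots,1-3\epsilon)$ is within $\epsilon$ of every hypothesis. You need separation strictly larger than $2\epsilon$.

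Second, the exponent does not come out as claimed. Along a chord the relevant variance is $\operatorname{Var}_\theta(Z)=\operatorname{sech}^2(\beta\theta)\le 4e^{-2\beta\theta}$, which is largest at the \emph{smallest} $\theta$ on the chord. On your interval $[1-6\epsilon,1-2\epsilon]$ the worst endpoint is $1-6\epsilon$, so tracking it yields $e^{-\beta(2-12\epsilon)}$, not $e^{-\beta(2-8\epsilon)}$. The discrepancy $e^{4\beta\epsilon}$ is unbounded in $\beta$; at $\epsilon=1/6$ you lose the exponential factor entirely. This is not slack in your estimate. For large $\beta\epsilon$ (and $\epsilon<1/6$), the downward ($s=-1$) alternatives genuinely have divergence of order $\beta\epsilon\,e^{-\beta(2-12\epsilon)}$ to the base point, so they dominate the Fano average and spoil the rate.

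The repair is a one-sided move toward the edge with step strictly larger than $2\epsilon$. For example, take $\theta_j\in\{1-4\epsilon,1\}$, which is exactly the paper's pair. The chord identity then gives $D\le\frac12(8\beta\epsilon)^2e^{-\beta(2-8\epsilon)}$ with either endpoint as reference, since the variance along the chord is maximized at $1-4\epsilon$.

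Also mind the corner $\delta$ near $1/4$ with small $M$. Bretagnolle--Huber gives only $mD\ge\log(1/(4\delta))$, which vanishes at $\delta=1/4$, and the weak Fano form $(1-\delta)\log M-\log 2$ is negative at $M=2$. Either run Fano over all $M+1\ge 3$ hypotheses, or use $mD\ge d_{\mathrm{bin}}(1-\delta\Vert\delta)$ as the paper does.
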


The proof and the explicit one-qubit construction are given in
\cref{sec:thermally-hidden-directions}.  The two-qubit construction of
Ref.~\cite{haah2024learning} is discussed separately in
\cref{append:hkt-hidden-sector}.

We first identify the thermal mechanism that controls the relative
entropy between nearby Gibbs states and then apply it to the
noninteracting model in~\cref{thm:minimax-sample-lower-bound}.

\subsection{Relative entropy along a Hamiltonian path}

The statistical distance between two Gibbs states is controlled by the
thermal fluctuations of the Hamiltonian direction that separates them.
For a full-rank state $\rho$ and a Hermitian operator $V$, define the
Bogoliubov--Kubo--Mori covariance
\begin{align}
    \operatorname{Cov}^{\mathrm{BKM}}_\rho(V,V)
    :=
    \int_0^1
        \Tr\left(\rho^uV\rho^{1-u}V\right)\ud u
    -\Tr^2(\rho V) .
    \label{eqn:bkm-covariance}
\end{align}

\begin{lemma}[Gibbs chord identity]
    \label{lem:gibbs-chord}
    Let
    \begin{align*}
        H_s=H_0+sV,
        \qquad
        \rho_s=\frac{e^{-\beta H_s}}{\Tr(e^{-\beta H_s})},
        \qquad 0\leq s\leq1 .
    \end{align*}
    Then
    \begin{align}
        D(\rho_1\Vert\rho_0)
        =
        \beta^2\int_0^1
        s\,\operatorname{Cov}^{\mathrm{BKM}}_{\rho_s}(V,V)
        \ud s .
        \label{eqn:gibbs-chord}
    \end{align}
\end{lemma}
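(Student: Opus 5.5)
The plan is to integrate twice along the path, using the standard free-energy representation of the relative entropy between Gibbs states. Write $Z_s:=\Tr(e^{-\beta H_s})$ and $F(s):=\log Z_s$. Since $\log\rho_s=-\beta H_s-F(s)I$, we have
\begin{align*}
    D(\rho_1\Vert\rho_0)
    =\Tr\bigl(\rho_1(\log\rho_1-\log\rho_0)\bigr)
    =-\beta\Tr(\rho_1 V)-F(1)+F(0).
\end{align*}
The whole argument therefore reduces to computing the first two derivatives of $F$.

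First, I compute $F'(s)$. Duhamel's formula gives $\frac{\d}{\d s}e^{-\beta H_s}=-\beta\int_0^1 e^{-u\beta H_s}Ve^{-(1-u)\beta H_s}\,\ud u$. Taking the trace and using cyclicity yields
\begin{align*}
    F'(s)=-\beta\Tr(\rho_s V).
\end{align*}
The relative entropy then takes the Bregman form $D=F(0)-F(1)+F'(1)$.

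Second, I differentiate once more. This gives $F''(s)=-\beta\Tr(\dot\rho_s V)$. The derivative of the state is
\begin{align*}
    \dot\rho_s=-\beta\int_0^1\rho_s^{u}\bigl(V-\Tr(\rho_sV)I\bigr)\rho_s^{1-u}\,\ud u.
\end{align*}
This is the same expression used in the proof of \cref{lem:local-thermal-bias}, and it follows from Duhamel together with differentiation of the normalization. Substituting it in gives
\begin{align*}
    F''(s)=\beta^2\operatorname{Cov}^{\mathrm{BKM}}_{\rho_s}(V,V),
\end{align*}
where the subtracted term produces exactly the $\Tr^2(\rho_s V)$ in \cref{eqn:bkm-covariance}.

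Third, I apply Taylor's theorem with integral remainder, expanding about $s=1$:
\begin{align*}
    F(0)=F(1)-F'(1)+\int_0^1 s\,F''(s)\,\ud s.
\end{align*}
Here the weight $s$ comes from $\int_0^1(0-s)\cdot(-1)\,F''(s)\,\ud s$ after the substitution; more directly, $\int_0^1 sF''(s)\,\ud s=F'(1)-F(1)+F(0)$ by integration by parts. Hence $D=\int_0^1 sF''(s)\,\ud s$, which is exactly \cref{eqn:gibbs-chord}.

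No step is a real obstacle. The only care needed is the bookkeeping of signs and of the weight $s$ versus $1-s$ in the remainder, which is fixed by expanding about the endpoint $s=1$, since that is the first argument of $D$. Smoothness of $F$ is immediate in finite dimension because $e^{-\beta H_s}\succ0$ is analytic in $s$.
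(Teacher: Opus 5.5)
Your proposal is correct and follows essentially the paper's argument. The paper also works with $\psi(s)=\log Z(s)$, obtains $\psi'(s)=-\beta\Tr(\rho_sV)$ and $\psi''(s)=\beta^2\operatorname{Cov}^{\mathrm{BKM}}_{\rho_s}(V,V)$ from Duhamel's formula, and then writes $D(\rho_s\Vert\rho_0)=s\psi'(s)-\psi(s)+\psi(0)$, whose $s$-derivative is $s\psi''(s)$. That is your integration by parts read in the other direction; the only cosmetic difference is that you get $\psi''$ via $\dot\rho_s$ rather than via $Z''/Z-(Z'/Z)^2$.
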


\begin{proof}
    Write
    \begin{align*}
        Z(s):=\Tr(e^{-\beta H_s}),
        \qquad
        \psi(s):=\log Z(s).
    \end{align*}
    Duhamel's formula gives
    \begin{align*}
        \frac{\ud}{\ud s}e^{-\beta H_s}
        =
        -\beta\int_0^1
        e^{-\beta(1-u)H_s}V
        e^{-\beta uH_s}\ud u.
    \end{align*}
    Taking the trace and using cyclicity gives
    \begin{align}
        \psi'(s)
        &=-\beta\Tr(\rho_sV),&
        \psi''(s)
        &=\beta^2
        \operatorname{Cov}^{\mathrm{BKM}}_{\rho_s}(V,V).
        \label{eqn:log-partition-chord-derivatives}
    \end{align}
    More explicitly,
    \begin{align*}
        \frac{Z''(s)}{Z(s)}
        &=
        \beta^2\int_0^1
        \Tr(\rho_s^uV\rho_s^{1-u}V)\ud u,\\
        \left(\frac{Z'(s)}{Z(s)}\right)^2
        &=
        \beta^2\Tr^2(\rho_sV),
    \end{align*}
    whose difference is $\psi''(s)$.

    Finally,
    \begin{align*}
        \log\rho_s-\log\rho_0
        =
        -\beta sV-\psi(s)+\psi(0),
    \end{align*}
    and hence
    \begin{align*}
        D(\rho_s\Vert\rho_0)
        &=
        -\beta s\Tr(\rho_sV)-\psi(s)+\psi(0)\\
        &=
        s\psi'(s)-\psi(s)+\psi(0).
    \end{align*}
    Its derivative is $s\psi''(s)$.  Integrating from $0$ to $1$
    proves~\cref{eqn:gibbs-chord}.
\end{proof}

The chord identity is exact and does not assume a small perturbation.
It exposes the relevant mechanism directly: a Hamiltonian direction is
hard to learn when its thermal variance remains small along the path
between the two hypotheses.

\subsection{Thermally hidden Hamiltonian directions}
\label{sec:thermally-hidden-directions}

This suppression occurs when a perturbation acts only on a sector with
exponentially small Gibbs weight.

\begin{theorem}[Thermally hidden directions]
    \label{thm:thermally-hidden-directions}
    Let $P$ and $Q$ be orthogonal projectors such that
    \begin{align}
        H_0P&=E_0P,
        &
        \operatorname{rank}(P)&=g,
        &
        [Q,H_0]&=0,
        &
        QH_0Q&\succeq(E_0+\Delta)Q,
        &
        \operatorname{rank}(Q)&=r .
        \label{eqn:hidden-sector-assumptions}
    \end{align}
    Suppose that $V=QVQ$, $\norm{V}_\infty\leq v$, and
    $H_1=H_0+\tau V$ with $\abs{\tau}v<\Delta$.  Then
    \begin{align}
        D(\rho_1\Vert\rho_0)
        \leq
        \frac{\beta^2\tau^2v^2}{2}
        \frac{r}{g}
        e^{-\beta(\Delta-\abs{\tau}v)} .
        \label{eqn:hidden-sector-relative-entropy}
    \end{align}
\end{theorem}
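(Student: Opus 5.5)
The plan is to apply the Gibbs chord identity (\cref{lem:gibbs-chord}) along the straight path $H_s=H_0+s\tau V$, $0\leq s\leq 1$. With perturbation direction $\tau V$, it gives
\begin{align*}
    D(\rho_1\Vert\rho_0)
    =\beta^2\tau^2\int_0^1 s\,\operatorname{Cov}^{\mathrm{BKM}}_{\rho_s}(V,V)\,\ud s .
\end{align*}
The whole task then reduces to a pointwise bound
\begin{align*}
    \operatorname{Cov}^{\mathrm{BKM}}_{\rho_s}(V,V)\leq v^2\frac{r}{g}e^{-\beta(\Delta-s|\tau|v)} .
\end{align*}
Bounding the exponent crudely by its value at $s=1$ and using $\int_0^1 s\,\ud s=1/2$ produces exactly \cref{eqn:hidden-sector-relative-entropy}.

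For the covariance, first drop the subtracted term $\Tr^2(\rho_sV)\geq0$. Then control the Duhamel correlation by the symmetric second moment: in an eigenbasis of $\rho_s$ with eigenvalues $p_i$,
\begin{align*}
    \Tr(\rho_s^uV\rho_s^{1-u}V)=\sum_{i,j}p_i^up_j^{1-u}|V_{ij}|^2\leq\sum_{i,j}\bigl(up_i+(1-u)p_j\bigr)|V_{ij}|^2=\Tr(\rho_sV^2),
\end{align*}
by the weighted AM--GM inequality. This gives $\operatorname{Cov}^{\mathrm{BKM}}_{\rho_s}(V,V)\leq\Tr(\rho_sV^2)$. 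Since $V=QVQ$, we have $V^2=QV^2Q\preceq v^2Q$, so $\Tr(\rho_sV^2)\leq v^2\Tr(\rho_sQ)$.

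It remains to bound the thermal weight of the hidden sector, and this is where the structural assumptions enter. I expect the main care to be needed in the following bookkeeping.
\begin{itemize}
    \item Because $[Q,H_0]=0$ and $V=QVQ$, also $[Q,H_s]=0$. Hence $e^{-\beta H_s}$ is block diagonal with respect to $Q$ and $I-Q$.
    \item On the range of $Q$ we have $QH_sQ\succeq(E_0+\Delta-s|\tau|v)Q$. This yields $\Tr(e^{-\beta H_s}Q)\leq r\,e^{-\beta(E_0+\Delta-s|\tau|v)}$.
    \item For the partition function, $P$ and $Q$ are orthogonal, so $PQ=0$ and $VP=QVQP=0$. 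Therefore $H_sP=E_0P$ for every $s$, and $Z(s)\geq\Tr(e^{-\beta H_s}P)=g\,e^{-\beta E_0}$.
\end{itemize}
Dividing gives
\begin{align*}
    \Tr(\rho_sQ)\leq\frac{r}{g}e^{-\beta(\Delta-s|\tau|v)}\leq\frac{r}{g}e^{-\beta(\Delta-|\tau|v)} .
\end{align*}
Substituting this back into the chord integral completes the argument. The hypothesis $|\tau|v<\Delta$ only guarantees that the exponent remains negative, i.e.\ that the sector stays suppressed along the entire path.
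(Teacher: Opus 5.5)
Your proposal is correct and follows the paper's proof essentially step for step. Both arguments use the chord identity along $H_s=H_0+s\tau V$, the weighted AM--GM bound $\operatorname{Cov}^{\mathrm{BKM}}_{\rho_s}(V,V)\leq\Tr(\rho_sV^2)\leq v^2\Tr(Q\rho_s)$, and the ratio of the $Q$-sector weight $r\,e^{-\beta(E_0+\Delta-|\tau|v)}$ to the $P$-sector lower bound $g\,e^{-\beta E_0}$ on the partition function; your only (cosmetic) difference is keeping $s|\tau|v$ in the exponent before bounding it by its value at $s=1$.
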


\begin{proof}
    Apply~\cref{lem:gibbs-chord} to
    $H_s=H_0+s\tau V$.  Diagonalize
    $\rho_s=\sum_m p_m\ket{m}\!\bra{m}$.  In this basis,
    \begin{align*}
        \int_0^1
        \Tr(\rho_s^uV\rho_s^{1-u}V)\ud u
        =
        \sum_{m,n}
        \left(
            \int_0^1p_m^up_n^{1-u}\ud u
        \right)
        \abs{V_{mn}}^2.
    \end{align*}
    Weighted AM--GM gives
    \begin{align*}
        p_m^up_n^{1-u}
        \leq
        up_m+(1-u)p_n.
    \end{align*}
    Integrating over $u$ and summing over $m,n$ therefore yields
    \begin{align*}
        \int_0^1
        \Tr(\rho_s^uV\rho_s^{1-u}V)\ud u
        \leq
        \frac12\sum_{m,n}(p_m+p_n)\abs{V_{mn}}^2
        =
        \Tr(\rho_sV^2).
    \end{align*}
    Since the centering term in the BKM covariance is nonnegative and
    $V^2\preceq v^2Q$, we obtain
    \begin{align}
        \operatorname{Cov}^{\mathrm{BKM}}_{\rho_s}(V,V)
        \leq
        \Tr(\rho_sV^2)
        \leq
        v^2\Tr(Q\rho_s).
        \label{eqn:hidden-sector-bkm-bound}
    \end{align}
    Since $Q$ reduces both $H_0$ and $V$, it also reduces $H_s$.  On
    this sector,
    \begin{align*}
        QH_sQ
        =
        QH_0Q+s\tau V
        \succeq
        \left(E_0+\Delta-\abs{\tau}v\right)Q.
    \end{align*}
    Thus
    \begin{align*}
        \Tr(Qe^{-\beta H_s})
        \leq
        r e^{-\beta(E_0+\Delta-\abs{\tau}v)}.
    \end{align*}
    The perturbation vanishes on $P$, so the partition function obeys
    \begin{align*}
        \Tr(e^{-\beta H_s})
        \geq
        \Tr(Pe^{-\beta H_s})
        =
        g e^{-\beta E_0}.
    \end{align*}
    Dividing the last two bounds gives, uniformly along the chord,
    \begin{align}
        \Tr(Q\rho_s)
        \leq
        \frac{r}{g}
        e^{-\beta(\Delta-\abs{\tau}v)} .
        \label{eqn:hidden-sector-weight}
    \end{align}
    Substitution into~\cref{eqn:gibbs-chord}, including the chord
    direction $\tau V$, proves the result.
\end{proof}

The theorem separates the perturbation size from its thermal visibility. The polynomial factor $\beta^2\tau^2$ is the usual local
statistical scale, while the exponential factor is the Gibbs weight of
the sector that carries the perturbation.

\begin{corollary}
    \label{cor:hidden-sector-sample-lower-bound}
    Suppose that the Hamiltonians in
    \cref{thm:thermally-hidden-directions} correspond to parameters
    $\theta_0$ and $\theta_1$ satisfying
    $\norm{\theta_1-\theta_0}_\infty>2\epsilon$.  If
    \cref{eqn:lower-bound-guarantee} holds at both endpoints with
    $0<\delta<1/2$, then
    \begin{align}
        m
        \geq
        \frac{2g}{r}
        \frac{e^{\beta(\Delta-\abs{\tau}v)}}
        {\beta^2\tau^2v^2}
        d_{\mathrm{bin}}(1-\delta\Vert\delta),
        \label{eqn:hidden-sector-sample-lower-bound}
    \end{align}
    where
    \begin{align*}
        d_{\mathrm{bin}}(p\Vert q)
        =p\log\frac{p}{q}+(1-p)\log\frac{1-p}{1-q}.
    \end{align*}
    If the model contains $M$ disjoint copies of this perturbation,
    then for $0<\delta\leq1/4$,
    \begin{align}
        m
        =
        \Omega\left(
            \frac{g}{r}
            \frac{e^{\beta(\Delta-\abs{\tau}v)}}
            {\beta^2\tau^2v^2}
            \log\frac{M}{\delta}
        \right).
        \label{eqn:hidden-sector-disjoint-lower-bound}
    \end{align}
\end{corollary}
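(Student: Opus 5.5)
The argument is a two-hypothesis test reduced through the data-processing inequality, followed by Fano's method for the $M$-copy statement.

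\textbf{Two-point bound.} Suppose an estimator meets \cref{eqn:lower-bound-guarantee} at both $\theta_0$ and $\theta_1$. Since $\norm{\theta_1-\theta_0}_\infty>2\epsilon$, the $\epsilon$-balls around $\theta_0$ and $\theta_1$ are disjoint. Consider the two-outcome POVM on $\rho^{\otimes m}$ that runs the estimator and reports whether $\widehat\theta$ lies in the ball around $\theta_1$. This yields a binary test with
\[
\PP_1(\text{accept})\ge 1-\delta, \qquad \PP_0(\text{accept})\le \delta.
\]
Data processing for the quantum relative entropy under this measurement gives
\[
m\,D(\rho_1\Vert\rho_0)=D(\rho_1^{\otimes m}\Vert\rho_0^{\otimes m})\ge d_{\mathrm{bin}}(p_1\Vert p_0),
\]
where $p_1\ge1-\delta$ and $p_0\le\delta$. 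Because $\delta<1/2$, joint monotonicity of $d_{\mathrm{bin}}$ in this region shows the right-hand side is at least $d_{\mathrm{bin}}(1-\delta\Vert\delta)$. Combining this with the upper bound \cref{eqn:hidden-sector-relative-entropy} from \cref{thm:thermally-hidden-directions} and rearranging gives \cref{eqn:hidden-sector-sample-lower-bound}. The factor $2$ comes from the $1/2$ in that upper bound.

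\textbf{Many disjoint copies.} Take a base point $\theta_0$ and $M$ alternatives $\theta^{(i)}$, where $\theta^{(i)}$ flips only the $i$-th disjoint perturbation. Each pair $\theta^{(i)}$, $\theta_0$ then satisfies the hypotheses of \cref{thm:thermally-hidden-directions}, with $Q_i$ acting on the $i$-th block.

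I would first exploit the disjointness to make the Gibbs states factorize, which keeps $D(\rho^{(i)}\Vert\rho_0)$ equal to the single-block bound. Writing
\[
\kappa:=\frac{\beta^2\tau^2v^2}{2}\,\frac{r}{g}\,e^{-\beta(\Delta-|\tau|v)},
\]
this gives $D(\rho^{(i)}\Vert\rho_0)\le\kappa$ for every $i$. The alternatives are pairwise separated by more than $2\epsilon$ in $\ell_\infty$, so the estimator identifies the index $i$ with error at most $\delta$.

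To extract the $\log(M/\delta)$ dependence I would use a change-of-measure (Fano-type) argument against the base point. Let $A_i$ be the event that $\widehat\theta$ is $\epsilon$-close to $\theta^{(i)}$. These events are disjoint, so $\sum_i\PP_0(A_i)\le1$, and hence some $i$ has $\PP_0(A_i)\le 1/M$. At the same time $\PP_i(A_i)\ge1-\delta$. Data processing then yields
\[
m\kappa\ge d_{\mathrm{bin}}\bigl(1-\delta\,\big\Vert\,\min(1/M,\delta)\bigr)\gtrsim(1-\delta)\log M-\log 2.
\]

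\textbf{Main obstacle: the joint dependence on $M$ and $\delta$.} The step above gives $\log M$ only up to constants, and the two-point bound alone gives only $\log(1/\delta)$. Since
\[
\log\frac{M}{\delta}\le 2\max\left(\log M,\ \log\frac1\delta\right),
\]
taking the maximum of the two bounds suffices. Here I would use $d_{\mathrm{bin}}(1-\delta\Vert\delta)\ge\frac12\log\frac1{4\delta}$ for $\delta\le1/4$. For $\log M\le 2\log 2$, the $\log M$ term is dominated by the $\delta$-bound.

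The delicate point is ensuring the base point $\theta_0$ is itself admissible, so that \cref{eqn:lower-bound-guarantee} holds there. It also needs the simultaneous perturbations not to exceed the gap condition $|\tau|v<\Delta$ in any block. Block-disjointness handles both, since each block's gap is unaffected by the other blocks.
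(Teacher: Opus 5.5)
Your proposal is correct. The two-point part is the same as the paper's: a binary test, data processing with tensorization, monotonicity of $d_{\mathrm{bin}}$, and then the relative-entropy bound of \cref{thm:thermally-hidden-directions}. The $M$-block part takes a genuinely different route.

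The paper gets the $\log M$ term from Fano's inequality combined with the Holevo bound, $I(J;\widehat J)\leq\frac1M\sum_j D(\rho_j^{\otimes m}\Vert\rho_0^{\otimes m})\leq m\alpha$. It then merges this with the two-point $\log(1/\delta)$ bound through $\max(\log M,\log\frac1\delta)\geq\frac12\log\frac M\delta$. You instead use a pigeonhole argument over the disjoint acceptance regions $A_i$, followed by one binary data-processing step against $\rho_0^{\otimes m}$. Your route needs only monotonicity of relative entropy under a two-outcome measurement, with no mutual information or Holevo quantity. The Fano route is the more standard machinery and extends directly to general packings.

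Your argument can also be sharpened. Since $\theta_0$ is admissible and every $A_i$ is disjoint from the $\epsilon$-ball around $\theta_0$, you have $\sum_i\PP_0(A_i)\leq\delta$. So some $i$ satisfies $\PP_0(A_i)\leq\delta/M$, and then $m\kappa\geq d_{\mathrm{bin}}(1-\delta\Vert\delta/M)\geq(1-\delta)\log\frac M\delta-\log 2$. For $M\geq2$ and $\delta\leq1/4$ this is already at least $\frac{5}{12}\log\frac M\delta$, so you would not need the max step at all.

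One small bookkeeping fix is needed. Your bound $d_{\mathrm{bin}}(1-\delta\Vert\delta)\geq\frac12\log\frac{1}{4\delta}$ vanishes at $\delta=1/4$. In the corner $M\leq4$, $\delta=1/4$, your displayed inequalities therefore give nothing. Use instead $d_{\mathrm{bin}}(1-\delta\Vert\delta)=(1-2\delta)\log\frac{1-\delta}{\delta}\geq\frac12\log\frac{3}{4\delta}$, which is $\Omega(\log\frac1\delta)$ uniformly on $(0,1/4]$.
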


\begin{proof}
    The estimator induces a binary test between the two endpoints.
    Because the two $\epsilon$-balls are disjoint, its acceptance
    probability for $\theta_1$ is at least $1-\delta$ and at most
    $\delta$ for $\theta_0$.  Data processing under this test and
    tensorization of relative entropy, followed by monotonicity of the
    binary relative entropy in these two probabilities, give
    \begin{align*}
        mD(\rho_1\Vert\rho_0)
        \geq d_{\mathrm{bin}}(1-\delta\Vert\delta).
    \end{align*}
    This proves~\cref{eqn:hidden-sector-sample-lower-bound}.

    For the second claim, take the reference hypothesis with every
    block unperturbed and let hypothesis $j$ perturb only block $j$.
    Write $\alpha$ for the right-hand side of
    \cref{eqn:hidden-sector-relative-entropy}.  Relative entropy
    tensorizes, so
    \begin{align*}
        D(\rho_j\Vert\rho_0)\leq\alpha
    \end{align*}
    for every alternative.  Their parameter vectors are separated by
    more than $2\epsilon$, so a successful estimator identifies $j$.
    If $J$ is uniform on the $M$ alternatives, Fano's inequality and
    the Holevo bound give
    \begin{align*}
        \log M-h_{\mathrm{bin}}(\delta)
        -\delta\log(M-1)
        &\leq
        I(J;\widehat J)\\
        &\leq
        \inf_{\omega}
        \frac{1}{M}\sum_{j=1}^{M}
        D(\rho_j^{\otimes m}\Vert\omega)\\
        &\leq
        \frac{1}{M}\sum_{j=1}^{M}
        D(\rho_j^{\otimes m}\Vert\rho_0^{\otimes m})
        \leq
        m\alpha.
    \end{align*}
    Here
    \(h_{\mathrm{bin}}(x)=-x\log x-(1-x)\log(1-x)\).
    For $\delta\leq1/4$, this yields
    $m\alpha=\Omega(\log M)$.  The binary bound also gives
    $m\alpha=\Omega(\log(1/\delta))$.  Since
    \begin{align*}
        \max\left(\log M,\log\frac1\delta\right)
        \geq
        \frac12\log\frac{M}{\delta},
    \end{align*}
    the two bounds combine to give
    \cref{eqn:hidden-sector-disjoint-lower-bound}.
\end{proof}

\begin{proof}[Proof of \cref{thm:minimax-sample-lower-bound}]
    On one qubit, take $h_0=Z$ and
    $h_1=(1-4\epsilon)Z$.  Their coefficient distance is
    $4\epsilon$.  Let $P$ project onto the ground state of $h_0$ and
    $Q=I-P$.  Since adding a scalar does not change a Gibbs state,
    $h_1$ is statistically equivalent to
    \begin{align*}
        h_1-4\epsilon I=h_0-8\epsilon Q .
    \end{align*}
    Thus~\cref{thm:thermally-hidden-directions} applies with
    $g=r=v=1$, $\Delta=2$, and $\tau=-8\epsilon$.  Placing $M$
    independent copies of this block and applying
    \cref{eqn:hidden-sector-disjoint-lower-bound} proves the claim.
\end{proof}

\subsection{The construction of Ref.~\cite{haah2024learning} as a thermally hidden direction}
\label{append:hkt-hidden-sector}

The lower-bound construction in Ref.~\cite{haah2024learning} provides a simple instance of
\cref{thm:thermally-hidden-directions}.  On two qubits, consider
\begin{align}
    h_x
    &=
    -Z_1
    +\left(x-\frac12\right)Z_2
    -\left(x+\frac12\right)Z_1Z_2 \notag\\
    &=
    \operatorname{diag}(-2,0,1+2x,1-2x),
    \qquad 0\leq x\leq\frac12 .
    \label{eqn:hkt-block}
\end{align}
The perturbation
\begin{align}
    h_x-h_0=x(Z_2-Z_1Z_2)
    \label{eqn:hkt-perturbation}
\end{align}
acts only on the subspace spanned by the last two computational-basis
states.  This sector has rank two and lies three energy units above the
ground state at $x=0$.  Since
$\norm{Z_2-Z_1Z_2}_\infty=2$,
\cref{thm:thermally-hidden-directions} gives
\begin{align}
    D(\rho_x\Vert\rho_0)
    \leq
    4\beta^2x^2e^{-\beta(3-2x)} .
    \label{eqn:hkt-finite-radius-bound}
\end{align}

Equation~\eqref{eqn:hkt-finite-radius-bound} keeps the dependence on
the perturbation radius.  In the local regime $\beta x=o(1)$, it
reveals the information scale $e^{-3\beta}$.  If the result is required
uniformly over $0<x\leq1/2$, then $3-2x\geq2$ and one obtains the
weaker but uniform bound
\begin{align}
    D(\rho_x\Vert\rho_0)
    \leq
    4\beta^2x^2e^{-2\beta}.
    \label{eqn:hkt-uniform-bound}
\end{align}
The familiar $e^{2\beta}$ sample lower bound is the inverse of this
uniform information scale.  Its smaller exponent reflects the finite
range of perturbations covered by the statement, rather than a
limitation of the underlying construction.

\section{Numerical simulation details}\label{app:finite-shot-numerical-methods}

\subsection{Population quantities and finite-shot sampling}

We simulate learning for the eight-qubit Hamiltonian and local Pauli frame specified in \cref{eq:numerical-inhomogeneous-tfim}. The target Gibbs state and the population QSM objective are evaluated by exact diagonalization of the full Hamiltonian. We then differentiate the objective to obtain the population gradient at each optimization step. We consider the inverse temperatures $\beta=0.2,0.4,\ldots,1.6$ and the total per-iteration measurement budgets $N=10^3,3\times10^3,10^4,3\times10^4,10^5,$ and $10^6$. For every temperature, measurement budget, and initialization, we perform $100$ independent optimization runs with independently sampled measurement outcomes.

We simulate the finite-shot gradient estimator in \cref{alg:randomized-gradient-coordinate} from its measurement statistics. At iteration $t$, let $g_{j,t}:=(\nabla J_Q(\theta^{(t)}))_j$ be the $j$th population-gradient coordinate. One shot of the randomized Hadamard-test procedure returns $Y_{j,t}\in\{-1,1\}$. Its rescaled output is $L_j(R;\theta^{(t)})Y_{j,t}$, where $L_j(R;\theta^{(t)})$ is the coefficient mass in \cref{eqn:gradient-algorithm-masses} evaluated at the current candidate Hamiltonian.

Each sampled circuit branch contains one or two random Hamiltonian-evolution times. We truncate every such time draw to $[-R,R]$ using the distributions in \cref{eqn:gradient-algorithm-time-laws}. Define the numerical parameter domain $\Theta_{\mathrm{num}}:=[0,2]^{15}$, and let $g_{j,R}(\theta)$ denote the gradient coordinate obtained from the truncated representation. For each temperature, we choose $R$ so that
\begin{equation}
\sup_{\theta\in\Theta_{\mathrm{num}}}\max_j\left|g_{j,R}(\theta)-g_j(\theta)\right|\leq10^{-4}.
\end{equation}
To isolate finite-measurement fluctuations, we center the simulated estimator at the exact population gradient $g_{j,t}$; the finite cutoff enters only through the measurement range. Write $L_{j,t}:=L_j(R;\theta^{(t)})$. The target Gibbs state is fixed, whereas $\theta^{(t)}$ depends on the preceding measurement outcomes and determines the circuit used at iteration $t$. We therefore state the law of the next measurement outcome conditionally on the current iterate,
\begin{equation}\label{eq:numerical-binomial-gradient}
\PP(Y_{j,t}=1\mid\theta^{(t)})=\frac{1+g_{j,t}/L_{j,t}}{2},\qquad \widehat g_{j,t}=\frac{L_{j,t}}{N_{j,t}}\sum_{r=1}^{N_{j,t}}Y_{j,t}^{(r)}.
\end{equation}
Consequently, conditional on the current parameters,
\begin{equation}\label{eq:numerical-gradient-moments}
\mathbb E(\widehat g_{j,t}\mid\theta^{(t)})=g_{j,t},\qquad \operatorname{Var}(\widehat g_{j,t}\mid\theta^{(t)})=\frac{L_{j,t}^{2}-g_{j,t}^{2}}{N_{j,t}}.
\end{equation}
The total budget $N$ is divided among the $15$ gradient coordinates in proportion to their current measurement ranges $L_{j,t}$, with integer rounding chosen so that $\sum_jN_{j,t}=N$. Every measurement uses a fresh copy of the target Gibbs state, and no coherent amplitude amplification is used.

\subsection{Optimization protocol}

We use projected, preconditioned gradient descent. For the parameter order $(J_1,\ldots,J_7,h_1^x,\ldots,h_8^x)$, the high-temperature commutator Gram matrix is $\Gamma=\operatorname{diag}(8I_7,4I_8)$ and sets the leading curvature scale. The unprojected step at iteration $t$ is $\widetilde d_t=\eta_t\beta^{-2}\Gamma^{-1}\widehat{\nabla J_Q}(\theta^{(t)})$. We set $d_t=\widetilde d_t$ when $\|\widetilde d_t\|_2\leq0.05\|\theta^*\|_2$ and otherwise rescale it so that $\|d_t\|_2=0.05\|\theta^*\|_2$. The projected update is
\begin{equation}\label{eq:numerical-projected-update}
\theta^{(t+1)}=\Pi_{\Theta_{\mathrm{num}}}(\theta^{(t)}-d_t).
\end{equation}
We consider far and local initializations. The far initialization sets $J_i^{(0)}=0.5$ and $(h_i^x)^{(0)}=1$. For each local trajectory, we independently draw a standard Gaussian vector $v\in\mathbb R^{15}$ and set $\theta^{(0)}=\theta^*+0.05\|\theta^*\|_2v/\|v\|_2$. Thus every local initialization lies at relative distance $e_{\mathrm{rel}}(\theta^{(0)})=0.05$ from the target.

The two initializations use different initial learning-rate schedules. Local runs start within the target neighborhood and use $T=200$ updates with $\eta_t=0.5/(1+t/10)$. Far runs use $T=300$ updates. Their first $60$ updates use a temperature-dependent constant rate $\eta_{\mathrm{far}}$,
\begin{center}
\begin{tabular}{@{}crrrrrrrr@{}}
\toprule
$\beta$ & $0.2$ & $0.4$ & $0.6$ & $0.8$ & $1.0$ & $1.2$ & $1.4$ & $1.6$ \\
\midrule
$\eta_{\mathrm{far}}$ & $0.25$ & $0.5$ & $1$ & $2$ & $4$ & $8$ & $16$ & $32$
\\
\bottomrule
\end{tabular}
\end{center}
After the first $60$ updates, far runs use $\eta_t=0.5/(1+(t-60)/10)$. The final output is the averaged estimate $\bar\theta=(T-t_0+1)^{-1}\sum_{t=t_0}^{T}\theta^{(t)}$, with $t_0=100$ for local runs and $t_0=150$ for far runs. At fixed temperature and initialization, we keep the optimization schedule unchanged across all shot budgets.

\subsection{IBM quantum-hardware experiment}\label{app:ibm-hardware-implementation}

The experiments were conducted on the IBM \texttt{ibm\_pittsburgh} processor on September 20, 2026.

For the hardware experiment in \cref{sec:ibm-hardware-experiment}, we allocate the randomized circuit instances between the $J$ and $h$ gradient coordinates in proportion to their coefficient masses $L_j(R;\theta^{(t)})$ in \cref{eqn:gradient-algorithm-masses}. Within each instance, the sampled Pauli terms, evolution times, and target energy eigenstate remain fixed for all $32$ shots; these choices are sampled independently for the next instance. If $K_j$ instances are allocated to coordinate $j$, the gradient estimate is
\begin{equation}\label{eq:ibm-measured-gradient}
\widehat g_j=\frac{L_j(R;\theta^{(t)})}{K_j}\sum_{\ell=1}^{K_j}s_{j\ell}\frac{n_{0,j\ell}-n_{1,j\ell}}{32},
\end{equation}
where $s_{j\ell}\in\{-1,1\}$ is the sign of the coefficient of the sampled term and $n_{b,j\ell}$ counts ancilla outcome $b$ for instance $\ell$.

To execute the measurements in parallel, we randomly assign the circuit instances from all runs to $19$ disjoint five-qubit paths on the processor. Each processor circuit therefore measures up to $19$ instances simultaneously. We use a control-free implementation in which the ancilla controls only Pauli insertions, while Hamiltonian evolutions remain uncontrolled, following the control-free design principle in Ref.~\cite{dong2022ground}. The second-order Trotterization uses a maximal time step of $0.25$. We truncate the random evolution times so that the omitted integral tail is bounded by $10^{-4}$ uniformly over the candidate parameter domain.

We use the projected, preconditioned update in \cref{eq:numerical-projected-update}, with $\Gamma=\operatorname{diag}(24,16)$, learning rate $\eta_t=0.5/(1+t/10)$, and projection onto $\Theta_{\mathrm{num}}=[0,2]^2$. The step-norm cap remains $0.05\|\theta^*\|_2$. We report the raw parameter iterates. \Cref{tab:ibm-epoch-trajectories} lists the mean parameters and mean relative error at every update across the five independent runs at each temperature.

\begingroup
\small
\setlength{\tabcolsep}{9pt}
\begin{longtable}{@{}>{\raggedleft\arraybackslash}p{3em}*{3}{>{\centering\arraybackslash}p{\dimexpr(0.95\textwidth-5em-10\tabcolsep)/6\relax}}@{\qquad}*{3}{>{\centering\arraybackslash}p{\dimexpr(0.95\textwidth-5em-10\tabcolsep)/6\relax}}@{}}
\caption{IBM hardware learning trajectories from epoch $0$ (the common initialization) through epoch $45$. Entries show the mean $\pm$ one sample standard deviation over five independent runs. Relative parameter errors are computed for each run using \cref{eq:numerical-relative-learning-error} and shown as percentages.}\label{tab:ibm-epoch-trajectories}\\
\toprule
& \multicolumn{3}{c}{$\beta=0.2$} & \multicolumn{3}{c}{$\beta=0.4$}\\
\cmidrule(lr){2-4}\cmidrule(l){5-7}
Epoch & $\bar J$ & $\bar h$ & $\overline e_{\mathrm{rel}}$ (\%) & $\bar J$ & $\bar h$ & $\overline e_{\mathrm{rel}}$ (\%)\\
\midrule
\endfirsthead
\toprule
& \multicolumn{3}{c}{$\beta=0.2$} & \multicolumn{3}{c}{$\beta=0.4$}\\
\cmidrule(lr){2-4}\cmidrule(l){5-7}
Epoch & $\bar J$ & $\bar h$ & $\overline e_{\mathrm{rel}}$ (\%) & $\bar J$ & $\bar h$ & $\overline e_{\mathrm{rel}}$ (\%)\\
\midrule
\endhead
\midrule
\endfoot
\bottomrule
\endlastfoot
0 & $0.50\,\pm\,0.00$ & $0.50\,\pm\,0.00$ & $62.0\,\pm\,0.0$ & $0.50\,\pm\,0.00$ & $0.50\,\pm\,0.00$ & $62.0\,\pm\,0.0$\\
1 & $0.52\,\pm\,0.02$ & $0.59\,\pm\,0.00$ & $57.4\,\pm\,0.6$ & $0.54\,\pm\,0.02$ & $0.57\,\pm\,0.02$ & $57.2\,\pm\,0.4$\\
2 & $0.56\,\pm\,0.02$ & $0.66\,\pm\,0.02$ & $52.6\,\pm\,0.9$ & $0.57\,\pm\,0.06$ & $0.62\,\pm\,0.06$ & $54.4\,\pm\,2.2$\\
3 & $0.62\,\pm\,0.04$ & $0.71\,\pm\,0.03$ & $48.4\,\pm\,1.0$ & $0.62\,\pm\,0.06$ & $0.68\,\pm\,0.08$ & $50.1\,\pm\,2.9$\\
4 & $0.65\,\pm\,0.03$ & $0.78\,\pm\,0.07$ & $44.7\,\pm\,3.0$ & $0.65\,\pm\,0.06$ & $0.76\,\pm\,0.09$ & $45.5\,\pm\,3.1$\\
\addlinespace[3pt]
5 & $0.69\,\pm\,0.04$ & $0.85\,\pm\,0.06$ & $40.0\,\pm\,3.2$ & $0.67\,\pm\,0.05$ & $0.83\,\pm\,0.08$ & $41.6\,\pm\,2.8$\\
6 & $0.67\,\pm\,0.06$ & $0.88\,\pm\,0.10$ & $39.2\,\pm\,4.7$ & $0.70\,\pm\,0.10$ & $0.87\,\pm\,0.10$ & $39.3\,\pm\,4.7$\\
7 & $0.69\,\pm\,0.05$ & $0.96\,\pm\,0.10$ & $34.4\,\pm\,4.8$ & $0.73\,\pm\,0.07$ & $0.91\,\pm\,0.10$ & $36.4\,\pm\,4.5$\\
8 & $0.71\,\pm\,0.05$ & $1.01\,\pm\,0.06$ & $31.9\,\pm\,3.3$ & $0.72\,\pm\,0.05$ & $0.95\,\pm\,0.09$ & $34.7\,\pm\,3.6$\\
9 & $0.72\,\pm\,0.09$ & $1.01\,\pm\,0.07$ & $31.5\,\pm\,4.8$ & $0.74\,\pm\,0.04$ & $1.02\,\pm\,0.09$ & $30.4\,\pm\,3.5$\\
\addlinespace[3pt]
10 & $0.73\,\pm\,0.07$ & $1.02\,\pm\,0.11$ & $30.3\,\pm\,7.2$ & $0.78\,\pm\,0.07$ & $0.99\,\pm\,0.07$ & $31.0\,\pm\,3.2$\\
11 & $0.77\,\pm\,0.11$ & $1.02\,\pm\,0.08$ & $29.5\,\pm\,6.0$ & $0.83\,\pm\,0.07$ & $1.06\,\pm\,0.08$ & $26.7\,\pm\,3.6$\\
12 & $0.80\,\pm\,0.09$ & $1.06\,\pm\,0.08$ & $27.0\,\pm\,4.4$ & $0.87\,\pm\,0.10$ & $1.05\,\pm\,0.05$ & $26.2\,\pm\,1.5$\\
13 & $0.81\,\pm\,0.09$ & $1.07\,\pm\,0.09$ & $26.5\,\pm\,5.0$ & $0.90\,\pm\,0.08$ & $1.05\,\pm\,0.08$ & $26.0\,\pm\,4.4$\\
14 & $0.81\,\pm\,0.07$ & $1.08\,\pm\,0.07$ & $25.5\,\pm\,4.1$ & $0.92\,\pm\,0.11$ & $1.07\,\pm\,0.10$ & $24.9\,\pm\,5.7$\\
\addlinespace[3pt]
15 & $0.82\,\pm\,0.04$ & $1.08\,\pm\,0.03$ & $25.4\,\pm\,1.3$ & $0.90\,\pm\,0.07$ & $1.09\,\pm\,0.09$ & $23.5\,\pm\,5.1$\\
16 & $0.85\,\pm\,0.03$ & $1.09\,\pm\,0.07$ & $24.5\,\pm\,3.5$ & $0.92\,\pm\,0.07$ & $1.11\,\pm\,0.09$ & $22.6\,\pm\,4.8$\\
17 & $0.86\,\pm\,0.04$ & $1.11\,\pm\,0.10$ & $23.2\,\pm\,4.5$ & $0.92\,\pm\,0.08$ & $1.14\,\pm\,0.07$ & $20.8\,\pm\,3.4$\\
18 & $0.90\,\pm\,0.06$ & $1.08\,\pm\,0.05$ & $23.9\,\pm\,2.6$ & $0.94\,\pm\,0.10$ & $1.16\,\pm\,0.08$ & $19.8\,\pm\,4.4$\\
19 & $0.90\,\pm\,0.04$ & $1.11\,\pm\,0.11$ & $22.8\,\pm\,5.7$ & $0.96\,\pm\,0.11$ & $1.19\,\pm\,0.09$ & $18.0\,\pm\,4.8$\\
\addlinespace[3pt]
20 & $0.93\,\pm\,0.07$ & $1.16\,\pm\,0.10$ & $19.5\,\pm\,5.0$ & $0.97\,\pm\,0.12$ & $1.25\,\pm\,0.11$ & $15.5\,\pm\,5.4$\\
21 & $0.92\,\pm\,0.10$ & $1.16\,\pm\,0.09$ & $20.2\,\pm\,4.3$ & $0.96\,\pm\,0.12$ & $1.27\,\pm\,0.15$ & $14.3\,\pm\,7.7$\\
22 & $0.94\,\pm\,0.09$ & $1.20\,\pm\,0.10$ & $17.8\,\pm\,5.1$ & $0.97\,\pm\,0.08$ & $1.28\,\pm\,0.15$ & $13.6\,\pm\,7.0$\\
23 & $0.94\,\pm\,0.11$ & $1.20\,\pm\,0.14$ & $17.7\,\pm\,7.3$ & $0.95\,\pm\,0.11$ & $1.27\,\pm\,0.17$ & $14.7\,\pm\,8.3$\\
24 & $0.97\,\pm\,0.10$ & $1.23\,\pm\,0.14$ & $16.7\,\pm\,6.2$ & $0.97\,\pm\,0.09$ & $1.27\,\pm\,0.11$ & $13.9\,\pm\,5.8$\\
\addlinespace[3pt]
25 & $0.98\,\pm\,0.12$ & $1.24\,\pm\,0.13$ & $15.9\,\pm\,6.1$ & $0.98\,\pm\,0.08$ & $1.25\,\pm\,0.11$ & $14.5\,\pm\,5.3$\\
26 & $0.94\,\pm\,0.15$ & $1.28\,\pm\,0.10$ & $14.9\,\pm\,4.4$ & $0.95\,\pm\,0.10$ & $1.25\,\pm\,0.15$ & $15.2\,\pm\,7.9$\\
27 & $0.96\,\pm\,0.12$ & $1.31\,\pm\,0.07$ & $12.7\,\pm\,2.6$ & $0.95\,\pm\,0.11$ & $1.25\,\pm\,0.18$ & $15.3\,\pm\,9.4$\\
28 & $0.99\,\pm\,0.12$ & $1.28\,\pm\,0.09$ & $13.8\,\pm\,4.0$ & $0.95\,\pm\,0.11$ & $1.27\,\pm\,0.18$ & $14.3\,\pm\,9.9$\\
29 & $1.00\,\pm\,0.15$ & $1.30\,\pm\,0.07$ & $13.3\,\pm\,3.6$ & $0.94\,\pm\,0.11$ & $1.26\,\pm\,0.19$ & $14.8\,\pm\,10.4$\\
\addlinespace[3pt]
30 & $0.99\,\pm\,0.15$ & $1.34\,\pm\,0.11$ & $11.7\,\pm\,5.4$ & $0.92\,\pm\,0.11$ & $1.28\,\pm\,0.18$ & $15.1\,\pm\,7.5$\\
31 & $1.00\,\pm\,0.17$ & $1.34\,\pm\,0.17$ & $13.7\,\pm\,5.8$ & $0.89\,\pm\,0.10$ & $1.30\,\pm\,0.19$ & $14.9\,\pm\,7.5$\\
32 & $1.03\,\pm\,0.16$ & $1.36\,\pm\,0.19$ & $13.4\,\pm\,6.7$ & $0.88\,\pm\,0.09$ & $1.31\,\pm\,0.17$ & $14.1\,\pm\,7.2$\\
33 & $1.04\,\pm\,0.16$ & $1.41\,\pm\,0.19$ & $12.1\,\pm\,7.2$ & $0.89\,\pm\,0.08$ & $1.35\,\pm\,0.14$ & $12.0\,\pm\,5.5$\\
34 & $1.03\,\pm\,0.14$ & $1.38\,\pm\,0.21$ & $11.9\,\pm\,8.4$ & $0.91\,\pm\,0.08$ & $1.34\,\pm\,0.18$ & $12.8\,\pm\,6.4$\\
\addlinespace[3pt]
35 & $1.00\,\pm\,0.13$ & $1.35\,\pm\,0.18$ & $12.1\,\pm\,7.5$ & $0.91\,\pm\,0.09$ & $1.34\,\pm\,0.14$ & $12.4\,\pm\,5.5$\\
36 & $1.02\,\pm\,0.14$ & $1.35\,\pm\,0.14$ & $12.2\,\pm\,5.0$ & $0.92\,\pm\,0.05$ & $1.34\,\pm\,0.18$ & $13.0\,\pm\,4.1$\\
37 & $0.99\,\pm\,0.15$ & $1.35\,\pm\,0.11$ & $12.1\,\pm\,3.3$ & $0.93\,\pm\,0.06$ & $1.36\,\pm\,0.19$ & $12.9\,\pm\,3.2$\\
38 & $0.97\,\pm\,0.16$ & $1.34\,\pm\,0.12$ & $13.0\,\pm\,4.5$ & $0.94\,\pm\,0.09$ & $1.36\,\pm\,0.17$ & $12.6\,\pm\,3.4$\\
39 & $0.98\,\pm\,0.16$ & $1.34\,\pm\,0.13$ & $12.9\,\pm\,4.8$ & $0.94\,\pm\,0.10$ & $1.39\,\pm\,0.18$ & $12.5\,\pm\,1.7$\\
\addlinespace[3pt]
40 & $0.98\,\pm\,0.14$ & $1.31\,\pm\,0.17$ & $13.4\,\pm\,7.5$ & $0.94\,\pm\,0.11$ & $1.41\,\pm\,0.19$ & $12.4\,\pm\,1.9$\\
41 & $0.96\,\pm\,0.12$ & $1.33\,\pm\,0.12$ & $11.4\,\pm\,6.7$ & $0.95\,\pm\,0.13$ & $1.37\,\pm\,0.22$ & $14.5\,\pm\,2.1$\\
42 & $0.96\,\pm\,0.15$ & $1.33\,\pm\,0.09$ & $12.1\,\pm\,5.3$ & $0.94\,\pm\,0.12$ & $1.40\,\pm\,0.24$ & $14.5\,\pm\,3.5$\\
43 & $0.97\,\pm\,0.13$ & $1.34\,\pm\,0.14$ & $11.9\,\pm\,7.0$ & $0.96\,\pm\,0.12$ & $1.41\,\pm\,0.24$ & $13.8\,\pm\,4.9$\\
44 & $0.97\,\pm\,0.16$ & $1.36\,\pm\,0.10$ & $11.0\,\pm\,6.1$ & $0.97\,\pm\,0.13$ & $1.46\,\pm\,0.26$ & $13.4\,\pm\,7.0$\\
\addlinespace[3pt]
45 & $0.95\,\pm\,0.16$ & $1.39\,\pm\,0.09$ & $10.1\,\pm\,5.7$ & $0.98\,\pm\,0.11$ & $1.42\,\pm\,0.29$ & $14.8\,\pm\,6.9$\\
\end{longtable}
\endgroup

\end{document}